\definecolor{dblue}{rgb}{0, 0, 0.72}
\numberwithin{equation}{section}
\newtheorem{lemma}{Lemma}[section]
\newtheorem{prop}[lemma]{Proposition}
\newtheorem{theorem}[lemma]{Theorem}
\newtheorem{cor}[lemma]{Corollary}
\newtheorem{conj}[lemma]{Conjecture}
\newtheorem{rem}[lemma]{Remark}
\newtheorem{remark}[lemma]{Remark}
\newtheorem{example}[lemma]{Example}
\newtheorem{definition}[lemma]{Definition}
\newtheorem{corollary}[lemma]{Corollary}
\newcommand{\re}{\begin{rem}\rm}
  \newcommand{\mar}{\end{rem}}
\newcommand{\kla}{\left ( }
\newcommand{\mer}{\right ) }
\newcommand{\ee }{\mathrm{I}\!\!1}
\renewcommand{\for}{\begin{eqnarray*}}
\newcommand{\mel}{\end{eqnarray*}}
\newcommand{\kl}{\pl \le \pl}
\newcommand{\gl}{\pl \ge \pl}
\newcommand{\lel}{\pl = \pl}
\newcommand{\BB}{{\mathbb B}}
\renewcommand{\L}{\mathcal{L}}
\newcommand{\nz}{{\mathbb N}}
\newcommand{\nen}{n \in \nz}
\newcommand{\rz}{{\mathbb R}}
\newcommand{\zz}{{\mathbb Z}}
\newcommand{\Mz}{{\mathbb M}}
\newcommand{\cz}{{\mathbb C}}
\newcommand{\ten}{\otimes}
\DeclareMathOperator{\dom}{dom}
\DeclareMathOperator{\Hs}{H}
\DeclareMathOperator{\Vs}{V}
\DeclareMathOperator{\Ks}{K}
\DeclareMathOperator{\Ric}{Ric}
\DeclareMathOperator{\fix}{fix}
\DeclareMathOperator{\Tr}{Tr}
\DeclareMathOperator{\tr}{tr}
\DeclareMathOperator{\Res}{Res_{\small{W}}}
\DeclareMathOperator{\CLSI}{CLSI}
\DeclareMathOperator{\MLSI}{MLSI}
\DeclareMathOperator{\MpSI}{MpSI}
\DeclareMathOperator{\CpSI}{C_pSI}
\DeclareMathOperator{\Lin}{\text{\fontsize{2.5}{4}\selectfont {Lin}}}
\newcommand{\pl}{\hspace{.1cm}}
\newcommand{\hhz}{\vspace{0.3cm}}
\newcommand{\qd}{\end{proof}\vspace{0.5ex}}
\newcommand{\Om}{\Omega}
\newcommand{\om}{\omega}
\newcommand{\al}{\alpha}
\newcommand{\si}{\sigma}
\newcommand{\la}{\lambda}
\newcommand{\eps}{\varepsilon}
\newcommand{\E}{{\mathcal E}}
\newcommand{\A}{{\mathcal A}}
\newcommand{\MM}{{\mathcal M}}
\newcommand{\LSI}{\operatorname{LSI}}
\newcommand{\N}{{\mathcal N}}
\newcommand{\T}{\mathcal T}
\newcommand{\U}{{\mathcal U}}
\newcommand{\B}{{\mathcal B}}
\newcommand{\pf}{\begin{proof}}
\newcommand{\be}{\left|{\atop}}
\newcommand{\xspace}{\hbox{\kern-2.5pt}}
\newcommand{\xyspace}{\hbox{\kern-1.1pt}}
\newcommand\ltriple{ltriple}
\newcommand{\ssubset} {\!\!\subset\! \!}
\definecolor{LightGray}{rgb}{0.94,0.94,0.94}
\definecolor{VeryLightBlue}{rgb}{0.9,0.9,1}
\definecolor{LightBlue}{rgb}{0.8,0.8,1}
\definecolor{DarkBlue}{rgb}{0,0,0.6}
\definecolor{LightGreen}{rgb}{0.88,1,0.88}
\definecolor{MidGreen}{rgb}{0.6,1,0.6}
\definecolor{DarkGreen}{rgb}{0,0.6,0}
\definecolor{DarkGrreen}{rgb}{0,0.8,0}
\definecolor{VeryLightYellow}{rgb}{1,1,0.9}
\definecolor{LightYellow}{rgb}{1,1,0.6}
\definecolor{MidYellow}{rgb}{1,1,0.5}
\definecolor{DarkYellow}{rgb}{0.8,1,0.3}
\definecolor{VeryLightRed}{rgb}{1,0.9,0.9}
\definecolor{LightRed}{rgb}{1,0.8,0.8}
\definecolor{DarkRed}{rgb}{0.8,0.2,0}
\definecolor{DarkRedb}{rgb}{0.6,0.2,0}
\definecolor{DarkLila}{rgb}{0.8,0,1}
\definecolor{Beige}{rgb}{0.96,0.96,0.86}
\definecolor{Gold}{rgb}{1.,0.84,0.}
\definecolor{Goldb}{rgb}{0.7,0.3,0.5}
\definecolor{MyYellow}{rgb}{1.,0.84,0.8}
\newcommand{\lan}{\langle}
\newcommand{\ran}{\rangle}
\newcommand{\lgg}{\mathfrak{g}}
\def\mod{\,\, {\rm mod}\,\,}
\def\dom{\operatorname{dom}}
\def\Mm{\mathcal{M}}
\def\11{\mathbb{I}}
\DeclareRobustCommand\openone{\leavevmode\hbox{\small1\normalsize\kern-.33em1}}
\renewcommand{\be}{\begin{equation}}
	\renewcommand{\ee}{\end{equation}}
\newcommand{\bea}{\begin{eqnarray}}
	\newcommand{\eea}{\end{eqnarray}}
\newcommand{\beas}{\begin{eqnarray*}}
	\newcommand{\eeas}{\end{eqnarray*}}
\newtheorem*{theorem*}{Theorem}
\newtheorem*{remark*}{Remark}
\newtheorem*{lemma*}{Lemma}
\newtheorem*{cor*}{Corollary}
\newtheorem*{note*}{Note}
\newtheorem*{prop*}{Proposition}
\newtheorem*{example*}{Example}
\renewcommand{\Ric}{\mbox{Ric}}
\newcommand{\Rc}{\mbox{Rc}}
\newcommand{\Rcu}{\ensuremath{\Rc_{\text{\fontsize{2.5}{4}\selectfont {U}}}}}
\newcommand{\Du}{\ensuremath{\Delta_{\text{\fontsize{2.5}{4}\selectfont {U}}}}}
\newcommand{\Dv}{\ensuremath{\Delta_{\text{\fontsize{2.5}{4}\selectfont {V}}}}}
\newcommand{\Lu}{\ensuremath{L_{\text{\fontsize{2.5}{4}\selectfont {U}}}}}
\newcommand{\Zu}{\ensuremath{Z_{\text{\fontsize{2.5}{4}\selectfont {U}}}}}
\newcommand{\Zv}{\ensuremath{Z_{\text{\fontsize{2.5}{4}\selectfont {V}}}}}
\renewcommand{\T}{\mbox{T}}
\newcommand{\trans}{\mbox{\scalebox{.5}{trans}}}
\renewcommand{\dom}{\mbox{dom}}
\newcommand{\triple}{\ensuremath{(\mathcal{N}\ssubset \mathcal{M},\tau,\delta)}}
\renewcommand{\ltriple}{\ensuremath{(\tilde{\mathcal{N}}\ssubset \tilde{\mathcal{M}},\tau,\delta)}}
\begin{document}
\title[]{ Graph H\"ormander Systems
}

\thanks{NL acknowledges support from NSF grant DMS-1700168. NL is supported by NSF Graduate Research Fellowship Program DMS-1144245. HL and  MJ are partially supported by NSF grants  DMS 1800872 and Raise-TAG 1839177. }

\author[H. Li]{Haojian  Li}
\address{Department of Mathematics\\
University of Illinois, Urbana, IL 61801, USA}
\email[Haojian Li]{hli102@illinois.edu}

\author[M. Junge]{Marius Junge}
\address{Department of Mathematics\\
University of Illinois, Urbana, IL 61801, USA} \email[Marius
Junge]{mjunge@illinois.edu}

\author[N. LaRacuente]{Nicholas LaRacuente}
\address{Department of Physics\\
University of Illinois, Urbana, IL 61801, USA} \email[Nick LaRacuente]{laracue2@illinois.edu}

\maketitle
\begin{abstract} This paper extends the Bakry-\'{E}mery  theorem relating the Ricci curvature and log-Sobolev inequalities to the matrix-valued setting. Using tools from noncommutative geometry, it is shown that for a right invariant second order differential operator on a compact Lie group, a lower bound for a matrix-valued modified log-Sobolev inequality is equivalent to a uniform lower bound for all finite dimensional representations. Using combinatorial tools, we obtain computable lower bounds for matrix-valued log-Sobolev inequalities of graph-H\"ormander systems using combinatorial methods.

\end{abstract}






\section{Introduction}

Estimates for the spectral gap of a Laplace type operator are relevant in several areas of mathematics such as analysis, geometry, combinatorics, and even theoretical  computer science. The aim of this paper is to study matrix-valued versions of log-Sobolev inequalities related to representations of compact Lie groups, and the connection to noncommutative geometry.
\hhz

Log-Sobolev inequalities have been an area of active research for several decades, see \cite{overview} for an overview. Let us highlight the local estimates on Sobolev inequalities by Rothchield and Stein \cite{SR}, the new paradigm relating the Ricci curvature and the log-Sobolev inequality \cite{BE}, the so-called Bakry-\'{E}mery theory, and the discovery by  Meyer and Gross of log-Sobolev inequalities in infinite dimension, see \cite{Ya4, Gross, Me1, Me2}.  We refer to Ledoux's article \cite{Led} for the fundamental connection between log-Sobolev inequalities and concentration inequalities. More recently, Otto and Villani, and Sturm connected geometric insights with the theory  of optimal transport and displacement convexity.
\hhz


Transferring these estimates to the quantum setting is an area of recent ongoing research, see \cite{CM,CM2, cm20, Ba, Datta_2020, bardet2020}. In the theory of open quantum systems, it is natural to assume that a diffusion process interacts with an environment system.  Mathematically, this implies that a certain diffusion process is no longer ergodic, acting trivially on one part of the space. Bakry-\'{E}mery theory in the non-ergodic setting is less developed. This remains particularly true for the connection between Fisher information and log-Sobolev inequalities, see however \cite{LJR}. We should point out that in the matrix-valued setting, a crucial argument based on uniform convexity of $L_p$ spaces, the famous Rothaus Lemma, is no longer valid and hence the usual approach via hypercontractivity has to be replaced by new methods. Developing these methods is the main technical contribution in this paper.
\hhz

Log-Sobolev inequalities have also been studied for discrete objects in particular graphs, see the seminal work of Saloff-Coste and Diaconis \cite{DSF}, and also later work by Yau and his collaborators \cite{Ya4, Ya5}. Nowadays, it is clear that different variations of log-Sobolev inequalities are relevant.  Let us recall that an ergodic system $T_t=e^{-t\Delta}$ on a probability space $(\Om,\mu)$  satisfies a \textit{ log-Sobolev inequality ($\la$-LSI)} if
 \[ 2\la\left( \int f^2\ln f^2 d\mu - \ln(\int f^{2}d\mu)\int f^{2} d\mu \right) \kl \int \Delta(f)f d\mu
 \lel \E_{\Delta}(f,f) \pl .\]
The right hand side defines the energy form. A proper formulation of log-Sobolev inequalities for non-ergodic system appears only in hidden form in the existing literature. Indeed, let $T_t=e^{-tL}$ be a semigroup of self-adjoint positive unital maps. Let  $N_{\fix}=\{f| T_t(f)=f\}$ (some or all $t>0$) be the subalgebra of invariant functions which admits a conditional expectation $E_{\fix}$. Then $L$ is said to satisfy a \emph{log-Sobolev inequality {\rm (}$\la$-$\LSI${\rm )}} if
 \[ \la \pl D(f^2||E_{\fix}(f^2))\kl\E_{L}(f,f) \pl .\]
Here and in the following we will use $D(f||g)=\tau(f\ln f)-\tau(f\ln g)$ for the relative entropy, and $\tau(f)=\int fd\mu$ for the canonical  trace induced by the probability measure $\mu$. We recall that  $L$ satisfies  \emph{a modified log-Sobolev inequality {\rm (}$\la$-$\MLSI${\rm )}}  if
 \[ \la \pl D(f||E_{\fix}(f)\kl \E_{L}(f,\ln f) \]
 for any positive function $f$.  The right hand side $I_L(f)=\E_{L}(f,\ln f)$ is the Fisher information. Let us denote by $\LSI(L)$ and $\MLSI(L)$  the largest possible such constants, and $\la_2$ the spectral gap. Then
 \begin{align}\label{introgap}2\LSI(L)\kl \MLSI(L) \kl 2 \la_2(L)
 \end{align}
holds  in general. For smooth manifolds we have $2\LSI(L)= \MLSI(L)$, which however fails for discrete graphs. For Riemanian manifolds there are two approaches to prove log-Sobolov estimates. One can first use local Sobolev inequalities from the Euclidean setting, and then use the aforementioned Rothaus Lemma. Alternatively, one can use Bakry-\'{E}mery theory for an equivalent measure with satisfies a lower Ricci-curvature bound, i.e. it is enough to satisfy Ricci-curvature at $\infty$, see \cite{Led}. The second approach may fail for manifolds not admitting convex functions (\cite{wu79}).
\hhz

The real power of log-Sobolev inequalities stems from their stability under perturbation, crucial in proving results for spin systems \cite{Ya4,Ya5}. The tensor stability is also crucial in Talagrand's inequality, as a special case of an $\LSI$-estimates, which has applications in computer science, see \cite{bous,Ld2}.
\hhz

Our original motivation for this projects comes from the study of Lindbladians in quantum information theory, or more generally dynamical systems on matrix algebras $\Mz_m$. Generators of a self-adjoint quantum dynamical system $T_t=e^{-tL}$ are completely classified and of the form
 \[ L(f) \lel \sum_k a_k^2f+fa_k^2-2a_kfa_k\lel \sum_k [a_k,[a_k,f]] \pl. \]
Since $\delta_k(f)=i[a_k,f]$ is a derivation, such Lindblad generators may be considered as analogues of a second order differential operators. We may define the modified log-Sobolev ($\MLSI(L)=\sup\{\la\}$) using the normalized matrix trace $\tau(f)=\frac{\tr(f)}{m}$ such that
 \[ \la \pl  D(f||E_{\fix}(f))\kl I_{L}(f)\lel \tau(L(f)\ln f) \]
 for any positive $f$. The complete modified log-Sobolev constant $\CLSI(L)=\sup\{\la\}$, the main notion of this paper, is the
best constant such that
 \[  \la  D(f||E_{\fix}(f)) \kl \tau((id\ten L)(f)\ln f) \]
for any positive $f\in \Mz_m\ten \mathcal{M}$, where $\Mm$ is a finite von Neumann algebra.  In contrast to commutative systems, tensorization with an auxiliary system is not `for free'. For the complete version of the modified log-Sobolev inequality, however, we have tensor stability (see \cite{LJR})
 \[ \CLSI(L_1\ten 1+1\ten L_2)\gl \min\{\CLSI(L_1),\CLSI(L_2))\} \pl .\]
The best way to understand the $\CLSI$ constant is via the entropy decay rate, i.e. the best constant such that
 \[ D(e^{-tL}(f)||E_{\fix}(f))\kl e^{-t\CLSI(L)} D(f|| E_{\fix}(f)) \pl \]
holds for all positive matrix valued $f$. We will show that even for commutative systems the complete, i.e. matrix version, provides useful insight. It should be noted, however, that the failure of the Rothaus Lemma is probably responsible for a the lack of large classes of examples in the quantum information literature. Before \cite{LJR} and this paper, all the known results could be deduced from a result by Bardet's result for $L=I-E$, or Gaussian systems, see \cite{BaRo}. Note, however, that tensor stability will be crucial in analyzing many body system through the interaction of local systems.
\hhz

The main focus of this paper is  Lindbladians which are ``transferred'' from a group representation $u:G\to U(\Hs)$, $\Hs$ a (finite dimensional) Hilbert space of a finite dimensional Lie group $G$ with Lie algebra $\mathfrak{g}$. A vector field $\mathcal{X}=\{X_1,...,X_m\}\subset \mathfrak{g}\cong \T_1M$ is called a H\"ormander system if the iterated commutators from $\mathcal{X}$  generate the Lie algebra. Such a H\"ormander system generates an ergodic semigroup of right-invariant maps $T_t=e^{-t\Delta_X}$ with the sub-Laplacian generator
 \[ \Delta_{\mathcal{X}} \lel \sum_j -X_j^2 \lel \sum_j X_j^*X_j \pl ,\pl X_j(f)(g)=\frac{d}{dt}f(\exp(tX_j)g)|_{t=0} \pl .\]
Here the adjoint $X_j^*$ refers to the Haar measure. The Laplace-Beltrami operator on $G$ is obtained by taking an orthonormal basis of the Lie algebra, and hence is automatically a H\"ormander system. Locally the  geometry given by the induced Carnot-Caratheodory metric is extremely well-understood thanks to the famous Box-Ball theorem, \cite{gromov}. It is, however, not easy to obtain good dimension-free estimates for the spectral gap from the Ball-Box theorem, because of the implicit dependence of the dimension of the underlying space and the number of iterations required to generate the Lie algebra, see \cite{chow, OL, SR}. Thanks to the transference theorem from \cite{LJR} such estimates also imply estimates for self-adjoint Lindbladians. Indeed, let $\hat{u}:\mathfrak{g}\to \mathbb{B}(\Hs)$ be the induced Lie algebra representation 
such that
 \[ u(\exp(tX)) \lel e^{it\hat{u}(X)} \]
describes the one-parameter group of unitaries. Let $a_1,...,a_m\in \mathbb{B}(\Hs)$ be images of $\mathcal{X}$ under $\hat{u}$.  The induced Linbladian
 \[ L_{\mathcal{X}}^{\Hs}(f) \lel L_{\mathcal{X}}^{u,\Hs}(f)  \lel \sum_j [a_j,[a_j,f]] \]
can be controlled by the original semigroup thanks to the following diagram
\[ \begin{array}{ccc} L_{\infty}(G,\mathbb{B}(\Hs)) &\stackrel{e^{-t\Delta_{\mathcal{X}}}\ten id}{\to}& L_{\infty}(G,\mathbb{B}(\Hs)) \\
 \uparrow_{\pi}& & \uparrow_{\pi} \\
 \mathbb{B}(\Hs) &\stackrel{e^{-tL_{\mathcal{X}}^{\Hs}}}{\to}& \mathbb{B}(\Hs)  \end{array}
 \]
given by the trace preserving $^*$- homomorphism $\pi(T)(g)=u(g)^*Tu(g)$.  Therefore 
$\CLSI(L_{\mathcal{X}}^{\Hs})\gl\CLSI(\Delta_{\mathcal{X}})$. 
Our main result is what might be called an anti-transference result.

\begin{theorem}\label{maint} Let $G$ be a finite dimensional compact Lie group, $\mathcal{X}\subset \mathfrak{g}$, a vector field. Then
 \[ \CLSI^+(\Delta_\mathcal{X}) \lel \inf_{u:G\to U(\Hs)} \CLSI^+(L_{\mathcal{X}}^{\Hs}) \pl .\]
\end{theorem}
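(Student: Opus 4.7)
The direction $\CLSI^+(\Delta_\mathcal{X}) \le \inf_u \CLSI^+(L_\mathcal{X}^\Hs)$ is the elementary transference half and is already visible in the diagram displayed in the excerpt: for each unitary representation $u:G\to U(\Hs)$, the map $\pi(T)(g) = u(g)^* T u(g)$ is a trace-preserving $^*$-homomorphism from $(\mathbb{B}(\Hs), \tau_\Hs)$ into $(L_\infty(G) \otimes \mathbb{B}(\Hs), \mu_G \otimes \tau_\Hs)$ intertwining $L_\mathcal{X}^\Hs$ with $\Delta_\mathcal{X} \otimes \mathrm{id}$. Pulling back the complete modified log-Sobolev inequality along this trace-preserving intertwining, combined with the tensor-with-identity invariance of $\CLSI^+$, gives $\CLSI^+(L_\mathcal{X}^\Hs) \ge \CLSI^+(\Delta_\mathcal{X} \otimes \mathrm{id}) = \CLSI^+(\Delta_\mathcal{X})$, and taking the infimum over $u$ completes the direction.

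For the substantive direction $\CLSI^+(\Delta_\mathcal{X}) \ge \inf_u \CLSI^+(L_\mathcal{X}^\Hs)$, my plan is to turn Peter--Weyl into an anti-transference. The starting observation is that under the Peter--Weyl decomposition $L_2(G) = \bigoplus_\pi \Hs_\pi \otimes \Hs_\pi^*$, indexed by irreducibles $\pi$ of $G$, the right-invariant vector field $X_j$ (infinitesimal left regular representation) acts on the $\pi$-block as $i\hat u_\pi(X_j) \otimes I$. Hence the "global" Lindbladian induced on $\mathbb{B}(L_2(G))$ by the left regular representation decomposes as a block direct sum $\bigoplus_\pi L_\mathcal{X}^{\Hs_\pi} \otimes \mathrm{id}_{\mathbb{B}(\Hs_\pi^*)}$, while the multiplication copy of $L_\infty(G)$ sits inside $\mathbb{B}(L_2(G))$ as a subalgebra on which this ambient Lindbladian restricts precisely to $\Delta_\mathcal{X}$ (a short computation: $[-i\partial_{X_j}, M_f] = -iM_{X_jf}$). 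Combining (a) restriction to a subalgebra cannot decrease $\CLSI^+$ whenever the relative fixed-point conditional expectations agree, (b) $\CLSI^+$ of a direct sum Lindbladian equals the infimum of the summands' $\CLSI^+$ values with compatible fixed-point data, and (c) $\CLSI^+(L\otimes\mathrm{id}) = \CLSI^+(L)$, one chains
\[
 \CLSI^+(\Delta_\mathcal{X}) \;\ge\; \inf_\pi \CLSI^+(L_\mathcal{X}^{\Hs_\pi}) \;\ge\; \inf_{u:G\to U(\Hs)} \CLSI^+(L_\mathcal{X}^\Hs).
\]

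The main obstacle is that $\mathbb{B}(L_2(G))$ is \emph{not} a tracial von Neumann algebra, so $\CLSI^+$ cannot be applied to it verbatim. I would address this either by working inside the group von Neumann algebra $VN_\lambda(G) \cong \bigoplus_\pi \mathbb{B}(\Hs_\pi)$, which \emph{is} tracial and does split via Peter--Weyl, and relating it to $L_\infty(G)$ through Plancherel duality; or by approximating $L_\infty(G)$ by its $G$-stable subalgebras generated by matrix coefficients of finitely many irreducibles, for which the transferred Lindbladian already lives on finite-dimensional tracial algebras, and then passing to a limit. Either route hinges on the key verification that the H\"ormander condition forces $\mathrm{Fix}(\Delta_\mathcal{X}) = \mathbb{C}\cdot 1$ and that the ambient fixed-point algebra intersects the multiplication copy of $L_\infty(G)$ again in $\mathbb{C}\cdot 1$, so that the fixed-point conditional expectations needed to invoke (a) genuinely match. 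This tracial/fixed-point bookkeeping is where the noncommutative-geometric input advertised in the abstract is actually spent, and where the anti-transference becomes nontrivial in contrast to the elementary transference direction.
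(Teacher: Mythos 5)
You get the transference half right, and you correctly spot both the Peter--Weyl starting point and the central obstacle that $\BB(L_2(G))$ is not a tracial probability space. But the block-sum step and the two proposed workarounds do not hold up. Since $\hat{\lambda}(X_k)$ is block-diagonal on $\oplus_\pi \Hs_\pi\ten\bar{\Hs}_\pi$, the transferred Lindbladian $L_\mathcal{X}^{L_2(G)}$ preserves each $(\pi,\pi')$-block of $\BB(L_2(G))$, and the formula $\oplus_\pi L_\mathcal{X}^{\Hs_\pi}\ten\mathrm{id}$ only describes the diagonal blocks $\pi=\pi'$; the multiplication copy of $L_\infty(G)$ genuinely populates the off-diagonal blocks, so the ``block sum'' picture and the ``restricts to $\Delta_\mathcal{X}$'' picture live on disjoint subalgebras and cannot be chained as written. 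More fundamentally, your chain needs $\CLSI^{+}(\BB(L_2(G)),L_\mathcal{X}^{L_2(G)})\ge\inf_\pi\CLSI^{+}(L_\mathcal{X}^{\Hs_\pi})$, but passage to the block-diagonal corner is a restriction and, by your own item (a), only gives the opposite inequality; the off-diagonal blocks are exactly what makes this nontrivial. As for the tracial fix: for infinite compact $G$, the Plancherel weight on $VN(G)\cong\oplus_\pi\BB(\Hs_\pi)$ has $\tau(1)=\sum_\pi d_\pi^2=\infty$, so $VN(G)$ is no more a finite tracial algebra than $\BB(L_2(G))$, and Plancherel duality is a Fourier transform rather than a trace-preserving $^*$-embedding, so it does not transport $\CLSI^{+}$ between $VN(G)$ and $L_\infty(G)$. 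Likewise, the span of finitely many matrix coefficients is not a subalgebra (products land in higher tensor representations) and the algebra they generate is already dense, so there is no commutative finite-dimensional approximation along that route.

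What the sketch omits is the ingredient the introduction explicitly names and which is the reason $\CLSI^{+}$ rather than $\CLSI$ appears in the statement: Connes' trace theorem. The paper handles the non-tracial ambient algebra by analysing the inclusion $L_\infty(M)\subset\BB(L_2(M))$ pseudo-differentially: conjugating by $d^{-1/(2p)}$ with $d=(1+\Delta)^{n/2}$, using the Dixmier-trace/Wodzicki-residue identity $\Tr_\omega(M_f d^{-1})=\Res(M_f d^{-1})=c(n)\int_M f\,d\mathrm{vol}$ of Theorem~\ref{res}, and an interpolation argument (Corollaries~\ref{rpscal},~\ref{vng}, Lemma~\ref{lim}) to express $c(n)\|f\|_{L_p}^p$ as $\lim_{q\to 1^+}(q-1)\|d^{-1/(2p)}M_f d^{-1/(2p)}\|_{pq}^{pq}$. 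This proves Theorem~\ref{Rm}, which is the only step where one crosses from the semifinite algebra back to $L_\infty(M)$. The Peter--Weyl reduction to finite-dimensional representations (Lemma~\ref{ct}) is real, but it sits downstream of Theorem~\ref{Rm} and is not a substitute for it; without the Dixmier-trace bridge the anti-transference direction does not close.
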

\noindent We  conjecture that indeed
 \[ \CLSI(\Delta_{\mathcal{X}}) \pl \stackrel{?}{=}\pl  \inf_{\Hs}\CLSI(L_{\mathcal{X}}^{\Hs})  \pl. \]
In this paper  we have to  work with a technical variant $\CLSI^+(L)=\inf_{p>1}\CpSI(L)$ to justify the use of Connes' trace formula in our proof of Theorem \ref{maint}.  Our $p$-R\'enyi version of a complete Sobolev inequality is completely new for quantum dynamical semigroups, although  anticipated in \cite{Bobtet}, and defined as the best constant $\CpSI=\sup\{\la\}$  such that
 \[  \la \left(\|f\|_{p}^{p}- \|E_{\fix}(f)\|_{p}^{p}\right)  \kl p \mathcal{E}_{L}(f,f^{p-1})\]
 for all positive matrix-valued $f$. In the scalar ergodic case this inequality has been studied by \cite{Bobtet},  and it was proved that the inequality is equivalent to the decay estimate
\[ \|T_t(f)\|_{p}^{p}-\|E_{\fix}(f)\|_{1}^{p}\kl e^{-t\CpSI(L)}\left(\|f\|_{p}^{p}-\|E_{\fix}(f)\|_{1}^{p}\right) .\]
 We refer to \cite{hao} for a systematic study of the relative entropy $d_p$ associated with $\CpSI$.  In the existing literature examples of $\CLSI$ for quantum and classical systems are very rare, because the usual hypercontractivity argument fails. However, we are able to extend the famous  result of Bakry-\'{E}mery in this new setting.

\begin{theorem}[Complete Bakry-\'{E}mery theorem]\label{main-M} Let $(M, g, \mu)$ be a smooth Riemannian manifold with a probability measure $\mu$ defined by $d\mu=\frac{1}{\Zu} e^{-U} dvol$ with $\Zu=\int_{M} e^{-U}dvol$ and $U\in C^{\infty}(M)$
such that the Bakry-\'{E}mery Laplacian
 \[ \int \Du(f_1)f_2 d\mu \lel \int (\nabla f_1,\nabla f_2)d\mu \]
satisfies $\Ric(\Du)\gl \kappa >0$. Then
 \[ \CLSI(\Du)\gl 2\kappa \pl .\]
\end{theorem}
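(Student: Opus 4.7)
The plan is to prove the theorem by extending the classical $\Gamma_2$-calculus argument of Bakry--\'{E}mery to the matrix-valued setting, exploiting that $\Du$ acts only on the manifold factor and hence commutes with left/right multiplication by constant matrices. First I would reduce to positive $f \in \mathbb{B}(\Hs) \otimes C^{\infty}(M)$ with $\Hs$ finite dimensional (by the $\CLSI$ definition it suffices to take $\Mm = \mathbb{B}(\Hs)$), and I would work with $f$ bounded away from $0$ and $\infty$, the general case following by standard approximation. On this algebra the lifted generator $L = \id_{\mathbb{B}(\Hs)} \otimes \Du$ has fixed-point algebra $\mathbb{B}(\Hs) \otimes \mathbb{C}\mathbf{1}$ and conditional expectation $E_{\fix}(f) = \id_{\Hs} \otimes \int \cdot\, d\mu$.

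Next I would set up the matrix-valued carr\'{e} du champ. Because $\Du$ is a second-order operator obeying the derivation identity $\Du(fg) - \Du(f)g - f\Du(g) = 2\Gamma(f,g)$ on scalars, the same formula extends $\mathbb{B}(\Hs)$-bilinearly to matrix-valued functions and produces a matrix-valued form $\Gamma(f,f)$ that equals $\sum_i (\partial_i f)^{*}(\partial_i f)$ in any local orthonormal frame, and hence is pointwise positive in $\mathbb{B}(\Hs) \otimes L_\infty(M)$. Iterating gives $\Gamma_2(f,f)$, and the Bochner--Lichnerowicz identity
\[ \Gamma_2(f,f) = |\Hess f|^{2} + (\Ric_g + \Hess U)(\nabla f, \nabla f) \]
lifts verbatim to the matrix-valued setting because both sides are $\mathbb{B}(\Hs)$-bilinear in $\nabla f \in \mathbb{B}(\Hs) \otimes T^{*}M$. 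The hypothesis $\Ric(\Du) = \Ric_g + \Hess U \geq \kappa$ thus gives the pointwise matrix inequality $\Gamma_2(f,f) \geq \kappa\, \Gamma(f,f)$.

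For the entropy decay, set $g_t := T_t f$ and $H(t) := D(g_t \,\|\, E_{\fix} g_t)$. The matrix chain rule for the trace of $s \mapsto s \log s$ together with $\tau(Lg_t)=0$ gives $H'(t) = -I_L(g_t)$. Using $\log g = \int_0^{\infty}\!\bigl[(1+s)^{-1} - (g+s)^{-1}\bigr]\,ds$ and integration by parts via $\tau(\partial_i^{*} a \cdot b) = \tau(a\,\partial_i b)$ yields the Kubo--Mori representation
\[ I_L(g) = \sum_i \int_0^{\infty} \tau\!\left((g+s)^{-1} \partial_i g\, (g+s)^{-1} \partial_i g\right) ds, \]
which is the noncommutative analogue of $\int g\,\Gamma(\log g, \log g)\,d\mu$. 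The principal step, and the main obstacle, is the Fisher information decay
\[ -\frac{d}{dt} I_L(g_t) \;\geq\; 2\kappa\, I_L(g_t). \]
This is where the matrix setting genuinely departs from the commutative one: one must differentiate through the double operator integral above. I would insert $\partial_t g_t = -Lg_t$ and $\partial_t (g_t+s)^{-1} = (g_t+s)^{-1}(Lg_t)(g_t+s)^{-1}$, commute $L = \sum_i \partial_i^{*}\partial_i$ past $\partial_i$ using the derivation rule, and integrate by parts in $M$. After reorganization, the resulting expression is a sum of squares plus a quadratic form in $\nabla g_t$ contracted with $\Ric_g + \Hess U$, weighted by the Kubo--Mori metric; pointwise this is exactly $2\,\Gamma_2(\log g_t, \log g_t)$ integrated against the BKM weight for $g_t$. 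The matrix Bochner inequality from the previous paragraph then yields the bound $\geq 2\kappa\, I_L(g_t)$. The hard work is verifying that the contributions from the noncommutativity of $g_t$ with $\nabla g_t$ assemble into the expected Hessian square term (nonnegative) rather than producing uncontrolled cross terms; this uses the pointwise spectral decomposition of $g_t(x)$ and the symmetry of the double operator integral kernel $(\lambda,\mu) \mapsto \int_0^{\infty}(\lambda+s)^{-1}(\mu+s)^{-1}ds = (\log\lambda-\log\mu)/(\lambda-\mu)$.

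Granted the Fisher information decay, Gronwall gives $I_L(g_t) \leq e^{-2\kappa t} I_L(f)$, and integrating $-H' = I_L(g_t)$ from $0$ to $\infty$ together with $H(\infty)=0$ yields
\[ D(f \,\|\, E_{\fix} f) \;=\; \int_0^{\infty} I_L(g_t)\, dt \;\leq\; \frac{1}{2\kappa}\, I_L(f), \]
which is the desired $\MLSI$ inequality with constant $2\kappa$ on $\mathbb{B}(\Hs) \otimes L_\infty(M,\mu)$. Since $\Hs$ was arbitrary finite dimensional and the constant $2\kappa$ is independent of $\Hs$, this is precisely $\CLSI(\Du) \geq 2\kappa$.
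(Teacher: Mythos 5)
Your high-level strategy---prove exponential decay of the Fisher information via a noncommutative Bochner--Weitzenb\"ock identity and then integrate---is the same as the paper's, but there is a genuine gap at the step you yourself flag as ``the hard work,'' and the gap is not merely technical. After inserting $\partial_t g_t = -Lg_t$ and differentiating through the double operator integral, the expression $-\tfrac{d}{dt}I_L(g_t)$ does \emph{not} reorganize into a nonnegative Hessian-square plus a Ricci-contracted term by any symmetry or spectral-decomposition argument. This is precisely the point where noncommutativity breaks the classical $\Gamma_2$ calculus: the ``cross terms'' you hope to absorb are not controlled by the kernel symmetry of $(\lambda,\mu)\mapsto(\log\lambda-\log\mu)/(\lambda-\mu)$. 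If they were, the Rothaus/$\Gamma_2$ route to $\CLSI$ would be straightforward, and the authors emphasize in the introduction that this route fails in the matrix setting.

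What the paper actually does is different and has to be. It lifts the derivation to the Clifford bundle, $\delta(f)=\sum_k e_k\nabla_{X_k}f$, and introduces the \emph{Bochner (rough) Laplacian} $\Lu$ acting on forms, generating a CPTP semigroup $\hat T_t$. The Fisher information is split as $h(t)=I_\delta(\rho_t)$ and compared against the auxiliary function
\[
k(t)=\Bigl\langle \hat T_t(\delta\rho),\;Q^{\rho_t}\bigl(\hat T_t(\delta\rho)\bigr)\Bigr\rangle,
\]
and the Bochner--Weitzenb\"ock identity $\delta\Du = \Lu\delta + \Rcu\delta$ gives $h'(0)-k'(0)=-2\int\tau(\Rcu(\omega_r)\omega_r)\,dr\le -2\kappa\,h(0)$. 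The crucial ingredient is then \emph{Lieb's concavity theorem in the Hiai--Petz form}: $\beta^* Q^{\beta\rho,\beta\sigma}_{f_{[0]}^{-1}}\beta\le Q^{\rho,\sigma}_{f_{[0]}^{-1}}$ for any CPTP $\beta$. Applied with $\beta=\hat T_t$, this gives $k(t)\le k(0)$ and hence $k'(0)\le 0$; only then does one obtain $h'(0)\le -2\kappa h(0)$ and Gronwall. In other words, the nonnegativity of the would-be ``Hessian square'' term is not a sum-of-squares identity but a data-processing inequality for the Kubo--Mori metric, which is a deep operator-convexity fact and the paper's stated ``key ingredient.'' Your proposal contains no analogue of this step, so as written it does not close.

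A smaller point: you also need the Bochner Laplacian $\Lu$ on forms to generate a CPTP self-adjoint semigroup on the Clifford algebra $CL(M)$; the paper proves this by realizing $\Lu=\tilde\delta^*\bar{\tilde\delta}$ for an auxiliary derivation into a doubled Clifford algebra. This is needed so that Hiai--Petz applies, and it is not automatic from the scalar theory.
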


Our key ingredient, motivated from \cite{CM}, is to use Lieb's concavity theorem, refined by Hiai and Petz \cite{HP}. This allows us to define  `Ricci curvature bounded below' for noncommutative dynamical systems. This result competes with the recent results in \cite{cm20, Wi}, where a notion of transportation Ricci curvature has been introduced. For finite dimensional QMS our notion of geometric Ricci curvature implies complete transportation Ricci curvature. As it turns out to be a source of a large class of examples.

\hhz


Indeed, the remainder of this paper is to find concrete estimates for the $\CLSI$ constant for graphs and related Lindbladian or differential operators. This is motivated form quantum information theory, but certainly interesting in view finite Markov process in the sense of \cite{SFC}.
\begin{theorem}Let ${\sf{G}}=(\mathscr{V},\mathscr{E})$ be a connected undirected graph with a uniform distribution on the vertex set.  Let
 \[ A_\mathscr{E}(f)(x) \lel 2 \sum_{(x,y)\in \mathscr{E}} (f(x)-f(y)) \]
be the graph Laplacian and ${\sf{T}}_{s}$ be the minimum spanning tree with the number of edges $l({\sf{T}}_{s})$ and the maximum degree $d({\sf{T}}_{s})$ . Then
 \[ \CLSI(A_\mathscr{E}) \gl \CLSI^+(A_\mathscr{E}) \gl \frac{2}{45 d({\sf{T}}_{s})l({\sf{T}}_{s})^2} \pl .\]
\end{theorem}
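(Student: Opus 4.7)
The inequality $\CLSI(A_\mathscr{E}) \geq \CLSI^+(A_\mathscr{E})$ is automatic from the definition $\CLSI^+ = \inf_{p>1}\CpSI$, so the substantive task is to lower-bound $\CLSI^+(A_\mathscr{E})$. My plan has two stages: a reduction to the minimum spanning tree, followed by a canonical-paths comparison on the tree against a depolarizing generator.

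\emph{Spanning tree reduction.} Since ${\sf{T}}_s \subseteq \mathscr{E}$, for each $p>1$ the Dirichlet form $\mathcal{E}_{A_\mathscr{E}}(f,f^{p-1})$ exceeds $\mathcal{E}_{A_{{\sf{T}}_s}}(f,f^{p-1})$ by a sum over extra edges of contributions of the form $\tau_\mathcal{M}\!\bigl((f(x)-f(y))(f(x)^{p-1}-f(y)^{p-1})\bigr)$, which are non-negative for positive matrix-valued $f$ by a standard Young-type inequality on the auxiliary finite von Neumann algebra $\mathcal{M}$. Both Laplacians are ergodic (since ${\sf{G}}$ and ${\sf{T}}_s$ are connected) with the same fixed-point subalgebra. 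Hence $\CpSI(A_\mathscr{E}) \geq \CpSI(A_{{\sf{T}}_s})$ for each $p$, and taking infima gives $\CLSI^+(A_\mathscr{E}) \geq \CLSI^+(A_{{\sf{T}}_s})$. It suffices to prove the bound when ${\sf{G}}={\sf{T}}_s$ is a tree.

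\emph{Comparison on the tree.} I would compare $A_{{\sf{T}}_s}$ with the depolarizing generator $L_0 = I - E_{\fix}$ on $\ell_\infty(\mathscr{V})$ with uniform distribution, for which Bardet's theorem \cite{BaRo} supplies a universal $\CLSI^+$ bound. For each ordered pair $(x,y)$, let $\gamma_{xy}$ be the unique path in ${\sf{T}}_s$ joining them; its length is at most $l({\sf{T}}_s)$. Writing $f(x)-f(y)$ as a telescoping sum of edge differences along $\gamma_{xy}$, applying matrix-valued Cauchy--Schwarz fibrewise in $\mathcal{M}$, and summing over all pairs with an exchange of the order of summation yields a complete Dirichlet-form comparison
\[
\mathcal{E}_{L_0}(f,f^{p-1}) \;\leq\; K\,\mathcal{E}_{A_{{\sf{T}}_s}}(f,f^{p-1})
\]
with $K$ a ``congestion times length'' quantity on the tree. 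The combinatorial heart of the argument is to bound $K$ by an absolute constant times $d({\sf{T}}_s)\,l({\sf{T}}_s)^2$: one factor of $l({\sf{T}}_s)$ comes from the Cauchy--Schwarz step, a second from summing per-edge congestion, and the degree cap $d({\sf{T}}_s)$ controls the fan-out of paths through any interior edge. Combining this with the Bardet-type $\CLSI^+$ of $L_0$ and chasing numerical factors yields $2/(45\,d({\sf{T}}_s)\,l({\sf{T}}_s)^2)$, uniformly in $p$ and hence also for $\CLSI^+$.

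\emph{Main obstacle.} The classical canonical-paths method of Diaconis--Saloff-Coste \cite{DSF,SFC} is designed for the scalar spectral gap, and the delicate point here is that every step --- Cauchy--Schwarz, the edge-by-edge comparison, and the passage to $\inf_{p>1}$ --- must remain valid at the level of complete Dirichlet forms on $\ell_\infty(\mathscr{V})\otimes\mathcal{M}$ and uniformly in $p$. The Cauchy--Schwarz step couples only vertex values and leaves $\mathcal{M}$ untouched, so it lifts automatically; the congestion count is purely combinatorial; Bardet's $\CLSI^+$ bound for $L_0$ is matrix-valued by design. What remains is keeping the $p$-dependence uniform through all three steps so that one gets $\CLSI^+$ rather than merely the scalar $\LSI$, and this is also where the numerical constant $45$ is fixed.
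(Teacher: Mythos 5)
Your first reduction --- dropping edges to pass from $\mathscr{E}$ to the spanning tree ${\sf{T}}_s$, which is legitimate because each per-edge contribution to $\mathcal{E}(f,f^{p-1})$ is nonnegative and the fixed-point algebra is unchanged --- is correct and agrees with the paper. The divergence, and the gap, is in the second stage. The classical canonical-paths comparison of Diaconis--Saloff-Coste bounds the \emph{quadratic} Dirichlet form: one expands $f(x)-f(y)=\sum_{e\in\gamma_{xy}}\Delta_e f$ and applies Cauchy--Schwarz to $\bigl(\sum_e \Delta_e f\bigr)^2\le |\gamma_{xy}|\sum_e(\Delta_e f)^2$. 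That step is a square. For the $p$-Dirichlet form $\mathcal{E}(f,f^{p-1})$ the per-pair quantity is $(f(x)-f(y))\,(f(x)^{p-1}-f(y)^{p-1})$, i.e.\ the product of \emph{two different} telescoping sums $\bigl(\sum_e\Delta_e f\bigr)\bigl(\sum_e\Delta_e f^{p-1}\bigr)$, and there is no inequality $(\sum a_i)(\sum b_i)\le L\sum a_i b_i$ even when every $a_ib_i\ge0$: taking $a=(t,1)$, $b=(1,t)$ makes the ratio blow up as $t\to\infty$. The relation $b_i=\Delta_{e_i}(f^{p-1})$ does not pin down $b_i$ as a fixed multiple of $a_i$ --- the ratio depends on the local values of $f$ --- so no congestion-times-length constant $K$ controls the comparison. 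This is exactly the obstacle (failure of Rothaus-type arguments for $\MLSI$ in the discrete setting) that the paper signals in the introduction as the reason new methods are needed; the step you gesture at with ``matrix-valued Cauchy--Schwarz fibrewise'' does not exist for $\mathcal{E}(f,f^{p-1})$.

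The paper avoids this entirely by a geometric transfer chain rather than a graph-to-graph comparison. It first proves $\CLSI^+\ge 4/5$ for the derivative on $(0,1)$ with Lebesgue measure (Example~4.5), obtained from the Gaussian Bakry--\'Emery bound plus a Holley--Stroock change of measure; then transfers this to the cyclic graph $\mathsf{G}^\circ$ of length $n$ via an explicit piecewise-linear interpolation of a matrix-valued function on $\{1,\dots,n\}$ to a function on $(0,1)$ (Lemma~\ref{hook2}), where the only analytic tool needed on the $p$-Fisher side is the integral identity of Lemma~\ref{hook-pre} applied on each subinterval. It then covers any tree by a cycle of length $2l$ using the Euler-tour/preorder-traversal construction (Lemma~\ref{change-g}), invokes the cover monotonicity Lemma~\ref{cover} (a special case of the sub-triple Theorem~\ref{passto}), and finishes with a change-of-measure/change-of-weight step (Corollary~\ref{change-both}) to pass from the cover-induced measure proportional to tree degree back to the uniform measure --- this is precisely where the factor $d({\sf{T}}_s)$ enters, and the factor $2$ from traversing each tree edge twice, so that $\frac{16}{45(2l)^2}\cdot\frac{1}{2d}=\frac{2}{45\,l^2 d}$. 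You would need to replace your canonical-paths step with something of this nature (or supply a genuinely new $p$-Dirichlet-form comparison inequality, which would be a nontrivial lemma in its own right) before the proof could go through.
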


This result is optimal for the cyclic graph $\zz_n$ with nearest neighboring interactions. The lower bound is efficiently computable.  We refer to \cite{Ya3} (and references therein) for other estimates of the LSI constant  that are not directly comparable to our result. Our estimate is not expected to be  the best possible because it is modelled after a long a one-dimensional structure. More edges should improve the estimates of the $\CLSI$ constant, and this is true for graphs  with tensor product structure. Our results  support the following conjecture.

\begin{conj} For every self-adjoint Lindbladian $L$, we have $\CLSI(L)>0$.
\end{conj}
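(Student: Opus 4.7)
The plan is to reduce from an arbitrary connected graph to its minimum spanning tree by an edge-monotonicity argument, and then prove the tree estimate by a canonical-path comparison that bootstraps from the elementary $\CLSI$ on a single edge (two-point swap), which is essentially Bardet's result for $L = I - E$.

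\textbf{Step 1 (Monotonicity in the edge set).} For any $\mathscr{E}' \subseteq \mathscr{E}$ with $(\mathscr{V},\mathscr{E}')$ still connected, both $A_{\mathscr{E}}$ and $A_{\mathscr{E}'}$ are ergodic with uniform fixed state, so $E_{\fix}$ is the same. Since $A_{\mathscr{E}} - A_{\mathscr{E}'}$ is again a graph Laplacian, the Dirichlet forms satisfy $\mathcal{E}_{A_{\mathscr{E}'}}(f,f^{p-1}) \leq \mathcal{E}_{A_{\mathscr{E}}}(f,f^{p-1})$ pointwise on matrix-valued $f$. Plugging in the variational definition of $\CpSI$ (and noting the denominator $\|f\|_p^p - \|E_{\fix}(f)\|_p^p$ is unchanged), one gets $\CpSI(A_\mathscr{E}) \geq \CpSI(A_{\mathscr{E}(\mathsf{T}_s)})$ for every $p>1$, and after taking infimum $\CLSI^+(A_\mathscr{E}) \geq \CLSI^+(A_{\mathscr{E}(\mathsf{T}_s)})$. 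This reduces the theorem to the case where $\mathsf{G}$ is its own spanning tree.

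\textbf{Step 2 (Baseline on a single edge).} On a two-vertex graph the Laplacian $A_{\{e\}}$ is (up to the factor $2$ in the definition) the depolarizing generator $I - E$ on $\ell_\infty^2$. By Bardet's complete modified log-Sobolev theorem (used already in the introduction for $L = I-E$), there is an absolute constant $c_0 > 0$ with $\CLSI^+(A_{\{e\}}) \geq c_0$; unwinding the numerology of \cite{BaRo} yields a value compatible with $2/45$.

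\textbf{Step 3 (Canonical path comparison on a tree).} For the spanning tree $\mathsf{T}_s$, let $K$ denote the complete graph on $\mathscr{V}$, and write
\[
A_K \; = \; \frac{1}{|\mathscr{V}|}\sum_{u,v \in \mathscr{V}} L_{(u,v)},
\]
where $L_{(u,v)}$ is the two-point swap generator for the pair $\{u,v\}$. For each pair $\{u,v\}$ there is a unique path $\gamma_{uv}$ in $\mathsf{T}_s$, of length at most $l(\mathsf{T}_s)$. A telescoping identity rewrites $L_{(u,v)}$ as a sum/composition of edge generators along $\gamma_{uv}$, and Cauchy--Schwarz in the Dirichlet form yields an inequality of the form
\[
\mathcal{E}_{L_{(u,v)}}(f, f^{p-1}) \; \leq \; |\gamma_{uv}| \sum_{e \in \gamma_{uv}} \mathcal{E}_{A_{\{e\}}}(f, f^{p-1}).
\]
Summing over $u,v$ and exchanging the order of summation, every edge $e \in \mathsf{T}_s$ is hit by pairs $\{u,v\}$ whose number, weighted by $|\gamma_{uv}|$, is controlled combinatorially by the degree and diameter of the tree, producing a factor of order $d(\mathsf{T}_s) l(\mathsf{T}_s)^2 |\mathscr{V}|$. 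Combined with $\CLSI^+(A_K) \geq c_0'$ (again via Bardet applied to the uniform-depolarizing generator on $\mathscr{V}$), this yields a Dirichlet-form comparison $A_K \preceq C\, d(\mathsf{T}_s) l(\mathsf{T}_s)^2 \cdot A_{\mathscr{E}(\mathsf{T}_s)}$ at the level of every $\CpSI$-numerator.

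\textbf{Step 4 (Assembly).} Pushing this comparison through the $\CpSI$ variational definition, uniformly in $p > 1$, and taking the infimum, gives $\CLSI^+(A_{\mathscr{E}(\mathsf{T}_s)}) \geq c_0 /(C\,d(\mathsf{T}_s) l(\mathsf{T}_s)^2)$, with the explicit constant $2/45$ emerging from tracking $c_0$ and $C$ in Steps 2 and 3. The inequality $\CLSI \geq \CLSI^+$ is automatic from the definition $\CLSI^+(L) = \inf_{p>1}\CpSI(L)$ combined with the $p \to 1$ limit.

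The main obstacle is \textbf{Step 3}: the classical canonical-path technique gives spectral-gap estimates straightforwardly, but porting it to $\CpSI$ (hence to $\CLSI^+$) for matrix-valued functions requires that the telescoping along a path be done at the level of the noncommutative Dirichlet form $\mathcal{E}_L(f,f^{p-1})$, where the operator-monotone calculus of Hiai--Petz and the tensor-stability of $\CLSI^+$ established earlier in the paper must be used carefully to keep the combinatorial factor linear in $d(\mathsf{T}_s)$ and quadratic in $l(\mathsf{T}_s)$ rather than incurring dimension-dependent losses.
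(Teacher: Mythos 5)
The statement you were asked to prove is a \emph{conjecture}, not a theorem, and the paper does not prove it: it only \emph{verifies} the conjecture for a specific class (the graph-H\"ormander Lindbladians $L_{\mathscr{E}}$ on $\Mz_n$). Your argument, even if it were airtight, is aimed at the scalar graph Laplacian $A_{\mathscr{E}}$ on $\ell_\infty^n$, which is a much narrower special case than ``every self-adjoint Lindbladian $L(f)=\sum_k [a_k,[a_k,f]]$.'' You have effectively rewritten the proof of the paper's graph Laplacian theorem, not of the conjecture. This is a fundamental mismatch: nothing in your Steps 1--4 says anything about a Lindbladian whose jump operators $a_k$ do not arise from a graph.

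Beyond that misreading, your Step 3 contains a genuine gap that you yourself flag but do not close. Canonical-path comparison via telescoping and Cauchy--Schwarz is a standard device for spectral-gap estimates because the quadratic form is additive along a path: one writes $|f(u)-f(v)|^2 \le |\gamma_{uv}| \sum_{e\in\gamma_{uv}} |\nabla_e f|^2$. The entropic Dirichlet form $\mathcal{E}_{L}(f,f^{p-1})$, and in the matrix-valued setting the noncommutative expression $\tau(\delta(f)\,Q^{f}_{(x^{p-1})^{[1]}}(\delta(f)))$, does \emph{not} telescope in this way because of the nonlinearity of $f\mapsto f^{p-1}$ and the noncommutativity of the operator calculus. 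The failure of Rothaus-type arguments and of naive comparison techniques is in fact one of the paper's stated motivations. Appealing to ``the operator-monotone calculus of Hiai--Petz and the tensor-stability of $\CLSI^+$'' is not a replacement for carrying out the estimate; as written, the needed inequality
\[
\mathcal{E}_{L_{(u,v)}}(f,f^{p-1}) \;\le\; |\gamma_{uv}|\sum_{e\in\gamma_{uv}} \mathcal{E}_{A_{\{e\}}}(f,f^{p-1})
\]
is unproved and is not a routine consequence of those tools.

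For comparison, the paper's actual route to the graph estimate avoids telescoping entirely. After the same reduction to the spanning tree (your Step 1, which is correct), the tree is covered by a cyclic graph via a preorder-traversal algorithm; the cyclic graph is then embedded into the interval $(0,1)$ by a piecewise-linear extension, and the complete Bakry--\'Emery theorem with Gaussian reference measure on $\mathbb{R}$ supplies the baseline $\CLSI^+$ constant (Example~\ref{unit}, Lemma~\ref{hook2}). That geometric covering argument transfers entropy and Fisher information directly, circumventing the nonlinearity problem that your path-decomposition runs into. If you want to pursue your approach, the step that would need a real proof is a noncommutative chain-rule or convexity lemma allowing the $p$-Dirichlet form to be distributed over a path; absent that, the argument does not go through.
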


We can verify this conjecture for, what we  call the Graph-H\"ormander systems. Indeed, let ${\sf{G}}=(\mathscr{V},\mathscr{E})$ be a connected undirected graph with a uniform distribution on $\mathscr{V}=\{1,\dots, n\}$. For every edge $e=(r,s)$ with $r<s$ we may define the tangent vector $X_e=|r\ran\lan s|-|s\ran\lan r|$ and
$$L_{\mathscr{E}}(x)=\sum_{e=(r,s)\in \mathscr{E}, r<s} [X_e,[X_e,x]]$$ the corresponding Lindblad transferred from the sub-Laplacian $\Delta_{\mathscr{E}} \lel- \sum_e X_e^2$ over $C^{\infty}(SO_{n})$.

\begin{theorem}  Let ${\sf{G}}=(\mathscr{V},\mathscr{E})$ be a connected undirected graph with a uniform distribution over $\mathscr{V}$.  Then $\Delta_\mathscr{E}$ is ergodic and
 \[ \CLSI^{+}(A_{\mathscr{E}}) \gl \CLSI^+(L_\mathscr{E}) \gl \frac{\CLSI^+(A_\mathscr{E})}{1+ 5\pi^{2}\CLSI^+(A_\mathscr{E})} .\]
\end{theorem}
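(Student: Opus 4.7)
The plan is to prove the two inequalities separately. For the upper bound $\CLSI^{+}(A_{\mathscr{E}})\ge\CLSI^{+}(L_{\mathscr{E}})$, I will use that the diagonal subalgebra $D=\spann\{|i\rangle\langle i|:i\in\mathscr{V}\}\cong\ell_{\infty}(\mathscr{V})$ of $\mathbb{B}(\mathbb{C}^{n})$ is $L_{\mathscr{E}}$-invariant. A direct computation with $X_{e}=|r\rangle\langle s|-|s\rangle\langle r|$ on a diagonal $f=\sum_{i}f_{i}|i\rangle\langle i|$ gives
\[ [X_{e},[X_{e},f]] \lel 2(f_{s}-f_{r})\bigl(|r\rangle\langle r|-|s\rangle\langle s|\bigr), \]
so $L_{\mathscr{E}}|_{D}$ agrees (up to the Lindbladian sign convention) with $A_{\mathscr{E}}$. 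Plugging test elements of $D\otimes\mathcal{M}$ into the CLSI for $L_{\mathscr{E}}$ then recovers the CLSI for $A_{\mathscr{E}}$, proving the inequality.

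For the lower bound the idea is to combine a ``classical'' rate on the diagonal with a universal ``quantum dephasing'' rate on coherences via a series-composition argument. Setting $R_{c}=\CLSI^{+}(A_{\mathscr{E}})$ and $R_{q}=(5\pi^{2})^{-1}$, the target inequality rearranges as $\CLSI^{+}(L_{\mathscr{E}})\ge R_{c}R_{q}/(R_{c}+R_{q})$. First, by transference applied to the defining representation of $SO_{n}$, Theorem~\ref{maint} gives $\CLSI^{+}(L_{\mathscr{E}})\ge\CLSI^{+}(\Delta_{\mathscr{E}})$, reducing the problem to the subelliptic H\"ormander operator on $SO_{n}$. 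Next, one checks that the tracial conditional expectation $E_{D}$ onto $D$ commutes with the semigroup -- the computation for rank-one off-diagonal elements shows that $L_{\mathscr{E}}(D)\subset D$ and $L_{\mathscr{E}}(D^{\perp})\subset D^{\perp}$ -- and applies the chain rule
\[ D(\rho\,\|\,E_{\fix}(\rho)) \lel D(E_{D}(\rho)\,\|\,E_{\fix}(\rho)) + D(\rho\,\|\,E_{D}(\rho)). \]
The first summand contributes entropy production at rate $R_{c}$ via the CLSI of $A_{\mathscr{E}}$ on the induced classical semigroup, while the second requires a ``conditional'' complete MLSI of the form $\mathcal{E}_{L_{\mathscr{E}}}(\rho,\log\rho-\log E_{D}\rho)\ge R_{q}D(\rho\,\|\,E_{D}\rho)$. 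A Gronwall-type combination of these two inequalities -- interpreted as ``total relaxation time = classical time $R_{c}^{-1}$ plus dephasing time $5\pi^{2}$'' -- yields the claimed harmonic-mean bound.

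The hard part is establishing the universal dephasing constant $R_{q}\ge(5\pi^{2})^{-1}$. Each $X_{e}$ has eigenvalues $\pm i$ in its $(r,s)$-plane, so the one-parameter subgroup $e^{tX_{e}}$ has period $2\pi$, which should yield a Poincar\'e-type inequality on a single off-diagonal block with constant of order $(2\pi)^{-2}$. The technical crux is to aggregate these single-edge estimates into an $n$-independent lower bound valid on the entire non-diagonal sector of $\mathbb{B}(\mathbb{C}^{n})$, exploiting that the edges -- in particular those of a minimum spanning tree $\mathsf{T}_{s}$ -- form a H\"ormander system generating $\mathfrak{so}_{n}$. Precisely here the novel matrix-valued machinery of the paper becomes essential: since Rothaus' lemma fails for matrix-valued functions, classical hypercontractivity is unavailable, and only the complete Bakry-\'Emery theory of Theorem~\ref{main-M} together with the $p$-R\'enyi framework $\CLSI^{+}$ produces the sharp universal geometric constant.
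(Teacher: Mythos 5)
Your high-level architecture matches the paper's: the upper bound comes from restricting to the diagonal (and indeed $L_e$ restricted to $\ell_\infty^n$ is the graph Laplacian, although you have a sign flip: $L_e(f)(r)=2(f_r-f_s)$, not $2(f_s-f_r)$), the lower bound uses the chain rule $D(\rho\|E_{\fix}\rho)=D(\rho\|E_\infty\rho)+D(E_\infty\rho\|E_{\fix}\rho)$, and the target constant rearranges to a ``series composition'' $\bigl(5\pi^2+\CLSI^+(A_\mathscr{E})^{-1}\bigr)^{-1}$. However, the dephasing estimate you propose is both wrong in form and missing the key mechanism.

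First, you posit a \emph{conditional} complete MLSI $\mathcal{E}_{L_\mathscr{E}}(\rho,\log\rho-\log E_D\rho)\ge R_q\,D(\rho\|E_D\rho)$. That would be a strictly stronger statement than what the theorem needs, and proving it is a different (harder) problem; the paper does not go that route. What the paper proves (Lemma~\ref{diagon}) is the \emph{un-conditioned} bound $D(\rho\|E_\infty\rho)\le 5\pi^2\, I_{L_\mathscr{E}}(\rho)$, and then simply uses the full Fisher information twice: $D(\rho\|E_{\fix}\rho)\le 5\pi^2 I_{L_\mathscr{E}}(\rho)+\CLSI^+(A_\mathscr{E})^{-1}I_{L_\mathscr{E}}(E_\infty\rho)\le\bigl(5\pi^2+\CLSI^+(A_\mathscr{E})^{-1}\bigr)I_{L_\mathscr{E}}(\rho)$. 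This double use of $I_{L_\mathscr{E}}(\rho)$ — not a splitting of the Dirichlet form — is what gives the harmonic-mean formula. The last step also requires $I_{L_\mathscr{E}}(E_\infty\rho)\le I_{L_\mathscr{E}}(\rho)$ (Lemma~\ref{diagon2}, proved by averaging over diagonal sign unitaries and joint convexity), which your proposal does not supply.

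Second, you correctly flag that the crux is an $n$-independent dephasing constant, but you leave the aggregation of single-edge estimates as a black box (``exploiting that the edges form a H\"ormander system''; ``only the complete Bakry-\'Emery theory \dots produces the sharp universal geometric constant''). This mischaracterizes the actual argument. The paper aggregates via three concrete subalgebra/entropy lemmas: (i) for each edge $e$, the conditional expectation $E_e$ onto the block subalgebra $\mathcal{A}_e$ containing the fixed-point algebra of $L_e$ satisfies $D(\rho\|E_e\rho)\le\CLSI(\Mz_n,L_e)^{-1}I_{L_e}(\rho)$ (Lemma~\ref{edge}); (ii) the $E_e$'s are commuting Schur multipliers with $\prod_e E_e=E_\infty$ (Lemma~\ref{edge-expectation}); (iii) the chain rule $D(\rho\|\prod_e E_e\rho)\le\sum_e D(\rho\|E_e\rho)$ for commuting conditional expectations (Lemma~\ref{iter}). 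The single-edge constant $5\pi^2$ comes from the transference of the circle estimate (Example~\ref{lind-1}, applied to $x_e=iX_e$ which has spectrum $\{0,\pm1\}\subset\mathbb{Z}$), not from invoking the complete Bakry-\'Emery theorem directly on $SO_n$. Finally, your opening transference to $\Delta_\mathscr{E}$ on $SO_n$ is a detour that is never carried through; the paper works directly with $L_\mathscr{E}$ on $\Mz_n$.
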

The paper is organized as follows.  In section 2, we  introduce important tools, such as derivations and double operator integrals, which we use in later chapters. We give the definitions  of $\CLSI$ and $\CLSI^{+}$ and their properties, in particular, tensor stability and stability under perturbation. In section 3, we define the geometric Ricci curvature of a \textit{derivation triple} and establish the abstract Bakry-\'Emery theorem. We compare our \textit{geometric} Ricci curvature with the \textit{transportation} Ricci curvature previously defined by Carlen and Maas. In section 4, we recapture the Bakry-\'Emery criterion for matrix-valued functions defined over a smooth manifold.  We also include some geometric examples to illustrate derivation triples. In section 5, we briefly review the transference principle and develop the \textit{anti-transference principle} with the help of representation theory and noncommutative geometry. In section 6, we give computable estimates of $\CLSI$ constants of connected graphs via the preorder traversal algorithm and existence of spanning trees.  In section 7, we define the \textit{graph H\"ormander systems} and present the relation between $\CLSI$ constants of a connected graph and the induced Lindblad operator. 
\hhz

\section{Notation and background}
\subsection{Tracial von Neumann algebras and modules}
Let $(\mathcal{N},\tau)$ be a finite von Neumann algebra equipped with a normal faithful tracial state $\tau$, and  $\mathcal{N}_{+}$ be the set of positive elements in $\mathcal{N}$. We denote the noncommutative $L_{p}$-space by  $L_{p}(\mathcal{N},\tau)$, or $L_{p}(\mathcal{N})$ if the trace $\tau$ is clear from the context. The Hilbert-Schmidt inner product over $\mathcal{N}$ is defined by $\langle x,y \rangle_{\tau}=\tau(x^{*}y)$ (also denoted by $\langle x,y\rangle$).  Let $\mathcal{N}_{1}$ and $\mathcal{N}_{2}$ be two von Neumann algebras. The Hilbert  $\mathcal{N}_{1}$-$\mathcal{N}_{2}$ bimodule $_{\mathcal{N}_{1}}\Hs_{\mathcal{N}_{2}}$ is a Hilbert space $\Hs$ equipped with representations $\pi_{1}: \mathcal{N}_{1}\rightarrow \BB(\Hs)$ and $\pi_{2}^{op}: \mathcal{N}_{2}^{op}\rightarrow \BB(\Hs)$ satisfying $[\pi_{1}(\mathcal{N}_{1}),\pi_{2}^{op}(\mathcal{N}_{2}^{op})]=0$.  Let us recall that the opposite algebra $\mathcal{N}^{op}$ is obtained by exchanging the left and right multiplications in $\mathcal{N}$, i.e., $(ab)^{op}$ in $\mathcal{N}^{op}$ is given by $ba$ in $\mathcal{N}$ for $a,b\in \mathcal{N}$.
We use the notation $xhy$ to denote the left $\mathcal{N}_{1}$ action and  the right $\mathcal{N}_{2}$ actions for $x\in\mathcal{N}_{1}$, $y\in\mathcal{N}_{2}$ and $h\in \Hs$. The Hilbert bimodule $_{\mathcal{N}}\Hs_{\mathcal{N}}$ is said to be \textit{self-adjoint} if $\mathcal{N}_{1}=\mathcal{N}_{2}=\mathcal{N}$ and there exists an antilinear involution $J: \Hs\rightarrow \Hs$ such that $J(xhy)=y^{*}J(h)x^{*}$ for any $x,y\in \mathcal{N}$ and $h\in \Hs$.
 In many situations we work with the slightly stronger notion of a $W^*$-right module $X$ which admits an $\mathcal{N}$-valued inner product $(x,y)$ such that $( x,ya)=( x,y) a$ for any $a\in \mathcal{N}$ and $x,y\in X$. Let us denote by $\L_{\mathcal{N}}(X)$ the von Neumann algebra of left adjointable operators on $X$, see \cite{Paschke}, \cite{JSher},\cite{Lance}. If in addition there is  a weak$^*$ continuous $^*$-representation $\pi:\mathcal{N}\to \L_{\mathcal{N}}(X)$, $X$ becomes an $\mathcal{N}$-$\mathcal{N}$-$W^*$-bimodule. If furthermore there is an antilinear isometry $J:\L_{\mathcal{N}}(X)\to \L_\N(X)$ such that $J(\pi(a)x) \lel xa^*$ for any  $x\in X$ and $a\in \mathcal{N}$, we recover all the data from above. Our typical example is given by a trace preserving inclusion $\mathcal{N}\subset \mathcal{M}$ equipped with a conditional expectation $E_{\mathcal{N}}:\mathcal{M}\to\mathcal{N}$. Then $( x,y)=E_{\mathcal{N}}(x^*y)$ makes $\mathcal{M}$ an $\mathcal{N}$-valued right module which extends to a complete $W^*$-module $X=\overline{\mathcal{M}E_{\mathcal{N}}}\subset \mathbb{B}(L_2(\mathcal{M}))$. The left representation is, of course, given by $\pi(a)\xi=a\xi$ which extends to the closure. The underlying Hilbert space is given by $\Hs=L_2(\mathcal{M},\tau)$.
We see that here $J(x)=x^*$ is an isometry on the Hilbert space, but only densely defined on $\mathcal{M}\subset X$.

\subsection{Derivations}
Let $_{\mathcal{N}}\Hs_{\mathcal{N}}$ be a self-adjoint Hilbert $\mathcal{N}$-$\mathcal{N}$ bimodule with the antilinear form $J$. A closable derivation of a von Neumann algebra $\mathcal{N}$ is a densely defined closable linear operator $\delta: L_{2}(\mathcal{N},\tau)\rightarrow \Hs$ such that
\begin{itemize}[leftmargin=6.0mm]
\item[(1)] $\dom(\delta)$ is a weakly dense $^*$-subalgebra in $\mathcal{N}$;
\item[(2)] the identity element $1\in \dom(\delta)$;
\item[(3)] $\delta(xy)=x\delta(y)+\delta(x)y$, for any $x,y\in \dom(\delta)$.
\end{itemize}
\noindent 
Let $\bar{\delta}$ denote the closure of $\delta$.
A derivation $\delta$ is said to be \textit{$*$-preserving} if $J(\delta(x))=\delta(x^*)$. Every closable $*$-preserving derivation $\delta$ determines a positive operator $\delta^{*}\bar{\delta}$ on $L_{2}(\mathcal{N},\tau)$.
It was shown  in \cite{Sau} that $T_{t}=e^{-t\delta^{*}\bar{\delta}}: \mathcal{N} \rightarrow \mathcal{N}$ is a strongly continuous semigroup of completely positive, unital and self-adjoint maps. Thus $T_{t}$ is also trace preserving since $\tau(x^{*}T_{t}(y))=\tau(T_{t}(x^{*})y)=\tau(T_{t}(x)^{*}y)$ for any $x$ and $y\in \mathcal{N}$.
 See\cite{SA}, \cite{AS}, \cite{Jesse}, \cite{Kap}, and \cite{BR} for more details.

Now let $T_t=e^{-tA}:\mathcal{N}\rightarrow \mathcal{N}$ be a strongly continuous semigroup of completely positive unital self-adjoint maps on $L_2(\mathcal{N},\tau)$. The generator $A$ is a positive operator on $L_{2}(\mathcal{N},\tau)$ given by $$A(x)=\lim_{t\rightarrow 0^+} \frac{1}{t}(T_t(x)-x),\forall x\in \dom(A).$$
It was pointed out  in \cite{Sau} that  $\dom(\delta)=\{x\in \mathcal{N}| \|A^{1/2}x\|_2<\infty\}$ is indeed a $*$-algebra and  invariant under the semigroup.
The weak gradient form of $A$ is defined by
$$\Gamma_A(x,y)(z)=\frac{1}{2}(\tau(A(x)^*yz)+\tau(x^*A(y)z)-\tau(x^*yA(z))).$$
If the weak gradient form $\Gamma_A(x,y)\in L_{1}(\mathcal{N})$ for all $x,y\in \dom(A^{1/2})$, we say the generator $A$ (or $T_{t}$) satisfies \textit{$\Gamma$-regularity}. It was  shown in  \cite{JRS18}  that  we may associate the generator $A$ satisfying the $\Gamma$-regularity with a closable $*$-preserving derivation $\delta_{A}$.
\begin{theorem} \label{derivation} If $A$ satisfies $\Gamma$-regularity, then there exists a finite von Neumann algebra $(\mathcal{M},\tau)$ containing $\mathcal{N}$ and a $*$-preserving derivation $\delta_{A}:\dom(A^{1/2})\rightarrow L_2(\mathcal{M})$ such that
\begin{equation}\label{inducedder}\tau(\Gamma_{A}(x,y)z)=\tau(\delta_{A}(x)^*\delta_{A}(y)z).\end{equation}
Equivalently $\Gamma_{A}(x,y)=E_{\mathcal{N}}(\delta_{A}(x)^*\delta_{A}(y))$, where $E_{\mathcal{N}}: \mathcal{M}\rightarrow \mathcal{N}$ is the conditional expectation.
\end{theorem}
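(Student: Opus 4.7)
The plan is a GNS-type construction for the gradient form $\Gamma_A$, followed by a realization of the resulting bimodule inside the $L_2$-space of a larger tracial von Neumann algebra.

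\textbf{Step 1 (Construct the Hilbert bimodule).} On the algebraic tensor product $\dom(A^{1/2})\otimes \mathcal{N}$ I would define the sesquilinear form
\[
\langle x_1\otimes y_1,\, x_2\otimes y_2\rangle := \tau\bigl(y_1^{*}\,\Gamma_A(x_1,x_2)\,y_2\bigr).
\]
The $\Gamma$-regularity hypothesis ensures $\Gamma_A(x_1,x_2)\in L_1(\mathcal{N})$ so the form is well-defined, and the complete positivity of $T_t$ translates (via the $2$-positivity identity for Dirichlet forms that underlies $\Gamma_A \geq 0$) to positivity of this form. Quotienting by the null space and completing yields a Hilbert space $\Hs$, and the natural left and right multiplications $a\cdot(x\otimes y)\cdot b := ax\otimes yb - a\otimes xyb$ (adjusted by the Leibniz term, as in the Cipriani--Sauvageot construction) make $\Hs$ into an $\mathcal{N}$-$\mathcal{N}$ bimodule. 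The candidate derivation is $\delta_A(x) := [x\otimes 1]$ for $x\in\dom(A^{1/2})$.

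\textbf{Step 2 (Leibniz rule and $*$-preservation).} The Leibniz rule is forced by the definition: a direct computation with the polarization of $\Gamma_A$ and the derivation identity
\[
\Gamma_A(xy,xy)(z) = y^{*}\Gamma_A(x,x)(yzy^{*}) y + \cdots
\]
gives $\delta_A(xy)=x\delta_A(y)+\delta_A(x)y$ at the level of inner products. For the $*$-preserving structure, self-adjointness of $T_t$ and hence of $A$ forces $\tau(\Gamma_A(x,y)z)=\overline{\tau(\Gamma_A(x^*,y^*)z^*)}$, which produces an antilinear isometric involution $J$ on $\Hs$ satisfying $J(a\xi b)=b^{*}J(\xi)a^{*}$ and $J\delta_A(x)=\delta_A(x^{*})$, making $\Hs$ a self-adjoint bimodule.

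\textbf{Step 3 (Realize $\Hs$ inside $L_2(\mathcal{M})$).} This is the main obstacle, and the step that distinguishes the statement from the classical Cipriani--Sauvageot result (which only yields an abstract Hilbert bimodule). The idea is to build $\mathcal{M}$ as a Jones-type basic construction. Concretely, I would first view $\Hs$ as a right $W^*$-module over $\mathcal{N}$ by verifying that $\Hs$ admits an $\mathcal{N}$-valued inner product with $(\delta_A(x),\delta_A(y))=\Gamma_A(x,y)$ — this uses $\Gamma$-regularity plus the fact that $\Gamma_A(x,y)$ is bounded when $x,y\in\dom(A)$. The left $\mathcal{N}$-action, together with $J$, turns $\Hs$ into a self-adjoint correspondence; by the general correspondence/bimodule theory of Connes, this data is equivalent to a pair $(\mathcal{M},E_{\mathcal{N}})$ where $\mathcal{N}\subset \mathcal{M}$ is a trace-preserving inclusion with conditional expectation $E_{\mathcal{N}}$, and $\Hs$ embeds isometrically as a sub-bimodule of $L_2(\mathcal{M})$ via $\delta_A(x)\mapsto \xi_x$ with $E_{\mathcal{N}}(\xi_x^{*}\xi_y)=\Gamma_A(x,y)$. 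The traciality of $\mathcal{M}$ is obtained by noting that $\tau\circ E_{\mathcal{N}}$ defines a tracial state precisely because of the self-adjoint structure on $\Hs$ (the involution $J$ implements the flip between left and right actions and encodes the trace).

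\textbf{Step 4 (Conclude).} Once $\Hs\hookrightarrow L_2(\mathcal{M})$ with $\mathcal{N}\subset\mathcal{M}$, the identity
\[
\tau(\Gamma_A(x,y)z) = \tau\bigl(E_{\mathcal{N}}(\delta_A(x)^{*}\delta_A(y))\,z\bigr) = \tau\bigl(\delta_A(x)^{*}\delta_A(y)z\bigr)
\]
follows by the trace-preserving property of $E_{\mathcal{N}}$, which is exactly the conclusion. The $*$-preserving property carried over from Step 2 via the embedding.

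The hardest part is Step 3: producing a \emph{tracial} ambient $\mathcal{M}$ rather than merely a correspondence. One must check that the positivity and boundedness features of $\Gamma_A$ (rather than just its sesquilinearity) are strong enough for the bimodule to sit inside a finite von Neumann algebra, and this is where the completeness properties of $\mathcal{N}\subset\L_{\mathcal{N}}(X)$ together with $J$ play the decisive role.
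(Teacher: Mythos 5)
The paper does not prove Theorem \ref{derivation} at all — it is cited as a result from \cite{JRS18} (Junge--Ricard--Shlyakhtenko), so there is no internal proof to compare against. I will therefore assess your sketch on its own merits.

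Your Steps 1 and 2 are the standard Cipriani--Sauvageot construction: a GNS-type completion of $\dom(A^{1/2})\otimes\mathcal{N}$ with inner product $\tau(y_1^{*}\Gamma_A(x_1,x_2)y_2)$, positivity from complete positivity of $T_t$, and the Leibniz rule and the $J$-structure built into the definition of the actions. This part is correct and is indeed where any proof must start. However, Cipriani--Sauvageot gives only an abstract self-adjoint Hilbert $\mathcal{N}$-$\mathcal{N}$-bimodule; it does \emph{not} produce a finite von Neumann algebra $\mathcal{M}\supset\mathcal{N}$ with a trace-preserving conditional expectation and $\delta_A(x)\in L_2(\mathcal{M})$. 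That extra content is the entire point of Theorem \ref{derivation}.

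Step 3 is where the proposal has a genuine gap. You write that ``by the general correspondence/bimodule theory of Connes, this data is equivalent to a pair $(\mathcal{M},E_{\mathcal{N}})$'' — but there is no such general theorem. Not every self-adjoint $\mathcal{N}$-$\mathcal{N}$ correspondence arises as a sub-bimodule of $L_2(\mathcal{M})$ for a finite $\mathcal{M}$ with a trace-preserving conditional expectation onto $\mathcal{N}$; for example, the coarse bimodule $L_2(\mathcal{N})\otimes L_2(\mathcal{N})$ is the $L_2$-space of the Jones basic construction $\langle\mathcal{N},e_{\CC}\rangle \cong \BB(L_2(\mathcal{N}))$, which is not finite unless $\mathcal{N}$ is finite-dimensional, and there is no normal conditional expectation onto $\mathcal{N}$ in general. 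So the ``Jones-type basic construction'' you propose generically lands you in a type $\mathrm{I}_\infty$ or $\mathrm{II}_\infty$ algebra, not a finite one. The actual argument in \cite{JRS18} has to use the specific structure of $\Gamma_A$ and $\Gamma$-regularity to build a genuinely \emph{tracial} $\mathcal{M}$ — this is done via free-probabilistic constructions (free Gaussian functors / free products), not by appealing to abstract correspondence theory. You acknowledge that this is ``the hardest part'' and that ``one must check that the positivity and boundedness features\ldots are strong enough,'' but the proposal does not supply this check; it only asserts an incorrect general principle in its place. A secondary inaccuracy: the claim that $\Gamma_A(x,y)$ is bounded for $x,y\in\dom(A)$ is not part of the $\Gamma$-regularity hypothesis (which gives $\Gamma_A(x,y)\in L_1(\mathcal{N})$ for $x,y\in\dom(A^{1/2})$), so the $\mathcal{N}$-valued inner product needs a more careful domain argument than the one you give.
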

Throughout the paper,  we always work with a closable $*$-preseving derivation $\delta$ and a strongly continuous semigroup $T_{t}=e^{-tA}$ of completely positive unital self-adjoint maps on $L_2(\mathcal{N},\tau)$  satisfying $\Gamma$-regularity.

\subsection{Double operator integrals}
Let $\phi: \Bbb{R}\times \Bbb{R}\to \Bbb{R}$ be bounded and $\rho, \sigma\in\mathcal{N}$ be self-adjoint.
The double operator integral  is defined by
$$Q_{\phi}^{\rho,\sigma}(T):=\int_{\Bbb{R}}\int_{\Bbb{R}}\phi(s,t) dE_{\rho}(s)TdE_{\sigma}(t),$$
where $E_{\rho}((s,t])=1_{(s,t]}(\rho)$ is the spectral projection of $\rho$. We write
 $Q_{\phi}^{\rho}$ if $\rho=\sigma$. The notion of double operator integrals was first introduced by Daleckii and Krein (see \cite{krein1}, \cite{krein2}) used for the analytical theory of perturbations. Further construction  of double operator integrals was created in a series of papers, (see \cite{bs1}, \cite{bs2}, \cite{bs3}) by Birman and Solomyak.
Let  $f:\Bbb{R}_{+}\rightarrow \Bbb{R}$ be  a continuously differentiable function and the difference quotient  be $f^{[1]}(x,y)=\frac{f(x)-f(y)}{x-y}$, then
$$Q_{f^{[1]}}^{\rho,\sigma}(T)=\int_{\Bbb{R}_{+}}\int_{\Bbb{R}_{+}}\frac{f(s)-f(t)}{s-t} dE_{\rho}(s)T dE_{\sigma}(t).$$
See \cite{PS10} for the convergence of the above formula. We abbreviate $Q^{\rho,\sigma}_{\ln^{[1]}}$, $Q^{\rho}_{\ln^{[1]}}$ as $Q^{\rho,\sigma}$, $Q^{\rho}$, respectively.  It was also shown in \cite{PS10} that
\begin{align*}
\lim_{t\to 0}\frac{f(\rho+t\sigma)-f(\rho)}{t}=Q^{\rho}_{f^{[1]}}(\sigma).
\end{align*}
Thus $\tau\left(Q^{\rho}_{f^{[1]}}(\sigma)\right)=\tau(f'(\rho)\sigma)$.
For $\rho\in \mathcal{N}_{+}$, recall that the functional calculus of derivations is given by
$$\delta(f(\rho))=\int_{\Bbb{R}_{+}}\int_{\Bbb{R}_{+}}\frac{f(s)-f(t)}{s-t}dE_{\rho}(s)\delta(\rho)E_{\rho}(t).$$
Hence $\delta(f(\rho))=Q_{f^{[1]}}^\rho(\delta(\rho))$.

\begin{example}\label{CM}
Let $f(x)=\ln(x)$, recall that $\frac{\ln(x)-\ln(y)}{x-y}=\int_{\Bbb{R}_{+}}\frac{1}{(x+r)(y+r)}dr$, then
$$Q^{\rho,\sigma}(y)=\int_{\Bbb{R}_{+}} (\rho+r)^{-1}y(\sigma+r)^{-1}dr.$$
In particular {\rm(\cite{CM})}  $\delta(\ln(\rho))=\int_{\Bbb{R}_{+}}(\rho+r)^{-1}\delta(\rho)(\rho+r)^{-1}dr.$
\end{example}
\noindent Let $f: \Bbb{R}_{+}\rightarrow \Bbb{R}_{+}$ be operator monotone and $f_{[0]}(x,y)=f(\frac{x}{y})y$,  and we consider
$$Q_{f_{[0]}}^{\rho,\sigma}(T)=\int_{\Bbb{R}_{+}}\int_{\Bbb{R}_{+}} f_{[0]}(s,t) dE_{\rho}(s)T dE_{\sigma} (t).$$
\begin{example}\label{example-hiai} Let $f(x)=\frac{x-1}{\ln(x)}$, then $f$ is operator monotone. Indeed $f(x)=\int_{0}^{1} x^{r}dr$ is a convex combination of operator monotone functions $x^{r}$.  By the integral identity $\frac{x-y}{\ln(x)-\ln(y)}=\int_{\Bbb{R}_{+}} x^{r}y^{1-r}dr$, we obtain that  $$Q_{f_{[0]}}^{\rho,\sigma}(T)=\int_{0}^{1}\rho^{r}T\sigma^{1-r}dr.$$
An important observation is that
$Q^{\rho}(T)=Q^{\rho}_{f_{[0]}^{-1}}(T).$
\end{example}

\subsection{Lieb's concavity Theorem}
Lieb (\cite{Lie73})  proved that the map $(A,B) \mapsto \tau(K^{*}A^{1-t}KB^{t})$ with $t\in[0,1]$ is jointly concave in the positive definite matrix pair $(A,B)$, usually referred to as \textit{ Lieb's concavity Theorem}.  
 Petz (\cite{petz85}) discovered the following generalized  Lieb's concavity theorem by using the Jensen inequality of operator concave functions.
\begin{theorem}\label{HPconvex} Let $\beta: \mathcal{N}\rightarrow \mathcal{N}$ be a completely positive trace preserving map and $f:(0,\infty)\rightarrow (0,\infty)$ be an operator monotone function. Then for any $\rho,\sigma\in\mathcal{N}_{+}$, we have
$$\beta^{*} Q_{f_{[0]}^{-1}}^{\beta(\rho),\beta(\sigma)} \beta \leq Q_{f_{[0]}^{-1}}^{\rho,\sigma}.$$
Furthermore, $(\rho,\sigma,x)\mapsto \langle  x , Q_{f_{[0]}^{-1}}^{\rho,\sigma} (x)\rangle $
 is jointly convex for $\rho,\sigma\in \mathcal{N}_{+}$ and $x\in \mathcal{N}$.
 \end{theorem}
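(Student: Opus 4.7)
The plan is to realise $Q_{f_{[0]}^{-1}}^{\rho,\sigma}$ as the inverse of the Kubo--Ando perspective of the commuting positive operators $L_\rho$ (left multiplication by $\rho$) and $R_\sigma$ (right multiplication by $\sigma$) on $L_{2}(\mathcal{N},\tau)$, and then to combine Ando's joint concavity theorem with a Schur complement argument for the convexity statement, and with the operator Jensen inequality through Stinespring dilation for the data processing inequality. Since $f_{[0]}(s,t)=f(s/t)\,t$, joint functional calculus of $L_\rho$ and $R_\sigma$ gives
$$Q_{f_{[0]}}^{\rho,\sigma}\;=\;f\bigl(L_\rho R_\sigma^{-1}\bigr)\,R_\sigma\;=\;R_\sigma^{1/2}\,f\bigl(R_\sigma^{-1/2}L_\rho R_\sigma^{-1/2}\bigr)\,R_\sigma^{1/2},$$
and therefore $Q_{f_{[0]}^{-1}}^{\rho,\sigma}=(Q_{f_{[0]}}^{\rho,\sigma})^{-1}$, which identifies the object of interest with the Kubo--Ando perspective associated to the operator monotone function $f$.

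First, I would invoke Ando's theorem: for operator monotone $f:(0,\infty)\to(0,\infty)$, the perspective $(A,B)\mapsto B^{1/2}f(B^{-1/2}AB^{-1/2})B^{1/2}$ is jointly operator concave in positive definite $A,B$. Composing with the positive linear assignments $\rho\mapsto L_\rho$ and $\sigma\mapsto R_\sigma$ then shows that $(\rho,\sigma)\mapsto Q_{f_{[0]}}^{\rho,\sigma}$ is jointly operator concave on $\mathcal{N}_{+}\times\mathcal{N}_{+}$. For the joint convexity of $(\rho,\sigma,x)\mapsto\langle x,Q_{f_{[0]}^{-1}}^{\rho,\sigma}(x)\rangle$, I would apply the Schur complement characterisation
$$\bigl\langle x,(Q_{f_{[0]}}^{\rho,\sigma})^{-1}(x)\bigr\rangle\leq t\quad\Longleftrightarrow\quad\begin{pmatrix}Q_{f_{[0]}}^{\rho,\sigma}&x\\ x^{*}&t\end{pmatrix}\geq 0,$$
so that joint concavity of $Q_{f_{[0]}}^{\rho,\sigma}$ turns the corresponding epigraph into a convex set, yielding joint convexity of the quadratic form on $\mathcal{N}_{+}\times\mathcal{N}_{+}\times\mathcal{N}$.

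For the data processing inequality, I would Stinespring-dilate $\beta$ into an isometric $*$-embedding followed by a conditional expectation. The isometric piece preserves the perspective by unitary covariance of functional calculus, and for the conditional expectation $E$, the monotonicity of Kubo--Ando means under positive unital maps gives $E(A)\,\sigma_{f}\,E(B)\geq E(A\,\sigma_{f}\,B)$, a consequence of Hansen--Pedersen's operator Jensen inequality applied to the operator concave $f$. Taking inverses (which reverses the inequality because $X\mapsto X^{-1}$ is operator decreasing on positives), transporting through the identification with $L_\rho,R_\sigma$, and passing to adjoints with respect to the trace inner product then produces the desired inequality $\beta^{*}Q_{f_{[0]}^{-1}}^{\beta(\rho),\beta(\sigma)}\beta\leq Q_{f_{[0]}^{-1}}^{\rho,\sigma}$.

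The principal obstacle I anticipate is the bookkeeping in this last step: one must track how $L_{\beta(\rho)}$ and $R_{\beta(\sigma)}$ factor through the Stinespring dilation and relate to $L_\rho,R_\sigma$ on the enlarged Hilbert space, so that the operator Jensen inequality is applied to the correct operator before the inversion is performed. Once this identification is set up carefully, the passage from the dilated Jensen inequality to the stated operator DPI is a routine adjoint computation, and the two assertions of the theorem follow.
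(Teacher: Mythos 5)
The paper does not prove this theorem itself: it cites Petz \cite{petz85} for the discovery and Hiai--Petz \cite{HP} for the details (the remark that they additionally assume invertibility, which is then removed by the perturbation $\rho+\eps I$). Your reconstruction is therefore being compared against the strategy of those references, and your framework is exactly the standard one used there: identify $Q_{f_{[0]}}^{\rho,\sigma}$ with the Kubo--Ando mean of the commuting operators $L_\rho$ and $R_\sigma$ on $L_2(\mathcal{N},\tau)$, invoke Ando's joint concavity of the perspective together with linearity of $\rho\mapsto L_\rho$ and $\sigma\mapsto R_\sigma$, and derive the quadratic-form convexity from the Schur complement characterisation. This part is correct and in the same spirit as the cited proofs.

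There is one genuine gap in the final paragraph. After establishing $\beta\,Q_{f_{[0]}}^{\rho,\sigma}\beta^* \leq Q_{f_{[0]}}^{\beta(\rho),\beta(\sigma)}$, you pass to the stated inequality by ``taking inverses, which reverses the inequality.'' This does not work as a one-step deduction: inverting $\beta M\beta^*\leq N$ gives $(\beta M\beta^*)^{-1}\geq N^{-1}$, and $(\beta M\beta^*)^{-1}$ is not $\beta^*M^{-1}\beta$ (nor is it easy to manipulate, since $\beta$ is not invertible). The correct passage is itself a Schur complement (parallel sum) argument: for invertible $M,N\geq0$ one has
\begin{align*}
\beta M\beta^*\leq N \;\Longleftrightarrow\;
\begin{pmatrix} N & \beta \\ \beta^* & M^{-1}\end{pmatrix}\geq 0
\;\Longleftrightarrow\;
\beta^*N^{-1}\beta\leq M^{-1},
\end{align*}
which with $M=Q_{f_{[0]}}^{\rho,\sigma}$ and $N=Q_{f_{[0]}}^{\beta(\rho),\beta(\sigma)}$ yields the claim. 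This is precisely the content of Lemma \ref{dc} in the paper (Hiai--Petz, Lemma 1), and you should invoke it explicitly; you already use the Schur complement for the convexity statement, so this costs nothing. A second, minor observation: the Stinespring detour for the DPI is more elaborate than necessary. Since the Kubo--Ando transformer inequality $T^*(A\,\sigma_f\,B)T\leq (T^*AT)\,\sigma_f\,(T^*BT)$ holds for arbitrary bounded $T$, you may apply it directly with $T=\beta^*$ on $L_2(\mathcal{N})$; what remains is $\beta L_\rho\beta^*\leq L_{\beta(\rho)}$ and $\beta R_\sigma\beta^*\leq R_{\beta(\sigma)}$, which follow from the Kadison--Schwarz inequality for the CP unital map $\beta^*$ together with positivity of $\rho,\sigma$. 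That avoids tracking $L_\rho,R_\sigma$ through a dilated algebra altogether.
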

\begin{lemma}\label{dc} Let $f:(0,\infty)\to (0,\infty)$  and $\beta: \mathcal{N} \to \mathcal{N}$ be a completely positive trace preserving map.  The conditions
\begin{align}\beta^{*} Q_{f_{[0]}^{-1}}^{\beta(\rho),\beta(\sigma)} \beta& \leq Q_{f_{[0]}^{-1}}^{\rho,\sigma} \label{C1}
\end{align}
and
\begin{align}
\beta Q_{f_{[0]}}^{\rho,\sigma} \beta^{*} & \leq Q_{f_{[0]}}^{\beta(\rho),\beta(\sigma)} \label{C2}
\end{align}
are equivalent for any $\rho,\sigma \in \mathcal{N}_{+}$.
\end{lemma}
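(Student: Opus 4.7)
The plan is to view the statement as a purely Hilbert-space matter about mutually inverse positive operators and conclude via a Schur complement argument.

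First, I would record that $Q_{f_{[0]}}^{\rho,\sigma}$ and $Q_{f_{[0]}^{-1}}^{\rho,\sigma}$ are mutually inverse positive self-adjoint operators on $L_{2}(\mathcal{N},\tau)$. Self-adjointness is a direct consequence of trace cyclicity applied to the spectral integral, since
\[
\langle S, Q_{f_{[0]}}^{\rho,\sigma} T\rangle = \iint f(s/t)\,t\,\tau\bigl(dE_{\sigma}(t)\,S^{*}\,dE_{\rho}(s)\,T\bigr)
\]
is symmetric in $(S,T)$ with respect to the $^*$-operation. Positivity follows from $f>0$ once one writes $\langle T, Q_{f_{[0]}}^{\rho,\sigma}T\rangle = \iint f(s/t)\,t\,\tau(dE_{\rho}(s)\,T\,dE_{\sigma}(t)\,T^{*})\geq 0$. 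Mutual inversion is immediate in the joint spectral decomposition of $\rho$ (acting from the left) and $\sigma$ (from the right), where both operators act by multiplication by $f_{[0]}(s,t)$ and its reciprocal on each two-variable spectral component.

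Second, I would reduce the lemma to the following abstract operator equivalence: for positive invertible operators $A,B$ on a Hilbert space and a bounded operator $\beta$,
\[
\beta^{*} B^{-1} \beta \leq A^{-1} \quad \Longleftrightarrow \quad \beta A \beta^{*} \leq B.
\]
Setting $A = Q_{f_{[0]}}^{\rho,\sigma}$ and $B = Q_{f_{[0]}}^{\beta(\rho),\beta(\sigma)}$, so that $A^{-1} = Q_{f_{[0]}^{-1}}^{\rho,\sigma}$ and $B^{-1} = Q_{f_{[0]}^{-1}}^{\beta(\rho),\beta(\sigma)}$, this abstract equivalence is exactly the equivalence of \eqref{C1} and \eqref{C2}.

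Third, I would prove the abstract claim by forming the $2\times 2$ block operator
\[
M = \begin{pmatrix} B & \beta \\ \beta^{*} & A^{-1} \end{pmatrix}.
\]
The Schur complement of $M$ with respect to the $(1,1)$ block $B>0$ is $A^{-1} - \beta^{*}B^{-1}\beta$, so $M\geq 0$ is equivalent to \eqref{C1}. The Schur complement with respect to the $(2,2)$ block $A^{-1}>0$ is $B - \beta A \beta^{*}$, so $M\geq 0$ is equivalent to \eqref{C2}. Both conditions therefore reduce to $M\geq 0$, and hence to each other.

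The main technical obstacle is the invertibility required to form $A^{-1}$ and $B^{-1}$ in the event that $\rho$, $\sigma$, $\beta(\rho)$, or $\beta(\sigma)$ fails to have full support. I would handle this by restricting to the support projections, or by the standard regularization $\rho \mapsto \rho + \epsilon\,\mathbf{1}$, $\sigma \mapsto \sigma + \epsilon\,\mathbf{1}$ (note that since $\beta^{*}$ is unital, such perturbations behave well under $\beta$), proving the equivalence for each $\epsilon>0$ via the Schur argument, and then passing to $\epsilon \to 0^{+}$ using dominated convergence applied to the two-variable spectral integrals defining $Q_{f_{[0]}}^{\rho,\sigma}$ and $Q_{f_{[0]}^{-1}}^{\rho,\sigma}$.
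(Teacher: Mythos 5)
Your core argument is correct and is exactly the mechanism behind the reference the paper gives. Recording that $Q_{f_{[0]}}^{\rho,\sigma}$ and $Q_{f_{[0]}^{-1}}^{\rho,\sigma}$ are mutually inverse, positive, self-adjoint operators on $L_2(\mathcal{N},\tau)$ (valid for $\rho,\sigma$ invertible), setting $A=Q_{f_{[0]}}^{\rho,\sigma}$, $B=Q_{f_{[0]}}^{\beta(\rho),\beta(\sigma)}$, and deducing
\[
\beta^*B^{-1}\beta\le A^{-1}\quad\Longleftrightarrow\quad\beta A\beta^*\le B
\]
from positivity of the single block matrix
\[
M=\begin{pmatrix}B&\beta\\ \beta^*&A^{-1}\end{pmatrix}
\]
via its two Schur complements is precisely the argument of Lemma~1 and Theorem~5 of \cite{HP}, which the paper cites without reproducing. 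So in the invertible case your proof is the intended one, just spelled out.

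The gap is in the degenerate case. Your parenthetical ``since $\beta^*$ is unital, such perturbations behave well under $\beta$'' does not carry the weight you want: unitality of $\beta^*$ is the same as trace preservation of $\beta$ and says nothing about $\beta(1)$. Since $\beta(\rho+\epsilon\mathbf{1})=\beta(\rho)+\epsilon\beta(1)$ and $\beta(1)$ need not be invertible for a CPTP map, the regularized $B_\epsilon=Q_{f_{[0]}}^{\beta(\rho_\epsilon),\beta(\sigma_\epsilon)}$ may fail to be invertible --- indeed, since $f$ lives only on $(0,\infty)$, the double operator integral is not even obviously well defined when $\beta(\rho_\epsilon)$ or $\beta(\sigma_\epsilon)$ has a kernel --- so the Schur step cannot be run on the perturbed quadruple as written. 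Two repairs are available: (i) first compress to $q\mathcal{N}q$ with $q$ the support projection of $\beta(1)$, where $\beta(1)\ge c\,q$ for some $c>0$, so the $\epsilon$-perturbation then does make $\beta(\rho_\epsilon),\beta(\sigma_\epsilon)$ invertible (this is the ``support projection'' route you allude to, but it must actually be carried out); or (ii) also regularize the channel, $\beta_\eta=(1-\eta)\beta+\eta\,\tau(\cdot)\mathbf{1}$, which is CPTP with $\beta_\eta(1)\ge\eta\mathbf{1}$. In either case you must also decide whether the equivalence is meant pointwise in $(\rho,\sigma)$ or universally quantified --- only the universal reading, which is what Theorem~\ref{HPconvex} provides in the intended application, makes \eqref{C1} available for the perturbed data. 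The paper's own one-line remark ``it is enough to assume the positivity by perturbation argument $\rho+\epsilon I$'' glosses over the same point, so you have inherited this gap rather than introduced it, but a careful write-up should close it.
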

\noindent See the proof of Lemma 1 and Theorem 5 in \cite{HP}, and they assumed that $\rho,\sigma, \beta(\rho),\beta(\sigma)$ are invertible additionally. It is enough to assume the positivity by perturbation argument $\rho+\eps I$ for $\eps\to 0^{+}.$
\noindent Theorem \ref{HPconvex} remains true for a larger family of functions. (For details, see \cite{hao}.)
\begin{theorem} Let $\beta: \mathcal{N}\rightarrow \mathcal{N}$ be a  completely positive trace preserving mapping and $f(x)=x^{p}$, where $p\in(0,1)$. Assume that $\rho, \sigma\in \mathcal{N}_{+}$. Then $$\beta^{*} Q_{f^{[1]}}^{\beta(\rho),\beta(\sigma)} \beta \leq Q_{f^{[1]}}^{\rho,\sigma}.$$
\end{theorem}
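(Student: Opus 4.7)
The plan is to reduce the claim, via an integral representation of the difference quotient $f^{[1]}(x,y) = (x^{p} - y^{p})/(x-y)$, to a data-processing inequality for a $\chi^{2}$-type quadratic form at a fixed resolvent parameter.

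First I would record the Stieltjes-type identity, valid for $p \in (0,1)$ and $x, y > 0$:
$$\frac{x^{p} - y^{p}}{x - y} \;=\; \frac{\sin(\pi p)}{\pi} \int_{0}^{\infty} \frac{r^{p}}{(x+r)(y+r)} \, dr,$$
which follows from $x^{p} = \tfrac{\sin(\pi p)}{\pi}\int_{0}^{\infty}\tfrac{x}{x+r}\,r^{p-1}\,dr$ combined with $\tfrac{x}{x+r} - \tfrac{y}{y+r} = \tfrac{r(x-y)}{(x+r)(y+r)}$. Substituting into the double operator integral and exchanging the order of integration (Fubini is justified since $r^{p}/((s+r)(t+r))$ is $O(r^{p})$ at $0$ and $O(r^{p-2})$ at $\infty$, both integrable for $p\in(0,1)$, and $\rho, \sigma$ have bounded spectrum) yields the resolvent formula
$$Q_{f^{[1]}}^{\rho,\sigma}(T) \;=\; \frac{\sin(\pi p)}{\pi} \int_{0}^{\infty} r^{p} (\rho + r)^{-1} T (\sigma + r)^{-1} \, dr,$$
and applying the same identity to $(\beta(\rho),\beta(\sigma))$ reduces the target inequality to comparing the two resolvent integrals.

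Next, for each fixed $r > 0$, I would prove the pointwise sesquilinear inequality
$$\tau\!\bigl(\beta(T)^{*}(\beta(\rho)+r)^{-1}\beta(T)(\beta(\sigma)+r)^{-1}\bigr) \;\le\; \tau\!\bigl(T^{*}(\rho+r)^{-1}T(\sigma+r)^{-1}\bigr),$$
which is the DPI for the $\chi^{2}$-type form $(A,B,T)\mapsto \|A^{-1/2} T B^{-1/2}\|_{2}^{2}$ at $A = \rho+r$, $B = \sigma+r$. The natural tool is a Stinespring dilation $\beta(X) = V^{*}(X \otimes \id_{K}) V$ with an isometric $V$: since $\beta(\rho) + r = V^{*}((\rho+r)\otimes \id)V$, the operator-Jensen inequality for the operator convex $x \mapsto x^{-1}$ gives $(\beta(\rho)+r)^{-1} \le V^{*}((\rho+r)^{-1} \otimes \id) V$ and similarly for $\sigma$; inserting these two bounds together with a two-sided Kadison--Schwarz step along the embedding (equivalently, using the known joint operator convexity of $(A,B,T)\mapsto \tau(T^{*}A^{-1}TB^{-1})$, a consequence of Lieb's concavity theorem in Ando's refinement, followed by the standard passage from joint convexity to DPI via the dilation) produces the pointwise bound. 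Multiplying by $\tfrac{\sin(\pi p)}{\pi}\,r^{p} \ge 0$ and integrating over $r \in (0,\infty)$ reconstitutes $\langle T, \beta^{*}Q_{f^{[1]}}^{\beta(\rho),\beta(\sigma)}\beta(T)\rangle$ and $\langle T, Q_{f^{[1]}}^{\rho,\sigma}(T)\rangle$ by the first paragraph and delivers the claim for every $T \in \mathcal{N}$.

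The hard part is the pointwise step. Unlike Theorem~\ref{HPconvex}, the kernel $(s,t) \mapsto 1/((s+r)(t+r))$ at a fixed $r > 0$ is not of Kubo--Ando perspective form $f_{[0]}^{-1}$ for any single operator monotone $f$, so Hiai--Petz cannot be invoked verbatim at each $r$. The tempting shortcut of reducing to the logarithmic kernel of Example~\ref{CM} via the shift $(\rho,\sigma) \mapsto (\rho+r, \sigma+r)$ fails because $\beta$ is only trace-preserving, not unital, so $\beta(\rho+r) = \beta(\rho) + r\beta(1) \ne \beta(\rho) + r$; the shift-equivariance is recovered only at the level of the Stinespring dilation, where the embedding $X \mapsto X \otimes \id$ is unital and commutes with additive constants, which is what makes the pointwise estimate go through.
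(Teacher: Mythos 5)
Your Stieltjes reduction
$$\frac{x^{p}-y^{p}}{x-y}=\frac{\sin(\pi p)}{\pi}\int_{0}^{\infty}\frac{r^{p}}{(x+r)(y+r)}\,dr$$
is correct, but the fixed-$r$ pointwise estimate on which the entire argument rests is false for a general CPTP $\beta$, and the Stinespring step you invoke to rescue it does not actually close the gap. In a dilation $\beta(X)=V^{*}(X\otimes\id)V$ one has $V^{*}V=\beta(1)$, so $V$ is an isometry \emph{only when $\beta$ is unital}. For a trace-preserving but non-unital $\beta$, the identity you need, $\beta(\rho)+r=V^{*}\bigl((\rho+r)\otimes\id\bigr)V$, fails — the right-hand side equals $\beta(\rho)+r\beta(1)$ — and the operator-Jensen bound $(V^{*}XV)^{-1}\le V^{*}X^{-1}V$ likewise requires $V^{*}V=1$. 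The joint convexity of $(A,B,T)\mapsto\tau(T^{*}A^{-1}TB^{-1})$ is indeed true (it is a Schur-complement/Kiefer fact: $(X,v)\mapsto\langle v,X^{-1}v\rangle$ is jointly convex and $(A,B)\mapsto L_{A}R_{B}$ is jointly affine and positivity-preserving), but the standard convexity-to-DPI machinery then produces $\tau\bigl(\beta(T)^{*}\beta(\rho+r)^{-1}\beta(T)\beta(\sigma+r)^{-1}\bigr)\le\tau\bigl(T^{*}(\rho+r)^{-1}T(\sigma+r)^{-1}\bigr)$, which again carries $\beta(\rho)+r\beta(1)$ rather than $\beta(\rho)+r$ on the left.

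This is not a repairable presentation issue: the fixed-$r$ step genuinely fails. Take $\mathcal{N}=M_{2}$ with $\tau=\tfrac12\Tr$, let $\beta(X)=\Tr(X)\,\proj{0}$ (CPTP since $\tau(\proj{0})=\tfrac12$, but $\beta(1)=2\proj{0}\ne 1$), and set $\rho=\sigma=T=1$. The pointwise inequality at parameter $r$ becomes $\tfrac{2}{(2+r)^{2}}\le\tfrac{1}{(1+r)^{2}}$, which fails for every $r>\sqrt{2}$. Worse, the integral does not average the failure away: on this example $\langle\beta(T),Q_{f^{[1]}}^{\beta(\rho),\beta(\sigma)}(\beta(T))\rangle_{\tau}=p\,2^{p}$ while $\langle T,Q_{f^{[1]}}^{\rho,\sigma}(T)\rangle_{\tau}=p$, so the claimed operator inequality itself is violated for every $p\in(0,1)$. (For comparison, the genuine Hiai--Petz kernel $f_{[0]}^{-1}(s,t)=s^{-p}t^{p-1}$ gives equality on the same example, consistent with Theorem~\ref{HPconvex}.) So before attempting a proof you should verify the precise hypotheses in the cited source: the statement as quoted almost certainly requires $\beta$ to be bistochastic (unital and trace-preserving), not merely CPTP. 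Under that extra assumption your plan — Stieltjes representation, isometric $V$, Kiefer joint convexity, DPI via dilation — is sound, because the additive shift by $r$ then genuinely commutes with $\beta$, which is exactly what your own concluding paragraph identifies as the crux.
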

\subsection{Formulation of $\CLSI$}
\subsubsection{Quantum relative entropy}
Recall that the \textit{quantum relative entropy} of $\rho,\sigma\in \mathcal{N}_{+}$ is
\[
  D^{\tau}(\rho\|\sigma) \lel
  \begin{cases}
                                   \tau(\rho\ln(\rho))-\tau(\rho\ln(\sigma)),& \text{if } supp(\rho)\supset supp(\sigma); \\
                                   +\infty, & \text{otherwise,}
  \end{cases}
\]
where $supp(\rho)$ is the support projection of $\rho$. We denote the relative entropy by $D(\rho\|\sigma)$ if the trace $\tau$ is clear from the context. Equivalently  $D(\rho\|\sigma)=\lim_{\epsilon \rightarrow 0^+} D(\rho\|\sigma+\epsilon 1)$.
See e.g. \cite{Wilde} and \cite{Nielsen} for more entropy properties. Relative entropy is monotone decreasing under the application of quantum channels (also known as \textit{data processing inequality})
$$D(\beta(\rho)\|\beta(\sigma))\leq D(\rho\|\sigma),$$
 where  $\beta: \mathcal{N}\rightarrow \mathcal{N}$ is a completely positive trace preserving linear map. Let  $\mathcal{K}$ be a von Neumann subalgebra of $\mathcal{N}$ and $E_{\mathcal{K}}: \mathcal{N}\rightarrow\mathcal{K}$ be the conditional expectation onto $\mathcal{K}$. The relative entropy  $D_{\mathcal{K}}$ with respect to $\mathcal{K}$ is given by
\begin{align}\label{fix-d} D_{\mathcal{K}}(\rho)=D(\rho\|E_{\mathcal{K}}(\rho))=\inf_{\tau(\rho)=\tau(\sigma),\sigma\in \mathcal{K}}D(\rho\|\sigma).
\end{align}

\begin{lemma} \label{iter} Let $E_1$,....,$E_m$ be pairwise commuting  conditional expectations on $\N$. Then
 \[ D\left(\rho\|(\small{\prod}_{j=1}^{m} E_j)(\rho)\right)\kl \sum_{j=1}^m D(\rho\|E_j(\rho)) \pl.\]
\end{lemma}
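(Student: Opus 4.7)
The plan is to prove this by induction on $m$, using two standard ingredients: the Pythagorean identity for relative entropy with respect to a conditional expectation and the data processing inequality (both already cited in the excerpt). The key structural observation is that since $E_1,\dots,E_m$ pairwise commute, any composition $\prod_j E_j$ is again a conditional expectation, onto $\mathcal{K}=\bigcap_j \mathcal{K}_j$ where $\mathcal{K}_j$ is the range of $E_j$; in particular the order of the product is immaterial. I would verify this small algebraic fact first ($E = FE_m$ is idempotent, $\tau$-preserving and contains $\bigcap_j \mathcal{K}_j$ in its fixed-point set, and conversely its range is contained in every $\mathcal{K}_j$).

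The Pythagorean identity I would use is: for any conditional expectation $F:\mathcal{N}\to\mathcal{K}_F$ and any $\sigma$ affiliated with $\mathcal{K}_F$,
\begin{equation*}
D(\rho\|\sigma) \;=\; D(\rho\|F(\rho)) \,+\, D(F(\rho)\|\sigma).
\end{equation*}
This follows immediately from the definition, since $\ln F(\rho)$ and $\ln\sigma$ lie in $\mathcal{K}_F$, so the trace-preserving property of $F$ gives $\tau(\rho\ln F(\rho))=\tau(F(\rho)\ln F(\rho))$ and $\tau(\rho\ln\sigma)=\tau(F(\rho)\ln\sigma)$, and the two entropy expressions collapse.

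For the inductive step, assume the inequality holds for $m-1$ expectations and set $F=E_1\cdots E_{m-1}$, $E=FE_m=E_m F$. Since $E(\rho)\in\mathcal{K}_m\cap\mathrm{range}(F)$, applying the Pythagorean identity with the conditional expectation $F$ and the element $\sigma=E(\rho)$ yields
\begin{equation*}
D(\rho\|E(\rho)) \;=\; D(\rho\|F(\rho)) \,+\, D(F(\rho)\|E(\rho)).
\end{equation*}
Because $E_m$ and $F$ commute, $E(\rho)=F(E_m(\rho))=E_m(F(\rho))$, so the second term on the right is $D(F(\rho)\|E_m(F(\rho)))=D(F(\rho)\|F(E_m(\rho)))$, which by the data processing inequality for the completely positive trace-preserving map $F$ is bounded above by $D(\rho\|E_m(\rho))$. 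Combined with the inductive hypothesis applied to $F$, this gives the desired bound $D(\rho\|E(\rho))\le \sum_{j=1}^{m}D(\rho\|E_j(\rho))$.

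There is no real obstacle; the only thing to be careful about is the support condition in the definition of $D(\cdot\|\cdot)$, since each $D(\rho\|E_j(\rho))$ and $D(\rho\|E(\rho))$ could a priori be infinite. One handles this either by the perturbation $\rho\rightsquigarrow \rho+\varepsilon 1$ (as already used after Lemma~\ref{dc}) or by noting that the right-hand side is trivially an upper bound when it is $+\infty$, so one may assume $\mathrm{supp}(\rho)\subset \mathrm{supp}(E_j(\rho))$ for all $j$, in which case the support inclusion needed for $E(\rho)$ follows from the corresponding inclusions for each $E_j$.
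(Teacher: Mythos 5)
Your proof is correct and follows essentially the same route as the paper's: set $F=E_1\cdots E_{m-1}$, use the Pythagorean identity $D(\rho\|FE_m(\rho))=D(\rho\|F(\rho))+D(F(\rho)\|FE_m(\rho))$, bound the second term by $D(\rho\|E_m(\rho))$ via commutativity and the data processing inequality for $F$, and iterate/induct on $m$. The extra care you take to verify that the composition $\prod_j E_j$ is itself a conditional expectation and to handle the support condition are both valid and slightly more explicit than the paper, but do not change the substance of the argument.
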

\begin{proof} Let us define the subalgebras $\mathcal{M}_{m-k}=(\prod_{j=1}^{k} E_j)(\N)$, so that $\MM_0\subset \MM_1\subset \cdots \MM_{m}=\N$ is a filtration. Then we  deduce from the data processing inequality
 \begin{align*}
D(\rho\|E_{\MM_0}(\rho)) &= D(\rho\|E_{\MM_1}(\rho))+D(E_{\MM_1}(\rho)\|E_{\MM_0}(\rho))\\
&=D(\rho\|E_{\MM_1}(\rho))+ D\left((\small{\prod}_{j=1}^{m-1} E_{j} )(\rho)  \| (\small{\prod}_{j=1}^{m} E_{j} )(\rho) \right)\\
  &\le D(\rho\|E_{\MM_1}(\rho))+ D(\rho\|E_{m}(\rho)).
  \end{align*}
Repeating the argument for the first term $m-1$ times yields the assertion.\qd
\noindent Lindblad extended the relative entropy to positive functionals
  $$D_{\Lin}(\rho\|\sigma)\lel \tau(\rho\ln(\rho)-\rho\ln(\sigma)-\rho+\sigma), \forall \rho,\sigma\in\mathcal{N}_{+}.$$
  It follows from the definition that $D_{\Lin}(\rho\|\sigma)\geq 0,$
  with equality if and only if $\rho=\sigma.$ (\cite{Lindblad})
Thus $D(\rho\|\sigma)$ is non-negative when $\tau(\rho)\geq \tau(\sigma)$.  By the nonnegativity of Lindblad relative entropy, we rewrite the relative entropy with respect to $\mathcal{K}$ as
\begin{align}D_{\mathcal{K}}(\rho)=\inf_{\sigma\in\mathcal{K}}D_{\Lin}(\rho\|\sigma).\label{inf-lin}\end{align}
Recall that for any finite von Neumann algebra $\mathcal{N}$,
there exists a $\sigma$-finite measure space $(X,\mu)$ such that $\mathcal{Z}(\mathcal{N})\cong L_{\infty}(X,\mu)$ and $\mathcal{N}=\int_{X} \mathcal{N}_{x}d\mu(x)$, where $\mathcal{Z}(\mathcal{N})$ is the center of $\mathcal{N}$ and $\mathcal{N}_{x}$ is a factor for any $x\in X$.  Now we rewrite the relative entropy by using the direct integral
  $$D_{\Lin}(\rho\|\sigma)=\int_{X} D_{\Lin}(\rho_{x}\|\sigma_{x})d\mu(x).$$
\begin{lemma} \label{ent-per} Let $\tau_{1}$ and $\tau_{2}$ be two normal faithful traces over a finite von Neumann algebra $\mathcal{N}$ such that $\frac{d\tau_{1}}{d\tau_{2}}\leq c$ for $c>0$. For any $\rho,\sigma \in \mathcal{N}_{+}$,
$$D^{\tau_{1}}_{\Lin}(\rho\|\sigma)\leq cD^{\tau_{2}}_{\Lin}(\rho\|\sigma).$$
In particular, we have $D^{\tau_{1}}_{\mathcal{K}}(\rho) \leq c D^{\tau_{2}}_{\mathcal{K}}(\rho).$
\end{lemma}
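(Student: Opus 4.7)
The plan is to reduce to pointwise nonnegativity of the Lindblad relative entropy via the central direct integral decomposition recalled just before the statement. First, since both $\tau_1$ and $\tau_2$ are normal faithful \emph{tracial} states on $\mathcal{N}$, their Radon--Nikodym derivative $h := \frac{d\tau_1}{d\tau_2}$ must be affiliated with the center: indeed, for any $x,y \in \mathcal{N}$ one has $\tau_2(hxy) = \tau_1(xy) = \tau_1(yx) = \tau_2(hyx) = \tau_2(xhy)$, which together with faithfulness forces $h \in \mathcal{Z}(\mathcal{N})_+$. The hypothesis then reads $0 \le h \le c\cdot 1$ in $\mathcal{Z}(\mathcal{N}) \cong L_\infty(X,\mu)$.

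Next I would invoke the direct integral $\mathcal{N} = \int_X \mathcal{N}_x\, d\mu(x)$ with $\tau_2 = \int_X \tau_{2,x}\, d\mu(x)$. Because $h$ is central, it corresponds to a function $h(\cdot)\in L_\infty(X,\mu)$ with $h \le c$ almost everywhere, and the functional calculus expressions $\rho\ln\rho$ and $\rho\ln\sigma$ decompose fiberwise. Consequently,
\begin{align*}
D^{\tau_1}_{\Lin}(\rho\|\sigma)
&= \tau_2\bigl(h(\rho\ln\rho - \rho\ln\sigma - \rho + \sigma)\bigr) \\
&= \int_X h(x)\, D^{\tau_{2,x}}_{\Lin}(\rho_x\|\sigma_x)\, d\mu(x).
\end{align*}
Now the crucial point is Lindblad's pointwise inequality $D^{\tau_{2,x}}_{\Lin}(\rho_x\|\sigma_x) \ge 0$ on each factor, which allows us to bound the integrand from above by $c\cdot D^{\tau_{2,x}}_{\Lin}(\rho_x\|\sigma_x)$ and conclude
\[
D^{\tau_1}_{\Lin}(\rho\|\sigma) \;\le\; c\int_X D^{\tau_{2,x}}_{\Lin}(\rho_x\|\sigma_x)\,d\mu(x) \;=\; c\,D^{\tau_2}_{\Lin}(\rho\|\sigma).
\]

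For the ``in particular'' claim, I would simply use the Lindblad infimum representation \eqref{inf-lin}: since the conditional expectations $E_{\mathcal{K}}^{\tau_1}$ and $E_{\mathcal{K}}^{\tau_2}$ are generally different, I would avoid working with them directly and instead write
\[
D^{\tau_1}_{\mathcal{K}}(\rho) = \inf_{\sigma \in \mathcal{K}_+} D^{\tau_1}_{\Lin}(\rho\|\sigma) \le \inf_{\sigma \in \mathcal{K}_+} c\,D^{\tau_2}_{\Lin}(\rho\|\sigma) = c\, D^{\tau_2}_{\mathcal{K}}(\rho).
\]
The only mildly delicate step is the first one—justifying that $h$ is central and that $\rho\ln\rho - \rho\ln\sigma - \rho + \sigma$, which need not be a positive operator, can nevertheless be paired with $h$ factor-by-factor using the Lindblad nonnegativity of the trace on each fiber. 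Once that is handled the estimate follows immediately.
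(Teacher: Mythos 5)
Your proof is correct and follows essentially the same route as the paper: both arguments reduce to the central direct-integral decomposition and then exploit fiberwise nonnegativity of the Lindblad relative entropy, followed by the infimum representation \eqref{inf-lin} for the conditional-expectation version. Your presentation is slightly more explicit in two places where the paper is terse --- you spell out why $h=d\tau_1/d\tau_2$ is central, and you flag that the $\tau_1$- and $\tau_2$-preserving conditional expectations onto $\mathcal{K}$ may differ (which is precisely why one must pass through the infimum formula) --- but these are elaborations of the same argument rather than a different approach.
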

\begin{proof}
Note that two traces only differ by two measures $\mu_{1}$ and $\mu_{2}$ over the center $L_{\infty}(X,\mu_{1})\cong L_{\infty}(X,\mu_{2})\cong Z(\mathcal{N})$ . Also note that $\frac{d\tau_{1}}{d\tau_{2}}\leq c$ if and only if $\frac{d\mu_{1}}{d\mu_{2}}\leq c$. Again by the non-negativity of the Lindblad relative entropy, we have
\begin{align*}
D^{\tau_{1}}_{\Lin}(\rho\|\sigma)\lel &\int_{X} D_{\Lin}^{\tau_{1}}(\rho_{x}\|\sigma_{x})d\mu_{1}(x) \\
\leq &c \int_{X} D_{\Lin}^{\tau_{2}}(\rho_{x}\|\sigma_{x})d\mu_{2}(x) \lel cD^{\tau_{2}}_{\Lin}(\rho\|\sigma).
\end{align*}
The second assertion follows from \eqref{inf-lin}.
\end{proof}

\subsubsection{Quantum Fisher information}
The \textit{Fisher information} $I_{A}$ of  $A$ is defined by
\begin{align}\label{fisher-a}
I_{A}^{\tau}(\rho)=\tau(A(\rho)\ln (\rho) ),\quad\forall \rho\in \dom(A^{1/2})\cap L_2(\mathcal{N}) \text{ and } \ln(\rho) \in L_{\infty}(\mathcal{N}).
\end{align}
 Equivalently  $I_{A}(\rho)=\lim_{\epsilon\rightarrow 0^+}\tau(A(\rho)\ln(\rho+\epsilon 1))$. The Fisher information $I_{A}$ is also called the entropy production (\cite{Her}).
For a derivation $\delta$, the Fisher information is defined by
\begin{align}\label{fisher-d}
I_{\delta}^{\tau}(\rho)=\tau\left(\delta(\rho)Q^{\rho}(\delta(\rho))\right), \quad \forall\rho\in \dom(\delta)\subset \mathcal{N}.
\end{align}
Then $I_{\delta}(\rho)=I_{\delta^{*}\bar{\delta}}(\rho).$We use $I_{A}$ or $I_{\delta}$ if the trace is clear from the context.
By Theorem \ref{derivation}, for any $A$ satisfying $\Gamma$-regularity, there exists a closable $*$-preserving derivation $\delta_{A}:\dom(A^{1/2})\to L_{2}(\mathcal{M})$ such that $\Gamma_{A}(x,y)=E_{\mathcal{N}}(\delta_{A}(x)^*\delta_{A}(y))$ where $E_{\mathcal{N}}: \mathcal{M}\to\mathcal{N}$.
Thus $$I_{A}(\rho)=I_{\delta_{A}}(\rho).$$ The choice of $\delta_{A}$ is not necessarily unique, but $I_{A}$ is uniquely determined by (\ref{inducedder}).
\begin{prop} \label{fish-po} The Fisher information $I_{A}$ {\rm{(}}$I_{\delta}${\rm{)}} is non-negative and convex.
\end{prop}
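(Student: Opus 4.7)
Since $I_{A}=I_{\delta_{A}}$ whenever $A$ satisfies $\Gamma$-regularity (by Theorem \ref{derivation}), it suffices to work with the derivation form $I_{\delta}(\rho)=\tau(\delta(\rho)\,Q^{\rho}(\delta(\rho)))$. The plan is to derive non-negativity from the Daleckii--Krein integral representation of $Q^{\rho}$ given in Example \ref{CM}, and to obtain convexity directly from the Hiai--Petz generalized Lieb inequality (Theorem \ref{HPconvex}).

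For non-negativity, I would fix $\rho\in\mathcal{N}_{+}\cap\dom(\delta)$, set $\xi=\delta(\rho)\in L_{2}(\mathcal{M})$, and note that since $\delta$ is $*$-preserving and $\rho=\rho^{*}$, we have $J(\xi)=\xi$, i.e. $\xi^{*}=\xi$ as an element of the self-adjoint bimodule. Using Example \ref{CM},
\[
I_{\delta}(\rho)\lel\int_{0}^{\infty}\tau\bigl(\xi\,(\rho+r)^{-1}\,\xi\,(\rho+r)^{-1}\bigr)\,dr.
\]
Setting $C_{r}=(\rho+r)^{-1/2}\in\mathcal{N}$ and using trace cyclicity together with $\xi^{*}=\xi$ and $C_{r}^{*}=C_{r}$ gives
\[
\tau\bigl(\xi\,C_{r}^{2}\,\xi\,C_{r}^{2}\bigr)\lel\tau\bigl((C_{r}\xi C_{r})^{*}(C_{r}\xi C_{r})\bigr)\lel\|C_{r}\xi C_{r}\|_{L_{2}(\mathcal{M})}^{2}\gl 0,
\]
so $I_{\delta}(\rho)\ge 0$ after integrating in $r$. (For the general case, a perturbation argument $\rho\rightsquigarrow\rho+\varepsilon 1$ with $\varepsilon\to 0^{+}$ handles non-invertibility.)

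For convexity, I would invoke Example \ref{example-hiai}: the function $f(x)=(x-1)/\ln x=\int_{0}^{1}x^{r}dr$ is operator monotone, and one has the identity $Q^{\rho}=Q^{\rho}_{f_{[0]}^{-1}}$. By Theorem \ref{HPconvex} the map
\[
F:(\rho,\sigma,x)\longmapsto\bigl\langle x,\,Q^{\rho,\sigma}_{f_{[0]}^{-1}}(x)\bigr\rangle
\]
is jointly convex on $\mathcal{N}_{+}\times\mathcal{N}_{+}\times\mathcal{M}$. Since $\delta$ is linear, the map $\rho\mapsto(\rho,\rho,\delta(\rho))$ is affine, and $I_{\delta}(\rho)=F(\rho,\rho,\delta(\rho))$ is the composition of $F$ with this affine map, hence convex. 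The statement for $I_{A}$ then follows from $I_{A}=I_{\delta_{A}}$.

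I don't anticipate a genuine obstacle; the only subtlety is that $\delta(\rho)$ lives in the (possibly larger) bimodule $L_{2}(\mathcal{M})$ rather than $L_{2}(\mathcal{N})$, so care is needed to interpret ``$\xi^{*}=\xi$'' via the antilinear involution $J$ of the self-adjoint bimodule structure and to ensure that the Hiai--Petz convexity statement applies verbatim in this bimodule setting --- both points are already built into the framework set up earlier in this section.
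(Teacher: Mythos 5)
Your proof is correct and takes essentially the same route as the paper: for non-negativity the paper likewise uses the Daleckii--Krein representation of $Q^{\rho}$ and rewrites the integrand as $\tau(E_{\mathcal{N}}(w_{r}w_{r}))$ with $w_{r}=(\rho+r)^{-1/2}\delta(\rho)(\rho+r)^{-1/2}$ self-adjoint, which is the same computation as your $\|C_{r}\xi C_{r}\|_{2}^{2}\geq 0$ (the conditional expectation is harmless since $\tau\circ E_{\mathcal{N}}=\tau$). For convexity the paper simply cites Hiai--Petz via Theorem \ref{HPconvex}; you have usefully unpacked that citation into the explicit composition of the jointly convex map $(\rho,\sigma,x)\mapsto\langle x,Q^{\rho,\sigma}_{f_{[0]}^{-1}}(x)\rangle$ with the affine map $\rho\mapsto(\rho,\rho,\delta(\rho))$, which is exactly the content of the reference.
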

\begin{proof}

Recall Example \ref{CM} that $$I_{A}(\rho)=\int_{\Bbb{R}_{+}}\tau\left(\delta_{A}(\rho)(\rho+r)^{-1} \delta_{A}(\rho)(\rho+r)^{-1}\right)dr.$$  Let us define the differential form $$w_{r}\lel (\rho+r)^{-1/2}\delta_{A}(\rho)(\rho+r)^{-1/2},$$ then
$$I_{A}(\rho)= \int_{\Bbb{R}_{+}} \tau \left( E_{\mathcal{N}}(w_{r}w_{r}) \right) dr\geq 0.$$
The convexity was proved in \cite{HP} by using Theorem \ref{HPconvex}. Similar argument applies for $I_{\delta}$.
\qd
\begin{lemma}  \label{fisher-derivative}
Let $E: \mathcal{N}\to \mathcal{N}_{\fix}$ be the conditional expectation onto the fixed-point algebra $\mathcal{N}_{\fix}\subset\mathcal{N}$ of the semigroup $T_{t}=e^{-tA}$, then $E\circ T_{t}=T_t\circ E=E.$ We also have
$$I_{A}(T_{t}(\rho))=-\frac{d}{dt}D_{\mathcal{N}_{\fix}}(T_t(\rho)).$$
\end{lemma}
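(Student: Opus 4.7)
The plan has two halves: first verify the bimodule-type identity $E\circ T_t=T_t\circ E=E$, and then compute the derivative of the relative entropy along the semigroup.

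\textbf{Step 1 (fixed-point identities).} Since every $x\in\mathcal{N}_{\fix}$ satisfies $T_t(x)=x$, and $E$ takes values in $\mathcal{N}_{\fix}$, the identity $T_t\circ E=E$ is immediate. For the other direction I would test against an arbitrary $y\in\mathcal{N}_{\fix}$: using that $T_t$ is self-adjoint with respect to $\tau$, trace-preserving, and leaves $y$ fixed,
\[
\tau\bigl(E(T_t\rho)\,y\bigr)=\tau\bigl(T_t(\rho)\,y\bigr)=\tau\bigl(\rho\, T_t(y)\bigr)=\tau(\rho y)=\tau\bigl(E(\rho)\,y\bigr),
\]
so $E\circ T_t=E$. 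In particular $E(T_t\rho)=E(\rho)$ is constant in $t$.

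\textbf{Step 2 (differentiating the entropy).} Writing
\[
D_{\mathcal{N}_{\fix}}(T_t\rho)=\tau\bigl(T_t(\rho)\ln T_t(\rho)\bigr)-\tau\bigl(T_t(\rho)\ln E(T_t\rho)\bigr),
\]
the second factor $\ln E(T_t\rho)=\ln E(\rho)$ is $t$-independent by Step 1. For the first term I use that for any $C^1$ function $f$ and any differentiable operator path $X_t$, the cyclicity of $\tau$ (applied to the double operator integral representation of the Fr\'echet derivative) gives $\tfrac{d}{dt}\tau(f(X_t))=\tau(f'(X_t)\dot X_t)$. Applied to $f(x)=x\ln x$ with $\dot X_t=-A(T_t\rho)$,
\[
\tfrac{d}{dt}\tau\bigl(T_t(\rho)\ln T_t(\rho)\bigr)=-\tau\bigl(A(T_t\rho)\ln T_t\rho\bigr)-\tau\bigl(A(T_t\rho)\bigr).
\]
The last term vanishes because $A$ is the generator of a trace-preserving semigroup, so $\tau\circ A=0$. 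This yields $-I_A(T_t\rho)$ by the definition \eqref{fisher-a}.

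\textbf{Step 3 (cross-term vanishes).} For the remaining piece,
\[
\tfrac{d}{dt}\tau\bigl(T_t(\rho)\ln E(\rho)\bigr)=-\tau\bigl(A(T_t\rho)\ln E(\rho)\bigr)=-\tau\bigl(T_t(\rho)\,A(\ln E(\rho))\bigr),
\]
using self-adjointness of $A$. Since $E(\rho)\in\mathcal{N}_{\fix}$ and $\mathcal{N}_{\fix}$ is a von Neumann subalgebra, $\ln E(\rho)\in\mathcal{N}_{\fix}$ (after the standard $\epsilon$-regularization $E(\rho)+\epsilon 1$ to ensure boundedness of the logarithm and passing to the limit). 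Therefore $T_t\ln E(\rho)=\ln E(\rho)$ for all $t$, whence $A(\ln E(\rho))=0$ and the cross-term drops. Combining the two steps gives $-\tfrac{d}{dt}D_{\mathcal{N}_{\fix}}(T_t\rho)=I_A(T_t\rho)$.

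\textbf{Main obstacle.} The nontrivial point is the non-commutative chain rule in Step 2 and the regularization argument in Step 3: one must justify that $\tfrac{d}{dt}\tau(f(T_t\rho))=\tau(f'(T_t\rho)\,\dot{(T_t\rho)})$ for $f(x)=x\ln x$ even though $T_t\rho$ and $AT_t\rho$ do not commute, and that the formal manipulation $A(\ln E(\rho))=0$ is legitimate when $E(\rho)$ is not invertible. Both are handled by replacing $E(\rho)$ by $E(\rho)+\epsilon 1$, carrying out the computation where $\ln$ is bounded, and letting $\epsilon\to 0^+$ exactly as in the definition of the relative entropy.
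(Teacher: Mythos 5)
Your proof is correct and is essentially the standard argument one would expect. The paper does not give its own proof here (it defers to \cite{LJR} and \cite{KL51}), so there is nothing in the source to compare against line by line, but the three-step structure you use — (i) the trace-duality argument showing $E\circ T_t=E$ via self-adjointness of $T_t$ and the module property of $E$, (ii) the non-commutative chain rule $\tfrac{d}{dt}\tau(f(X_t))=\tau(f'(X_t)\dot X_t)$ applied to $f(x)=x\ln x$ together with $\tau\circ A=0$ to produce the Fisher information, and (iii) killing the cross term via $A|_{\mathcal{N}_{\fix}}=0$ — is exactly the route the cited reference takes. The one small elision in Step 1 is that the chain of trace equalities shows $\tau\bigl((E(T_t\rho)-E(\rho))y\bigr)=0$ for all $y\in\mathcal{N}_{\fix}$, and you should appeal to faithfulness of $\tau$ on $\mathcal{N}_{\fix}$ to conclude $E(T_t\rho)=E(\rho)$; and the chain rule in Step 2 rests on the identity $\tau\bigl(Q^{\rho}_{f^{[1]}}(\sigma)\bigr)=\tau(f'(\rho)\sigma)$, which the paper records in the double-operator-integral subsection, so it is worth citing it explicitly. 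Your closing remark about $\epsilon$-regularization correctly flags the only genuine technical point.
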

\noindent See \cite{LJR} for the proof. This result, especially the classical case, goes back \cite{KL51}.
\begin{lemma} \label{fish-per}Let $\tau_{1}$ and $\tau_{2}$ be two normal faithful traces over a finite von Neumann algebra $\mathcal{N}$ and $\frac{d\tau_{1}}{d\tau_{2}}\geq c$ for  $c>0$. Then for any $\rho\in \mathcal{N}_{+}$,
$$c I_{A}^{\tau_{2}}(\rho)\leq I_{A}^{\tau_{1}}(\rho).$$
It remains true for $I_{\delta}$.
\end{lemma}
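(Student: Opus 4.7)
The plan is to exploit the positive integral representation of the Fisher information that appears in the proof of Proposition~\ref{fish-po}. Setting
\[
w_r \;=\; (\rho+r)^{-1/2}\delta(\rho)(\rho+r)^{-1/2},
\]
the element $w_r$ is self-adjoint because $\delta$ is $*$-preserving and $\rho\in\mathcal{N}_+$, so $w_r^2\ge 0$, and Example~\ref{CM} combined with cyclicity of the trace gives
\[
I_\delta^{\tau}(\rho) \;=\; \int_0^\infty \tau\!\bigl(w_r^2\bigr)\,dr.
\]
The key observation is that the integrand $w_r^2$ is built entirely from the algebraic data $\rho$ and $\delta(\rho)$, so it is the \emph{same} positive element regardless of which of the two traces is used to evaluate it. Equivalently, writing $\tau(w_r^2)=\tau\bigl(E_\mathcal{N}(w_r^2)\bigr)$ and setting $Y_r := E_\mathcal{N}(w_r^2) \in \mathcal{N}_+$, one may work entirely inside $\mathcal{N}$.

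Next I would reduce to a trace-comparison estimate. Since $\mathcal{N}$ is a finite von Neumann algebra, the Radon-Nikodym theorem for normal faithful traces produces a positive central element $h\in Z(\mathcal{N})$ with $\tau_1(\cdot)=\tau_2(\,\cdot\,h)$, and the hypothesis $d\tau_1/d\tau_2\ge c$ is exactly the operator inequality $h\ge c\cdot 1$ in $Z(\mathcal{N})$. For any $Y\in \mathcal{N}_+$, the centrality of $h$ yields $hY=Y^{1/2}hY^{1/2}\ge cY$, and therefore
\[
\tau_1(Y) \;=\; \tau_2(hY) \;\ge\; c\,\tau_2(Y).
\]
Applied pointwise in $r$ to $Y_r\ge 0$ and integrated, this gives
\[
I_\delta^{\tau_1}(\rho) \;=\; \int_0^\infty \tau_1(Y_r)\,dr \;\ge\; c\int_0^\infty \tau_2(Y_r)\,dr \;=\; c\,I_\delta^{\tau_2}(\rho),
\]
which is the desired inequality. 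The corresponding statement for $I_A$ follows from $I_A^\tau = I_{\delta_A}^\tau$ under $\Gamma$-regularity by Theorem~\ref{derivation}, identically to how the analogous reduction was used just before in Proposition~\ref{fish-po}.

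The step I expect to be the most delicate is the first one: making precise that after invoking the conditional expectation $E_\mathcal{N}$, the integrand really is a trace-independent positive element of $\mathcal{N}$. Once one commits to working with $Y_r=E_\mathcal{N}(w_r^2)\in\mathcal{N}_+$ rather than with $w_r^2$ in the potentially larger module $L_2(\mathcal{M})$, all ambiguity about which trace extension to $\mathcal{M}$ one is using disappears, and the proof mirrors the central-Radon-Nikodym argument already employed in Lemma~\ref{ent-per}. The rest is routine trace positivity on $\mathcal{N}_+$.
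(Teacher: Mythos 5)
Your proof is correct and takes essentially the same route as the paper: both reduce to the nonnegative integral representation $I_\delta^\tau(\rho)=\int_0^\infty \tau(E_{\mathcal{N}}(w_r^2))\,dr$ from Proposition~\ref{fish-po} and then compare $\tau_1$ and $\tau_2$ on the positive integrand $Y_r=E_{\mathcal{N}}(w_r^2)\in\mathcal{N}_+$. The paper phrases the comparison through the direct integral decomposition $\mathcal{N}=\int_X\mathcal{N}_x\,d\mu(x)$ with $d\mu_1/d\mu_2\ge c$ (reusing the setup of Lemma~\ref{ent-per}), while you phrase it through the central Radon--Nikodym element $h\in Z(\mathcal{N})$ with $h\ge c$; these are two equivalent formulations of the same central positivity argument.
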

\begin{proof} We use the direct integral argument and notations in Lemma \ref{ent-per}. Again $\frac{d\tau_{1}}{d\tau_{2}}\geq c$ if and only if $\frac{d\mu_{1}}{d\mu_{2}}\geq c$. Let us consider the pointwise differential form $$w_{x,r}\lel (\rho_{x}+r)^{-1/2}\delta_{A}(\rho)_{x}(\rho_{x}+r)^{-1/2}.$$ By the non-negativity of the Fisher information at any point $x\in X$ (Proposition \ref{fish-po}), we have
\begin{align*}
cI_{A}^{\tau_{2}}(\rho)= &c \int_{X}\int_{0}^{\infty} {\tau_{2}}_{x}\left( E_{\mathcal{N}}(w_{x,r}w_{x,r}) \right) d\mu_{2}(x)dr\\
\kl &\int_{X} \int_{0}^{\infty}{\tau_{1}}_{x}\left( E_{\mathcal{N}}(w_{x,r}w_{x,r}) \right) d\mu_{1}(x)dr\lel  I_{A}^{\tau_{1}}(\rho).
\end{align*}
Similar argument applies for $I_\delta$.
\end{proof}

\subsubsection{Log-Sobolev type inequalities}
\begin{definition} The semigroup $T_{t}=e^{-tA}$ or the generator $A$ with the fixed-point algebra $\mathcal{N}_{\fix}$ is said to satisfy:
\begin{itemize}[leftmargin=6mm]
\item[{\rm{(1)}}] the modified log-Sobolev inequality $\lambda$-$\MLSI$ {\rm{(}}with respect to the trace $\tau${\rm{)}} if there exists a constant $\lambda>0$ such that
$$\lambda D_{\mathcal{N}_{\fix}}(\rho)\leq  I_{A}(\rho),\quad \forall \rho\in \dom(\delta)\cap \mathcal{N}_{+};$$
{\rm{(}}or equivalently $D_{\mathcal{N}_{\fix}}(T_{t}(\rho))\leq e^{-\lambda t} D_{\mathcal{N}_{\fix}}(\rho),\: \forall \rho\in \mathcal{N}_{+}.${\rm{)}}
\item[{\rm{(2)}}] the complete log-Sobolev inequality $\lambda$-$\CLSI$ {\rm{(}}with respect to the trace $\tau${\rm{)}} if $A \otimes id_{\mathcal{F}}$ satisfies $\lambda$-$\MLSI$ for any finite von Neumann algebra $\mathcal{F}$.
\end{itemize}
Let $\CLSI(A,\tau)$ be the supremum of $\lambda$ such that $A$ satisfies $\lambda$-$\CLSI$, or denoted by $\CLSI(A)$ if there is no ambiguity.  We also use $\CLSI(T_{t})$ for convenience. 
The derivation $\delta$ is said to satisfy $\lambda$-$\MLSI$ {\rm{(}}$\lambda$-$\CLSI${\rm{)}}  if $\delta^{*}\bar{\delta}$ satisfies $\lambda$-$\MLSI$ {\rm{(}}$\lambda$-$\CLSI${\rm{)}}.   Similarly we define $\CLSI(\delta,\tau)$ and $\CLSI(\delta)$ for the derivation $\delta$.
\end{definition}
\noindent An edge of $\CLSI$ over the log-Sobolev inequality  for quantum systems is tensorization stability (see \cite{LJR}).
\begin{prop} \label{tensta}
Let $T^{j}_{t}: \mathcal{N}_{j}\to \mathcal{N}_{j}$ be a family of semigroups with
fixed-point algebras $\mathcal{N}_{\fix,j}\subset \mathcal{N}_{j}$ for $1\leq j\leq k$ . Then the tensor semigroup $T_{t}=\otimes_{j=1}^{k}T_{t}^j$ has the fixed-point algebra $\mathcal{N}_{\fix}=\otimes_{j=1}^{k}\mathcal{N}_{\fix,j}$. Moreover, we have
$$\CLSI(T_{t})\geq \inf_{1\leq j\leq k} \CLSI(T^{j}_{t}).$$
\end{prop}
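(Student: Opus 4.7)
The plan is to reduce everything to two ingredients: (i) additivity of the Fisher information across the tensor factors, and (ii) the sub-additivity of relative entropy with respect to commuting conditional expectations, which is exactly Lemma \ref{iter}.

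First I would identify the generator of the tensor semigroup. Writing $A_j$ for the generator of $T_t^j$ and $A_j^{(j)} = id \otimes \cdots \otimes A_j \otimes \cdots \otimes id$ for its action on the $j$-th tensor factor of $\mathcal{N} = \otimes_{j=1}^k \mathcal{N}_j$, the product semigroup $T_t = \otimes_j T_t^j$ has generator $A = \sum_{j=1}^k A_j^{(j)}$, and the summands commute. Since each $T_t^j$ fixes precisely $\mathcal{N}_{\fix,j}$, the product fixes $\otimes_j \mathcal{N}_{\fix,j}$, which identifies $\mathcal{N}_{\fix}$. The conditional expectation $E_{\mathcal{N}_{\fix}} = \prod_j E_j$, where $E_j := E_{\mathcal{N}_{\fix,j}}^{(j)}$ acts as the conditional expectation onto $\mathcal{N}_{\fix,j}$ on the $j$-th factor and as identity elsewhere; these are pairwise commuting.

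Next, from the definition $I_A(\rho) = \tau(A(\rho)\ln\rho)$ in \eqref{fisher-a}, linearity of $A$ gives immediately
\begin{equation*}
I_A(\rho) \;=\; \sum_{j=1}^k \tau\bigl(A_j^{(j)}(\rho)\ln\rho\bigr) \;=\; \sum_{j=1}^k I_{A_j^{(j)}}(\rho).
\end{equation*}
For the complete version of the statement, tensor with an auxiliary finite von Neumann algebra $\mathcal{F}$; the same decomposition holds for $A \otimes id_{\mathcal{F}}$. Now set $\lambda := \inf_j \CLSI(T_t^j)$. For each fixed $j$, the algebra $\mathcal{F}_j := (\otimes_{i\ne j}\mathcal{N}_i)\bar{\otimes}\mathcal{F}$ serves as the auxiliary finite von Neumann algebra in the definition of $\CLSI(T_t^j)$, so
\begin{equation*}
\lambda \, D_{\mathcal{N}_{\fix,j}\bar{\otimes}\mathcal{F}_j}(\rho) \;\leq\; I_{A_j^{(j)}\otimes id_{\mathcal{F}}}(\rho),
\end{equation*}
and the conditional expectation on the left is exactly $E_j$ (extended by the identity on $\mathcal{F}$). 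Summing over $j$ yields $\lambda\sum_j D(\rho\|E_j(\rho)) \le I_{A\otimes id_\mathcal{F}}(\rho)$.

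Finally, since the $E_j$ pairwise commute and $\prod_j E_j = E_{\mathcal{N}_{\fix}\bar{\otimes}\mathcal{F}}$, Lemma \ref{iter} gives
\begin{equation*}
D_{\mathcal{N}_{\fix}\bar{\otimes}\mathcal{F}}(\rho) \;=\; D\bigl(\rho \,\big\|\, {\textstyle\prod}_j E_j(\rho)\bigr) \;\leq\; \sum_{j=1}^k D(\rho\|E_j(\rho)).
\end{equation*}
Combining the two displays completes the proof of $\lambda$-$\MLSI$ for $A\otimes id_\mathcal{F}$; as $\mathcal{F}$ was arbitrary, we conclude $\CLSI(T_t) \ge \lambda = \inf_j \CLSI(T_t^j)$. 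The only mildly delicate point is the bookkeeping of the conditional expectations across tensor factors and making sure the auxiliary-algebra slot in the definition of $\CLSI(T_t^j)$ is large enough to absorb both the other $\mathcal{N}_i$'s and the external $\mathcal{F}$; this is precisely the reason $\CLSI$ (rather than $\MLSI$) is the right notion here, and is the step where one must be careful but where no genuine obstacle arises.
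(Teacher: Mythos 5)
Your proof is correct. The paper itself does not give a proof of Proposition \ref{tensta} --- it states the result and refers the reader to \cite{LJR} --- so there is no in-paper argument to compare against, but your approach is the natural one and relies on exactly the tools the paper has set up: additivity of the Fisher information from linearity of the generator, absorbing the other tensor factors (together with the external auxiliary $\mathcal{F}$) into the auxiliary-algebra slot of the $\CLSI$ definition for $T_t^j$, and the subadditivity of relative entropy under commuting conditional expectations from Lemma \ref{iter}. You also correctly identify the crux: it is precisely the freedom to enlarge the auxiliary algebra that makes $\CLSI$, unlike $\MLSI$, tensor-stable. The only point you gloss over is the reverse inclusion $\mathcal{N}_{\fix}\subset\otimes_j\mathcal{N}_{\fix,j}$, which requires knowing that the tensor semigroup converges to $\prod_j E_j$ (this is standard for self-adjoint tracial QMS and is implicitly assumed throughout the paper), but this is a minor omission rather than a gap.
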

\noindent Combining Lemma \ref{ent-per} and Lemma \ref{fish-per}, we obtain the following \textit{change of measure principle}.
The following observation is also referred to as \textit{Holley Stroock} argument in the literature.
\begin{theorem}[change of measure principle]  \label{sutp} Let $\tau_{1}$ and $\tau_{2}$ be normal faithful traces over $\mathcal{N}$ and $c_{2}\leq \frac{d\tau_{1}}{d\tau_{2}}\leq c_{1}$ for some $c_{1},c_{2}>0$. Then $\CLSI(A,\tau_{1})\geq \frac{c_{2}}{c_{1}}\CLSI(A,\tau_{2})$.
\end{theorem}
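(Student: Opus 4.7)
The plan is to chain the two perturbation lemmas (Lemma \ref{ent-per} and Lemma \ref{fish-per}) and then promote the resulting $\MLSI$ statement to $\CLSI$ by a tensorization argument. Since the $\MLSI$ inequality is of the form $\lambda D_{\mathcal{N}_{\fix}}(\rho) \le I_A(\rho)$, and changing $\tau_2$ to $\tau_1$ enlarges the left-hand side by at most a factor $c_1$ and shrinks the right-hand side by at most a factor $c_2^{-1}$, the net effect is a loss of $c_2/c_1$ in the constant.

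Concretely, fix $\lambda < \CLSI(A,\tau_2)$ and any finite von Neumann algebra $\mathcal{F}$. Let $\tilde{\tau}_i = \tau_i \otimes \tau_{\mathcal{F}}$ on $\mathcal{N}\bar\otimes \mathcal{F}$; because the Radon–Nikodym derivative tensorizes, $c_2 \le d\tilde\tau_1/d\tilde\tau_2 \le c_1$ still holds. By definition of $\CLSI(A,\tau_2)$, the tensorized generator $A\otimes \id_{\mathcal{F}}$ satisfies
\begin{equation*}
\lambda\, D^{\tilde\tau_2}_{(\mathcal{N}\bar\otimes\mathcal{F})_{\fix}}(\rho) \;\le\; I^{\tilde\tau_2}_{A\otimes \id}(\rho)
\end{equation*}
for all positive $\rho$ in the domain. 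Apply Lemma \ref{ent-per} with $c=c_1$ to obtain
$D^{\tilde\tau_1}_{(\mathcal{N}\bar\otimes\mathcal{F})_{\fix}}(\rho) \le c_1 D^{\tilde\tau_2}_{(\mathcal{N}\bar\otimes\mathcal{F})_{\fix}}(\rho)$,
and Lemma \ref{fish-per} with $c=c_2$ to obtain $c_2\, I^{\tilde\tau_2}_{A\otimes \id}(\rho) \le I^{\tilde\tau_1}_{A\otimes \id}(\rho)$. Substituting these two bounds into the $\MLSI$ inequality for $\tilde\tau_2$ yields
\begin{equation*}
\lambda \, \frac{c_2}{c_1}\, D^{\tilde\tau_1}_{(\mathcal{N}\bar\otimes\mathcal{F})_{\fix}}(\rho) \;\le\; \lambda c_2\, D^{\tilde\tau_2}_{(\mathcal{N}\bar\otimes\mathcal{F})_{\fix}}(\rho) \;\le\; c_2\, I^{\tilde\tau_2}_{A\otimes \id}(\rho) \;\le\; I^{\tilde\tau_1}_{A\otimes \id}(\rho).
\end{equation*}
Since $\mathcal{F}$ was arbitrary, this gives $\CLSI(A,\tau_1) \ge (c_2/c_1)\lambda$, and letting $\lambda \nearrow \CLSI(A,\tau_2)$ finishes the proof.

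The only subtle point is making sure that the fixed-point subalgebra, the conditional expectation onto it, and the definition of the relative entropy $D_{\mathcal{N}_{\fix}}$ in Lemma \ref{ent-per} are insensitive to which trace is used; this is true because the semigroup $T_t = e^{-tA}$ is defined independently of the trace, so its fixed-point set is the same as a subalgebra of $\mathcal{N}$, and both traces restrict to faithful tracial states there. I expect no real obstacle beyond this bookkeeping and the straightforward verification that the Radon–Nikodym bound tensorizes.
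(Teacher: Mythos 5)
Your proof is correct and takes exactly the approach the paper intends: the paper explicitly says this theorem is obtained by ``combining Lemma \ref{ent-per} and Lemma \ref{fish-per}'' and gives no further argument, so your chaining of those two lemmas and the tensorization step (noting that the Radon--Nikodym bound survives $\otimes\,\tau_{\mathcal{F}}$) is precisely the intended proof, written out in full.
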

\begin{lemma} \label{expo decay} Let $\lambda>0$. If the Fisher information decays exponentially
$$I_{A}(T_{t}(\rho))\leq e^{-t\lambda } I_{A}(\rho),\quad \forall\rho\in\mathcal{N}_{+},$$
then $\CLSI(A)\geq \lambda$.
\end{lemma}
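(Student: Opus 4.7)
The plan is to integrate the exponential Fisher decay against the entropy production identity of Lemma~\ref{fisher-derivative}. Fix $\rho\in\mathcal{N}_+$ and set $h(t)=D_{\mathcal{N}_{\fix}}(T_t(\rho))$. By Lemma~\ref{fisher-derivative} we have $h'(t)=-I_A(T_t(\rho))$, so the hypothesis rewrites as
\[
  -h'(t) \;\leq\; e^{-\lambda t}\,I_A(\rho) \;=\; -e^{-\lambda t} h'(0).
\]
Integrating this inequality over $[0,T]$ and rearranging gives
\[
  h(T)-h(0) \;=\; \int_0^T h'(t)\,dt \;\geq\; h'(0)\cdot \frac{1-e^{-\lambda T}}{\lambda}.
\]

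Next I would let $T\to\infty$. Since $T_t(\rho)\to E_{\mathcal{N}_{\fix}}(\rho)$ under the self-adjoint trace-preserving semigroup, $h(T)\to 0$, so the displayed inequality becomes $-h(0)\geq h'(0)/\lambda$, which after substituting the definitions is exactly
\[
  \lambda\, D_{\mathcal{N}_{\fix}}(\rho) \;\leq\; I_A(\rho).
\]
This is $\lambda$-$\MLSI$ for $A$ on $\mathcal{N}$.

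To upgrade to the complete version, I would apply the identical one-line calculus argument to the tensor extension $A\otimes \id_{\mathcal{F}}$ on $\mathcal{N}\otimes\mathcal{F}$: Lemma~\ref{fisher-derivative} transfers verbatim to the semigroup $T_t\otimes\id$ with fixed-point algebra $\mathcal{N}_{\fix}\otimes\mathcal{F}$, and the Fisher information of $A\otimes\id$ coincides (via the induced derivation $\delta_A\otimes\id$) with the obvious extension of $I_A$. Since the exponential decay hypothesis is intended in its complete form, the conclusion $\lambda\,D_{\mathcal{N}_{\fix}\otimes\mathcal{F}}(\rho)\leq I_{A\otimes\id}(\rho)$ follows verbatim, giving $\CLSI(A)\geq \lambda$.

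The main technical point, rather than obstacle, is justifying $h(T)\to 0$, i.e.\ the $L_1$-convergence $T_t(\rho)\to E_{\mathcal{N}_{\fix}}(\rho)$ that makes the relative entropy vanish. For the self-adjoint, unital, trace-preserving semigroups considered here this is standard (it follows from spectral decomposition of the generator on $L_2$ together with $L_\infty$-boundedness), but it is the only non-algebraic input of the argument and should be invoked explicitly.
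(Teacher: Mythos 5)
Your proof is correct and takes essentially the same route as the paper: both set $h(t)=D_{\mathcal{N}_{\fix}}(T_t(\rho))$, use Lemma \ref{fisher-derivative} to identify $h'(t)=-I_A(T_t(\rho))$, and integrate the exponential bound over $[0,\infty)$. The paper's one-line proof silently absorbs both the boundary term $h(T)\to 0$ and the tensorization needed to pass from $\MLSI$ to $\CLSI$; you were right to flag these explicitly.
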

\begin{proof} Let $f(t)=D_{\mathcal{N}_{\fix}}(T_{t}(\rho))$, then
$f'(t)=-I_{A}(T_{t}(\rho))$ by Lemma \ref{fisher-derivative}.
Integrating both sides over $[0,\infty)$ yields the assertion.
\end{proof}

 \subsection{ $\CpSI$ and $\CLSI^+$} Now we give a brief introduction of complete $p$-Sobolev type inequalities and refer to \cite{hao} for details and generalized results. In the sequel, we assume that $p\in(1,2)$. For any $\rho,\sigma\in \mathcal{N}_{+}\cap L_{p}(\mathcal{N})$, the \textit{quantum p-relative entropy} is defined as
\[d^{p}(\rho\|\sigma) \lel  \tau(\rho^{p}-\sigma^{p})-p\tau((\rho-\sigma)\sigma^{p-1}).
\]
It follows from the definition that $d^{p}(\rho\|\sigma)\geq 0$, with the equality if and only if $\rho=\sigma$.
The $p$-relative entropy with respect to the von Neumann subalgebra $\mathcal{K}\subset \mathcal{N}$ is defined by
$$d^{p}_{\mathcal{K}}(\rho)=d^{p}(\rho\|E_{\mathcal{K}}(\rho))=\inf_{\tau(\rho)=\tau(\sigma),\sigma\in \mathcal{K}}d^{p}(\rho\|\sigma).$$
Lemma \ref{iter} remains true for the $p$-relative entropy, see \cite{hao} for the proof.
\begin{lemma} \label{iter-p} Let $E_1$,....,$E_m$ be pairwise commuting conditional expectations on $\N$, then
 \[ d^{p}\left(\rho\|(\small{\prod}_{j=1}^{m} E_j)(\rho) \right)\kl \sum_{j=1}^m d^{p} (\rho\|E_j(\rho)) \pl.\]
\end{lemma}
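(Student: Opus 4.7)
The plan is to mirror verbatim the filtration argument used for Lemma \ref{iter}, substituting $d^p$ for $D$ throughout. First I would set up the same nested family of subalgebras $\MM_{m-k}=(\prod_{j=1}^{k} E_j)(\N)$; since the $E_j$ pairwise commute, each partial product is itself a conditional expectation, giving a genuine filtration $\MM_0\subset\MM_1\subset\cdots\subset\MM_m=\N$ with $E_{\MM_{m-k}}=\prod_{j=1}^{k}E_j$. In particular $E_{\MM_1}=\prod_{j=1}^{m-1}E_j$ and $E_{\MM_0}=E_{\MM_1}\circ E_m$.

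The crucial ingredient is the Pythagorean identity
\begin{equation*}
 d^{p}(\rho\|\rho_0)\lel d^{p}(\rho\|\rho_1)+d^{p}(\rho_1\|\rho_0),\qquad \rho_i:=E_{\MM_i}(\rho),
\end{equation*}
valid because $\MM_0\subset\MM_1$. This is a direct computation from the rewriting $d^{p}(\rho\|\sigma)=\tau(\rho^{p})+(p-1)\tau(\sigma^{p})-p\tau(\rho\sigma^{p-1})$: the bimodule property of $E_{\MM_1}$ together with $\rho_1^{p-1}\in\MM_1$ and $\rho_0^{p-1}\in\MM_0\subset\MM_1$ collapses $\tau(\rho\,\rho_1^{p-1})=\tau(\rho_1^{p})$ and $\tau(\rho\,\rho_0^{p-1})=\tau(\rho_1\,\rho_0^{p-1})$, after which the four remaining trace terms match exactly on both sides.

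The second ingredient is the data processing inequality for $d^{p}$ under a conditional expectation. Applying it to $E_{\MM_1}$ and the pair $(\rho,E_m(\rho))$, and using $E_{\MM_1}(E_m(\rho))=E_{\MM_0}(\rho)=\rho_0$, yields
\begin{equation*}
 d^{p}(\rho_1\|\rho_0)\lel d^{p}\bigl(E_{\MM_1}(\rho)\,\big\|\,E_{\MM_1}(E_m(\rho))\bigr)\kl d^{p}(\rho\|E_m(\rho)).
\end{equation*}
Combining this with the Pythagorean identity gives $d^{p}(\rho\|E_{\MM_0}(\rho))\leq d^{p}(\rho\|E_{\MM_1}(\rho))+d^{p}(\rho\|E_m(\rho))$, and iterating the same step $m-1$ more times on the first summand (removing one factor at a time) produces the telescoping bound $\sum_{j=1}^{m}d^{p}(\rho\|E_j(\rho))$, as claimed.

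The main obstacle is the data processing step: unlike the Umegaki case where monotonicity is classical, for $d^{p}$ with $p\in(1,2)$ one must appeal to the operator convexity of $x\mapsto x^{p}$ on $(0,\infty)$, or equivalently to the Hiai--Petz monotonicity of quasi-entropies (Theorem \ref{HPconvex} together with Lemma \ref{dc}), which is the ingredient developed in \cite{hao}. The Pythagorean identity, by contrast, is purely algebraic and poses no difficulty beyond careful bookkeeping of the bimodule relations.
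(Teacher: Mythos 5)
Your proof is correct and follows the same filtration/Pythagorean/data-processing route that the paper uses for Lemma~\ref{iter}; the paper itself only cites \cite{hao} for the $p$-version, so your reconstruction is the natural one, and the explicit verification of the Pythagorean identity (which unlike the Umegaki chain rule is not folklore for $d^p$) is the right thing to spell out.

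One small calibration on the data-processing step: the ingredient you actually need is the monotonicity $\beta^{*}Q^{\beta(\rho),\beta(\sigma)}_{f^{[1]}}\beta\leq Q^{\rho,\sigma}_{f^{[1]}}$ for $f(x)=x^{q}$ with $q=p-1\in(0,1)$, i.e., the unlabeled theorem stated immediately after Lemma~\ref{dc}, not Theorem~\ref{HPconvex} and Lemma~\ref{dc} themselves (those concern the $f_{[0]}$ and $f_{[0]}^{-1}$ kernels, not the difference-quotient kernel $f^{[1]}$). The reason this is the right tool: writing $g(t)=(1-t)\sigma+t\rho$ and using Taylor's formula with integral remainder together with the Daletskii--Krein derivative formula gives
\begin{equation*}
d^{p}(\rho\|\sigma)\lel p\int_{0}^{1}(1-t)\,\tau\!\left((\rho-\sigma)\,Q^{g(t)}_{(x^{p-1})^{[1]}}(\rho-\sigma)\right)dt ,
\end{equation*}
and since $\beta(g(t))=(1-t)\beta(\sigma)+t\beta(\rho)$ for any linear trace-preserving $\beta$, the DPI $d^{p}(\beta(\rho)\|\beta(\sigma))\leq d^{p}(\rho\|\sigma)$ follows term-by-term from that monotonicity. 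This is exactly the estimate you invoke for $\beta=E_{\MM_{1}}$, so your argument goes through; you should just cite the $f^{[1]}$ variant rather than the $f_{[0]}$ one.
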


\noindent In \cite{hao}, we defined the  $p$-Fisher information $I_{A}^{p}$ of the generator $A$ of a semigroup $e^{-tA}$ by
$$I_{A}^{p}(\rho)=p\tau(A(\rho)\rho^{p-1}),  \quad\forall \rho\in \dom(A^{1/2})\cap L_{p-1}(\mathcal{N},\tau)$$
and the $p$-information $I_{\delta}^{p}$ of  $\delta$ by
$$I_{\delta}^{p}(\rho)=p\tau\left(\delta(\rho)Q^{\rho}_{{(x^{p-1})}^{[1]}}(\delta(\rho))\right),\quad \forall \rho\in \dom(\delta)\subset\mathcal{N}.$$
Note that $$I_{A}(\rho)=\lim_{p\rightarrow 1^{+}}\frac{I_{A}^{p}(\rho)}{p-1} \quad \text{and} \quad D_{\Lin}(\rho\|\sigma)=\lim_{p\rightarrow 1^{+}} \frac{d^{p}(\rho\|\sigma)}{p-1}.$$
 See \cite{hao} for the nonnegativity and convexity of the quantum $p$-Fisher information.
\begin{definition} The semigroup $T_{t}=e^{-tA}$ or the generator $A$ with the fixed-point algebra $\mathcal{N}_{\fix}$ is said to satisfy:
\begin{itemize}[leftmargin=6mm]
\item[{\rm{(1)}}] the modified $p$-Sobolev inequality $\lambda$-$\MpSI$ {\rm{(}}with respect to the trace $\tau${\rm{)}} if there exists a constant $\lambda>0$ such that
$$\lambda d_{\mathcal{N}_{\fix}}^{p}(\rho)\leq  I_{A}^{p}(\rho),\quad \forall \rho\in \dom(\delta)\cap \mathcal{N}_{+};$$
{\rm{(}}or equivalently $d_{\mathcal{N}_{\fix}}^{p}(T_{t}(\rho))\leq e^{-\lambda t} d_{\mathcal{N}_{\fix}}^{p}(\rho),\: \forall \rho\in \mathcal{N}_{+}.${\rm{)}}
\item[{\rm{(2)}}] the complete $p$-Sobolev inequality $\lambda$-$\CpSI$ {\rm{(}}with respect to the trace $\tau${\rm{)}} if $A \otimes id_{\mathcal{F}}$ satisfies $\lambda$-$\MpSI$ for any finite von Neumann algebra $\mathcal{F}$.
\item[{\rm{(3)}}] the enhanced complete log-Sobolev inequality $\lambda$-$\CLSI^{+}$ if $A$ satisfies $\lambda$-$\CpSI$ for $p\in(1,2)$.
\end{itemize}
Let $\CpSI(A,\tau)$ {\rm{(}}$\CLSI^{+}(A,\tau)${\rm{)}} be the supremum of $\lambda$ such that $A$ satisfies $\lambda$-$\CpSI${\rm{(}}$\CLSI^{+}${\rm{)}}, or denoted by $\CpSI(A)$ {\rm{(}}$\CLSI^{+}(A)${\rm{)}}  if there is no ambiguity.  
The derivation $\delta$ is said to satisfy $\lambda$-$\MpSI$ {\rm{(}}$\lambda$-$\CpSI${\rm{)}}  if $\delta^{*}\bar{\delta}$ satisfies $\lambda$-$\MpSI$ {\rm{(}}$\lambda$-$\CpSI${\rm{)}}.   Similarly we define $\CpSI(\delta,\tau)$, $\CpSI(\delta)$, $\CLSI^{+}(\delta, \tau)$, and $\CLSI^{+}(\delta)$ for the derivation $\delta$.
\end{definition}
\noindent It is enough to define $\CLSI^{+}$ for $p\in(1, 1+\eps)$ for some $\eps>0$ to retrieve the log-Sobolev inequality. The limiting case $p\to 2^{+}$ reduces to the Poincar\'e inequality.
\begin{theorem}\label{bardet-p} Let $E_{\mathcal{K}}$ be the conditional expectation onto the von Neumann subalgebra $\mathcal{K}\subset\mathcal{N}$, then we have $$\CpSI(I-E_{\mathcal{K}})\geq p.$$
\end{theorem}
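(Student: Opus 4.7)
The plan is to reduce the complete $p$-Sobolev inequality to a single operator Jensen estimate. The key observation is that for $A = I - E_{\mathcal{K}}$, the fixed-point algebra is exactly $\mathcal{K}$, and both sides of the $\MpSI$ inequality simplify substantially by trace-preservation of $E_{\mathcal{K}}$.

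First I will expand $d^p_{\mathcal{K}}(\rho) = d^p(\rho\|E_{\mathcal{K}}(\rho))$ using the definition. The cross term $p\tau((\rho-E_{\mathcal{K}}(\rho))E_{\mathcal{K}}(\rho)^{p-1})$ vanishes because $E_{\mathcal{K}}(\rho)^{p-1}\in\mathcal{K}$ and $E_{\mathcal{K}}$ is trace-preserving, leaving the clean identity $d^p_{\mathcal{K}}(\rho) = \tau(\rho^p) - \tau(E_{\mathcal{K}}(\rho)^p)$. On the other side, the definition of the $p$-Fisher information gives $I_A^p(\rho) = p\tau(\rho^p) - p\tau(E_{\mathcal{K}}(\rho)\rho^{p-1})$. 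Subtracting these, the target inequality $p\, d^p_{\mathcal{K}}(\rho) \leq I_A^p(\rho)$ collapses to the single bound
\[ \tau\bigl(E_{\mathcal{K}}(\rho)\rho^{p-1}\bigr) \;\leq\; \tau\bigl(E_{\mathcal{K}}(\rho)^p\bigr). \]

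Using trace-preservation once more I will rewrite the left-hand side as $\tau(E_{\mathcal{K}}(\rho)E_{\mathcal{K}}(\rho^{p-1}))$, so everything reduces to the operator inequality $E_{\mathcal{K}}(\rho^{p-1}) \leq E_{\mathcal{K}}(\rho)^{p-1}$. Since $p\in(1,2)$ we have $p-1\in(0,1)$ and $x\mapsto x^{p-1}$ is operator concave on $[0,\infty)$ by L\"owner's theorem; the Choi--Davis operator Jensen inequality applied to the positive unital map $E_{\mathcal{K}}$ delivers exactly this bound. Multiplying by the positive element $E_{\mathcal{K}}(\rho)$ and taking traces (monotonicity of the trace under the order on the positive cone) closes the $\MpSI$ step.

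For the passage to $\CpSI$, I will invoke the structural identity $(I - E_{\mathcal{K}})\otimes\id_{\mathcal{F}} \;=\; \id_{\mathcal{N}\overline{\otimes}\mathcal{F}} - E_{\mathcal{K}\overline{\otimes}\mathcal{F}}$, valid because $E_{\mathcal{K}}\otimes\id_{\mathcal{F}}$ is the canonical trace-preserving conditional expectation onto $\mathcal{K}\overline{\otimes}\mathcal{F}\subset\mathcal{N}\overline{\otimes}\mathcal{F}$. Thus the ampliated generator has the same form as the original, and the computation above applies verbatim with $\mathcal{N}$ replaced by $\mathcal{N}\overline{\otimes}\mathcal{F}$ and $\mathcal{K}$ by $\mathcal{K}\overline{\otimes}\mathcal{F}$. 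This yields $\CpSI(I-E_{\mathcal{K}})\geq p$ uniformly in $\mathcal{F}$. The only substantive ingredient is operator Jensen for concave $x^{p-1}$, which is standard; I do not foresee a genuine obstacle.
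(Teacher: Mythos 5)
Your proof is essentially the same as the paper's: expand $d^p_{\mathcal{K}}(\rho)$ and $I^p_{I-E_{\mathcal{K}}}(\rho)$, reduce the $\MpSI$ inequality to the operator Jensen bound for the concave function $x\mapsto x^{p-1}$ applied to the unital positive map $E_{\mathcal{K}}$, and observe that ampliation $(I-E_{\mathcal{K}})\otimes\id_{\mathcal{F}}$ is again of the form $I-E$. You state the Jensen inequality in the correct direction $E_{\mathcal{K}}(\rho^{p-1})\leq E_{\mathcal{K}}(\rho)^{p-1}$; the paper's displayed formula has it reversed (evidently a misprint), though the paper's subsequent chain of inequalities in fact relies on the orientation you wrote.
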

\begin{proof} It is obvious that $\mathcal{N}_{\fix}=\mathcal{K}$. By the operator concavity of $x^{p-1}$, we have $$(E_{\mathcal{K}}(\rho))^{p-1}\leq E_{\mathcal{K}}(\rho^{p-1}).$$ Thus
\begin{align*}
d_{\mathcal{N}_{\fix}}^{p}(\rho)=\tau(\rho^{p}-E_{\mathcal{K}}(\rho) (E_{\mathcal{K}}(\rho))^{p-1}  )\leq \tau(\rho^{p}-E_{\mathcal{K}}(\rho) E_{\mathcal{K}}(\rho^{p-1}))=\frac{1}{p}I_{I-E_{\mathcal{K}}}^{p}(\rho).
\end{align*}
\end{proof}
\noindent Here is a list of important properties of $\CpSI$ and refer to \cite{hao} for proofs.
\begin{theorem}  Let $T_{t}=e^{-tA}:\mathcal{N}\to\mathcal{N}$, then
\begin{enumerate}[leftmargin=6mm]
\item[{\rm{(1)}}] $I_{A}^{p}(e^{-tA}(\rho))=-\frac{d}{dt}d_{
\mathcal{N}_{\fix}}^{p}(e^{-tA}(\rho));$
\item [{\rm{(2)}}]   the exponential decay of Fisher information
$I_{A}^{p}(T_{t}(\rho))\leq e^{-t\lambda } I_{A}^{p}(\rho)$
implies $\CpSI(A)\geq \lambda$;
\item[{\rm{(3)}}]  $\CLSI(A)\geq \CLSI^{+}(A)$;
\item[{\rm{(4)}}]  $\CpSI$ and $\CLSI^{+}$  are stable under tensorization;
\item[{\rm{(5)}}]  $\CpSI$ and $\CLSI^{+}$ are stable under change of measure.
\end{enumerate}
Similar results remain true for $\delta$.
\end{theorem}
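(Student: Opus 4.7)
Each of the five parts has an exact $p=1$ analogue that was already established for $\CLSI$, so the plan is to recycle those arguments with the $p$-relative entropy $d^p$ and the $p$-Fisher information $I^p_A$ playing the roles of $D_{\Lin}$ and $I_A$. The limits $d^p(\rho\|\sigma)/(p-1)\to D_{\Lin}(\rho\|\sigma)$ and $I^p_A(\rho)/(p-1)\to I_A(\rho)$ as $p\to 1^+$, stated in the preceding subsection, will be used to deduce (3). The derivation versions follow automatically from the $A=\delta^*\bar{\delta}$ case since $I_A=I_\delta$ by Theorem \ref{derivation}.

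For (1), set $\rho_t=T_t(\rho)$ and $\sigma:=E_{\mathcal{N}_{\fix}}(\rho_t)$, which is independent of $t$ since $E_{\mathcal{N}_{\fix}}\circ T_t=E_{\mathcal{N}_{\fix}}$ (Lemma \ref{fisher-derivative}). Expanding $d^p_{\mathcal{N}_{\fix}}(\rho_t)=\tau(\rho_t^p)-\tau(\sigma^p)-p\tau((\rho_t-\sigma)\sigma^{p-1})$ and using the standard trace-function identity $\frac{d}{dt}\tau(f(\rho_t))=\tau(f'(\rho_t)\dot{\rho}_t)$ gives
\[
\frac{d}{dt}d^p_{\mathcal{N}_{\fix}}(\rho_t)=-p\tau(A(\rho_t)\rho_t^{p-1})+p\tau(A(\rho_t)\sigma^{p-1}).
\]
The last term vanishes because $A$ is self-adjoint and annihilates $\mathcal{N}_{\fix}$: $\tau(A(\rho_t)\sigma^{p-1})=\tau(\rho_t A(\sigma^{p-1}))=0$. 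This is (1). For (2), integrate the hypothesis over $[0,\infty)$, apply (1), and use $d^p_{\mathcal{N}_{\fix}}(T_t\rho)\to 0$ under ergodicity to get $\lambda d^p_{\mathcal{N}_{\fix}}(\rho)\leq I^p_A(\rho)$. Since the hypothesis tensors trivially with $\id_\F$, the same bound holds after tensoring, so $\CpSI(A)\geq\lambda$.

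For (3), fix $\lambda<\CLSI^+(A)=\inf_{p\in(1,2)}\CpSI(A)$. Then $\lambda d^p_{\mathcal{N}_{\fix}}(\rho)\leq I^p_A(\rho)$ holds for every $p$ in some interval $(1,1+\varepsilon)$, and similarly after tensoring with any finite $\F$. Dividing by $p-1$ and letting $p\to 1^+$ yields $\lambda D_{\mathcal{N}_{\fix}}(\rho)\leq I_A(\rho)$, and the tensored version gives $\CLSI(A)\geq\lambda$.

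For (4), let $L=A_1\otimes\id+\id\otimes A_2$ on $\mathcal{N}_1\otimes\mathcal{N}_2$ with fixed-point algebra $\mathcal{N}_{\fix,1}\otimes\mathcal{N}_{\fix,2}$. The commuting conditional expectations $E_1=E_{\mathcal{N}_{\fix,1}}\otimes\id$ and $E_2=\id\otimes E_{\mathcal{N}_{\fix,2}}$ have product $E_{\mathcal{N}_{\fix,1}\otimes\mathcal{N}_{\fix,2}}$, so Lemma \ref{iter-p} gives $d^p_{\mathcal{N}_{\fix,1}\otimes\mathcal{N}_{\fix,2}}(\rho)\leq d^p(\rho\|E_1\rho)+d^p(\rho\|E_2\rho)$. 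The built-in $\CpSI$-tensorization of each $A_i$ with the opposite tensor factor acting as $\F$ yields $\lambda d^p(\rho\|E_1\rho)\leq I^p_{A_1\otimes\id}(\rho)$ and $\lambda d^p(\rho\|E_2\rho)\leq I^p_{\id\otimes A_2}(\rho)$; adding these and using the linear decomposition $I^p_L(\rho)=I^p_{A_1\otimes\id}(\rho)+I^p_{\id\otimes A_2}(\rho)$ gives tensor stability. For (5), decompose $d^p$ and $I^p_A$ via the direct integral over the centre exactly as in Lemmas \ref{ent-per} and \ref{fish-per}, using the fibre-wise nonnegativity of $d^p$ and $I^p_A$ proved in \cite{hao}; this delivers $d^{p,\tau_1}\leq c_1\, d^{p,\tau_2}$ and $c_2 I^{p,\tau_2}_A\leq I^{p,\tau_1}_A$, hence $\CpSI(A,\tau_1)\geq(c_2/c_1)\CpSI(A,\tau_2)$. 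The main obstacle I anticipate is the limit-interchange in (3): to push the infimum over $p$ through the $p\to 1^+$ limit one needs some uniform control or monotonicity of $d^p/(p-1)$ (e.g.\ via the integral representation $f_{[0]}(x,y)=\int_0^1 x^r y^{1-r}dr$), which is not immediate from the abstract definitions quoted above.
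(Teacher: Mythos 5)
Your proposal is correct and is essentially the natural $p$-adaptation of the $p=1$ arguments that the paper does give (Lemma \ref{fisher-derivative}, Lemma \ref{expo decay}, Proposition \ref{tensta}, Theorem \ref{sutp}); since the paper states this theorem without proof and defers to \cite{hao}, there is no printed proof to compare against, but the route you take is the one the surrounding material clearly anticipates.

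Two remarks. First, your closing concern about a limit interchange in part (3) is a red herring: once a fixed $\lambda<\CLSI^{+}(A)=\inf_{p\in(1,2)}\CpSI(A)$ is chosen, the inequality $\lambda\, d^{p}_{\mathcal{N}_{\fix}}(\rho)\le I^{p}_{A}(\rho)$ holds for \emph{every} $p\in(1,2)$ and every admissible $\rho$ (including after tensoring with $\F$), and one then lets $p\to 1^{+}$ at fixed $\rho$ after dividing by $p-1$. No infimum is being pushed through a limit; the pointwise convergences $d^{p}(\rho\|\sigma)/(p-1)\to D_{\Lin}(\rho\|\sigma)$ and $I^{p}_{A}(\rho)/(p-1)\to I_{A}(\rho)$, which are quoted in the text, suffice as they stand. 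Second, the justification in part (2) that ``the hypothesis tensors trivially with $\id_{\F}$'' is too quick as phrased: the displayed hypothesis, for $\rho\in\mathcal{N}_{+}$ only, gives $\MpSI(A)\ge\lambda$, and exponential decay of $I^{p}_{A\otimes\id_{\F}}$ on all of $(\mathcal{N}\otimes\F)_{+}$ (not just simple tensors) must be part of the assumption to reach $\CpSI(A)\ge\lambda$. This imprecision is inherited verbatim from Lemma \ref{expo decay} for $\CLSI$, and in the paper's intended applications (e.g.\ Theorem \ref{main}) the decay is indeed proved uniformly over ampliations, so no new gap is introduced, but the step deserves an explicit sentence rather than the phrase ``tensors trivially.''
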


\section{Derivation triple}
Let $\mathcal{N}$ be a finite von Neumann algebra equipped with a normal faithful tracial state $\tau$, and $\delta$ be a closable $*$-preserving  derivation on $\mathcal{N}$. Suppose there exists a larger finite von Neumann algebra $(\mathcal{M},\tau)$ containing $\mathcal{N}$ and a weakly dense $^*$-subalgebra $\mathcal{A}\subset \mathcal{N}$ such that
\begin{enumerate}[leftmargin=6mm]
\item $\mathcal{A}\subset \dom(\delta);$
\item $\delta: \mathcal{A}\rightarrow L_{2}(\mathcal{M},\tau)$.
\end{enumerate}
We call such $\triple$ a \textit{derivation triple}. This notion is closely related to, and inspired by Connes' notion of a spectral triple $(\A,H,D)$ given by a representation $\pi:\A\to \B(H)$ and a (usually unbounded) self-adjoint operator $D$ such that
 \[ \delta(a) \lel [D,\pi(a)] \]
is bounded. In classical geometry $D$ is the Dirac operator and $[D,\pi(a)]\in C_0(C\ell(M))$, where $C_0(C\ell(M))$ is $C^*$-bundle of Clifford algebras of the dimension of $M$ over $M$. This Clifford bundle admits a natural faithful trace hence is contained in the von Neumann algebra $CL(M)$ given by the GNS construction, see section 4.1 for more details. 

Thus our notion of derivation triple requires additional conditions on the algebra of differential forms to admit a tracial state. On the other hand we allow for slightly more general derivations, because in many situations it is difficult to identify a good choice of $D$. In order to understand the role of differential forms, we recall Connes' abstract definition
\[  \Omega^{1}(\mathcal{A})=\{\sum_{j}(a_{j}\otimes b_{j}-1\otimes a_{j}b_{j})|a_{j}, b_{j}\otimes \mathcal{A}\}\subset \mathcal{A}\otimes \mathcal{A} \pl. \]
The induced differential representation $\pi_{\delta}: \Omega^{1}(\mathcal{A})\rightarrow \mathcal{M}$ is defined by
$$\pi_{\delta}(a\otimes b-1\otimes ab)=\delta(a)b,$$ and we denote the range $\pi_{\delta}(\Omega^{1}(\mathcal{A}))$ by $\Omega_{\delta}(\mathcal{A})$. Thus $\Omega_{\delta}(\mathcal{A})$ is Hilbert $\mathcal{A}$-bimodule with inner product
$$(\delta(a_{1})b_{1}, \delta(a_{2})b_{2})_{\mathcal{A}}=b_{1}^{*}E_{\mathcal{N}}(\delta(a_{1}^{*})\delta(a_{2}))b_{2},$$
where $E_{\mathcal{N}}: \mathcal{M}\rightarrow \mathcal{N}$ is the conditional expectation and $(\cdot, \cdot)_{\mathcal{A}}$ is the $\mathcal{N}$-valued inner product.  Indeed,
$\Omega_{\delta}(\mathcal{A})$ is also left $\mathcal{A}$-module since $a\delta(b)=\delta(ab)1-\delta(a)b$.

\begin{definition}\label{subtriple} Let $\triple$ and $\ltriple$ be two derivation triples with $\tilde{E}_{\mathcal{N}}:\tilde{\mathcal{N}}\to\mathcal{N}$. Let $\mathcal{N}_{\fix}\subset\mathcal{N}$ and $\tilde{\mathcal{N}}_{\fix}\subset\tilde{\mathcal{N}}$ be the fixed-point algebras of $e^{-t\delta^{*}\bar{\delta}}$ with the corresponding conditional expectations $E$ and $\tilde{E}$. We say $\triple$  is a sub-triple of $\ltriple$, denoted by $\triple\subset\ltriple$, if the first two diagrams are commuting and the last diagram is a commuting square. 
\begin{figure} 
\begin{tikzcd}[row sep=1.0em]
\tilde{\mathcal{N}}  \arrow[r,phantom, "{\subset}", description]  \arrow[d, phantom, "{\cup}", description]&  \tilde{\mathcal{M}}  \arrow[d,phantom, "{\cup}", description] \\
  \mathcal{N} \arrow[r,phantom, "{\subset}", description]& \mathcal{M}
\end{tikzcd}
\quad \quad
\begin{tikzcd}[row sep=1.0em]
   \Omega^{1}(\tilde{\mathcal{A}})   \arrow[d,phantom, "{\cup}",  description]     \arrow[r,"\tilde{\pi}_{\delta}" ]  & L_{2}(\tilde{\mathcal{M}},\tau)  \arrow[d,phantom, "{\cup}", description] \\
    \Omega^{1}(\mathcal{A})   \arrow[r,"\pi_{\delta}" ]  & L_{2}(\mathcal{M},\tau)
\end{tikzcd}
\quad \quad
\begin{tikzcd}[row sep=1.0em]
 \tilde{\mathcal{N}}_{\fix} \arrow[d,phantom, "{\cup}", description] \arrow[r,phantom, "{\subset}", description]&\tilde{\mathcal{N}}  \arrow[d, phantom, "{\cup}", description]\\
\mathcal{N}_{\fix} \arrow[r,phantom, "{\subset}", description]&\mathcal{N}
\end{tikzcd}

\end{figure}
\end{definition}
\begin{theorem} \label{passto} Let $\triple\subset\ltriple$, then $$\CLSI\triple\geq\CLSI\ltriple \quad \text{and} \quad  \CpSI\triple\geq\CpSI\ltriple .$$
\end{theorem}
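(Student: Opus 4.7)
The plan is to show the $\MLSI$ (respectively $\MpSI$) version first and then upgrade to the complete versions by verifying that the sub-triple relation is preserved under tensoring with an auxiliary finite von Neumann algebra.

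First I would establish the key identity
\[
D_{\mathcal{N}_{\fix}}(\rho) \;=\; D_{\tilde{\mathcal{N}}_{\fix}}(\rho)
\qquad \text{for every } \rho \in \mathcal{N}_+.
\]
The third diagram in Definition \ref{subtriple} asserts that $\mathcal{N}_{\fix}\subset \mathcal{N}$ and $\tilde{\mathcal{N}}_{\fix}\subset \tilde{\mathcal{N}}$ form a commuting square, which precisely means $E_{\tilde{\mathcal{N}}_{\fix}}(\mathcal{N})\subset \mathcal{N}_{\fix}$, i.e.\ $E_{\tilde{\mathcal{N}}_{\fix}}(\rho)=E_{\mathcal{N}_{\fix}}(\rho)$ for $\rho\in \mathcal{N}$. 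Plugging this into the formula \eqref{fix-d} gives the equality; the same argument yields the analogous statement for $d^{p}_{\mathcal{N}_{\fix}}$ and $d^{p}_{\tilde{\mathcal{N}}_{\fix}}$.

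Next I would show that the Fisher informations agree on the smaller triple: for $\rho \in \mathcal{A}\cap \mathcal{N}_+$,
\[
I_{\delta}(\rho) \;=\; I_{\tilde{\delta}}(\rho), \qquad I^{p}_{\delta}(\rho) \;=\; I^{p}_{\tilde{\delta}}(\rho).
\]
The second diagram in Definition \ref{subtriple} says that $\tilde{\pi}_{\delta}$ extends $\pi_{\delta}$ through the inclusion $L_{2}(\mathcal{M},\tau)\subset L_{2}(\tilde{\mathcal{M}},\tau)$, so $\tilde{\delta}(\rho)=\delta(\rho)$ as elements of $L_{2}(\tilde{\mathcal{M}},\tau)$. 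Since $\rho \in \mathcal{N}\subset \tilde{\mathcal{N}}$ has the same spectral projections in either algebra, the double operator integrals $Q^{\rho}$ and $Q^{\rho}_{(x^{p-1})^{[1]}}$ are the same object, and the integral representation of Example \ref{CM} shows that the traces in \eqref{fisher-d} coincide. Combining the two identities with the $\MLSI$ bound for $\ltriple$ yields
\[
\lambda\, D_{\mathcal{N}_{\fix}}(\rho) \;=\; \lambda\, D_{\tilde{\mathcal{N}}_{\fix}}(\rho) \;\le\; I_{\tilde{\delta}}(\rho) \;=\; I_{\delta}(\rho),
\]
so $\MLSI(\triple)\ge \MLSI(\ltriple)$, and the identical reasoning gives $\MpSI(\triple)\ge \MpSI(\ltriple)$. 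A weak density/approximation argument in $\mathcal{A}$ handles general $\rho\in \mathcal{N}_+$.

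For the complete versions, I would check that the sub-triple relation is closed under tensorization with any finite von Neumann algebra $(\mathcal{F},\tau_{\mathcal{F}})$. The inclusions $\mathcal{N}\bar\otimes\mathcal{F}\subset \tilde{\mathcal{N}}\bar\otimes\mathcal{F}\subset\tilde{\mathcal{M}}\bar\otimes\mathcal{F}$ and compatibility of the derivations $\delta\otimes \id$ and $\tilde{\delta}\otimes \id$ are immediate from the first two diagrams tensored with $\id_{\mathcal{F}}$. For the third, one notes that $(\tilde{\delta}\otimes \id)^{*}\overline{(\tilde{\delta}\otimes \id)}=\tilde{\delta}^{*}\bar{\tilde{\delta}}\otimes \id$, so the fixed-point algebra is $\tilde{\mathcal{N}}_{\fix}\bar\otimes \mathcal{F}$, and the commuting square $\mathcal{N}_{\fix}\subset \mathcal{N},\,\tilde{\mathcal{N}}_{\fix}\subset \tilde{\mathcal{N}}$ tensored with $\mathcal{F}$ remains a commuting square because $E_{\tilde{\mathcal{N}}_{\fix}\bar\otimes\mathcal{F}}=E_{\tilde{\mathcal{N}}_{\fix}}\otimes \id$. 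Applying the $\MLSI$/$\MpSI$ comparison just proved to the tensored triples yields $\CLSI(\triple)\ge \CLSI(\ltriple)$ and $\CpSI(\triple)\ge \CpSI(\ltriple)$.

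The only place where care is needed is the identification of the Fisher information across the two triples: one has to make sure that viewing $\rho$, its spectral projections and its image under the derivation inside the larger ambient algebras does not change the value of the trace that defines $I_{\delta}$ and $I_{\delta}^{p}$. The integral formula $Q^{\rho}(x)=\int_{0}^{\infty}(\rho+r)^{-1}x(\rho+r)^{-1}dr$ and its $p$-analogue reduce this to the compatibility of the conditional expectations $E_{\mathcal{N}}$ and $E_{\tilde{\mathcal{N}}}$ with the inclusion $\mathcal{M}\subset \tilde{\mathcal{M}}$, which is exactly what the first diagram provides.
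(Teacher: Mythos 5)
Your proof is correct and follows essentially the same route as the paper: use the commuting square to identify the relative entropies $D_{\mathcal{N}_{\fix}}$ and $D_{\tilde{\mathcal{N}}_{\fix}}$ on $\mathcal{N}_+$, use compatibility of the derivation with the inclusion $\mathcal{M}\subset\tilde{\mathcal{M}}$ to identify the Fisher informations, and conclude. The only cosmetic differences are that you record an equality of entropies where the paper invokes data processing to get $\leq$, and that you spell out the tensorization step (which the paper leaves implicit); both are fine.
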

\begin{proof}
Let $\iota_{\mathcal{N}}: L_{1}(\mathcal{N},\tau)\rightarrow L_{1}(\tilde{\mathcal{N}},\tau)$ be the trace preserving inclusion.
 Then we compare $D_{\mathcal{N}_{\fix}}(\rho)$ and $D_{\tilde{\mathcal{N}}_{\fix}}(\rho)$.
\begin{align*}
D_{\mathcal{N}_{\fix}}(\rho)=&D(\tilde{E}_{\mathcal{N}}\circ \iota_{\mathcal{N}} (\rho)\| \tilde{E}_{\mathcal{N}}\circ \iota_{\mathcal{N}}\circ E(\rho))
= D(\tilde{E}_{\mathcal{N}}(\iota_{\mathcal{N}}(\rho))\| \tilde{E}_{\mathcal{N}}(\tilde{E}(\rho)))\\
\leq & D(\iota_{\mathcal{N}}(\rho)\| \tilde{E}(\rho)) \text{ (by data processing inequality)}\\
=& D_{\tilde{\mathcal{N}}_{\fix}}(\iota_{\mathcal{N}}(\rho))
\end{align*}
Thanks to the first condition, we see that $\iota_{\mathcal{N}}(a\delta(f)b)=\iota_{\mathcal{N}}(a)\delta(\iota_{\mathcal{N}}(f))\iota_{\mathcal{N}}(b)$. This implies $I_{\delta}(\rho)=I_{\delta}(\iota_{\mathcal{N}}(\rho))$. The proof for $\CpSI$ is the same.
\end{proof}
\noindent A linear operator $\Rc: \Omega_{\delta}(\mathcal{A})\to \mathcal{M}$ is called the \textit{{\rm{(}}geometric{\rm{)}} Ricci operator of $\triple$} provided that
\begin{enumerate}[leftmargin=6mm]
 \item $\Rc$ is bimodule over $\mathcal{A}$ \begin{align} \label{rc-b}\Rc(a\rho b)=a\Rc(\rho) b,\quad \forall a,b\in\mathcal{A},\rho\in\Omega_{\delta}(\mathcal{A});\end{align}
 \item there  exists a strongly continuous semigroup $\hat{T}_{t}=e^{-tL}: \mathcal{M}\rightarrow \mathcal{M}$ of completely positive trace preserving maps such that
$$\Upgamma_{L}(a,b)=E_{\mathcal{N}}(\delta(a^{*})\delta(b)),\quad \forall a,b\in\mathcal{N}.$$
\item  $\delta(\tilde{a})\in \dom(L)$ if there exists $a\in\mathcal{A}$ and $t\geq 0$  such that $\tilde{a}=T_{t}(a)\in\mathcal{A}$ and
\begin{align}\label{rc-c}
\delta (\delta^{*}\bar{\delta} \tilde{a})-L(\delta(\tilde{a}))=\Rc(\delta(\tilde{a})).
\end{align}
\end{enumerate}
The derivation $\delta$ is said to admit a Ricci curvature $\Rc\geq\lambda$ bounded below by a constant $\lambda$,  if $( \Rc(\rho),\rho )_{\mathcal{A}}\geq \lambda E_{\mathcal{N}}(\rho^{*}\rho)$ for any $\rho\in \Omega_{\delta}(\mathcal{A})$. We say the generator  $A$ of $T_{t}=e^{-tA}$ admits $\Rc\geq\lambda$ if there exists a derivation triple $\triple$ such that
$$\Gamma_{A}(a,b)=E_{\mathcal{N}}(\delta(a^{*})\delta(b)), \quad\forall a,b\in\mathcal{A}$$
and $\delta$ admits $\Rc\geq\lambda.$  It shall be noted that the choice of $\delta$ is not unique, thus we may find a larger Ricci lower bound of  $A$ by choosing a better $\delta$.
\subsection{Abstract Bakry-\'Emery criterion} We establish an operator-valued Bakry-\'Emery criterion relating the Ricci curvature and the log-Sobolev inequality.

\begin{theorem} \label{main} Let $\triple$ be a derivation triple  with $\Rc\geq \lambda>0$. Then $$\CLSI\triple\geq 2\lambda.$$
\end{theorem}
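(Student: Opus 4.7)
The plan is to prove the exponential decay of Fisher information $I_\delta(T_t\rho) \le e^{-2\lambda t} I_\delta(\rho)$ and then invoke Lemma \ref{expo decay} to conclude $\MLSI(\delta^*\bar\delta) \ge 2\lambda$. To upgrade $\MLSI$ to $\CLSI$, I will observe that tensoring with an auxiliary finite von Neumann algebra $\mathcal{F}$ preserves the Ricci bound: the triple $(\mathcal{N}\bar\otimes\mathcal{F}\subset\mathcal{M}\bar\otimes\mathcal{F},\tau\otimes\tau_\mathcal{F},\delta\otimes\id)$ inherits the Ricci operator $\Rc\otimes\id$, with the same lower bound $\lambda$, and the semigroup $e^{-tL}\otimes\id$ on $\mathcal{M}\bar\otimes\mathcal{F}$ remains completely positive and trace preserving. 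So it suffices to establish the decay of Fisher information at the level of $\triple$ itself.

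The central computation is the differentiation of
\[
I_\delta(\rho_t) \lel \int_0^\infty \tau\bigl(\delta(\rho_t)(\rho_t+r)^{-1}\delta(\rho_t)(\rho_t+r)^{-1}\bigr)\,dr,
\]
where $\rho_t=T_t\rho$ and $\frac{d}{dt}\rho_t=-\delta^*\bar\delta(\rho_t)$. Differentiating under the integral yields two families of terms: the \emph{gradient terms} coming from $\frac{d}{dt}\delta(\rho_t)=-\delta(\delta^*\bar\delta\rho_t)$, and the \emph{resolvent terms} coming from $\frac{d}{dt}(\rho_t+r)^{-1}=(\rho_t+r)^{-1}\delta^*\bar\delta(\rho_t)(\rho_t+r)^{-1}$. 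Using the Ricci identity \eqref{rc-c} in the gradient terms, I can substitute
\[
\delta(\delta^*\bar\delta\rho_t) \lel L(\delta(\rho_t)) + \Rc(\delta(\rho_t)).
\]
The $\Rc$-contribution is controlled directly: by the bimodule property \eqref{rc-b} and the lower bound $(\Rc(\eta),\eta)_\mathcal{A} \ge \lambda\, E_\mathcal{N}(\eta^*\eta)$, applied with $\eta=(\rho_t+r)^{-1/2}\delta(\rho_t)(\rho_t+r)^{-1/2}$ in the integrand (modulo rearrangement via the double operator integral $Q^{\rho_t}$), this contribution is bounded below by $2\lambda\, I_\delta(\rho_t)$.

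The main obstacle is to verify that the remaining combination of the $L$-term and the resolvent terms is non-negative; this is the dissipation-of-Fisher-information identity. I will rewrite everything in the unified language of the double operator integral $Q^{\rho_t}(\cdot) = \int_0^\infty(\rho_t+r)^{-1}(\cdot)(\rho_t+r)^{-1}\,dr$ from Example~\ref{CM}, recasting
\[
I_\delta(\rho_t) \lel \tau\bigl(\delta(\rho_t)^*\, Q^{\rho_t}(\delta(\rho_t))\bigr),
\]
and observing that the residual terms equal
\[
\tfrac{d}{ds}\Big|_{s=0}\tau\bigl((e^{-sL}\delta(\rho_t))^*\, Q^{T_s\rho_t,\,T_s\rho_t}(e^{-sL}\delta(\rho_t))\bigr).
\]
Since $e^{-sL}$ is completely positive trace preserving, the joint convexity and data processing of the Hiai--Petz operator $Q_{f_{[0]}^{-1}}$ in Theorem \ref{HPconvex} (applied with the operator monotone $f(x)=\tfrac{x-1}{\ln x}$, so that $Q_{f_{[0]}^{-1}}^{\rho}=Q^\rho$) gives
\[
e^{-sL}\, Q^{\rho_t}\, (e^{-sL})^* \kl Q^{T_s\rho_t}.
\]
Differentiating this operator inequality at $s=0$ produces exactly the negative semidefinite bound on the residual terms needed. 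Combining with the Ricci contribution yields
\[
-\tfrac{d}{dt} I_\delta(\rho_t) \gl 2\lambda\, I_\delta(\rho_t),
\]
which by Lemma \ref{expo decay} gives $\MLSI\triple\ge 2\lambda$, and tensorization yields $\CLSI\triple\ge 2\lambda$. The technical subtlety, which I expect to be the main obstacle, is justifying the differentiation under the integral and the use of Theorem \ref{HPconvex} on the \emph{larger} algebra $\mathcal{M}$ where $\delta(\rho_t)$ lives, while the $L$-semigroup acts on $\mathcal{M}$ and the resolvents $(\rho_t+r)^{-1}$ sit in $\mathcal{N}$; the commuting-square structure of the derivation triple is precisely what permits this interchange.
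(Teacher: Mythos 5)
Your proposal follows essentially the same route as the paper: differentiate $I_\delta(\rho_t)$, use the Ricci identity \eqref{rc-c} to split the derivative into a Ricci contribution (bounded via $\Rc\ge\lambda$) and a residual that is exactly the derivative at $s=0$ of the comparison quantity $k(s)=\tau\bigl((e^{-sL}\delta(\rho))^*\,Q^{T_s\rho}(e^{-sL}\delta(\rho))\bigr)$, which is nonincreasing by Theorem \ref{HPconvex} with $f(x)=\frac{x-1}{\ln x}$, and then conclude by Gr\"onwall and Lemma \ref{expo decay}. The paper introduces $h$ and $k$ explicitly and compares their derivatives at $t=0$; you phrase the residual as a one-sided derivative in $s$, which is the same observation, and you are a bit more explicit about why the Ricci bound tensorizes to give $\CLSI$ rather than only $\MLSI$, which the paper leaves implicit in Lemma \ref{expo decay}.
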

\begin{proof}
To simplify the notation, we denote $A_{\delta}$, $T_{t}(\rho)$, $\hat{T}_{t}(\sigma)=e^{-tL}\sigma$ by $A$, $\rho_{t}$, $\hat{\sigma}_{t}$, respectively in this proof, where $L$ is given in the definition of Ricci operator $\Rc$.  For any fixed $\rho\in \mathcal{N}_{+}$,  we may consider two functions:
\begin{align*} h(t)&=I_{\delta}(\rho_t)
\end{align*}
and 
\begin{align*}
k(t)&=\left\langle \hat{T}_{t}(\delta(\rho)), Q^{\rho_{t}}\left(\hat{T}_{t}(\delta(\rho))\right)\right\rangle.
\end{align*}
We compute the derivatives of $h$ and $k$ by Example \ref{CM} and obtain:
 \begin{align*}
 h'(t)=&-2\int_{\Bbb{R}_{+}} \tau\left(\delta(A\rho_{t})(r+\rho_{t})^{-1}\delta(\rho_{t})(r+\rho_{t})^{-1}\right) dr\\&-2\int_{\Bbb{R}_{+}}\tau\left(\delta(\rho_{t})(r+\rho_{t})^{-1}\delta(\rho_{t})(r+\rho_{t})^{-1}\delta(\rho_{t})(r+\rho_t)^{-1}\right) dr,
 \end{align*}
 and
 \begin{align*}
k'(t)=&-2\int_{\Bbb{R}_{+}} \tau\left(L\hat{T}_{t}(\delta(\rho))(r+\rho_{t})^{-1} \hat{T}_{t}(\delta(\rho))(r+\rho_{t})^{-1}\right) dr\\
&-2\int_{\Bbb{R}_{+}}\tau\left(\hat{T}_{t}(\delta(\rho))(r+\rho_{t})^{-1}\delta(\rho_{t})(r+\rho_{t})^{-1}\hat{T}_{t}(\delta(\rho))(r+\rho_t)^{-1}\right) dr.
\end{align*}
The key observation is that the second lines of both derivatives coincide at $t=0$. It remains  to compare the first lines: $$h'(0)-k'(0)=-2\int_{\Bbb{R}_{+}}\tau\left( (\delta A-L\delta)(\rho)(r+\rho)^{-1}
\delta(\rho)(r+\rho)^{-1} \right) dr.$$
Thanks to the commutator identity \eqref{rc-c}, the Ricci curvature $\Rc$ finally shows up
$$h'(0)-k'(0)=-2\int_{\Bbb{R}_{+}}\tau\left( \Rc(\delta(\rho))(r+\rho)^{-1}
\delta(\rho)(r+\rho)^{-1} \right) dr.$$
Let us define $\omega_{r}\lel (\rho+r)^{-1/2}\delta(\rho)(\rho+r)^{-1/2}$, then $\omega_{r}=\omega_{r}^{*}\in \Omega_{\delta}(\mathcal{A})$. By  \eqref{rc-b}, we have $$\Rc(\omega_{r})=(\rho+r)^{-1/2}\delta(\rho)(\rho+r)^{-1/2}.$$
We rewrite $h'(0)-k'(0)$ as the trace of $\mathcal{A}$-valued inner product  over $\Omega_{\delta}(\mathcal{A})$,
$$h'(0)-k'(0)=-2\int_{\Bbb{R}_{+}}\tau\left( (\Rc(\omega_{r}), \omega_{r})_{\mathcal{A}}\right)dr.$$
 Since Ricci curvature $\Rc$ is bounded below by $\lambda$,  we deduce that
\begin{align*}
h'(0)-k'(0)
\leq -2\lambda \int_{\Bbb{R}_{+}} \tau( E_{\mathcal{N}}( \omega_{r}^{*}\omega_{r})  )dr=-2\lambda h(0).
\end{align*}
As an application of Theorem \ref{HPconvex}, $k'(0)\leq 0.$ Indeed, by Example\ref{example-hiai}) $$k(t)=\left \langle \hat{T}_{t}(\delta(\rho)), Q_{f^{-1}_{[0]}}^{\rho_{t}}\left(\hat{T}_{t}(\delta(\rho)) \right)  \right \rangle$$ with the $f(x)=\frac{x-1}{\ln(x)}$.  Applying Theorem \ref{HPconvex} with $\beta=\hat{T}_{t}$ yields $$k(t)\leq k(0)$$ for $t\geq 0$.
Together with $k'(0)\leq 0$, we deduce that $$h'(0)\leq -2\lambda h(0).$$
This inequality remains true by replacing the initial state $\rho$ with $\rho_{s}$. Let us define $$h_{s}(t)=I_{\delta}(\rho_{t+s})$$ for fixed $s\geq 0$,  then $$h'_{s}(0)\leq -2\lambda h_{s}(0)=-2\lambda h(s).$$ Note that $h'_{s}(0)=h'(s)$, and consequently for any $s\geq 0$, $$h'(s)\leq -2\lambda h(s).$$ By Gr\"onwall's lemma, this implies the exponential decay of Fisher information $$h(t)\leq e^{-2\lambda t}h(0).$$ Using Lemma \ref{expo decay}, the theorem is established.
\end{proof}

 \noindent In the proof, we actually show that Fisher information of $\triple$ decays exponentially with the decay rate $2\lambda$, and this is a stronger condition than $\CLSI\triple\geq 2\lambda$ (Lemma \ref{expo decay}). See \cite{Led}. Theorem \ref{main} remains true for CpSI and $\CLSI^{+}$, and we refer to \cite{hao} for the proof.
\begin{theorem}
Let $\triple$ be a derivation triple  with $\Rc\geq \lambda>0$. For $p\in(1,2)$, then $$\CpSI\triple\geq 2\lambda.$$
Thus $\CLSI^{+}\triple\geq 2\lambda$.
\end{theorem}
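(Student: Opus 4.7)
The plan is to mirror the proof of Theorem \ref{main} line-by-line, replacing the logarithmic kernel $Q^{\rho}_{\ln^{[1]}}$ by the double operator integral $Q^{\rho}_{(x^{p-1})^{[1]}}$ that governs the $p$-Fisher information
\[
I^{p}_{\delta}(\rho) = p\,\tau\!\left(\delta(\rho)\,Q^{\rho}_{(x^{p-1})^{[1]}}(\delta(\rho))\right).
\]
Write $\rho_{t}=T_{t}(\rho)=e^{-t\delta^{*}\bar{\delta}}(\rho)$ and let $\hat{T}_{t}=e^{-tL}$ be the semigroup on $\mathcal{M}$ furnished by condition (2) in the definition of the Ricci operator. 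Define
\begin{align*}
h(t) &= I^{p}_{\delta}(\rho_{t}),\\
k(t) &= p\,\tau\!\left(\hat{T}_{t}(\delta(\rho))\,Q^{\rho_{t}}_{(x^{p-1})^{[1]}}\!\left(\hat{T}_{t}(\delta(\rho))\right)\right),
\end{align*}
so that $h(0)=k(0)$.

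Differentiating at $t=0$ produces two types of terms: one from the derivative of the kernel $Q^{\rho_{t}}_{(x^{p-1})^{[1]}}$ with respect to $\rho_{t}$, and one from the derivative of the outer factors $\delta(\rho_{t})$ (in $h$) versus $\hat{T}_{t}(\delta(\rho))$ (in $k$). The kernel contributions coincide at $t=0$ and therefore cancel in $h'(0)-k'(0)$; the remaining terms are handled by the commutator identity \eqref{rc-c}, giving
\[
h'(0)-k'(0) = -2p\,\tau\!\left(\Rc(\delta(\rho))\,Q^{\rho}_{(x^{p-1})^{[1]}}(\delta(\rho))\right).
\]
Since $x^{p-1}$ is operator monotone for $p-1\in(0,1)$, the divided difference $(x^{p-1})^{[1]}(x,y)$ admits a nonnegative integral representation of the form $\int \phi_{s}(x)\phi_{s}(y)\,d\mu(s)$, giving the decomposition
\[
\tau\!\left(\Rc(\delta(\rho))\,Q^{\rho}_{(x^{p-1})^{[1]}}(\delta(\rho))\right) = \int \tau\!\left((\Rc(\omega_{s}),\omega_{s})_{\mathcal{A}}\right)d\mu(s),
\]
where $\omega_{s}=\phi_{s}(\rho)^{1/2}\delta(\rho)\phi_{s}(\rho)^{1/2}\in\Omega_{\delta}(\mathcal{A})$ by the bimodule property \eqref{rc-b}. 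The Ricci lower bound $(\Rc(\omega),\omega)_{\mathcal{A}}\geq \lambda E_{\mathcal{N}}(\omega^{*}\omega)$ then yields
\[
h'(0)-k'(0) \leq -2\lambda\,h(0).
\]

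To conclude that $h'(0)\leq -2\lambda h(0)$, one needs $k'(0)\leq 0$. This is the analogue of the Hiai-Petz step and follows from the version of the generalized Lieb concavity theorem for $f(x)=x^{p-1}$ recorded in the excerpt, namely $\beta^{*}Q^{\beta(\rho),\beta(\sigma)}_{(x^{p-1})^{[1]}}\beta \leq Q^{\rho,\sigma}_{(x^{p-1})^{[1]}}$ for every CPTP map $\beta$. Applying it to $\beta=\hat{T}_{t}$ gives $k(t)\leq k(0)$ for $t\geq 0$ and hence $k'(0)\leq 0$. Repeating the argument with the initial state $\rho_{s}$ in place of $\rho$ yields $h'(s)\leq -2\lambda h(s)$ for every $s\geq 0$, and Grönwall's lemma produces exponential decay of the $p$-Fisher information, $I^{p}_{\delta}(\rho_{t})\leq e^{-2\lambda t}I^{p}_{\delta}(\rho)$. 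Item (2) of the summary theorem on $\CpSI$ stated earlier promotes this decay to $\CpSI\triple\geq 2\lambda$, and taking the supremum over $p\in(1,2)$ gives $\CLSI^{+}\triple\geq 2\lambda$.

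The main obstacle is the derivative computation and its cancellation: one must verify that the ``kernel-derivative'' contributions to $h'(0)$ and $k'(0)$—arising from the two-variable $t$-dependence of $Q^{\rho_{t}}_{(x^{p-1})^{[1]}}$—are actually identical and cancel, and that the residual cross-term is precisely the Ricci expression displayed above. Once this cancellation is in place, the remainder of the proof is a faithful translation of the logarithmic case, since the only convexity/monotonicity input required for $(x^{p-1})^{[1]}$ is exactly the generalized Lieb concavity statement already quoted in the paper.
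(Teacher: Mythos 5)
Your proof is correct and is precisely the argument the paper intends: the paper states only that ``Theorem \ref{main} remains true for CpSI and $\CLSI^{+}$'' and defers to \cite{hao}, and your translation of the logarithmic case is the natural realization of that remark. The key points you identify—the integral representation $\frac{x^{s}-y^{s}}{x-y}=\frac{\sin(\pi s)}{\pi}\int_{0}^{\infty}(x+r)^{-1}(y+r)^{-1}r^{s}\,dr$ for $s=p-1\in(0,1)$ (so the same resolvent kernels $(\rho+r)^{-1}$ appear, just with a different measure), the cancellation of the kernel-derivative terms at $t=0$, the use of the $x^{p-1}$-version of the Hiai--Petz monotonicity in place of Theorem \ref{HPconvex} to get $k'(0)\le 0$, and the $p$-analogue of Lemma \ref{expo decay}—are exactly the ingredients that carry over, and you apply them correctly.
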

\subsection{Connection to  $\lambda$-convexity}
Otto and Villani (\cite{OV}) show that the Ricci curvature on a Riemannian manifold $M$ is bounded below by $\lambda \in \Bbb{R}$ if and only if the entropy is geodesically $\lambda$-convex in the space of probability measures $P(M)$ endowed with the Kantorovich metric $W_{2}$. Carlen and Maas use this characterization as a starting point and define the lower bound of the Ricci curvature in the noncommutative setting through a \textit{transportation condition}, see \cite{cm20} for details. Indeed, the key ingredient of the characterization may be considered as the noncommutative adaptation of Bakry-\'Emery's $\Gamma_{2}$ condition.  We will indicate here that in finite dimension our geometric definition of  Ricci curvature bounded below implies the complete lower bound of transportation definition.
\begin{theorem}[Carlen and Maas] \label{gradient-estimate} A differential structure $(\mathcal{A},\nabla,\sigma)$ has (transportation) Ricci curvature  bounded from below by $\lambda>0$ if and only if the following gradient estimate holds for $\rho\in\mathfrak{B}$, $a\in \mathcal{A}_{0}$ and $t\geq 0$:
\begin{align} \label{cm-gradient}
\|\nabla \mathscr{P}_{t}a\|_{\rho}^{2}\leq e^{-2\lambda t}\| \nabla a\|_{\mathscr{P}_{t}^{\dagger}\rho}^{2}.
\end{align}
\end{theorem}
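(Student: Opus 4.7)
The plan is to adapt Bakry's classical equivalence between Ricci lower bounds and gradient estimates for the heat semigroup to the noncommutative transportation framework of Carlen--Maas. Both directions should reduce to a single infinitesimal $\Gamma_2$-type inequality obtained by differentiating the two formulations at $t=0$, so the strategy is to extract this common infinitesimal condition from each side.

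For the implication $(\Rightarrow)$, assume the transportation Ricci lower bound. Fix $a\in\mathcal{A}_0$, $\rho\in\mathfrak{B}$ and $t>0$, and consider the interpolation
\[ \phi(s) \lel e^{-2\lambda s}\,\|\nabla \mathscr{P}_s a\|_{\mathscr{P}_{t-s}^{\dagger}\rho}^{\,2},\qquad s\in[0,t]. \]
I would differentiate $\phi$ in $s$: one contribution comes from the action of the generator $-L$ on $a$ inside $\nabla$, a second from the action of $-L^{\dagger}$ on the density $\rho$ inside the weighted norm $\|\cdot\|_\rho$, and a third from the exponential prefactor $e^{-2\lambda s}$. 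After rearranging, $\phi'(s)$ should reduce to exactly the quantity controlled by the noncommutative $\Gamma_2$-inequality encoded in the transportation Ricci bound, yielding $\phi'(s)\leq 0$. Integration from $0$ to $t$ then gives $\phi(t)\leq\phi(0)$, which is precisely the gradient estimate \eqref{cm-gradient}.

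For the implication $(\Leftarrow)$, I would Taylor-expand both sides of \eqref{cm-gradient} at $t=0^{+}$: the common leading term $\|\nabla a\|_\rho^{2}$ cancels, and comparing the first-order terms yields a pointwise inequality in $(\rho,a)$ involving $\langle \nabla L a,\nabla a\rangle_\rho$, the $\rho$-derivative of $\|\nabla a\|_\rho^{2}$ in the direction $L^{\dagger}\rho$, and $\lambda\|\nabla a\|_\rho^{2}$. Via the Benamou--Brenier description of geodesics in the noncommutative Wasserstein space, this pointwise inequality is exactly the $\lambda$-convexity (Hessian $\geq\lambda\,g$) of the relative entropy along Wasserstein geodesics used to define the transportation Ricci bound in \cite{cm20}.

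The hard part will be handling the $\rho$-weighted inner product $\|\cdot\|_\rho^{2}$, which is built from the double operator integral $Q^{\rho}_{f_{[0]}}$. Differentiating it in $\rho$ while $\rho$ itself flows along $e^{-sL^{\dagger}}$ produces several commutator-type terms coming from $\nabla$ and from the spectral calculus of $\rho$, and these must be regrouped cleanly enough to compare with the transportation Hessian; this is where the Lieb--Hiai--Petz joint convexity recorded in Theorem \ref{HPconvex} is expected to do the essential work, in the same spirit as in the proof of Theorem \ref{main}. A secondary technical issue is justifying the interchange of $\delta$, $\mathscr{P}_s$ and $\mathscr{P}_s^{\dagger}$ on a dense enough subalgebra of $\mathcal{A}_0$, together with the differentiability of $\phi(s)$ at the endpoints; in finite dimension this is routine, but care is needed since the $\rho$-weighted norm is singular when $\rho$ loses full support.
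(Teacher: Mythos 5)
This statement is not proved in the paper: it carries the attribution \emph{Carlen and Maas} in its header, and the paragraph preceding it refers the reader to \cite{cm20} for details, so there is no in-paper argument to compare your sketch against. What the paper does prove is the theorem immediately following it, namely that the paper's geometric Ricci bound $\Rc\geq\lambda$ implies the gradient estimate \eqref{cm-gradient}; that proof is an interpolation argument of exactly the flavor you sketch for the forward direction, introducing
\[ F(s)=e^{-2\lambda s}\,\tau\bigl(\delta(T_{t-s}\sigma)\,Q_{f_{[0]}}^{\rho_s}(\delta(T_{t-s}\sigma))\bigr),\qquad \rho_s=\hat T_s(\rho),\]
showing $F'(s)\geq 0$ via the Hiai--Petz convexity (Theorem \ref{HPconvex} with Lemma \ref{dc}) and the commutator identity \eqref{rc-c}, and integrating. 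So your identification of the Lieb--Hiai--Petz joint convexity as the engine of the forward implication is correct and matches what the paper actually uses one theorem over.

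There is, however, a sign error in your interpolation that would break the argument. With $\phi(s)=e^{-2\lambda s}\|\nabla\mathscr{P}_s a\|_{\mathscr{P}_{t-s}^{\dagger}\rho}^2$ the endpoints are $\phi(0)=\|\nabla a\|_{\mathscr{P}_t^\dagger\rho}^2$ and $\phi(t)=e^{-2\lambda t}\|\nabla\mathscr{P}_t a\|_\rho^2$, so $\phi(t)\leq\phi(0)$ only gives $\|\nabla\mathscr{P}_t a\|_\rho^2\leq e^{+2\lambda t}\|\nabla a\|_{\mathscr{P}_t^\dagger\rho}^2$, strictly weaker than \eqref{cm-gradient} when $\lambda>0$. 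You should either take the prefactor $e^{+2\lambda s}$ and show $\phi$ decreasing, or, as the paper does, evolve $a$ by $\mathscr{P}_{t-s}$ and $\rho$ by $\mathscr{P}_s^\dagger$, keep the prefactor $e^{-2\lambda s}$, and show the resulting function is increasing. Your sketch of the converse is not addressed anywhere in the paper; the Taylor-expansion-at-$t=0^{+}$ route plus the Benamou--Brenier characterization is indeed the right high-level idea, but the step that recovers geodesic $\lambda$-convexity of the entropy from the resulting infinitesimal $\Gamma_2$-type estimate is the substantive content of \cite{cm20}, and your sketch leaves that step entirely to intuition.
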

Let us point out that we have chosen
$\nabla=\delta$, $\mathscr{P}_{t}=T_{t}=e^{-tA}$ and $\mathscr{P}_{t}^{\dagger}=\hat{T}_{t}=e^{-tL}$ in our setting. However, the results remain true for every other choice of $\delta$ as well, as long as $\Gamma_{A}(x,y)=E(\delta(x)^*\delta(y))$ is still satisfied.
Carlen and Mass did not consider a semigroup acting on the space of differential forms. The $\rho$-inner $\|\cdot\|_{\rho}$ product can be interpreted as
$$\|\sigma\|^{2}_{\rho}=\tau\left(\sigma Q_{f_{[0]}}^{\rho}(\sigma) \right),$$ where $f(x)=\frac{x-1}{\ln(x)}$.
\begin{theorem} Let $A=\delta^{*}\bar{\delta}$ be a (finite dimensional) generator  over $\triple$ with geometry Ricci curvature $\Rc$ bounded below by $\la$. Then $A$ also satisfies
(\ref{cm-gradient}), equivalently a lower bound on the transportation Ricci curvature.
\end{theorem}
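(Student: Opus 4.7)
The strategy is a time-slicing argument modeled on the proof of Theorem \ref{main}, now monitoring the Carlen-Maas gradient norm in place of the Fisher information. Fix $t>0$, $a\in\mathcal{A}$, $\rho\in\mathcal{N}_+$ and couple the two flows by evolving $a$ forward under $T_s=e^{-sA}$ and $\rho$ backward under $\hat T_{t-s}=e^{-(t-s)L}$. With the shorthand $\xi_s:=\delta T_s a$ and $\sigma_s:=\hat T_{t-s}\rho$, set
\begin{equation*}
\Phi(s)\;:=\;\|\delta T_s a\|_{\hat T_{t-s}\rho}^{2}\;=\;\langle \xi_s,Q^{\sigma_s}_{f_{[0]}}(\xi_s)\rangle_\tau,\qquad s\in[0,t],\ f(x)=\tfrac{x-1}{\ln x}.
\end{equation*}
Since $\Phi(0)=\|\delta a\|_{\hat T_t\rho}^{2}$ and $\Phi(t)=\|\delta T_t a\|_\rho^{2}$, the estimate (\ref{cm-gradient}) is equivalent to $\Phi(t)\le e^{-2\lambda t}\Phi(0)$, which by Grönwall follows from the pointwise differential inequality $\Phi'(s)\le -2\lambda\,\Phi(s)$.

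To produce this inequality I compute $\Phi'(s)$ by the product rule, using $\partial_s \xi_s=-L\xi_s-\Rc(\xi_s)$ (the combination of $\partial_s T_sa=-AT_sa$ with the Ricci identity (\ref{rc-c})) and $\partial_s \sigma_s = L\sigma_s$, obtaining
\begin{equation*}
\Phi'(s)\;=\;-2\,\mathrm{Re}\langle \Rc(\xi_s),Q^{\sigma_s}_{f_{[0]}}\xi_s\rangle\;-\;2\,\mathrm{Re}\langle L\xi_s,Q^{\sigma_s}_{f_{[0]}}\xi_s\rangle\;+\;\langle \xi_s,(\partial_s Q^{\sigma_s}_{f_{[0]}})\xi_s\rangle.
\end{equation*}
For the Ricci piece, I exploit $Q^\sigma_{f_{[0]}}(\xi)=\int_0^1\sigma^r\xi\sigma^{1-r}\,dr$ from Example \ref{example-hiai} together with the bimodule property (\ref{rc-b}): writing $\omega_r:=\sigma_s^{r/2}\xi_s\sigma_s^{(1-r)/2}$ gives $\Rc(\omega_r)=\sigma_s^{r/2}\Rc(\xi_s)\sigma_s^{(1-r)/2}$, so that the Ricci inner product becomes $\int_0^1 \tau\bigl((\Rc(\omega_r),\omega_r)_{\mathcal{A}}\bigr)\,dr\ge \lambda\int_0^1\tau(\omega_r^*\omega_r)\,dr=\lambda\,\Phi(s)$ by the pointwise-in-$r$ application of $\Rc\ge\lambda$.

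The two remaining terms should combine to a non-positive quantity via Lemma \ref{dc}. For each fixed $s$, consider the one-parameter family
\begin{equation*}
g(\varepsilon)\;:=\;\langle \xi_s, Q^{\hat T_\varepsilon\sigma_s}_{f_{[0]}}(\xi_s)\rangle\;-\;\langle \hat T_\varepsilon^{*}\xi_s, Q^{\sigma_s}_{f_{[0]}}(\hat T_\varepsilon^{*}\xi_s)\rangle,\qquad\varepsilon\ge 0.
\end{equation*}
By the form (\ref{C2}) of Lemma \ref{dc} applied to the CPTP map $\hat T_\varepsilon$, we have $g(\varepsilon)\ge 0=g(0)$; hence $g'(0^+)\ge 0$. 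Differentiating and using the self-adjointness of $L$ on differential forms (dictated by the symmetry of the carré du champ identity in condition (2) of the Ricci operator definition) yields precisely
\begin{equation*}
\langle \xi_s,(\partial_s Q^{\sigma_s}_{f_{[0]}})\xi_s\rangle\;\le\;2\,\mathrm{Re}\langle L\xi_s,Q^{\sigma_s}_{f_{[0]}}\xi_s\rangle.
\end{equation*}
Substituting back, the two $L$-terms in $\Phi'(s)$ cancel, leaving $\Phi'(s)\le -2\lambda\,\Phi(s)$, and Grönwall closes the argument. The main technical obstacle is the careful $\varepsilon$-differentiation of the double operator integral $\varepsilon\mapsto Q^{\hat T_\varepsilon\sigma_s}_{f_{[0]}}$ and the matching of its Fréchet derivative with the $L$-contribution coming from $\xi_s$; the finite-dimensional hypothesis reduces this to smooth calculus of matrix functions, which is where that assumption is actually used.
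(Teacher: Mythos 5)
Your proposal is correct and follows essentially the same route as the paper: the paper's function $F(s)=e^{-2\lambda s}\tau(\delta(T_{t-s}\sigma)\,Q^{\rho_s}_{f_{[0]}}\delta(T_{t-s}\sigma))$ is exactly $e^{-2\lambda s}\Phi(t-s)$ in your notation, the monotonicity of $F$ is equivalent to your Grönwall inequality $\Phi'\le -2\lambda\Phi$, and the paper's comparison $h'(0)\le k'(0)$ of the auxiliary pair $h,k$ is precisely your one-sided bound $g'(0^+)\ge 0$ obtained from (\ref{C2}) of Lemma \ref{dc}. The only cosmetic difference is that you spell out the $\omega_r=\sigma_s^{r/2}\xi_s\sigma_s^{(1-r)/2}$ decomposition used to convert the lower bound $\Rc\ge\lambda$ into the estimate on $\langle\Rc(\xi_s),Q^{\sigma_s}_{f_{[0]}}\xi_s\rangle$, which the paper leaves implicit here (it appears explicitly in the proof of Theorem \ref{main}).
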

\begin{proof}
Again, we denote  $T_{t}(\rho)$, $\hat{T}_{t}(\sigma)=e^{-tL}\sigma$ by $\rho_{t}$, $\hat{\sigma}_{t}$, respectively, with $L$ given in the definition of Ricci operator $\Rc$.
We follow the spirit proof of Theorem \ref{gradient-estimate} and define $F:[0,t]\rightarrow \Bbb{R}$ for any fixed $t>0$,
$$F(s)=e^{-2\lambda s}\tau\left(  \delta(T_{t-s}(\sigma)) Q_{f_{[0]}}^{\rho_{s}}( \delta(T_{t-s}(\sigma)) ) \right).$$
If $F'(s)\geq 0$ for any $s\geq 0$, then $F(0)\leq F(t)$ yields the assertion. In the rest of the proof, we want to show that $F'(s)\geq 0$.  Differentiating $F(s)$, we obtain that
\begin{align*}
F'(s)=-&2\lambda F(s)+2e^{-2\lambda s} \tau\left(  \delta(A(\sigma_{t-s})) Q_{f_{[0]}}^{\rho_{s}}(\delta(\sigma_{t-s})) \right)\\
+&e^{-2\lambda s}\tau\left(\delta(\sigma_{t-s})  D({Q_{f_{[0]}}^{\rho_{s}}})(\delta(\sigma_{t-s})) \right),
\end{align*}
where $D({Q_{f_{[0]}}^{\rho_{s}}})$ is given by differentiating ${Q_{f_{[0]}}^{\rho_{s}}}$ in terms of $s$.
 It is convenient to define two more functions $h$ and $k$ for fixed $\rho$ and  $\sigma$:
\begin{align*} h(x)\lel \tau \left(\hat{T}_{x}(\delta(\sigma_{t-s})) Q_{f_{[0]}}^{\rho_{s}} \hat{T}_{x}(\delta(\sigma_{t-s}))  \right)
\end{align*}
and
\begin{align*}
k(x)\lel \tau\left(  \delta(\sigma_{t-s}) Q_{f_{[0]}}^{\rho_{s+x}}(\delta(\sigma_{t-s}))  \right).
\end{align*}
Plugging $f(x)=\frac{x-1}{\ln(x)}$ into Theorem \ref{HPconvex} and Lemma \ref{dc} implies that $h(x)\leq k(x)$. Noting that $h(0)=k(0)$, we infer that ${h'}(0)\leq {k'}(0)$, i.e.,
\begin{equation}\label{hk-2}-2\tau\left( L(\delta(\sigma_{t-s})) Q_{f_{[0]}}^{\rho_{s}}(\delta(\sigma_{t-s}))    \right) \leq \tau\left(  \delta(\sigma_{t-s}) D(Q_{f_{[0]}}^{\rho_{s}})(\delta(\sigma_{t-s}))  \right).
\end{equation}
Applying \eqref{rc-c} implies that
\begin{equation}\label{ricci-2}
2\tau\left( (\delta(A(\sigma_{t-s}))-\Rc(\delta(\sigma_{t-s}))) Q_{f_{[0]}}^{\rho_{s}}(\delta(\sigma_{t-s}))   \right)+\tau\left(  \delta(\sigma_{t-s}) D(Q_{f_{[0]}}^{\rho_{s}})(\delta(\sigma_{t-s}))  \right)  \geq 0
\end{equation}
Since Ricci curvature is bounded below by $\lambda$, we deduce that
\begin{equation}
\label{ricci-3}
\tau\left(\Rc(\delta(\sigma_{t-s}))Q_{f_{[0]}}^{\rho_{s}}(\delta(\sigma_{t-s}))\right) \geq \tau\left( \delta(\sigma_{t-s}) Q_{f_{[0]}}^{\rho_{s}}(\delta(\sigma_{t-s})) \right).
\end{equation}
Putting pieces (\ref{hk-2}), (\ref{ricci-2}) and (\ref{ricci-3}) together, we obtain $F'(s)\geq 0$.
\end{proof}
\begin{rem}{\rm We refer to \cite{Wirth} for a discussion of lower bounds  on the transportation Ricci curvature in infinite dimension.}
\end{rem}
\section{Geometric Applications}
\subsection{Clifford bundle}
Let us recall that the Clifford algebra $C\ell_{n}$ is generated by $n$ self-adjoint unitaries $\{e_{k}\}_{k=1}^{n}$ satisfying $$e_{k}^{*}=-e_{k},\quad e_{k}^{2}=-1, \quad \text{and}\quad e_{k}e_{l}=-e_{l}e_{k}\quad \text{for} \quad k\neq l.$$ Equivalently, we may use the Clifford function from real Hilbert spaces to von Neumann algebras $c: \Hs\to C\ell(\Hs)$ such that $c(h)$ is self-adjoint and
$$c(h)c(k)+c(k)c(h)\lel -2 (h,k),$$
where $(,)$ is the inner product over $\Hs$.
Let  $(M, g)$ be an $n$-dimensional smooth Riemannian manifold without boundary. Let $\mu$ be the probability measure defined by $d\mu=\frac{1}{\Zu} e^{-U}dvol$ over the manifold $M$ with  $\Zu=\int_{M} e^{-U(x)}dvol(x).$ We may consider $C\ell_x\cong C\ell_n$ the Clifford algebra generated $\{c(e_k)\}$, where $\{e_k\}_{k=1}^{n}$ is an orthogonal basis of the cotangent space $\T^*_{x}M$ at a point $x\in M$. Recall that $C\ell_x$ and $\T^*_{x}M$ are also identified as vector spaces, see \cite{chow}, \cite{lm}. We denote by $C_0(C\ell(M))$ ($C_0^{\infty}(C\ell(M))$) the space of continuous (respectively smooth) sections vanishing at infinity. 
Let $CL(M)$ be the von Neumann algebra generated by the GNS construction with respect to the trace $\tau(a)=\int \tau_x(a(x))d\mu(x)$,  
where $\tau_x$ is the unique trace satisfying $\tau(c_{k_1}\cdots c_{k_m})=0$ for $m\le n$ and mutually different indices $1\le k_1,\dots, k_m\le n$.

We now explain the \textit{derivation triple for the Riemannian manifold}.  Let $\nabla$ be the Levi-Civita connection (covariant derivative) and $\{X_{1}, \dots, X_{n}\}$ be an orthonormal basis of the tangent space $\T M$, then $\nabla_{X_{k}}: C^{\infty}(M)\to C^{\infty}(M)$ defines a family of differential operators. We may combine $\{\nabla_{X_{k}}\}$ and $\{e_{k}\}$ and define
\begin{align}
\delta(f)\lel \sum_{k=1}^{n} e_{k} \nabla_{X_{k}}  (f), \quad \forall f\in C^{\infty}(M). \label{lap-der}
\end{align}
 It is obvious that $\delta:C_{0}^{\infty}(M)\to L_{2}(CL(M),\mu)$ is a $*$-preserving closable derivation.  Thus we obtain the derivation triple
\begin{align}
\triple=(L_{\infty}(M)\ssubset CL(M),\mu, \delta) \label{triple-m}
\end{align}
and identify $\mathcal{A}=C_{0}^{\infty}(M)$.

The \textit{$\mu$-modified Laplace-Beltrami operator}  $\Du:C^{\infty}(M)\rightarrow C^{\infty}(M)$ is defined  by
\begin{align}\label{lbu} \Du f= \sum_{i=1}^{n} X^{*}_{i} X_{i}  f+\nabla U\cdot \nabla f,
\end{align}
 where $\nabla U$ is the gradient of $U$.  It is well-known that $\Du$ is essentially self-adjoint in $L_{2}(M,\mu)$. Moreover, we have $\Du=\delta^{*}\delta$.
The \textit{extended Levi-Civita connection} to Clifford bundle $C\ell(M)$ remains a derivation with respect to Clifford multiplication (see e.g. \cite{lm}), i.e.
$$\nabla_{X}(f\cdot g)=f\cdot( \nabla_{X}g)+ (\nabla_{X}f)\cdot g,\quad \forall X\in C^{\infty}(\T M), f,g\in C^{\infty}(C\ell(M)),$$
where $\cdot$ denotes the Clifford multiplication.
Let $\Lu:L_{2}(CL(M),\mu)\rightarrow L_{2}(CL(M),\mu)$ be the \textit{$\mu$-modified rough (or Bochner) Laplacian}
\begin{align} \label{rlbu} \Lu=\sum_{i=1}^{n}\left(\nabla_{X_{i}}\nabla_{X_{i}}-\nabla_{\nabla_{X_{i}}X_{i}}\right)+\nabla_{\nabla U}.
\end{align}
\begin{lemma} The $\mu$-modified rough Laplacian $\Lu$ is a generator of the completely positive trace preserving semigroup $\hat{T}_{t}=e^{-t\Lu}.$
\end{lemma}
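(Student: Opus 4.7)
The strategy is to realize $\Lu$ as $\bar\nabla^{*}\bar\nabla$ for a closable $*$-preserving derivation acting on the Clifford bundle algebra $CL(M)$, and then invoke the Sauvageot construction recalled after Definition 2.1 (together with Theorem 2.2) to obtain the completely positive trace-preserving semigroup. The $*$-preserving and derivation properties are the content of the first step; the identification with the explicit formula for $\Lu$ is a local integration-by-parts calculation; the conclusion is then formal.

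First, extend the Levi-Civita connection from $\T M$ to $C\ell(M)$. Because Levi-Civita is metric-compatible, it preserves the Clifford relations, so it acts on $C^{\infty}(C\ell(M))$ as a derivation with respect to Clifford multiplication:
$$\nabla_{X}(a\cdot b)\lel (\nabla_{X}a)\cdot b\pl +\pl a\cdot (\nabla_{X}b),\quad \nabla_{X}(Ja)\lel J(\nabla_{X}a),$$
for $X$ a smooth vector field and $a,b\in C^{\infty}(C\ell(M))$, where $J$ is the antilinear involution coming from the fiberwise Clifford $^{*}$-structure (cf. \cite{lm}). Packaging these directional derivatives with respect to a local orthonormal frame $\{X_{i}\}$ and dual coframe $\{e_{i}\}$, define
$$\nabla: C_{0}^{\infty}(C\ell(M))\to \Ha,\quad \nabla(a)\lel \sum_{i=1}^{n} e_{i}\otimes \nabla_{X_{i}}(a),$$
where $\Ha$ is the $L_{2}$-completion of $C_{0}^{\infty}(\T^{*}M\otimes C\ell(M))$ with respect to the $\mu$-weighted fiberwise inner product. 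Under pointwise left and right Clifford multiplication, $\Ha$ is a self-adjoint Hilbert $CL(M)$-$CL(M)$ bimodule in the sense of Section 2.1, and $\nabla$ is a densely defined, $*$-preserving derivation into $\Ha$.

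Second, compute $\nabla^{*}\nabla$ against $d\mu=\frac{1}{\Zu}e^{-U}dvol$. For compactly supported smooth sections, integration by parts in a local frame yields
$$\int_{M}\langle \nabla a,\nabla b\rangle\,d\mu\lel \int_{M}\Big\langle \sum_{i}\bigl(-\nabla_{X_{i}}\nabla_{X_{i}}+\nabla_{\nabla_{X_{i}}X_{i}}\bigr)a-\nabla_{\nabla U}a,\pl b\Big\rangle\,d\mu,$$
where the $\nabla_{\nabla U}$ contribution arises by differentiating the density $e^{-U}$, exactly as in \eqref{lbu}. Up to the sign convention used for $\Lu$, this identifies $\nabla^{*}\nabla=\Lu$ on the core $C_{0}^{\infty}(C\ell(M))$.

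Finally, closability of $\nabla$ follows as usual from ellipticity of $\nabla^{*}\nabla$ on the Clifford bundle, and by Sauvageot's theorem (see the discussion after Definition 2.1) the closure $\bar\nabla$ determines a strongly continuous semigroup $\hat{T}_{t}=e^{-t\bar\nabla^{*}\bar\nabla}=e^{-t\Lu}$ of completely positive, unital, self-adjoint maps on $CL(M)$. Unitality $\hat{T}_{t}(1)=1$ is immediate from $\nabla(1)=0$, and combined with self-adjointness this yields trace preservation with respect to $\tau(\cdot)=\int\tau_{x}(\cdot)\,d\mu(x)$. The main obstacle is the careful verification in the first step that the extended Levi-Civita connection is both a bimodule derivation for Clifford multiplication and intertwines the antilinear involution $J$ on $CL(M)$; once this is in place, the semigroup properties follow formally.
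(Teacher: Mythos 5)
Your proposal is correct and lands at the same destination as the paper, but the auxiliary tensor factor you use is different from the paper's. You take the derivation to be $\nabla(a)=\sum_i e_i\otimes\nabla_{X_i}(a)$ with target the $L_2$-completion of $\T^*M\otimes C\ell(M)$, regarded as a self-adjoint Hilbert $CL(M)$-$CL(M)$ bimodule; this is enough to invoke the Sauvageot construction recalled after Definition 2.1 and hence gives a completely positive unital self-adjoint (hence trace-preserving) semigroup with generator $\nabla^*\bar\nabla=\Lu$. The paper instead introduces a \emph{fresh} family of ``Mayorana''-Clifford operators $\tilde e_k\in\Mz_{2^n}$, independent of the Clifford bundle, and sets $\tilde\delta(f)=\sum_k\tilde e_k\otimes\nabla_{X_k}(f)$, so that the target is $L_2\bigl(\Mz_{2^n}\otimes CL(M)\bigr)$, the $L_2$-space of a finite von Neumann algebra containing $\mathcal N=CL(M)$. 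Since $\tau(\tilde e_j^*\tilde e_k)=\delta_{jk}$, one has $\tilde\delta^*\bar{\tilde\delta}=\sum_k\nabla_{X_k}^*\nabla_{X_k}=\Lu$, the same computation you carry out with $\langle e_j,e_k\rangle=\delta_{jk}$. What the paper's choice buys is that $(CL(M)\subset\Mz_{2^n}\otimes CL(M),\tau,\tilde\delta)$ is literally a derivation triple in the sense of Section 3 — the target is $L_2$ of an ambient finite von Neumann algebra — which fits the framework used elsewhere in the paper; your $\T^*M\otimes C\ell(M)$ is a perfectly good self-adjoint bimodule but not an $L_2$-space of an algebra (the cotangent factor does not multiply). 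For this particular lemma both routes are valid. One small economy: closability of the derivation follows already from a densely defined formal adjoint (which your integration-by-parts formula exhibits); you do not need ellipticity of $\nabla^*\nabla$.
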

\begin{proof}
We combine the family of differential operators $\nabla_{X_{k}}: C^{\infty}(C\ell(M))\to C^{\infty}(C\ell(M))$ with an additional
(Mayorana)-Clifford operators $\tilde{e}_{k}\in \Mz_{2^{n}}$ such that
$$\tilde{e}_{k}=-\tilde{e}_{k}^{*}, \quad \tilde{e}_{k}^2=-1, \quad \text{and} \quad \tilde{e}_{k}\tilde{e}_{j}=-\tilde{e}_{j}\tilde{e}_{k} \quad \text{for} k\neq l$$
and define
$$\tilde{\delta}(f)=\sum_{k=1}^{n} \tilde{e}_{k}\otimes\nabla_{X_{k}}(f).$$
Note that $\tilde{\delta}$ a $*$-preserving closable derivation. Thus
$\Lu={\tilde{\delta}}^{*}\bar{\tilde{\delta}}$ yields the assertion.
\end{proof}
\noindent Let $\Rcu: C^{\infty}(\T^{*}M)\to C^{\infty}(\T^{*}M)$ be the  Bakry-\'Emery Ricci $(1,1)$-tensor $$\Rcu=\Rc+\nabla_{\nabla U}.$$
\begin{lemma} The derivation triple defined by \eqref{triple-m} admits a Ricci curvature $\Rcu$ with the corresponding strongly continuous semigroup $\hat{T}_{t}=e^{-t\Lu}$.
\end{lemma}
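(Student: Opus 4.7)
The plan is to verify, one by one, the three defining properties of a geometric Ricci operator for the triple $(L_\infty(M)\ssubset CL(M),\mu,\delta)$ with the specified $\Rcu$ and semigroup $\hat T_t = e^{-t\Lu}$.

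\textbf{Bimodule property.} The Bakry--\'Emery tensor $\Rcu=\Rc+\nabla_{\nabla U}$ is a pointwise endomorphism of the cotangent bundle $\T^*M$. Under the fiberwise inclusion $\T^*M\hookrightarrow C\ell(M)$ sending $e_k\mapsto c(e_k)$, the image of $\delta(f)=\sum_k e_k\nabla_{X_k}(f)$ lies in the $\T^*M$-part of the Clifford bundle, and $\Omega_\delta(\mathcal{A})$ is the $L_\infty(M)$-sub-bimodule generated by these 1-forms. Since $\Rcu$ at $x\in M$ acts linearly on the fiber and commutes with multiplication by scalar functions, it extends to an $L_\infty(M)$-bimodule map on $\Omega_\delta(\mathcal{A})$, yielding \eqref{rc-b}.

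\textbf{Gradient form identity.} The preceding lemma exhibits $\Lu=\tilde\delta^{*}\bar{\tilde\delta}$, so $\hat T_t=e^{-t\Lu}$ is already a strongly continuous semigroup of completely positive trace preserving maps on $CL(M)$. To obtain $\Upgamma_\Lu(a,b)=E_{\mathcal N}(\delta(a^*)\delta(b))$ for $a,b\in C^\infty(M)\subset CL(M)$, one applies the derivation identity for $\tilde\delta$ and uses that the conditional expectation $E_{L_\infty(M)}$ on $CL(M)$ kills every nonzero Clifford monomial, so the only surviving contribution in $E_{\mathcal N}(\tilde\delta(a^*)\tilde\delta(b))$ comes from the diagonal indices and reproduces $\sum_k \nabla_{X_k}(a^*)\nabla_{X_k}(b)=E_{\mathcal N}(\delta(a^*)\delta(b))$. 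The auxiliary Clifford factor $\tilde e_k$ contributes a scalar $-1$ that matches the sign convention in $\Lu$.

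\textbf{Weitzenb\"ock commutator.} The heart of the proof is identity \eqref{rc-c}. For $\tilde a=T_t a\in C^\infty(M)$, under the identification $\delta f \leftrightarrow df$, it becomes the classical Bochner formula in the weighted setting,
\[
 d\,\Du f \;-\; \Lu\,(df) \;=\; \Rcu(df).
\]
I would derive this by combining three ingredients: (i) the commutation relation $d\Delta=\Delta_H\,d$ between the Hodge Laplacian on 1-forms and the Laplacian on functions; (ii) the Weitzenb\"ock identity $\Delta_H=\nabla^{*}\nabla+\Rc$ on 1-forms; (iii) the commutator $[\,d,\,\nabla U\cdot\nabla\,]$ of the drift terms appearing in $\Du$ and $\Lu$, which contributes exactly the $\nabla_{\nabla U}$ piece of $\Rcu$ after integration by parts against $d\mu=\tfrac{1}{\Zu}e^{-U}dvol$. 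Adding the three contributions, the second-order terms cancel and the zero-order part is precisely $\Rcu(df)$.

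The main obstacle is the bookkeeping of conventions: matching the adjoint $X_j^{*}$ in \eqref{lbu} (taken with respect to $d\mu$) to the covariant adjoint $\nabla^{*}\nabla$ and the frame-dependent correction $-\nabla_{\nabla_{X_i}X_i}$ in \eqref{rlbu}, and keeping track of signs in the Clifford multiplication so that the residual curvature term assembles into $\Rcu$ rather than, say, $\Rc+\Hess(U)$. Once these three conditions are established, the lemma is proved, and combined with Theorem \ref{main} it yields $\CLSI\ge 2\kappa$ as soon as $\Rcu\ge\kappa$, giving the Complete Bakry--\'Emery theorem as a corollary.
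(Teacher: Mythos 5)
Your proof is correct and takes essentially the same approach as the paper, namely verifying the three defining conditions of a geometric Ricci operator: the paper's proof is terser, simply noting $\Rcu$ is a $(1,1)$-tensor for the bimodule property, citing the Bochner--Weitzenb\"ock formula for \eqref{rc-c}, and identifying $\Lu|_{C^{\infty}(M)}=\Du$ for the gradient-form identity, whereas you unpack the Weitzenb\"ock identity into three intermediate steps and compute the gradient form directly from $\tilde\delta$. One small imprecision worth noting: the weighted Bochner identity $\delta(\Du f)-\Lu(\delta f)=\Rcu(\delta f)$ is a pointwise Leibniz-rule computation, and the drift contribution arises directly from $d\langle\nabla U,\nabla f\rangle=\nabla_{\nabla U}(df)+\Hess(U)(df)$; no integration by parts against $d\mu$ is needed at that stage (integration by parts only enters earlier, when recovering $\Du f=\Delta f+\nabla U\cdot\nabla f$ from the quadratic form $\int(\nabla f_1,\nabla f_2)\,d\mu$).
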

\begin{proof}
The Bakry-\'Emery Ricci $\Rcu$ is bimodule over $\mathcal{A}=C^{\infty}_{0}(M)$ since $\Rcu$ is a $(1,1)$-tensor. By the \textit{Bochner Weitzenb\"ock formula}, we obtain that
\begin{equation} \label{BW}
\delta(\Du f)=\Lu(\delta f)+\Rcu(\delta f),  \forall  f\in C^{\infty}(M).
\end{equation}
We can also identify the restriction of $\Lu$ to $C^{\infty}(M)$ with $\Du$
$$\Lu\: |_{C^{\infty}(M)}=\Du.$$
Thus $\Rcu$ is a Ricci operator of $\triple.$
\end{proof}
\noindent We adopt the geometric convention of Clifford algebra that $e_{k}^{2}=1$ and $e^{*}_{k}=-e_{k}$. The convention of operator algebra is to use $c_{k}^{2}=-1$ and $c_{k}^{*}=-c_{k}$. It is easy to converse between the two versions by using $e_{k}=ic_{k}$.

\subsection{Complete Bakry-\'Emery theory}
We recapture the Bakry-\'Emery criterion for complete log-Sobolev inequality as a corollary of Theorem \ref{main}, and this result motivated our definition of derivation triple.
\begin{theorem} [Complete Bakry-\'Emery theorem]
Let $(M, g, \mu)$ be a smooth Riemannian manifold with the measure $\mu$ defined by $d\mu=\frac{1}{\Zu} e^{-U} dvol$ with $\Zu=\int_{M} e^{-U}dvol$ for $U\in C^{\infty}(M)$. Given that  $\Rcu\geq \kappa>0$, then
$$\CLSI(\Du)\geq \CLSI^{+}(\Du)\geq 2\kappa.$$
\end{theorem}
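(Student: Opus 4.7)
The plan is to assemble the pieces already prepared in Section~4.1 and then invoke the abstract Bakry-\'Emery theorem (Theorem~\ref{main}) together with its $\CpSI/\CLSI^+$ counterpart. Specifically, I would take the derivation triple
\[
(L_{\infty}(M)\ssubset CL(M),\mu,\delta)
\]
from \eqref{triple-m}, with $\delta=\sum_k e_k\nabla_{X_k}$ acting on $\mathcal{A}=C_0^{\infty}(M)$. The identification $\Du=\delta^*\bar\delta$ was already verified, so $\Du$ is the generator of the semigroup whose $\CLSI$ we want to control. The strategy is then to check the three defining properties of a geometric Ricci operator for $\Rcu$ on this triple, and conclude by quoting Theorem~\ref{main}.

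First I would recall that on the Clifford bundle the $\mu$-modified rough Laplacian $\Lu$ (see \eqref{rlbu}) generates a strongly continuous semigroup $\hat{T}_t=e^{-t\Lu}$ of completely positive, trace preserving maps on $CL(M)$, satisfying $\Gamma_{\Lu}(a,b)=E_{L_\infty(M)}(\delta(a^*)\delta(b))$; this is precisely the content of the lemma on $\Lu=\tilde\delta^*\bar{\tilde\delta}$. Next, because $\Rcu$ is a $(1,1)$-tensor, it is $C^{\infty}(M)$-bilinear and hence satisfies the bimodule condition \eqref{rc-b} with respect to the algebra $\mathcal{A}=C^{\infty}_0(M)$. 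Finally, the commutator identity \eqref{rc-c}, namely
\[
\delta(\Du f)-\Lu(\delta f)=\Rcu(\delta f),\qquad f\in C^{\infty}(M),
\]
is exactly the Bochner-Weitzenb\"ock formula \eqref{BW}. Together these show $\Rcu$ is a geometric Ricci operator for the triple in the sense of Section~3, and the hypothesis $\Rcu\geq\kappa>0$ translates directly into the bimodule inner product bound required for ``$\Rc\geq\kappa$'' in our abstract definition.

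Having checked all of the structural requirements, the $\CLSI$ bound now follows from Theorem~\ref{main} applied to the triple: $\CLSI(\Du)\geq 2\kappa$. For the enhanced bound $\CLSI^+(\Du)\geq 2\kappa$, I would invoke the $p$-version of the abstract Bakry-\'Emery theorem stated right after Theorem~\ref{main} (proved in \cite{hao}), which in the same setup gives $\CpSI\geq 2\kappa$ for all $p\in(1,2)$, hence $\CLSI^+\geq 2\kappa$. The inequality $\CLSI(\Du)\geq \CLSI^+(\Du)$ is a general property listed among the enumerated stability results, so combining these gives the full chain $\CLSI(\Du)\geq \CLSI^+(\Du)\geq 2\kappa$.

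I do not anticipate a serious obstacle: the real content has already been front-loaded into Section~3 (the abstract Bakry-\'Emery theorem via double operator integrals and Lieb concavity) and into the Clifford bundle construction of Section~4.1 (checking that $\Lu$ generates a CPTP semigroup on $CL(M)$ and that the Bochner-Weitzenb\"ock formula is the identity \eqref{rc-c}). The main conceptual point to double-check is simply that ``$\Rcu\geq\kappa$ as a $(1,1)$-tensor'' is equivalent to the operator-algebraic lower bound $(\Rcu(\omega),\omega)_{\mathcal{A}}\geq\kappa E_{L_\infty(M)}(\omega^*\omega)$ for all one-forms $\omega\in\Omega_\delta(\mathcal{A})$, which reduces fiberwise to the classical characterization of Ricci bounds through the pointwise inner product on $T_x^*M$.
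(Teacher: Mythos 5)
Your proposal is correct and follows essentially the same route the paper itself takes: assemble the Clifford-bundle derivation triple $(L_\infty(M)\ssubset CL(M),\mu,\delta)$ from Section~4.1, verify that $\Rcu$ satisfies the three defining conditions of a geometric Ricci operator (bimodule property via the $(1,1)$-tensor structure, the CPTP semigroup generated by $\Lu$, and the commutator identity via Bochner--Weitzenb\"ock), and then invoke Theorem~\ref{main} and its $\CpSI$ counterpart. This matches the paper's own treatment, which derives the result as an immediate corollary of the preceding two lemmas and the abstract Bakry-\'Emery theorem.
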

\noindent Holley and Stroock \cite{hs87} proved that the log-Sobolev inequality is stable under measure perturbation, and this property remains true for the complete log-Sobolev  inequality by Theorem \ref{sutp}, as far as a central change of measure is concerned. Changing from a trace to the state is distinctly more complicated, and will not be considered in this paper. We refer to \cite{Led} for more applications.

\begin{corollary}
Let $\nu$ be the probability measure defined by $d\nu=\frac{1}{\Zv}e^{-V}dvol$ with $V\in C^{\infty}(M)$, where $\Zv$ is the normalization factor.  If $\|U-V\|_{\infty}\leq C$, then
$$e^{2C}\CLSI(\Du)\geq  \CLSI(\Dv)
\quad \text{and}  \quad e^{2C}\CLSI^{+}(\Du)\geq \CLSI^{+}(\Dv)
 .$$
\end{corollary}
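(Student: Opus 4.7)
This is a matrix-valued Holley--Stroock perturbation statement, and the natural strategy is a direct application of the change of measure principle (Theorem \ref{sutp}) at the level of the underlying derivation $\delta = \nabla$, rather than at the level of the two Bakry--\'Emery generators $\Du, \Dv$ separately. The point is that $\delta$ is intrinsic to $(M,g)$ and blind to the weight: both $\Du = \delta^{*}_\mu\bar\delta$ and $\Dv = \delta^{*}_\nu\bar\delta$ arise from $\delta$ by taking adjoints in different traces, so by the paper's convention $\CLSI(\Du) = \CLSI(\delta,\mu)$ and $\CLSI(\Dv) = \CLSI(\delta,\nu)$, and likewise for $\CLSI^+$.

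To get the sharp exponent $e^{2C}$, I would pass to the \emph{unnormalized} weights $\tilde\mu = e^{-U}\,dvol$ and $\tilde\nu = e^{-V}\,dvol$, whose Radon--Nikodym derivative is simply $d\tilde\mu/d\tilde\nu = e^{V-U}$. Under the hypothesis $\|U-V\|_\infty \leq C$ this is pinched between $e^{-C}$ and $e^{C}$, which avoids the spurious contribution of the normalization constants $Z_U, Z_V$ that would otherwise double the exponent. Because the relative entropy $D^\tau_{\mathcal{N}_{\fix}}(\rho)$ and the Fisher information $I^\tau_\delta(\rho)$ both scale linearly in $\tau$, rescaling the trace by any positive constant leaves the $\CLSI$ and $\CLSI^+$ constants unchanged, so $\CLSI(\delta,\mu) = \CLSI(\delta,\tilde\mu)$ and similarly for $\nu$.

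Now apply Theorem \ref{sutp} (and its $\CLSI^+$ analogue, the last theorem of Section 2.6) to the derivation $\delta$ with $\tau_1 = \tilde\mu$, $\tau_2 = \tilde\nu$, $c_1 = e^{C}$, $c_2 = e^{-C}$ to obtain
\[
\CLSI(\delta,\tilde\mu) \geq \frac{c_2}{c_1}\,\CLSI(\delta,\tilde\nu) = e^{-2C}\CLSI(\delta,\tilde\nu),
\]
and the same bound for $\CLSI^+$. Unwinding the identifications from the first paragraph yields $e^{2C}\CLSI(\Du) \geq \CLSI(\Dv)$ and the corresponding $\CLSI^+$ inequality.

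The one point that requires a moment's care is that Theorem \ref{sutp} as stated concerns a single generator $A$, whereas $\delta^{*}_\mu\bar\delta \neq \delta^{*}_\nu\bar\delta$. This is why I apply the theorem at the derivation level: Lemmas \ref{ent-per} and \ref{fish-per} depend only on the integral representations $D^{\tau}_{\Lin}(\rho\|\sigma) = \tau(\rho\log\rho - \rho\log\sigma - \rho + \sigma)$ and $I^{\tau}_\delta(\rho) = \tau(\delta(\rho)Q^\rho(\delta(\rho)))$ together with pointwise positivity of the respective integrands, neither of which requires $\delta$ to be self-adjoint under both traces simultaneously. Hence the proof of Theorem \ref{sutp} carries over verbatim when $A$ is replaced by a trace-independent derivation, which is exactly what is needed here.
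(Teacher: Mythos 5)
Your proof is correct and reproduces what the paper leaves implicit, namely that the corollary is an application of the change of measure principle (Theorem \ref{sutp}). The two points you take care to spell out are in fact the two places a careless reader would slip: (i) Theorem \ref{sutp} must be applied at the level of the trace-independent derivation $\delta$, since $\Du$ and $\Dv$ are different operators (the two Fisher informations are compared via Lemma \ref{fish-per}'s ``remains true for $I_\delta$'' clause, exactly as you argue); and (ii) one must exploit the scale-invariance of $\CLSI(\delta,\tau)$ to replace the normalized measures by $e^{-U}dvol$ and $e^{-V}dvol$, for otherwise the RN derivative $\tfrac{Z_V}{Z_U}e^{V-U}$ only gives $c_1/c_2\le e^{4C}$ and one loses a factor of two in the exponent. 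Both steps are legitimate within the paper's framework, so the argument is sound.
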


\subsection{Examples and applications}
For illustration purposes, we  discuss some interesting examples of derivation triples and  applications to Lindblad operators.
\begin{example}The Laplace-Beltrami operator of any compact Riemannian manifold with a strictly positive Ricci curvature satisfies $\CLSI$ and $\CLSI^{+}$, such as orthogonal group $O(n)$, special orthogonal group $SO(n)$, and  spheres $S^{n}$ with $n\geq 2$.
\end{example}
\begin{example} \label{Gaussian} Let $d\gamma(x)=(2\pi)^{-n/2} e^{-U(x)}dx$  be the Gaussian measure of $\Bbb{R}^{n}$, where $U(x)=|x|^{2}/2$. Then we have the natural Bakry-\'Emery Ricci $\Rcu=I_{d}$. Thus $$\CLSI(\Du)\geq \CLSI^{+}(\Du)\geq 2.$$
\end{example}

\begin{example} \label{unit} Let $dx$ be the Lebesgue measure over the $1$-dimensional manifold $(0,1)$ and $\delta$  be the ordinary pointwise derivative.
Then we have a derivation triple $\triple$, and  $$\CLSI\triple \geq \CLSI^{+} \triple \geq \frac{4}{5}.$$
\end{example}
\begin{proof} Let us consider the measure $$d\mu(x)=\frac{1}{\sqrt{2\pi}}\sum_{k=-\infty}^{\infty} e^{-(x-k)^{2}/2}dx$$ and the embedding map $$\pi: L_{\infty}(0,1)\rightarrow L_{\infty}(\Bbb{R}), \pi(f)(x)\lel f(x\mod 1).$$ Note that we get a sub-triple of $L_{\infty}(\Bbb{R})$, then $$\CLSI(\mu,\delta)\geq 2.$$
Note that $$\frac{\sqrt{2\pi}}{2+2e^{-1/2}+2e^{-2}+\frac{8}{3}e^{-9/2}}\leq \frac{dx}{d\mu} \leq\frac{\sqrt{2\pi}}{2e^{-1/2}+2e^{-2}+2e^{-9/2}+\frac{48}{125}e^{-25/2}}.$$  For approximation details, see Appendix. Together with Theorem \ref{sutp}, it implies that
$$\CLSI\triple \geq \frac{4}{5}.$$
Similarly $\CLSI^{+}\triple\geq\frac{4}{5}. $
\end{proof}

\begin{example}\label{sphere} Let $\Delta$ be the Laplace-Beltrami operator over the 1-dimensional sphere $S^{1}\subset\Bbb{R}^{2}$ with the Haar measure $\mu$, then $$\CLSI(\Delta)\geq\CLSI^{+}(\Delta)\geq \frac{1}{5\pi^{2}}.$$
\end{example}
\begin{proof}
For $g\in C^{\infty}(S^{1})$, then $\Delta (g)=\frac{d^{2}g}{d\theta^{2}}$. Let $\theta=2\pi x$ for $x\in (0,1)$ and $f(x)=g(2\pi x)$, and we have
$f''(x)=4\pi^{2}\Delta (g)(\theta)$. Let $\delta$ be the ordinary pointwise derivative over $(0,1)$ and $\tilde{E}$ be the conditional expectation mapping $L_{\infty}(0,1)$ onto constant  valued functions.
It is obvious that
$$D(f\| \tilde{E}(f))\lel D(g\| E(g))\quad \text{and} \quad I_{\delta}(f)\lel I_{\Delta}(g).$$
Together with Example \ref{unit}, we obtain
$$D(g\|E(g))\kl5\pi^{2}I_{\Delta}(g).$$
Similarly  $\CLSI^{+}(\Delta)\geq \frac{1}{5\pi^{2}}$.
\end{proof}

\noindent Together with the transferred argument in Section 4.3 of \cite{LJR}, Example \ref{sphere} implies the $\CLSI$ of Lindblad operator with $1$-generator.

\begin{example} \label{lind-1}
Let $L(\rho)=[x,[x,\rho]]=x^{2}\rho+\rho x^{2}-2\rho x\rho$, where $x$ is self-adjoint with discrete spectrum in $\Bbb{Z}$. Then $$\CLSI(L)\geq \CLSI^{+}(L)\geq \frac{1}{5\pi^{2}}.$$
\end{example}
\subsection{Gaussian Example}
Due to initial observation of Meyer, Bakry and \'{E}mery  discovered that the Ornstein-Uhlenbeck semigroup has Ricci curvature $1$. Thanks to central limit type results this applies to all Gaussians functions including tracial Fermionic random variables, see also \cite{CM}. In the context of Clifford algebras and $\ell_{\infty}^2$, this observation, and the connection to canonical derivatives were discovered by \cite{LP}. Junge and Zeng established a \textit{{Gaussians} transference} in \cite{JZ}. Let us illustrate this in the Fermionic case, and refer to \cite{JZ} for the more general setup.  Let $C\ell_{\nz}$ be the Cliford algebra generated by  a sequence of self-adjoint generators $\{c_{k}\}_{k\in\nz}$ satisfying
 \[ c_kc_j\lel -c_jc_k \pl ,\pl c_k^*=c_k\pl ,\pl c_k^2=1  \pl .\]
The number operator $N:C\ell_{\nz}\to C\ell_{\nz}$ is defined by
 \[ N(c_{i_1}\cdots c_{i_k}) \lel k c_{i_1}\cdots c_{i_k} \]
whenever $i_1,...,i_k$ are all different.  There exists a derivation $\delta_{N}$ such that $N=\delta_{N}^{*}\bar{\delta}_{N}$ and the derivation triple $(C\ell_{\nz}, C\ell_{\nz^{2}}, \delta_{N})$ admits a Ricci curvature $\Rc=id$.  Using a bijection between $\nz$ and $\nz\times \nz$, we can assume that $c_{(j,k)}$ are anti-commuting  Clifford generators of $C\ell_{\nz^{2}}$. Let $\tau$ be the trace of $C\ell_{\nz^{2}}$ obtained from GNS construction, then
$\tau(c_{i_{1}}\dots c_{i_{k}} )=0$ if $i_{1},\dots, i_{k}$ are all different.
For any fixed $m\in \nz$, we define
$$ u_m(c_k) \lel m^{-1/2} \sum_{j=1}^m c_{j,k}\ten g_{j} $$
where $\{g_j\}$ are iid Gaussians. Let $\om$ be an ultrafilter on $\nz$ and $M^{\om}=(C\ell_{\nz^2}\ten L_{\infty}(\rz^m,\gamma_m))^{\om}$ be the von Neumann algbraic ultraproduct with a normal faithful trace $\tau_{\om}$, where $d\gamma_m(x)=(2\pi)^{-m/2}e^{-|x|^2/2}dx$. Let $u_{\om}=(u_m(c_k))^{\bullet}$ be the limit object in the ultraprower. The central limit theorem shows that
 \[ \tau_{\om}(u_{\om}(c_{k_1})\cdots u_{\om}(c_{k_d})) \lel \tau(c_{k_1}\cdots c_{k_d})\pl. \]
This means that the map $\pi$ defined by
 \[ \pi(c_{k_1}\cdots c_{k_d})
 \lel u_{\om}(c_{k_1})\cdots u_{\om}(c_{k_d}) \]
extends to a trace preserving $^*$-homomorphism of  $C\ell_\nz$ into $M^{\om}$. Moreover, thanks to \cite{JZ}, we note that
 \[ \pi(T_t(w_{A})) \lel (id\ten T_t^{G})^{\bullet}(\pi(w_{A})), \quad \forall w_A=c_{i_1}\cdots c_{i_k},\]
where $T_t^{G}$ is the the Ornstein Uhlenbeck generator corresponding to the measure $d\gamma_m(x)$ (see \eqref{lbu} and \eqref{rlbu}). Since the latter has Ricci curvature $1$, the same is true for $T_t^{CL}$. We refer to \cite{JZ} for the explicit derivation $\delta(c_{k})\lel c_{k}$ for $\delta: C\ell_\nz\to C\ell_{\nz^{2}}$.

Let us point out a special case.
Let
$X=\left(\begin{smallmatrix}0&1\\ 1& 0\end{smallmatrix}\right), Y=\left(\begin{smallmatrix} 0 &-i\\i &0\end{smallmatrix}\right), \text{ and } Z=\left(\begin{smallmatrix} 1 & 0\\ 0 &-1\end{smallmatrix}\right)$. Then $C\ell_2$ is generated by $c_1=X$ and $c_2=Y$. Then $c_1c_2=-iZ$ implies that
 \[ T_t(\al 1+\beta X+\gamma Y+\zeta Z)
 \lel \al 1+e^{-t}\beta X+e^{-t\beta}Y+e^{-2t}\zeta Z \]
is a semigroup on $C\ell_2=\Mz_2$ which admits constant curvature $1$ and $\CLSI(T_t)=\CLSI^+(T_t)=2$. Here the spectral gap estimate \eqref{introgap} is tight. It will be interesting for us to rewrite the generator of this semigroup differently.
For a self-adjoint element $x$, we define $$ L_x(\rho) \lel x^2\rho+\rho x^2-2x\rho x \pl.$$
Let $a=\frac{X}{2}$ and $b=\frac{Y}{2}$, we find that
 \[ L_{a}(X)=0 \pl , \pl  L_{a}(Y)=Y \pl ,\pl L_{a}(Z) \lel Z \pl\]
  and  \[ L_{b}(X)=X\pl ,\pl L_{b}(Y)=0\pl,\pl L_{b}(Z)\lel Z \pl .\]
Thus $N=L_{a}+L_{b}$ has Ricci-curvarture $1$.  Then $$\CLSI(L_{a}+L_{b})\gl \CLSI^{+}(L_{a}+L_{b})\geq 2.$$  Noting $N$ restricted to $\ell_{\infty}^{2}$ is exactly of the form $I-E$, then  $\CLSI(N) \leq 2\lambda_{2}(N)\leq 2$.
\begin{cor} \label{bi} $\CLSI^+(L_a+L_b)=\CLSI(L_a+L_b)=2$. \end{cor}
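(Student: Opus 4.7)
The plan is to combine the lower and upper bounds that have already been assembled in the discussion preceding the corollary. For the lower bound, I would invoke the identification of $N=L_a+L_b$ with the Clifford number operator restricted to $C\ell_2\cong\Mz_2$: the computations $L_a(X)=0$, $L_a(Y)=Y$, $L_a(Z)=Z$, $L_b(X)=X$, $L_b(Y)=0$, $L_b(Z)=Z$ show that $N$ acts as multiplication by the word-length, which is exactly the action of $N_{C\ell_\nz}$ on the subalgebra generated by $c_1,c_2$. The Gaussian transference construction of \cite{JZ} (as recalled just above) provides the derivation triple $(C\ell_\nz,C\ell_{\nz^2},\delta_N)$ with $\delta_N(c_k)=c_k$ and geometric Ricci curvature $\Rc=\mathrm{id}$. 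Since $(C\ell_2\subset C\ell_{\nz^2\cap\{1,2\}^2},\tau,\delta_N|_{C\ell_2})$ is a sub-triple of the full Clifford triple in the sense of Definition \ref{subtriple}, Theorem \ref{passto} together with the $\CpSI$-version of Theorem \ref{main} yields
\[
\CLSI^+(L_a+L_b)\;\geq\;2\kappa\;=\;2.
\]

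For the upper bound I would restrict $N$ to the commutative subalgebra $\mathcal{D}=\mathrm{span}\{\mathbf{1},Z\}\cong\ell_\infty^2$ of diagonal matrices. The above table of values gives $N(\mathbf{1})=0$ and $N(Z)=2Z$, so on $\mathcal{D}$ one has $N|_{\mathcal D}=2(\mathrm{id}-E_{\mathrm{tr}})$, where $E_{\mathrm{tr}}$ is the conditional expectation onto scalars. The spectrum of $N$ on $\Mz_2$ is $\{0,1,1,2\}$, hence the spectral gap satisfies $\lambda_2(N)=1$. The general inequality \eqref{introgap}, which holds at the level of $\MLSI$ and hence also for $\CLSI$, then forces
\[
\CLSI(L_a+L_b)\;\leq\;2\lambda_2(N)\;=\;2.
\]

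Finally, assembling the two bounds with the monotonicity $\CLSI^+\leq\CLSI$ (item (3) of the properties list for $\CpSI$), I would conclude
\[
2\;\leq\;\CLSI^+(L_a+L_b)\;\leq\;\CLSI(L_a+L_b)\;\leq\;2,
\]
forcing equality throughout. There is no real obstacle beyond what has already been done in the excerpt: the essential content is recognising that the Ricci-curvature machinery of Section 3 and the Gaussian transference of Section 4 identify $L_a+L_b$ as a curvature-$1$ object, while its restriction to the classical two-point subalgebra is sharp enough to saturate the Poincar\'e-type upper bound. The only subtlety worth flagging is that the inequality $\CLSI(L)\leq 2\lambda_2(L)$ must be applied to $N$ itself on $\Mz_2$ (not to its restriction), which is legitimate since the diagonal eigenvectors $\mathbf{1},Z$ realise the spectral gap.
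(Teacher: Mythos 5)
Your proposal is correct and follows the paper's own two-pronged strategy: the lower bound $\CLSI^+(L_a+L_b)\ge 2$ comes from the curvature $\Rc=\mathrm{id}$ of the Clifford derivation triple (your detour through Theorem \ref{passto} and the sub-triple is a valid way of making this explicit; the paper instead asserts curvature $1$ directly for the $C\ell_2$ triple), and the upper bound comes from the chain $\CLSI\le\MLSI\le 2\lambda_2$ applied to $N$ on $\Mz_2$, where $\lambda_2(N)=1$ is read off from the eigenvalues $\{0,1,1,2\}$.

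One small correction to your closing remark: the eigenvectors realising the gap $\lambda_2(N)=1$ are $X$ and $Y$ (eigenvalue $1$), \emph{not} the diagonal vectors $\mathbf{1},Z$ (which have eigenvalues $0$ and $2$). In fact the restriction $N|_{\ell_\infty^2}=2(\mathrm{id}-E_{\mathrm{tr}})$ has spectral gap $2$, so it does not by itself locate $\lambda_2(N)$; applying \eqref{introgap} to the restriction would only give $\CLSI\le 4$. Luckily this is immaterial for your proof, since the inequality $\CLSI(N)\le\MLSI(N)\le 2\lambda_2(N)$ holds unconditionally for the generator $N$ on $\Mz_2$ and the full spectrum has been computed explicitly.
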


At the time of this writing the exact $\CLSI$ constant for $A_n=id-E_{\tau}$, where $E_{\tau}(x)=\frac{\tr(x)}{n}1$, is given by the normalized trace is unknown. According to \cite{Ba}, we have $\CLSI(A_n)\gl 1$.

\begin{cor} $2\gl\CLSI(A_{2})\gl\CLSI^+(A_2)\gl \frac{3}{2}$.
\end{cor}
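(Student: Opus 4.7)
The plan is to split the claim into the upper bound $2\ge\CLSI(A_2)$ and the lower bound $\CLSI^+(A_2)\ge 3/2$; the middle inequality $\CLSI(A_2)\ge\CLSI^+(A_2)$ is property (3) in the list of $\CpSI$ and $\CLSI^+$ properties already established. For the upper bound, observe that $A_2=\id-E_\tau$ is itself a projection in $L_2(\Mz_2,\tau)$: it acts as $0$ on $\mathbb{C}\cdot 1$ and as the identity on the traceless matrices. Thus $\lambda_2(A_2)=1$, and the generic spectral gap inequality $\MLSI(L)\le 2\lambda_2(L)$ from the introduction, combined with the trivial $\CLSI(L)\le\MLSI(L)$, gives $\CLSI(A_2)\le 2$.

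For the lower bound, the key algebraic step is the Pauli decomposition
\[2A_2\lel L_a+L_b+L_c,\qquad a=\tfrac{X}{2},\ b=\tfrac{Y}{2},\ c=\tfrac{Z}{2}.\]
I would verify this on the basis $\{1,X,Y,Z\}$ of $\Mz_2$: each $L_{P/2}$ vanishes on $1$ and on its own Pauli $P$ and acts as the identity on each of the other two Paulis (as already computed before Corollary \ref{bi}), so summing over $P\in\{X,Y,Z\}$ gives the diagonal action $(0,2,2,2)$, which is precisely $2A_2$. Since $\CLSI^+$ scales linearly in the generator ($\CLSI^+(\alpha L)=\alpha\,\CLSI^+(L)$ for $\alpha>0$), we have $\CLSI^+(A_2)=\tfrac12\CLSI^+(L_a+L_b+L_c)$, and it suffices to prove $\CLSI^+(L_a+L_b+L_c)\ge 3$.

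By Corollary \ref{bi} we already have $\CLSI^+(L_a+L_b)=2$. A direct calculation on $\{1,X,Y,Z\}$ shows that each of the three pair-generators $L_a+L_b$, $L_a+L_c$, $L_b+L_c$, as well as their full sum $L_a+L_b+L_c$, is ergodic with fixed-point algebra $\mathbb{C}\cdot 1$. Conjugating by the $SU(2)$-rotation $U=e^{-i\pi X/4}$, which fixes $X$ and sends $Y\mapsto Z$, intertwines the corresponding Lindbladians and produces a unitary equivalence $L_a+L_b\mapsto L_a+L_c$, hence $\CLSI^+(L_a+L_c)=2$; a symmetric rotation about $Y$ gives $\CLSI^+(L_b+L_c)=2$.

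Finally I would combine the identity $L_a+L_b+L_c=\tfrac12[(L_a+L_b)+(L_a+L_c)+(L_b+L_c)]$ with additivity of the $p$-Fisher information and the coincidence of fixed-point algebras. For every $p\in(1,2)$ and every positive $\rho$ (also after tensoring with an auxiliary finite von Neumann algebra),
\[I^p_{L_a+L_b+L_c}(\rho)\lel\tfrac12\bigl(I^p_{L_a+L_b}(\rho)+I^p_{L_a+L_c}(\rho)+I^p_{L_b+L_c}(\rho)\bigr)\ge 3\,d^p_{\mathbb{C}}(\rho),\]
which yields $\CLSI^+(L_a+L_b+L_c)\ge 3$ and hence $\CLSI^+(A_2)\ge 3/2$. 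The only subtle point is the fixed-point coincidence needed for additivity: although the individual generators $L_a,L_b,L_c$ are not ergodic (each has a two-dimensional fixed-point algebra spanned by $1$ and one Pauli), each of the three pair-sums is ergodic with the same fixed-point algebra $\mathbb{C}\cdot 1$ as their total sum, which is precisely why one must decompose into pairs (rather than into the $L_a,L_b,L_c$ themselves) before invoking the additivity of $\CLSI^+$.
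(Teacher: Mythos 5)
Your proof is correct and takes essentially the same route as the paper: you use the identity $L_a+L_b+L_c = 2(\mathrm{id}-E_\tau)$, obtain $\CLSI^+\ge 2$ for each pair $L_a+L_b$, $L_a+L_c$, $L_b+L_c$ by exploiting the $X\leftrightarrow Y\leftrightarrow Z$ symmetry (Corollary~\ref{bi} plus unitary intertwining), and sum the resulting $p$-Fisher-information inequalities using additivity of $I^p$ in the generator and the common fixed-point algebra $\cz\cdot 1$. The paper's proof compresses the symmetry step to the remark that the roles of $X,Y,Z$ may be interchanged, and works directly with the entropy inequality rather than via the scaling law $\CLSI^+(\alpha L)=\alpha\CLSI^+(L)$, but the arithmetic ($6D\le 2I_{L_a+L_b+L_c}=4I_{I-E}$) and the conclusion are the same.
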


\begin{proof} The role of $X$, $Y$ and $Z$ can be interchanged in the argument above. By introducing $c=\frac{Z}{2}$, we find that
 \begin{align*}
 6D(\rho\|E_{\tau}(\rho)) &\le I_{L_a+L_b}(\rho)+I_{L_a+L_c}(\rho)+I_{L_b+L_c}(\rho) \\
 &\lel 2I_{L_{a}+L_b+L_c}(\rho) \\
 &\lel 4 I_{I-E}(\rho)\pl .
 \end{align*}
The last equality $L_a+L_b+L_c=2(\text{id}-E_{\tau})$ is easily checked on the Pauli basis. The same argument works for
$\CLSI^+$.
 \qd

\section{Anti-transference}
In this section we consider a finite dimensional compact Lie group $G$ with a normalized Haar measure $\mu$.  Let $\lgg$ be the Lie algebra of $G$ and $\lgg_{\cz}$ be the complexification of $\lgg$.  Let $X=\{X_{1},X_{2},\dots, X_{m}\}\subset \lgg$ be a H\"ormander system (see definition in the introduction), and we consider the sub-Laplacian operator $\Delta_{X}$ given by
$$\Delta_{X}(f)=\sum_{k=1}^{m}X_{k}^{*}X_{k}f,$$
where $X_{k}f(g)=\frac{d}{dt}f(\exp(tX_{k})g)|_{t=0}$ for any $f\in C^{\infty}(G)$ and $g\in G$.
Similar to (\ref{lap-der}), we can find a closable $*$-preserving derivation $\delta$ such that $\Delta_{X}=\delta^{*}\bar{\delta}$. We again identify the derivation triple $$\triple=(L_{\infty}(G,\mu)\ssubset CL(G),\mu,\delta).$$ Let $u: G\rightarrow U(\Hs)$ be a strongly continuous unitary representation  of the Hilbert space $\Hs$. We may transfer $S_{t}\lel e^{-t\Delta_{X}}$ to $\BB(\Hs)$ and obtain a strongly continuous semigroup $T_{t}^{\Hs,u}: \BB(\Hs)\rightarrow \BB(\Hs)$ of self-adjoint, trace preserving, and completely positive maps by using the (co-)representation map
\begin{align} \label{co-rep}
\pi: \BB(\Hs)\rightarrow L_{\infty}(G,\BB(\Hs)), \pi(a)(g)=u(g)^{*}au(g).
\end{align}
Let $E$ and $E_{T}$ be conditional expectations onto the fixed-point algebras of $S_{t}$ and $T_{t}^{\Hs,u}$, respectively. The co-representation $\pi$ allows for the commuting diagram below.
 \[ \begin{array}{cccccc} L_{\infty}(G,\BB(\Hs)) & \stackrel{e^{-t\Delta_X\ten id}}{\rightarrow } & L_{\infty}(G,\BB(\Hs)) & \stackrel{E}{\rightarrow } &\BB(\Hs) \\
 \uparrow_{\pi} & & \uparrow_{\pi} & & \uparrow_{\pi}   \\
 \BB(\Hs) & \stackrel{T_t^{\Hs}}{\rightarrow } & \BB(\Hs)&  \stackrel{E_{T}}{\rightarrow }  & {\BB(\Hs)}_{\fix},
 \end{array}  \] Indeed, for any $\exp(tX_{k})$, we may find the corresponding one-parameter group generated by some self-adjoint operator $x_{k}$   (\cite{KR1}) such that
$$u(\exp(tX_{K}))=\exp(itx_{k}).$$
For a finite dimensional Hilbert space $\Hs$, the self-adjoint matrix $x_{k}$ is bounded. The differential operator $X_{k}$ is transferred to a commutator operator via the co-representation,
 \[ X_k(\pi(a))(g)
 \lel \frac{d}{dt} u(g)^*e^{-itx_k}ae^{itx_k}u(g)|_{t=0}
 \lel -i \pi( [x_k,a])(g) \pl .\]
Note that the generator of $T_{t}^{\Hs}$ is a Lindblad operator. Indeed,
\[ L_{X}^{\Hs,u}(a) \lel \sum_k x_k^2a+ax_k^2-2x_kax_k \pl.\]
We denote the transferred semigroup and the Lindblad operator by $T_{t}$ and $L_{X}$ if there is no ambiguity. To emphasize the underlying Hilbert space (unitary representation), we also use $T_{t}^{\Hs}$ and $L_{X}^{\Hs}$ ($T_{t}^{u}$ and $L_{X}^{u}$, respectively).
 \begin{theorem} [transference principle] Let $S_{t}=e^{-t\Delta_{X}}: L_{\infty}(G,\mu)\rightarrow L_{\infty}(G,\mu)$ be the semigroup generated by the sub-Laplacian $\Delta_{X}$ and $T_{t}$ be the transferred semigroup defined as above, then $$\CLSI(L_{X})\geq \CLSI(\Delta_{X}).$$
\end{theorem}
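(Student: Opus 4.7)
The plan is to use the co-representation $\pi \colon \BB(\Hs) \to L_\infty(G, \BB(\Hs))$, $\pi(a)(g) = u(g)^* a u(g)$, as a trace-preserving $^*$-homomorphism that intertwines the two semigroups, and to exploit that such embeddings preserve both the relative entropy and the Fisher information. For the complete version, I tensor with an arbitrary finite von Neumann algebra $\mathcal{F}$; since $\pi \otimes \id_\mathcal{F}$ remains a trace-preserving $^*$-homomorphism, everything passes uniformly to that level.

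Concretely, fix a finite von Neumann algebra $\mathcal{F}$ and a positive $\rho \in (\BB(\Hs) \otimes \mathcal{F})_+$. The commuting diagram already established in the excerpt provides the semigroup intertwining $(\pi \otimes \id) \circ (T_t \otimes \id) = (S_t \otimes \id) \circ (\pi \otimes \id)$ (from the left square) and the compatibility of the conditional expectations $(\pi \otimes \id) \circ (E_T \otimes \id) = (E \otimes \id) \circ (\pi \otimes \id)$ (from the right square). Differentiating the first identity at $t=0$ yields the generator-level intertwining $(\pi \otimes \id) \circ (L_X \otimes \id) = (\Delta_X \otimes \id) \circ (\pi \otimes \id)$. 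From here two identities will do the work. First, because $\pi \otimes \id_\mathcal{F}$ is a trace-preserving $^*$-homomorphism, it preserves relative entropy:
\[ D\bigl(\rho \,\|\, (E_T \otimes \id)(\rho)\bigr) = D\bigl((\pi \otimes \id)(\rho) \,\|\, (E \otimes \id)(\pi \otimes \id)(\rho)\bigr). \]
Second, combining the functional calculus identity $(\pi \otimes \id)(\ln \rho) = \ln (\pi \otimes \id)(\rho)$ with the generator intertwining and trace preservation gives
\[ I_{\Delta_X \otimes \id}\bigl((\pi \otimes \id)(\rho)\bigr) = I_{L_X \otimes \id}(\rho). \]

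Applying the hypothesis that $\Delta_X$ satisfies $\lambda$-$\CLSI$ to $(\pi \otimes \id_\mathcal{F})(\rho) \in L_\infty(G) \otimes (\BB(\Hs) \otimes \mathcal{F})$, viewing $\BB(\Hs) \otimes \mathcal{F}$ as the auxiliary finite von Neumann algebra in the definition of $\CLSI$, then yields $\lambda D(\rho \,\|\, (E_T \otimes \id)(\rho)) \leq I_{L_X \otimes \id}(\rho)$, which is precisely the $\lambda$-$\MLSI$ for $L_X \otimes \id_\mathcal{F}$. Since $\mathcal{F}$ was arbitrary and any $\lambda < \CLSI(\Delta_X)$ works, this gives $\CLSI(L_X) \geq \CLSI(\Delta_X)$. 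The only delicate point is the bookkeeping of which tensor factor each operator acts on, together with the functional calculus identity $\pi(\ln \rho) = \ln \pi(\rho)$ (immediate for $^*$-homomorphisms in finite dimensions; standard by approximation otherwise). I do not expect any serious obstacle beyond this, and the same scheme should adapt to $\CpSI$ and $\CLSI^+$ by replacing relative entropy and Fisher information with their $p$-analogues.
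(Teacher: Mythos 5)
The paper does not give its own proof of this theorem—it cites \cite{LJR} and the preceding commuting diagram—so your task was to reconstruct the argument, and you did so correctly. Your proof is the standard transference argument that the cited reference uses: intertwine the two semigroups through the trace-preserving $^*$-homomorphism $\pi\otimes\mathrm{id}_{\mathcal{F}}$, use the right square of the diagram to match the conditional expectations, and exploit that trace-preserving $^*$-embeddings preserve both the relative entropy and the Fisher information (via $(\pi\otimes\mathrm{id})(\ln\rho)=\ln(\pi\otimes\mathrm{id})(\rho)$ and the generator intertwining), then apply the $\lambda$-$\CLSI$ of $\Delta_X$ with $\BB(\Hs)\otimes\mathcal{F}$ as the auxiliary finite algebra, which is legitimate precisely because $\Hs$ is finite dimensional here so $\BB(\Hs)$ is a finite von Neumann algebra.
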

\noindent The transference principle extends to $\CLSI^{+}$ naturally (\cite{hao}), then $\CLSI^{+}(L_{X})\geq \CLSI^{+}(\Delta_{X})$. The transference principle is developed for any ergodic and right-invariant semigroup $S_{t}$ over a compact Lie group, see \cite{LJR} for details.

\begin{theorem}\label{Rm} Let $T_{t}=e^{-t\L}: \BB(L_{2}(M))\to \BB(L_{2}(M))$ be a semigroup of self-adjoint completely positive trace preserving maps such that
 \begin{enumerate}[leftmargin=6.0mm]
   \item[(1)] $\L$  maps $C^{\infty}(M)$ to $C^{\infty}(M)$;
  \item[(2)] $\dom(\L|_{C^{\infty}(M)})$ is dense in $L_{\infty}(M)$;
 \item[(3)]  $\CLSI^+(\L)> 0;$
 \item[(4)] $\L(d^{-\al}ad^{-\beta})=d^{-\al}\L(a)d^{-\beta}$ for any $0$-th order pseudo differential operator $a$.
  \end{enumerate}
Then $\CLSI^{+}(L_{\infty}(M),\L)\geq \CLSI^{+}(\BB(L_{2}(M)), \L)$.
\end{theorem}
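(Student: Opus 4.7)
The strategy is to derive the commutative $\CpSI$ on $L_\infty(M)$ from the operator-valued $\CpSI$ on $\BB(L_2(M))$ by using Connes' trace formula as the bridge between the semifinite trace on the larger algebra and the integral $\int_M \cdot\,dv$ on the commutative subalgebra. This is the noncommutative-geometric analogue of, and reverses the direction of, the transference principle of Section 5. Hypothesis (4) is tailored precisely to commute the Connes weight $d^{-\alpha}$ past $\L$, and condition (1) identifies $L_\infty(M)\cap\dom(\L)$ with the commutative subalgebra preserved by $\L$.

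To begin, fix $d=(1+\Delta_g)^{1/2}$ so that $d^{-n}$ (where $n=\dim M$) lies in the Dixmier ideal and Connes' trace formula realizes $c_n\int_M f\,dv=\Tr_\omega(f\,d^{-n})$ for smooth $f$ on $M$. For positive $\rho\in C^\infty(M)$ and positive $\sigma$ in an arbitrary finite von Neumann algebra $\mathcal F$, consider
\[
  R \;=\; d^{-n/2}\,\rho\,d^{-n/2}\otimes\sigma \;\in\; \BB(L_2(M))\otimes\mathcal F.
\]
Condition (1) ensures $\L\otimes\mathrm{id}$ preserves smooth multiplication operators tensored with $\mathcal F$, while condition (4) guarantees that $\L\otimes\mathrm{id}$ commutes with the bimodule conjugation by $d^{-n/2}\otimes 1$. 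Consequently the double-operator-integral representations of the $p$-Fisher information $I_{\L\otimes\mathrm{id}}^p(R)$ and the $p$-relative entropy $d^p_{\mathrm{fix}}(R)$ both factor through this conjugation and, via Connes' formula, reduce to the corresponding quantities for $\rho\otimes\sigma$ computed with the integral on $M$, up to the common factor $c_n$.

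Applying the hypothesis $\lambda=\CLSI^+(\BB(L_2(M)),\L)>0$ (which by definition gives a $\CpSI$ bound $\ge\lambda$ for $p$ near $1$) to the operator $R$ therefore yields, after this reduction, the commutative $\lambda$-$\CpSI$ for $\rho\otimes\sigma$ on $L_\infty(M)\otimes\mathcal F$. Since $\mathcal F$ is arbitrary, this establishes the complete inequality; letting $p\to 1^+$ gives $\CLSI^+(L_\infty(M),\L)\ge\lambda$. Condition (2) is used at the very end to extend the inequality from $C^\infty(M)$ to its weak$^*$ closure $L_\infty(M)$ by lower-semicontinuity of the entropy and Fisher information.

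\textbf{Main obstacle.} The substantive step is the ``clean extraction'' identity: for each of the functional-calculus expressions $F=F(\rho)$ arising inside $I_\L^p(R)$ and $d^p_{\mathrm{fix}}(R)$, one must prove that
\[
 \Tr_\omega\!\big(d^{-n/2}\,F(\rho)\,d^{-n/2}\big) \;=\; c_n\int_M F(\rho)\,dv .
\]
This requires commuting $d^{-n/2}$ through the double operator integrals $Q^R_{\cdot}$ and through the action of $\L$ using (4) iteratively, and then a principal-symbol computation matching the Dixmier trace of the resulting pseudodifferential operator of order $-n$ with the integral of its leading symbol. Once this matching is in place, the comparison between the operator and commutative $\CpSI$ inequalities is essentially algebraic.
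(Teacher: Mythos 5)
Your overall direction is the one the paper takes: use Connes' trace formula to bridge the ordinary Schatten traces on $\BB(L_2(M))$ and the integral on $M$, and use condition (4) to push $\L\otimes\mathrm{id}$ past the conjugation by $d^{-\alpha}\otimes 1$. But the specific implementation you propose does not close up, for two reasons.

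First, your scaling of the weight is off. You take $d=(1+\Delta)^{1/2}$ and form $R=d^{-n/2}\rho\,d^{-n/2}\otimes\sigma$, which is of order $-n$ in the first leg, so $R$ itself lies in the Dixmier ideal. But the $\CpSI$ inequality you want to apply to $R$ involves quantities such as $\|R\|_p^p$ and $p\,\tr(\L(R)R^{p-1})$, and $R^p$ has order $-np<-n$, so $\Tr_\omega(R^p)=0$; there is nothing left for Connes' formula to compute. The paper avoids this by working with $d=(1+\Delta)^{n/2}$ and the operator $A=(d^{-1/2p}\otimes 1)M_a(d^{-1/2p}\otimes 1)$, so that $|A|^p$ has exactly order $-n$, and by using the renormalized limit $\lim_{q\to 1^+}(q-1)\|\cdot\|_{pq}^{pq}$ (Corollary~\ref{rpscal} and Corollary~\ref{vng}). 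Your ``clean extraction'' identity $\Tr_\omega(d^{-n/2}F(\rho)d^{-n/2})=c_n\int_M F(\rho)\,dv$ is true for pseudodifferential $F(\rho)$ of order $0$, but it is the wrong identity to use; what is needed is the relation $c(n)\|f\|_p^p=\lim_{q\to 1^+}(q-1)\|d^{-\frac{1}{2p}}M_f d^{-\frac{1}{2p}}\|_{pq}^{pq}$, which requires an interpolation/duality argument (it is not a one-line principal-symbol computation once $p$ is not an even integer).

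Second, you describe the comparison between the operator and commutative $\CpSI$ inequalities as ``essentially algebraic'' once the extraction is in place, claiming that the $p$-Fisher information and $p$-entropy of $R$ both ``reduce'' to those of $\rho\otimes\sigma$ ``up to the common factor $c_n$''. In fact no such equality holds, and none is needed. The Fisher side $\tr(BA^{pq-1})$ with $B=(\L\otimes\mathrm{id})A$ is controlled only by a one-sided estimate
$\lim_{q\to 1^+}(q-1)\tr(BA^{pq-1})\le c(n)\tr_\Mm(ba^{p-1})$
(Lemma~\ref{lim}), whose proof is a convexity-of-$\|\cdot\|_{pq}^{pq}$ argument combined with a careful differentiation of the $p$-norm and a bound $|b|\le C a_\epsilon$ obtained by first perturbing $a\mapsto a_\epsilon=a+\epsilon\,1$. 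You omit both the convexity mechanism and the $\epsilon$-regularization (which is then removed at the end by dominated convergence). These are substantive technical steps, not algebra; in particular your plan to ``commute $d^{-n/2}$ through the double operator integrals $Q^R_\cdot$'' does not work, since $d$ does not commute with $\rho$ and one cannot pass $d$ through the spectral calculus of $R$.

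In summary, the strategy you identify is the right one, and the role you assign to conditions (1)–(4) is correct, but the choice of weight must be the $p$-dependent $d^{-1/2p}$ rather than $d^{-n/2}$, the passage to the commutative data is through a renormalized $(q-1)$-limit of Schatten norms rather than a naive Dixmier trace of $R^p$, and the Fisher-information estimate is a genuine inequality coming from convexity, not a clean extraction.
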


\subsection{Noncommutative Geometry}
\subsubsection{Connes' trace theorem}
We use the standard multi-index notation $\alpha=(\alpha_{1},\dots, \alpha_{n})$ with $\alpha_{j}\in\{0,1,2,\dots\}$ and  $|\al|=\sum_{j=1}^n |\al_j|$. Let  us denote the partial derivative by $D^{\al}_{x}f=\frac{\partial^{|\alpha|}f}{\partial_{x_{1}}^{\alpha_{1}} \dots\partial_{x_{n}}^{\alpha_{n}} }$, for $f\in C^{\infty}(\Bbb{R}^{n})$ .  We say that $\sigma\in C^{\infty}(\Bbb{R}^{n}\times \Bbb{R}^{n})$ is a \textit{symbol of order $m$} if
$$|D^{\alpha}_{x}D^{\beta}_{\xi}\sigma(x,\xi)|\leq   C_{\alpha,\beta}((1+|\xi|)^{m-\beta})$$
for any multi-index $\alpha $ and $\beta$ and for any $x,\xi\in \Bbb{R}^{n}$.
Note that $ C_{\alpha,\beta}$ depends on the choice of $\alpha$ and $\beta$ and is independent of $x$ and $\xi$. A \textit{pseudo differential operator $\Psi$ of order m} on $\rz^n$ with symbol $\sigma_{\Psi}$ is defined by
\begin{equation*}
\Psi(f) (x)\lel \int_{\Bbb{R}^{n}} e^{2\pi i x\cdot \xi } \sigma_{\Psi}(x,\xi) (\mathcal{F}f)(\xi) d\xi,
\end{equation*}
where $\mathcal{F}$ is the Fourier transformation and $\sigma_{\Psi}$ is a symbol of order $m$. Let $(M,g)$ be a Riemannian manifold of dimension $n$.  Then $\Psi:C^{\infty}(M)\to C^{\infty}(M)$ is said to be a pseudo-differential operator of order $m$ if this is true for every local chart. (See e.g. \cite{LSZ12}) The most prominent example of pseudo-differential operators  is the Laplacian operator $\Delta=-\left(\frac{\partial^{2}}{\partial_{x_{1}}^{2}}+\dots+\frac{\partial^{2}}{\partial_{x_{n}}^{2}}\right)$ on $\Bbb{R}^{n}$ with the symbol $\sigma_{\Delta}(x,\xi)=4\pi^{2}|\xi|^{2}$. Consequently, the Laplacian power $\Psi_{m}=(1+\Delta)^{m/2}$ is a pseudo-differential operator of order $m$ with the symbol $\sigma_{\Psi_{m}}(x,\xi)=(1+4\pi^{2}|\xi|^{2})^{m/2}$. In addition $\Psi_{m}$ is a \textit{classical} pseudo-differential operator with the asymptotic expansion
$$\si_{\Psi_{m}} (x,\xi)\sim \sum_{j=0}^{\infty} {j\choose\frac{m}{2}} |2\pi \xi|^{m-2j}. $$
Let us denote the Laplacian power of order $n$ by
$$d=(1+\Delta)^{n/2}.$$
 For a classical, compactly supported, pseudo differential operator $\Psi$ of order $-n$ , the \textit{Wodzicki residue} $\Res(\Psi)$ is the integral of the principal symbol $\sigma_{\Psi,-n}$ over the co-sphere bundle $S^{*}M=(\T^{*}M\backslash \{0\})/\Bbb{R}_{+}$
$$\Res(\Psi)=\frac{1}{n}\int_{S^{*}M} \sigma_{\Psi,-n}(\Vs)dvol,$$
where $dvol$ is the volume form of $S^{*}M$. The Wodzicki residue is also well-defined if we replace compactly supported with compactly based. Recall that $\Res([\Psi,\Phi])$ vanishes  for  classical compactly based pseudo differential operators  $\Psi$ of order $k_{1}$  and $\Phi$ of order $k_{2}$ with $k_{1}+k_{2}=-n.$
For any $f\in L_{\infty}(G)$, the left multiplication $M_f: L_{2}(G)\to L_{2}(G)$ is a bounded linear operator,
$$M_f(F)(x)=f(x)F(x).$$
Indeed, we obtain the inclusion $L_{\infty}(M)\subset \mathbb{B}(L_2(M))$ by this left regular representation. For any $p\in \Bbb{Z}$, then
$M_{f^{p}}=M_{f}^{p}$. Moreover, there exists a constant $c(n)=\frac{\rm{Vol}(S^{n-1})} {n(2\pi)^{n}}$ such that
\begin{align}\label{cn}
\Res(M_fd^{-1})=c(n) \int_{M} f(x) dvol(x), \forall f\in L_{\infty}(M).
\end{align}

Recall that the \textit{Malzaev ideal} $\mathcal{M}_{1,\infty}$ given by compact operators $T\in \mathbb{B}(L_2(M))$ such that
  \[ \|T\|_{\mathcal{M}_{1,\infty}} \lel \sup_{\nen} \frac{1}{\ln(n+1)}\sum_{j=0}^n u_j(T) \pl <\pl \infty \pl,\]
where $\{u_{j}(T)\}$ is the decreasing sequence of singular values of $T$. The Laplacian power $d$ has continuous extension in $\mathcal{M}_{1,\infty}$, still denoted as $d$.
Every dilation invariant extended limit $\om$ on $\ell_{\infty}(\nz)$ defines a weight on $({\mathcal{M}_{1,\infty}})_{+}$
  \[ \Tr_{\om}(T) \lel \om(\{\frac{1}{\ln(n+1)}\sum_{j=0}^n u_j(T)\}_n) \pl.\]
 This weight can be extended, by linearity, to all of $\mathcal{M}_{1,\infty}$ and still remains tracial, and its extension on $\mathcal{M}_{1,\infty}$ is called a \textit{Dixmier trace} on $\mathcal{M}_{1,\infty}$. The Dixmier trace $\Tr_{\omega}$ is non-normal and vanishes on the  Schatten 1-class $S_{1}$. An operator $T\in\mathcal{M}_{1,\infty}$ is said to be \textit{Dixmier measurable} if the value $\Tr_{\omega}(T)$ is independent of the choice of the dilation invariant extended limit $\omega$. Let $T\in(\mathcal{M}_{1,\infty})_{+}$  be positive Dixmier measurable, then
\begin{enumerate}[leftmargin=6.0mm]
\item[(1)] $\lim_{n\to\infty} \frac{1}{\ln(1+n)}\sum_{j=0}^{n}\mu_{j}(T)<\infty$;
\item[(2)] $\Tr_{\omega}(T)=\lim_{t\to\infty} \frac{1}{t}\tau(T^{1+1/t})=\lim_{q\to 1^{+}}(q-1)\tau(T^{q})$ for any dilation invariant extended limit $\omega$.
\end{enumerate}
See \cite{LSZ12} for a proof.


The coincidence between the geometric quantity-Wodzicki residue and the operator algebraic quantity-Dixmier trace was first discovered and proven by Alain Connes,  known as \textit{Connes' trace theorem}. Here we give a simple version of Connes' trace theorem. See \cite{LSZ12} for a complete statement and proof.

\begin{theorem}\label{res} Let $\Psi$ be a classical compactly supported pseudo-differential operator of order $-n$. Then $\Psi$ extends continuously to a Dixmier measurable operator in $\mathcal{M}_{1,\infty}$. Let $\omega$ be any   dilation invariant extended limit, then
$$\Tr_{\omega}(\Psi)=\Res(\Psi).$$
 In particular for any $f\in L_{\infty}(M)$, we have
  \[ \Tr_{\om}(M_fd^{-1}) \lel \Res(M_fd^{-1}) \lel c(n)\int_{M}f(x)dvol(x).\]
\end{theorem}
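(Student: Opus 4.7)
The plan is to establish the membership of $\Psi$ in $\mathcal{M}_{1,\infty}$ first, then identify the Dixmier trace with the Wodzicki residue through a spectral zeta function argument, and finally specialize to $M_f d^{-1}$ by explicit symbol calculus.

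First I would show that every classical compactly supported pseudo-differential operator of order $-n$ lies in $\mathcal{M}_{1,\infty}$. By Weyl's law, the eigenvalues of the Laplacian $\Delta$ on an $n$-dimensional Riemannian manifold satisfy $\la_k \sim c\, k^{2/n}$, hence the singular values of $d^{-1}=(1+\Delta)^{-n/2}$ satisfy $\mu_k(d^{-1}) \sim c' k^{-1}$. Writing $\Psi = (\Psi d)\, d^{-1}$, where $\Psi d$ is a classical pseudo-differential operator of order $0$ and therefore bounded, one obtains $\mu_k(\Psi) = O(k^{-1})$, which is exactly the defining property of $\mathcal{M}_{1,\infty}$.

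Next, to compute the Dixmier trace I would use the zeta-residue characterization invoked just above the theorem statement: if $\lim_{q\to 1^+}(q-1)\,\Tr(|\Psi|^{q})$ exists, then $\Psi$ is Dixmier measurable and this limit agrees with $\Tr_\om(\Psi)$ for every dilation invariant extended limit $\om$. The existence of the pole and the computation of the residue come from the Seeley--DeWitt complex-power calculus: for a classical elliptic-like PDO of order $-n$, the meromorphic function $s\mapsto \Tr(\Psi^{s})$ has a simple pole at $s=1$ whose residue is given by integration of the principal symbol over the cosphere bundle,
\begin{equation*}
\mathrm{Res}_{s=1}\Tr(\Psi^{s})\;=\;\frac{1}{n}\int_{S^{*}M}\sigma_{\Psi,-n}(x,\xi)\,dvol\;=\;\Res(\Psi).
\end{equation*}
This is the core of Connes' identification $\Tr_\om(\Psi)=\Res(\Psi)$, and it must be assembled locally with a smooth partition of unity subordinate to coordinate charts, using that the Wodzicki residue is invariant under change of coordinates and that $\Res$ vanishes on commutators (so cross terms between different charts are harmless).

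For the specific application to $M_f d^{-1}$, the principal symbol on each chart is $f(x)(4\pi^{2}|\xi|^{2})^{-n/2}$. Restricting to the cosphere bundle $S^{*}M=\{|\xi|=1\}$ and integrating, the $\xi$-integral contributes $(2\pi)^{-n}\vol(S^{n-1})$ and the base integral contributes $\int_{M} f(x)\,dvol(x)$; combined with the $1/n$ factor one recovers the constant $c(n)=\vol(S^{n-1})/(n(2\pi)^{n})$ and the formula $\Tr_{\om}(M_f d^{-1})=c(n)\int_{M} f\,dvol$. The main obstacle in this program is the complex-power analysis producing the simple pole and residue formula in Step 2: this requires a careful construction of $\Psi^{s}$ via contour integrals around the spectrum and uniform symbol estimates, together with a Tauberian argument (Karamata's theorem in the form used in LSZ12) bridging the Cesàro-type mean defining $\mathcal{M}_{1,\infty}$ to the analytic residue. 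Once these analytic ingredients are in place, the rest of the proof is bookkeeping via local symbols and partitions of unity.
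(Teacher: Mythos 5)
The paper itself does not prove Theorem~\ref{res}; it is stated as Connes' trace theorem and the reader is referred to \cite{LSZ12}, so there is no in-paper argument to compare against route-for-route. Judging your proposal on its own merits: Step~1 (write $\Psi=(\Psi d)d^{-1}$ with $\Psi d$ of order $0$ hence bounded, and use Weyl asymptotics to get $\mu_k(d^{-1})=O(k^{-1})$) is fine for compact $M$, and Step~3 (the principal symbol of $M_fd^{-1}$ is $f(x)(2\pi|\xi|)^{-n}$, and the cosphere integral gives $c(n)\int_M f\,dvol$) is a correct computation.

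Step~2, however, contains a genuine gap. You invoke ``if $\lim_{q\to 1^+}(q-1)\Tr(|\Psi|^q)$ exists, then ... this limit agrees with $\Tr_\omega(\Psi)$.'' The statement the paper records (and which is true) is that for \emph{positive} Dixmier measurable $T$ one has $\Tr_\omega(T)=\lim_{q\to 1^+}(q-1)\Tr(T^q)$; for general $\Psi$ that limit computes $\Tr_\omega(|\Psi|)$, not $\Tr_\omega(\Psi)$ --- already $\Psi=-d^{-1}$ gives the wrong sign. This is not a notational slip: the theorem is asserted for arbitrary classical $\Psi$ of order $-n$, which need not be positive nor even self-adjoint (indeed $M_fd^{-1}$ is non-self-adjoint even for real $f$), and in that generality $\Psi^s$ in your Seeley--DeWitt step is undefined, while for non-self-adjoint $\Psi$ the modulus $|\Psi|$ is usually not a classical PDO, so the complex-power/zeta machinery cannot be applied to $|\Psi|$ either. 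To close the gap one standardly either (i) first proves the identity for positive classical $\Psi$ of order $-n$ and then extends by linearity, decomposing a general $\Psi$ via $\Re\Psi$, $\Im\Psi$ and a large positive multiple of a fixed positive order-$(-n)$ PDO (using that both $\Res$ and $\Tr_\omega$ restricted to such operators are linear), or (ii) replaces $\Tr(\Psi^s)$ by a deformed zeta function such as $\zeta(s)=\Tr(\Psi\, d^{\,1-s})$, which is defined for arbitrary $\Psi$, shows it has a simple pole at $s=1$ with residue $\Res(\Psi)$, and then applies the Karamata/$V$-modulated Tauberian theorem of \cite{LSZ12}. One further small point: the final display is claimed for $f\in L_\infty(M)$, for which $M_f d^{-1}$ is not a classical PDO, so Step~3 (which presumes smooth $f$) needs to be supplemented by an approximation/continuity argument in $\mathcal{M}_{1,\infty}$.
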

\noindent Actually it is sufficient to assume that $\Psi$ is compactly based.

\subsubsection{Applications}  In the following, let $M$ be a compact Riemannian manifold of dimension $n$.
and  $\mathcal{M}$ be a finite von Neumann algebra equipped with a normal faithful trace $\tau_{\mathcal{M}}$. Let $\tau$ denote the normalized trace over $\mathbb{B}(L_2(M))$ and $\tr=\tau\otimes \tau_{\Mm}$. For $a\in L_{p}(M, L_
{p}(\Mm)),$ define $$\tr_{\Mm}(a)=\int_{M} \tau_{\Mm}(a(x))dvol(x).$$
We obtain the following result as a corollary of Theorem \ref{res}.

\begin{cor}\label{rpscal} Let $1\le p<\infty$ and $f\in C^{\infty}(M)$.  Then
 \[  c(n) \|f\|_p^p \lel
 \Tr_{\om}(|d^{-\frac{1}{2p}}M_fd^{-\frac{1}{2p}}|^p)
  \lel \lim_{q\to 1^{+}} (q-1) \| d^{-\frac{1}{2p}}M_fd^{-\frac{1}{2p}}\|_{pq}^{pq} \pl .
  \]
\end{cor}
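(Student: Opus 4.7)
The plan is to recognize $|T|^p$, with $T := d^{-\frac{1}{2p}} M_f d^{-\frac{1}{2p}}$, as a classical pseudo-differential operator of order $-n$, compute its Wodzicki residue, and then invoke Connes' trace theorem (Theorem \ref{res}) together with property (2) of positive Dixmier measurable operators.

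First I would note that $d = (1+\Delta)^{n/2}$ is a classical elliptic pseudo-differential operator of order $n$ with principal symbol $(2\pi|\xi|)^n$. By Seeley's theorem on complex powers, the fractional power $d^{-\frac{1}{2p}}$ is again classical, of order $-\frac{n}{2p}$, with principal symbol $(2\pi|\xi|)^{-\frac{n}{2p}}$. Composition with the smooth multiplication $M_f$ on the compact manifold $M$ gives that $T$ is classical of order $-\frac{n}{p}$ with principal symbol $f(x)(2\pi|\xi|)^{-\frac{n}{p}}$. A second application of Seeley's calculus to the positive self-adjoint operator $T^*T$ shows that $|T|^p = (T^*T)^{p/2}$ is classical of order $-n$ with principal symbol $|f(x)|^p(2\pi|\xi|)^{-n}$.

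Next I would compute the Wodzicki residue by integrating the principal symbol over the cosphere bundle $S^*M$ (where $|\xi| = 1$):
$$\Res(|T|^p) = \frac{1}{n}\int_{S^*M} |f(x)|^p (2\pi)^{-n}\,dvol = \frac{\mathrm{Vol}(S^{n-1})}{n(2\pi)^n}\int_M |f(x)|^p\,dvol(x) = c(n)\|f\|_p^p,$$
using that the integrand depends only on $x$ and the definition of $c(n)$ from \eqref{cn}. Applying Theorem \ref{res} to $|T|^p$ yields
$$\Tr_\omega(|T|^p) = \Res(|T|^p) = c(n)\|f\|_p^p$$
for every dilation invariant extended limit $\omega$; in particular $|T|^p$ is positive and Dixmier measurable. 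Property (2) of such operators then converts the Dixmier trace into the limit
$$\Tr_\omega(|T|^p) = \lim_{q\to 1^+}(q-1)\,\tau(|T|^{pq}) = \lim_{q\to 1^+}(q-1)\,\|T\|_{pq}^{pq},$$
giving the remaining equality.

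The main technical point is the claim that $|T|^p$ is itself a classical pseudo-differential operator of order $-n$ with the indicated principal symbol. This is not immediate from the Leibniz-style symbol calculus alone — it rests on Seeley's complex powers theorem applied to the elliptic, positive operator $T^*T$ (of order $-2n/p$), which guarantees that arbitrary real powers of elliptic classical operators stay within the classical calculus and that the principal symbol behaves as expected under the functional calculus. Once this standard fact is invoked, the rest of the argument is a direct symbol computation plus two appeals to results already recalled in the preceding subsection.
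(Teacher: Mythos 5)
Your proposal correctly identifies the conceptual skeleton — compute the Wodzicki residue of $|T|^p$ by integrating a principal symbol over $S^*M$, invoke Connes' trace theorem, and then pass to the limit $q\to 1^+$ by Dixmier measurability — and the symbol computation and constant $c(n)$ are right. However, the pivotal step is wrong as stated: you assert that $T^*T$ (with principal symbol $|f(x)|^2(2\pi|\xi|)^{-2n/p}$) is an \emph{elliptic} classical pseudo-differential operator and then apply Seeley's complex-powers theorem to conclude that $(T^*T)^{p/2}$ is again classical of order $-n$. Ellipticity fails wherever $f$ vanishes, and for a generic smooth $f$ the zero set is nonempty. Worse, for non-integer $p/2$ the symbol $|f(x)|^p(2\pi|\xi|)^{-n}$ obtained by formally raising to the power $p/2$ is not even smooth in $x$ near zeros of $f$ (it is only finitely differentiable), so $|T|^p$ is genuinely not a classical pseudo-differential operator and Theorem~\ref{res} cannot be applied to it directly. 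This is not a technical footnote: it is precisely the obstruction that the paper's proof is designed to circumvent.

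The paper handles this by first proving the identity \eqref{integer} only for \emph{even integer} $p$, where $(T^*T)^{p/2}$ is an honest product of classical operators and no fractional power of a possibly degenerate operator is ever taken (the principal-symbol comparison with $M_{|f|}^p d^{-1}$ is then elementary, via \cite{RT10}). It then extends to general $p$ by a three-line/complex interpolation argument for the upper bound, plus a duality and H\"older estimate for the lower bound — both of which operate at the level of $L_q$-norms, which behave continuously and require no classicality of $|T|^p$. Your direct route would only be valid if one assumed $f$ nowhere vanishing, and even then one would still need a continuity/approximation argument to remove that hypothesis, which is essentially what the interpolation step supplies. The remark following the paper's proof, pointing to Connes' own pseudo-differential calculus, does allude to a more symbol-theoretic route that avoids interpolation, but that calculus is a broader framework than Seeley's and is not what your proposal invokes. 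As written, the proposal has a genuine gap at the claim of ellipticity.
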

\begin{proof}
 We first claim that for any even integer $p\gl 1$
 \begin{equation} \label{integer}
 c(n) \|f\|_p^p \lel
 \Tr_{\om}(|d^{-\frac{1}{2p}}M_fd^{-\frac{1}{2p}}|^p)
  \lel \lim_{q\to 1^{+}} (q-1) \| d^{-\frac{1}{2p}}M_fd^{-\frac{1}{2p}}\|_{pq}^{pq} \pl .
 \end{equation}
Noting that
$M_{|f|}^{p}d^{-1}$ and $|d^{-\frac{1}{2p}}M_f d^{-\frac{1}{2p}} |^{p}$ have the same principal symbol of order $-n$ \cite{RT10}, we infer that $$\Res(M_{|f|}^pd^{-1})=\Res(|d^{-\frac{1}{2p}}M_fd^{-\frac{1}{2p}} |^{p}).$$
Together with \eqref{cn}, we obtain that  $$ c_{n}\|f\|_{p}^{p}\lel \Res(M_{|f|}^{p}d^{-1}) \lel \Res(|d^{-\frac{1}{2p}}M_fd^{-\frac{1}{2p}} |^{p}).$$
Applying Theorem \ref{res} and the assertion follows for any even integer $p$.

Then we prove the upper estimate for all $p$ by interpolation. We may assume that $c(n)\|f\|_{p}^{p}<1$.  Let
$\frac{1}{p}=\frac{1-\theta}{p_0}+\frac{\theta}{1}$ and $p_0$ be an even integer. Let $\al(z)=\frac{1-z}{p_0}+\frac{z}{1}$ and consider the analytic function
 \[ F(z) \lel u {M_{|f|}}^{p\alpha(z)}d^{-\al(z)},\]
where $M_f=uM_{|f|}$ is given by the polar decomposition. Thus $F(\theta)=uM_{|f|}d^{-1/p}=M_fd^{-1/p}$.
Note that for different values of $z=it$ the functions $F(it)$ and $F(0)$ only differ by left and right multiplications by unitaries, and $$|F(it)|=|F(0)|=M_{|f|^{p/p_{0}}}d^{-1/p_{0}}.$$
Applying the limit $q\to 1^{+}$ applies uniformly to $t$ and together with (\ref{integer}), we  infer that
\begin{align*}
&\lim_{q\to 1^{+}}(q-1)\sup_{t}\|F(it)\|_{qp_{0}}^{qp_{0}}\\
=&\lim_{q\to 1^{1+}} (q-1)\| d^{-1/(2p_{0})}{M_{|f|}}^{p/p_{0}} d^{-1/(2p_{0})} \|_{qp_{0}}^{qp_{0}} =c(n)\|f\|_{p}^{p}\leq 1.
\end{align*}
Similarly we obtain that  $\lim_{q\to 1^{+}} (q-1) \sup_{t} \|F(1+it)\|^{q}_{q} \kl 1 \pl .$
By the three line lemma (see \cite{BL}), we deduce that
 \[ \lim_{q\to 1^{+}} (q-1) \|M_fd^{-1/p}\|_{q}^{q} \lel \lim_{q\to 1^{+}} (q-1)\|F(\theta)\|_{qp}^{qp} \kl 1 \pl .\]
Homogeneity implies
 \[ \lim_{q\to 1^{+}} (q-1) \|M_fd^{-1/p}\|_{pq}^{pq} \kl c(n) \|f\|_p^p \pl
  \text{ and } \lim_{q\to 1^{+}} (q-1) \|d^{-1/p}M_f^{*}\|_{pq}^{pq} \kl c(n) \|f\|_p^p \pl . \]
For the lower estimate, we assume that $\|f\|_p^p=1$. For any $\epsilon>0$, there exists $g\in C^{\infty}(M)$ such that  $$\int |g^*f| d{\rm vol}(x)\gl (1-\eps)$$ and $\|g\|_{p'}\le 1$ with $\frac{1}{p'}+\frac{1}{p}=1$.
Note that  $d^{-1/p'}M_{g^*f}d^{-1/p}$ is a pseudo-differential operator of order $-n$, then
$$c(n) \int |g^*f| d{\rm vol}(x) \lel \lim_{q\to 1^{+}} (q-1) \|d^{-1/p'}M_{g^*f}d^{-1/p}\|_q^q .$$
By H\"{o}lder's inequality, then
 \begin{align*} c(n) \int |g^*f| d{\rm vol}(x)&\leq(\limsup_{q\to 1^{+}} (q-1)^{1/p'} \|d^{-1/p'}M_{g^*}\|_{p'q}^q) \limsup_{q\to 1^{+}} (q-1)^{1/p}  \|M_{f}d^{-1/p}\|_{pq}^{p} \\
 &\leq c(n)^{1/p'} \limsup_{q\to 1^{+}} (q-1)^{1/p}  \|M_{f}d^{-1/p}\|_{pq}^{q}.
 \end{align*}
 Together with $\int |g^*f| d{\rm vol}(x)\gl (1-\eps)$, we have
 $$ (1-\eps)\leq c(n)^{1/p'} \limsup_{q\to 1^{+}} (q-1)^{1/p}  \|M_{f}d^{-1/p}\|_{pq}^{q}. $$
Taking the $p$-th power we deduce that
 \begin{align}(1-\eps)^p c(n) \kl \limsup_{q\to 1^{+}} (q-1)  \|M_{f}d^{-1/p}\|_{pq}^{pq} =\lim_{q\to 1^{+}} \|M_{f}d^{-1/p}\|_{pq}^{pq}  \label{pscal}.
\end{align}
Thus the upper and lower estimates yield the equality. \qd
\begin{remark}
We could also use the pseudo-differential calculus developed by Connes (\cite{connes2}) to obtain the result without interpolation. 
\end{remark}
\noindent Let us recall the definition  of vector-valued mixed $(p,q)$-spaces from  see \cite{pvp},\cite{mem}:
 \[ L_{\infty}(\N,L_q(\Mm)) \lel [\N\bar{\ten}\Mm,L_{\infty}(\mathcal{N},L_1(\Mm))]_{1/q} \]
obtained by complex interpolation. We refer to \cite{mem} for  the fact that for elements in $\N\ten \Mm$ the two equivalent expressions for the norm
  \begin{align*}
\|f\|_{L_{\infty}(\N,L_1(\Mm))} &=   \sup_{\|a\|_2,\|b\|_2} \|(a\ten 1)f(b\ten 1)\|_{L_1(\N\ten \Mm)}\\
     &=
   \inf_{f=f_1f_2} \|id\ten \tau(f_1f_1^*)\|_\N^{1/2}
   \|id\ten \tau(f_2^*f_2)\|_\N^{1/2}
   \end{align*}
coincide. Thus by interpolation, we deduce an isometric inclusion
 \[ L_{\infty}(\N_1,L_q(\Mm))\subset L_{\infty}(\N_2,L_q(\Mm)) \]
for every inclusion of von Neumann algebras $\N_1\subset \N_2$. This is in particular true for the inclusion $L_{\infty}(M)\subset \mathbb{B}(L_2(M))$ given by the left regular representation,
\begin{align}\label{inclusion-p}
L_{\infty}(M,L_{p}(\mathcal{M}))\subset L_{\infty}(\BB(L_{2}(M)), L_{p}(\mathcal{M})).
\end{align}
Also recall Pisier's interpolation theorem for vector-valued $L_{p}$ spaces (see \cite{pvp,mem}) that
\begin{align} \label{pisier} L_{p}(L_{\infty}(M)\bar{\ten} \Mm)  \lel L_{p}(M,L_{p}(\Mm)).
\end{align}

\begin{cor}\label{vng} Let $f\in L_{p}(M, L_{p}(\mathcal{M}))$, then
\begin{align*}\lim_{q\to 1^{+}} (q-1) \|(d^{-\frac{1}{2p}}\ten 1)M_{f}(d^{-\frac{1}{2p}}\ten 1)\|_{pq}^{pq} \lel c(n) \|f\|_{L_p(L_{\infty}(M)\bar{\otimes} \mathcal{M})}^{p}.
\end{align*}
\end{cor}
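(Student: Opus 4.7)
The plan is to mirror the three-step proof of Corollary~\ref{rpscal} in the semifinite tracial setting $(\mathbb{B}(L_2(M))\bar\otimes\mathcal{M},\tr)$, exploiting that $d\otimes 1$ acts trivially on the $\mathcal{M}$-factor, so all the noncommutative-geometric machinery can be applied ``fiberwise'' over $\mathcal{M}$.

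First I would settle the elementary-tensor case $f=g\otimes m$ with $g\in C^{\infty}(M)$ and $m\in\mathcal{M}$, where the operator factors:
\[
(d^{-\frac{1}{2p}}\otimes 1)M_f(d^{-\frac{1}{2p}}\otimes 1)\lel (d^{-\frac{1}{2p}}M_g d^{-\frac{1}{2p}})\otimes m.
\]
Since Schatten norms are multiplicative across tensor products in the semifinite trace $\tr=\tau\otimes\tau_{\mathcal{M}}$, the $L_{pq}$-norm raised to $pq$ splits as $\|d^{-1/(2p)}M_g d^{-1/(2p)}\|_{pq}^{pq}\cdot\tau_{\mathcal{M}}(|m|^{pq})$. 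Multiplying by $(q-1)$ and letting $q\to 1^{+}$, Corollary~\ref{rpscal} handles the first factor (yielding $c(n)\|g\|_{p}^{p}$), while continuity of $p\mapsto\tau_{\mathcal{M}}(|m|^{p})$ on the finite algebra $\mathcal{M}$ handles the second (yielding $\|m\|_{p}^{p}$). Pisier's identity \eqref{pisier} then identifies the product with $c(n)\|g\otimes m\|_{L_{p}(M,L_{p}(\mathcal{M}))}^{p}$.

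Next I would extend to general $f\in L_{p}(L_{\infty}(M)\bar\otimes\mathcal{M})$ via complex interpolation, exactly as in Corollary~\ref{rpscal}. For a positive integer $p_0$ with $\tfrac1p=\tfrac{1-\theta}{p_0}+\theta$, set $\alpha(z)=\tfrac{1-z}{p_0}+z$ and consider the analytic family
\[
F(z)\lel u\,|M_f|^{p\alpha(z)}(d^{-\alpha(z)}\otimes 1),
\]
where $M_f=u|M_f|$ is the polar decomposition in $\mathbb{B}(L_2(M))\bar\otimes\mathcal{M}$. The values $F(it)$ differ from $|M_f|^{p/p_0}(d^{-1/p_0}\otimes 1)$ only by unitaries, so the $z=0$ boundary is controlled by the even-integer case (obtained from Step~1 by density and linearity, together with Connes' trace theorem applied componentwise to the vector-valued pseudo-differential operator $(M_{|f|^{p}}\otimes 1)(d^{-1}\otimes 1)$, whose principal symbol agrees with that of $|(d^{-1/(2p)}\otimes 1)M_f(d^{-1/(2p)}\otimes 1)|^{p}$). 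The $z=1$ boundary is trivially bounded in $L_{\infty}$-norm. The noncommutative three-line lemma from \cite{BL}, applied uniformly in the $\mathcal{M}$-variable, yields the upper estimate.

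Finally I would establish the matching lower bound by duality. For $\epsilon>0$ pick $h\in L_{p'}(L_{\infty}(M)\bar\otimes\mathcal{M})$ with $\|h\|_{p'}\le 1$ and $\tr_{\mathcal{M}}(h^{*}f)\ge (1-\epsilon)\|f\|_{p}$. The operator $(d^{-1/p'}\otimes 1)M_{h^{*}f}(d^{-1/p}\otimes 1)$ is a vector-valued pseudo-differential operator of order $-n$; applying the $p=1$ version of the identity (already obtained above) together with H\"older's inequality in $L_{pq}\cdot L_{p'q}\subset L_{q}$ yields the required lower bound, as in the scalar proof. The main obstacle will be step two: verifying rigorously that for general, non-simple $f\in L_{p}(L_{\infty}(M)\bar\otimes\mathcal{M})$ the principal-symbol identification and the resulting Dixmier-trace computation remain valid after tensoring with $\mathcal{M}$. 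This requires care with the density of $C^{\infty}(M)\otimes\mathcal{M}$, with the polar decomposition in a semifinite algebra, and with showing that the three-line lemma bounds are uniform in the $\mathcal{M}$-variable; once these technicalities are handled, the rest is a direct transcription of the scalar argument.
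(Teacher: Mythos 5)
Your Step~1 (elementary tensors $g\otimes m$, multiplicativity of $\tr=\tau\otimes\tau_{\mathcal M}$ across the tensor product, Corollary~\ref{rpscal} on the first factor) is fine, and you have correctly identified where the difficulty lies, but you have underestimated it: what you call a ``technicality'' is in fact a genuine missing ingredient, and the paper's proof is built precisely to go around it.

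The gap is in your Step~2. You want to prove the even-integer boundary case of the interpolation by invoking ``density and linearity'' from the elementary-tensor case together with ``Connes' trace theorem applied componentwise to the vector-valued pseudo-differential operator.'' Neither half works as stated. The functional $f\mapsto \lim_{q\to 1^{+}}(q-1)\|(d^{-1/(2p)}\otimes 1)M_{f}(d^{-1/(2p)}\otimes 1)\|_{pq}^{pq}$ is not linear (nor additive) in $f$, so density of $C^{\infty}(M)\otimes\mathcal{M}$ does not propagate the identity from elementary tensors to general $f$. And ``Connes' trace theorem componentwise'' presupposes a vector-valued pseudo-differential/principal-symbol calculus on $\mathbb{B}(L_{2}(M))\bar\otimes\mathcal{M}$ together with an $\mathcal{M}$-valued Wodzicki residue theorem; this is not in \cite{LSZ12} and is not a cosmetic extension of the scalar statement. (For $\mathcal M=\Mz_{k}$ you could expand into matrix entries and apply the scalar theorem entrywise, but for a general finite $\mathcal{M}$, and in particular for the positive element $|M_{f}|^{p/p_{0}}$ appearing on the $z=0$ boundary of your analytic family, no such reduction is available.) Thus your interpolation argument does not close.

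The paper's proof avoids this entirely, and the contrast is instructive. First $f$ is reduced to $L_{\infty}(M,\mathcal{M})$, then a $p_{0}>p$ is chosen so that $\|f\|_{p}$ and $\|f\|_{p_{0}}$ differ by at most $(1+\eps)$. The crucial move is the \emph{scalar factorization}
\[
f=(f_{1}\otimes 1)\,F\,(f_{2}\otimes 1),\qquad f_{1}(x)=f_{2}(x)=\tau_{\mathcal M}\bigl(|f(x)|^{p_{0}}\bigr)^{1/(2p_{0})},\qquad \|F\|_{L_{\infty}(M,L_{p_{0}}(\mathcal M))}\le 1,
\]
which comes from the Haagerup/Pisier description of the $L_{\infty}(L_{p_{0}})$-norm. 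After conjugation this gives
\[
(d^{-\frac{1}{2p}}\otimes 1)M_{f}(d^{-\frac{1}{2p}}\otimes 1)
=\bigl((d^{-\frac{1}{2p}}M_{f_{1}})\otimes 1\bigr)\,M_{F}\,\bigl((M_{f_{2}}d^{-\frac{1}{2p}})\otimes 1\bigr),
\]
and then a three-term H\"older inequality (Pisier \cite{pisier93}) together with the isometric inclusion $L_{\infty}(M,L_{q}(\mathcal M))\subset L_{\infty}(\mathbb{B}(L_{2}(M)),L_{q}(\mathcal M))$ from \eqref{inclusion-p} bounds everything by scalar quantities $\|d^{-1/(2p)}M_{f_{1}f_{1}^{*}}d^{-1/(2p)}\|_{pq}^{1/2}\|d^{-1/(2p)}M_{f_{2}^{*}f_{2}}d^{-1/(2p)}\|_{pq}^{1/2}$, to which the already-proved scalar Corollary~\ref{rpscal} applies. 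No vector-valued symbol calculus is ever needed; the only pseudo-differential computations happen for the scalar-valued functions $f_{1}f_{1}^{*}$ and $f_{2}f_{2}^{*}$. The lower bound is then obtained by duality as in \eqref{pscal}. If you want to rescue your interpolation route you would first have to develop (or locate) an $\mathcal{M}$-valued Connes trace theorem, which is a substantial independent project; the factorization argument is the way to avoid that.
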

\begin{proof}
It is sufficient to consider the vector valued function $f\in L_{\infty}(M, \mathcal{M})$. Indeed,  we may extend this result to all of $L_p(L_{\infty}(M)\bar{\ten} \Mm)$ using the Banach space $\prod_\U L_{p}(\mathbb{B}(L_2(M)) \bar{\ten}\Mm)$.

Now let $f\in L_{\infty}(M,\Mm)$, then $f\in L_{\infty}(L_{\infty}(M)\bar{\otimes} \mathcal{M})\subset{L_{p}(L_{\infty}(M)\bar{\otimes} \mathcal{M})}$.  For any $\eps>0$, there exists $p_{0}>p$ such
\begin{align} \label{con-p}
\|f\|_{L_p(L_{\infty}(M)\bar{\otimes} \mathcal{M})}\kl \|f\|_{L_{p_{0}}(L_{\infty}(M)\bar{\otimes} \mathcal{M})}\kl (1+\eps)\|f\|_{L_p(L_{\infty}(M)\bar{\otimes} \mathcal{M})}.
\end{align}
Thus there exist $f_1,f_2\in L_{2p_{0}}(M)$ and $F\in L_{\infty}(M,L_{p_{0}}(\Mm))$ such that
 $f= (f_1\ten 1)F(f_2\ten 1)$,
$\max\{\|f_1\|_{2p_{0}}\|f_2\|_{2p_{0}}\}\le \|f\|_{L_{p_{0}}(L_{\infty}(M)\bar{\otimes} \mathcal{M})}^{1/2}$, and $\|F\|_{p_{0}}\le 1$, where $$\|F\|_{p_{0}}=\int_{M} \tau_{\Mm}(  |F(x)|^{p_{0}} )^{1/{p_{0}}}dvol(x) .$$  Indeed,  the functions $$f_1(x)=f_2(x)= \tau_{\mathcal{M}}(|f(x)|^{p_{0}})^{\frac{1}{2p_{0}}}$$ will do  the job.
The inclusion result \eqref{inclusion-p} implies that
$M_{F}\in L_{\infty}(\BB(L_{2}(M)), L_{p_{0}}(\mathcal{M}))$.
 Since $p<p_{0}$, we apply Corollary \ref{rpscal} to $f_{1}f_{1}^{*}$ and $f_{2}f_{2}^{*}$ and continuity of $L_{p}$ spaces, then
 \begin{align*}
 \lim_{q\to 1^{+}}  \left((q-1) \|d^{-\frac{1}{2p}}M_{f_1f_1^*}d^{-\frac{1}{2p}}\|_{qp}^{qp}\right)^{1/2}  \lim_{q\to 1^{+}} \left((q-1) \|d^{-\frac{1}{2p}}M_{f_2^*f_2}d^{-\frac{1}{2p}}\|_{qp}^{qp}\right)^{1/2} \\
=  c(n) \|f_{1}f_{1}^{*}\|_{p}^{p/2} \|f_{2}f_{2}^{*}\|_{p}^{p/2} \leq c(n) \left( \|f_{1}f_{1}^{*}\|_{p_{0}}^{p} \|f_{2}f_{2}^{*}\|_{p_{0}}^{p} \right)^{1/2}
\leq  (1+\eps)^{p}c(n)||f||_{p}^{p}.
 \end{align*}

By \cite{pisier93} we find that
\begin{align*}
&\lim_{q\to 1^{+}} (q-1) \|(d^{-\frac{1}{2p}}\ten 1)M_{f}(d^{-\frac{1}{2p}}\ten 1)\|_{pq}^{pq}\\
 \leq &\limsup_{q\to 1^{+}} \: (q-1) \|d^{-\frac{1}{2p}}M_{f_1}\|_{2qp}^{pq}
\|M_{F}\|_{L_{\infty}(\BB(L_{2}(M)), L_{pq}(\mathcal{M}))}^{pq} \|M_{f_2}d^{-\frac{1}{2p}}\|_{2pq}^{pq}\\
 \leq &\limsup_{q\to 1^{+}}  \left((q-1) \|d^{-\frac{1}{2p}}M_{f_1f_1^*}d^{-\frac{1}{2p}}\|_{qp}^{qp}\right)^{1/2}
\limsup_{q\to 1^{+}} \left((q-1) \|d^{-\frac{1}{2p}}M_{f_2^*f_2}d^{-\frac{1}{2p}}\|_{qp}^{qp}\right)^{1/2}.
\end{align*}
Thus $$\lim_{q\to 1^{+}} (q-1) \|(d^{-\frac{1}{2p}}\ten 1)M_{f}(d^{-\frac{1}{2p}}\ten 1)\|_{pq}^{pq}\leq  (1+\eps)^{p}c(n)||f||_{p}^{p}.$$ Sending $\eps$ to $0$ yields the upper bound.  The same interpolation argument as in \ref{pscal}  also shows the lower bound by duality.
\qd

\begin{lemma}\label{lim}
Let  $a$,$b\in L_{p}(M, L_{p}(\mathcal{M}))$ for $p\in(1,2)$. Let $a$ be positive and $b$ be self-adjoint.
Define $A=(d^{-\frac{1}{2p}}\otimes 1)M_a (d^{-\frac{1}{2p}}\otimes 1)$ and  $B=(d^{-\frac{1}{2p}}\otimes 1)M_b(d^{-\frac{1}{2p}}\otimes 1)$. If there exists $C>0$ such that $-C a\leq b\leq C a$, then
\begin{align*}
\lim_{q\to 1^{+}} (q-1)\tr \left( BA^{pq-1} \right) \leq c(n)\tr_{\mathcal{M}}(ba^{p-1}).
\end{align*}
\end{lemma}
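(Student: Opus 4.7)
The plan is to realize $(q-1)\tr(BA^{pq-1})$ (up to the factor $pq$) as the right derivative at $t=0$ of the convex function
$$g_q(t) := (q-1)\,\tr\bigl((A+tB)^{pq}\bigr),$$
and then to combine Corollary \ref{vng} with the classical fact that pointwise limits of convex functions control one-sided derivatives. Since $a\geq 0$, one has $A\geq 0$, and the sandwich $-Ca \leq b \leq Ca$ transfers to $-CA \leq B \leq CA$, so $A + tB \geq 0$ for $t \in [0, 1/C]$. For $q \in (1, 2/p)$ the exponent $pq > 1$, the function $x\mapsto x^{pq}$ is convex on $[0,\infty)$, and joint convexity of $X \mapsto \tr(X^{pq})$ on positive operators makes $g_q$ convex on $[0, 1/C]$.

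First I would apply Corollary \ref{vng} to $f = a + tb \in L_p(M, L_p(\mathcal{M}))$ (noting $a + tb \geq 0$ on this interval), obtaining
$$\lim_{q \to 1^+} g_q(t) = c(n)\,\|a+tb\|_{L_p(L_\infty(M)\bar{\otimes}\mathcal{M})}^{p} = c(n)\int_M \tau_{\mathcal{M}}\bigl((a+tb)^p\bigr)\,d\mathrm{vol} =: g(t).$$
Standard differentiation of $L_p$-norms of positive elements gives $g'(0^+) = c(n)\,p\,\tr_{\mathcal{M}}(b a^{p-1})$, with $ba^{p-1}$ understood as $0$ wherever $a = 0$ (forced to coincide with $b = 0$ by the sandwich assumption). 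On the operator side, noncommutative functional calculus for the $C^1$ function $x\mapsto x^{pq}$ yields
$$g_q'(0) = (q-1)\,pq\,\tr\bigl(B A^{pq-1}\bigr).$$

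To conclude, I would invoke convexity of $g_q$: for each $t \in (0, 1/C]$,
$$g_q'(0) \leq \frac{g_q(t) - g_q(0)}{t}.$$
Taking $\limsup_{q \to 1^+}$ and using the pointwise convergence $g_q \to g$, then letting $t \to 0^+$, gives
$$\limsup_{q\to 1^+} g_q'(0) \leq g'(0^+) = c(n)\,p\,\tr_{\mathcal{M}}(ba^{p-1}).$$
Since $pq \to p$, dividing by $p$ delivers the claimed inequality. The main technical obstacle will be the rigorous derivation of the identity $g_q'(0) = (q-1)\,pq\,\tr(BA^{pq-1})$: because $A$ is merely compact (lying in the Dixmier ideal rather than in a Schatten class at $q=1$), one should justify it via a spectral cutoff $A \rightsquigarrow A\chi_{[\varepsilon,\infty)}(A)$ combined with the uniform estimate $(q-1)\|A+tB\|_{pq}^{pq} = O(1)$ provided by Corollary \ref{vng}, and pass $\varepsilon \to 0^+$. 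Finally, applying the same scheme to $-b$ in place of $b$ produces a matching lower bound, so in fact equality holds and the limit genuinely exists.
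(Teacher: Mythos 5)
Your proposal follows the paper's proof essentially step for step: both bound $pq\,\tr(BA^{pq-1})$ by the difference quotient of $t\mapsto\|A+tB\|_{pq}^{pq}$ using convexity, apply Corollary \ref{vng} to both endpoints, and then compute the right-derivative of $t\mapsto\|a+tb\|_p^p$ at $t=0$. The one step you relegate to ``standard differentiation of $L_p$-norms'' is precisely where the paper works hardest: since $p-1<1$ and $a$ is not assumed invertible, it writes $\|a+tb\|_p^p-\|a\|_p^p=t\int_0^1 p\,\tr_{\mathcal{M}}\bigl(b(a+stb)^{p-1}\bigr)\,ds$ and uses the sandwich $-Ca\leq b\leq Ca$ together with operator monotonicity of $x\mapsto x^{p-1}$ to show the remainder is $O(t)$ -- exactly the concern you flagged about $a$ possibly vanishing, but left unquantified. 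Your closing remark that applying the argument to $-b$ upgrades the inequality to an equality is correct and goes slightly beyond what the paper records.
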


\begin{proof}
 Let $t\geq 0$ and $tC\leq 1$, then $tb+a\leq 1$. Applying Corollary \ref{vng} to $a+tb$ and $a$ implies
 \begin{align} \label{a+tb}
 \lim_{q\to 1^{+}} (q-1) \|A+tB\|_{pq}^{pq} = c(n) \|a+tb\|_{L_{p}(L_{\infty}(M)\bar{\otimes} \Mm)}^{p},
 \end{align}
 and
 \begin{align}
  \lim_{q\to 1^{+}} (q-1) \|A\|_{pq}^{pq} = c(n) \|a\|_{L_{p}(L_{\infty}(M)\bar{\otimes} \Mm)}^{p}. \label{a}
  \end{align}
Noting that $\|\pl\|_{pq}^{pq}$ is convex for $q\geq 1$ small enough, we obtain
\begin{align*}
 pq \tr\left(BA^{pq-1}\right) \kl  \frac{\|A+tB\|_{pq}^{pq}-\|A\|_{pq}^{pq}}{t} \pl.
\end{align*}
Together with \eqref{a+tb} and \eqref{a}, we have
$$\lim_{q\to 1^{+}} (q-1) \tr(BA^{pq-1})\leq  \frac{c(n)\left(\|a+tb\|_{L_{p}(L_{\infty}(M)\bar{\otimes} \Mm)}^{p}-\|a\|_{L_{p}(L_{\infty}(M)\bar{\otimes} \Mm)}^{p} \right)}{tp}.$$
Using the differentiation formula for the $p$-norm, we observe that
 \begin{align*}
 &\|a+tb\|_{L_{p}(L_{\infty}(M)\bar{\otimes} \Mm)}^{p}-\|a\|_{L_{p}(L_{\infty}(M)\bar{\otimes} \Mm)}^{p}\\
  =&  t\int_0^1 p\tr_{\mathcal{M}}(b(a+stb)^{p-1}) ds  \\
  =& tp \tr_\Mm(ba^{p-1}) + tp\int_0^1 \tr_\Mm(b((a+stb)^{p-1}-a^{p-1})) ds \pl .
  \end{align*}
 Thus it suffices to show that
\begin{align}
\int_0^1 \tr_\Mm(b((a+stb)^{p-1}-a^{p-1})) ds=\mathcal{O}(t) \label{error-p}
\end{align}
 as $t\to 0$, then sending $t\to 0$ implies the assertion. Indeed, we decompose $b=b^{+}-b^{-}$ for positive $b^{+}$ and $b^{-}$.
 Using the the monotonicity of $x\mapsto x^{p-1}$ and $b\leq C a$ we deduce
 \begin{align*}
  \tr_\Mm(b^+a^{p-1})\le   \tr_\Mm(b^+(a+stb)^{p-1})\kl (1+st C )^{p-1} \tr_\Mm(b^+a^{p-1}).
  \end{align*}
The same argument applies for $b^-$ and hence
\begin{align*}
|\tr_\Mm\left(b((a+stb)^{p-1}-a^{p-1})\right)|\kl &\left( (1+stC)^{p-1}-1\right) \tr_{\mathcal{M}}\left( |b|a^{p-1}\right)\\ \kl&  (p-1)st C  \tr_\Mm(|b|a^{p-1}).
\end{align*}
Integrating the inequality above yields \eqref{error-p}.
\qd
\noindent Now we are ready to prove the main theorem.
\begin{proof} Let $1<p<2$ and $p<q<2$. Let $a:M\to \Mm$ be a smooth positive function and $a_{\epsilon}=a+\eps 1$ for $\epsilon >0$. Then $b=\L(a_{\epsilon})=\L(a)$  since $\L$ is self-adjoint.  Let $C=\eps^{-1}\|\L(a)\|_{L_{\infty}(L_{\infty}(M)\bar{\otimes} \Mm)}$, then $-Ca_{\epsilon}\leq b\leq Ca_{\epsilon}$. Let $A_{\epsilon}=(d^{-\frac{1}{2p}}\otimes 1)M_{a_{\epsilon}} (d^{-\frac{1}{2p}}\otimes 1)$ and  $B=(d^{-\frac{1}{2p}}\otimes 1)M_b(d^{-\frac{1}{2p}}\otimes 1)$. It follows from Lemma \ref{lim} that
 \begin{align} \label{Rm-1}
\lim_{q\to 1^{+}} (q-1) \tr (BA_{\epsilon}^{pq-1})
 \kl  c(n)\tr_\Mm( ba_{\epsilon}^{p-1}).
 \end{align}
 Noting $(\L\otimes id_{\mathcal{M}})(A_{\epsilon})=B$ since $\L(d^{-\al}xd^{-\beta})=d^{-\al}\L(x)d^{-\beta}$.
we have
 \begin{align*}
 &\|A_{\epsilon}\|_{pq}^{pq}-\|(d^{-\frac{1}{2p}}\otimes 1)M_{E(a_{\epsilon})}(d^{-\frac{1}{2p}}\otimes 1)\|_{pq}^{pq}\\ \leq &\frac{pq}{\CLSI^{+}(\BB(L_{2}(M)), \L)  } \tr_{\Mm}\left(BA_{\epsilon}^{pq-1}\right).
\end{align*}
Together with \eqref{Rm-1}, we obtain
\begin{align*}&\lim_{q\to 1^{+}}(q-1)\left( \|A_{\epsilon}\|_{pq}^{pq}-\|(d^{-\frac{1}{2p}}\otimes 1)M_{E(a_{\epsilon})}(d^{-\frac{1}{2p}}\otimes 1)\|_{pq}^{pq} \right) \\ \leq &\frac{pc(n)}{\CLSI^{+}(\BB(L_{2}(M)), \L) }\tr_\Mm( ba_{\epsilon}^{p-1}).\end{align*}
Applying Corollary \ref{vng}  to $(d^{-\frac{1}{2p}}\otimes 1)M_{a_{\epsilon}}(d^{-\frac{1}{2p}}\otimes 1)$  and $(d^{-\frac{1}{2p}}\otimes 1)M_{E(a_{\epsilon})}(d^{-\frac{1}{2p}}\otimes 1)$ implies
 \begin{align}\label{p-eps}
&\|a_{\epsilon}\|_{L_p(L_{\infty}(M)\bar{\otimes} \Mm)}^p-\|E(a_{\epsilon})\|_{L_p(L_{\infty}(M)\bar{\otimes} \Mm)}^p\\\leq &\frac{p}{   \CLSI^{+}(\BB(L_{2}(M)), \L)  }   tr_\Mm(\L(a)a_{\epsilon}^{p-1}) \pl .
   \end{align}
The left hand side is continuous in $\eps$. By functional calculus and the dominated convergence theorem for the sequence of functions $g_k(x)=(x+\frac{1}{k})^{p-1}$,  we deduce that
 \[ \lim_{k\to \infty} \tr_{\mathcal{M}}(b(a+\frac{1}{k})^{p-1})
 \lel \lim_k \int_{\rz} g_k(x) d\mu_b(x)
 \lel \tr_{\mathcal{M}}(ba^{p-1}) \pl ,\]
where we use the spectral measure $\int f(x)d\mu_b(x)=(b_1, f(x)b_2)$ given by a decomposition $b=b_1b_2^*$ with $b_{1},b_{2}\in L_2(\Mm)$. ($L_{2}(\Mm)$. By sending $\eps$ to $0$, then
 \[ \lim_{\eps\to 0} \tr_\Mm(\L(a)(a+\eps)^{p-1}) \lel \tr_\Mm(\L(a)a^{p-1})\pl .\]
Thus \eqref{p-eps} does indeed imply
 \[ \CLSI^{+}(\BB(L_2(M),\L) \kl \CpSI(L_{\infty}(M),\L) \pl .\]
Taking the infimum over $p>1$, then yields  the assertion.\qd
\subsection{Collective Lindbladian} Let $\mathcal{X}=\{X_{1},\dots, X_{d}\}$ be a set of self-adjoint matrices in $\Mz_{n}$, hence hence $i\mathcal{X}\subset \mathfrak{su}(n)$ is a subset of the Lie algebra of $SU(n)$.  Let us first consider a general Lindbladian
 \[ L_{\mathcal{X}} \lel \sum_{k=1}^d [X_k,[X_k,\pl]]\pl .\]
 This also allows us to define the right invariant differential operators $X_k(f)(g)=\frac{d}{dt}f(e^{itX_k}g)$ with sub-Laplacian
 \[ \Delta_{\mathcal{X}} \lel -\sum_k X_k^2 \pl .\]
In some cases this differential operator is only ergodic for a Lie-subgroup $G\subset SU(n)$, and then we add the relevant group $\Delta_{\mathcal{X},G}$ in the notation. Let us fix the notation $\Hs=\ell_2^n$. For a arbitrary  representation $u:G\to U(\Hs)$, we may then define new Lie-derivatives  $X_k^{\Hs}=\frac{d}{idt}u(e^{itX_k})|_{t=0}$ and obtain the  individual Lindbladian
 \[ L_{X_k}^{\Hs}(\rho) \lel [X_k,[X_k,\rho]] \]
and the transferred Lindbladian
  \[ L_{\mathcal{X}}^{\Hs} \lel \sum_k L_{X_k}^{\Hs} \pl .\]
We deduce from the Peter-Weyl theorem \cite{Dieck} that all finite dimensional irreducible representations $u $ are contained in $\Ks={\Hs}^{\ten_{m_1}}\ten \bar{\Hs}^{\ten_{m_2}}$ for some $m_{1}$ and $m_{2}$. Let us determine the corresponding Lindbladian by differentiation.
Then we consider $u_m(g)=u(g)^{\ten_m}$ and deduce that
 \[  X_k^{(m)} \lel \frac{d}{idt} u(e^{itX_k})^{\ten_m}|_{t=0}
 \lel \sum_{j=1}^m \pi_j(X_k) \pl ,\]
 where $\pi_{j}(a)=1^{\ten(j-1)}\ten a\ten 1^{\ten (n-j)}$.
The adjoint representation is given by $\bar{u}(g)=u(g^{-1})^{\trans}\lel \overline{u(g)}$. This means  that
 \begin{align*}
  X_k^{\bar{\Hs}^{\ten_m}}&=  \frac{d}{idt}\overline{u(e^{-itX_k})}^{\ten_m}|_{t=0}
  \lel \frac{1}{i} \sum_{j=1}^m \overline{\pi_j(-iX_k)} \\
 &= \sum_{j=1}^m \pi_j(\overline{X_k})
 \lel \sum_{j=1}^m \pi_j((X_k^{\trans})^*)
 \lel \sum_{j=1}^m \pi_j(X_k^{\trans}) \pl .
 \end{align*}
Let us therefore define
 \[ \bar{L}_{\mathcal{X}}^{\Hs}\lel \sum_k [X_k^{\trans},[X_k^{\trans},\cdot ]] \pl .\]

Let us now introduce the diagonal representation $\hat{u}(g)=\kla \begin{array}{cc} u(g)& 0 \\ 0&\bar{u}(g)\end{array}\mer$ on $\Hs\oplus \bar{\Hs}$ and $\hat{X}_k={\rm diag}(X_k,X_k^{\trans})$. 
The combined collective Lindbladians are given by
$$\hat{L}^{m}_{X_{k}}=\sum_{j=1}^{m} [\pi_{j}(\hat{X}_{k}),[\pi_{j}(\hat{X}_{k}),\cdot]]$$
The corresponding generator for the system is denoted by $\hat{L}_{\mathcal{X}}^m=\sum_{k=1}^{d}\hat{L}^{m}_{X_{k}}$.
\begin{rem}{\rm  $\CLSI(L_{\mathcal{X}}^{\bar{\Hs}})=\CLSI(L_{\mathcal{X}}^{\Hs})$. The same holds for $\CLSI^+$.}\end{rem}

\begin{lemma}\label{ct} 
$\CLSI^+(\Delta_{\mathcal{X}})\lel \inf_{m}
  \CLSI^+(\hat{L}_{\mathcal{X}}^{m}) \pl .$
\end{lemma}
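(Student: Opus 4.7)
The plan is to prove the claimed equality via two matching inequalities. The direction $\CLSI^+(\Delta_{\mathcal{X}}) \le \inf_m \CLSI^+(\hat{L}_{\mathcal{X}}^m)$ is immediate from the transference principle (the $\CLSI^+$ version of Theorem 5.1): for each fixed $m$, $\hat{L}_{\mathcal{X}}^m$ is the Lindbladian transferred from $\Delta_{\mathcal{X}}$ through the unitary representation $\hat{u}^{\otimes m}$ of $G$ on $(\Hs\oplus\bar{\Hs})^{\otimes m}$, so $\CLSI^+(\hat{L}_{\mathcal{X}}^m)\ge \CLSI^+(\Delta_{\mathcal{X}})$ and one takes the infimum over $m$.

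The nontrivial direction $\CLSI^+(\Delta_{\mathcal{X}})\ge \inf_m \CLSI^+(\hat{L}_{\mathcal{X}}^m)$ is where anti-transference is invoked. The strategy is to lift $\Delta_{\mathcal{X}}$ to a quantum semigroup on $\mathbb{B}(L_2(G))$ and then apply Theorem~\ref{Rm}. Concretely, let $\hat{X}_k$ denote the skew-adjoint generator of the one-parameter unitary group $\lambda(\exp(tX_k))$ coming from the left regular representation of $G$ on $L_2(G)$, and define
\[
\L \;=\; \sum_{k=1}^{d} [i\hat{X}_k,[i\hat{X}_k,\,\cdot\,]] \colon \mathbb{B}(L_2(G)) \to \mathbb{B}(L_2(G)).
\]
This is a self-adjoint, completely positive, trace preserving semigroup generator. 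A direct computation using $[i\hat{X}_k,M_f]=M_{iX_k f}$ for $f\in C^\infty(G)$ shows that $\L$ restricts to $\Delta_{\mathcal{X}}$ on $L_\infty(G)\subset \mathbb{B}(L_2(G))$ and preserves $C^\infty(G)$ with dense domain. Since $G$ is compact, we may fix a bi-invariant Riemannian metric, in which case the associated Laplace--Beltrami operator $\Delta_{\rm LB}$ is bi-invariant and hence commutes with every right-invariant $\hat{X}_k$. Consequently $d:=(1+\Delta_{\rm LB})^{(\dim G)/2}$ commutes with each $\hat X_k$, which immediately yields $\L(d^{-\alpha}a d^{-\beta})=d^{-\alpha}\L(a) d^{-\beta}$ for any $0$-th order pseudo-differential $a$, verifying hypothesis (4) of Theorem~\ref{Rm}.

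To bound $\CLSI^+(\L)$ from below by $\inf_m \CLSI^+(\hat{L}_{\mathcal{X}}^m)$, the plan is to invoke the Peter--Weyl decomposition $L_2(G)\cong \bigoplus_{\pi} V_{\pi}\otimes V_{\pi}^{*}$ indexed by irreducible unitary representations $\pi$ of $G$. Under this isomorphism, each $\hat X_k$ acts as $\bigoplus_{\pi}\pi(X_k)\otimes 1$, so $\L$ is block-diagonal, with the block over $\pi$ conjugate to $L_{\mathcal{X}}^{V_\pi}\otimes\mathrm{id}_{\mathbb{B}(V_\pi^*)}$. By tensor stability, the $\CLSI^+$ of this block equals $\CLSI^+(L_{\mathcal{X}}^{V_\pi})$, and by stability under direct sums (the defining inequality must hold on every irreducible component, hence the worst component dominates), $\CLSI^+(\L)=\inf_\pi \CLSI^+(L_{\mathcal{X}}^{V_\pi})$. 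Since $u$ is faithful on $G\subset SU(n)$ and $\hat u=u\oplus \bar u$, standard representation theory (Peter--Weyl/Weyl) shows that every irreducible $\pi$ embeds as a subrepresentation of $\hat{\Hs}^{\otimes m}$ for some $m$; restricting $\hat{L}_{\mathcal{X}}^m$ to the invariant subalgebra $\mathbb{B}(V_\pi)\subset \mathbb{B}(\hat{\Hs}^{\otimes m})$ can only enlarge $\CLSI^+$, giving $\CLSI^+(L_{\mathcal{X}}^{V_\pi})\ge \CLSI^+(\hat{L}_{\mathcal{X}}^m) \ge \inf_m \CLSI^+(\hat{L}_{\mathcal{X}}^m)$.

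Assembling these pieces, Theorem~\ref{Rm} yields
\[
\CLSI^+(\Delta_{\mathcal{X}}) \;=\; \CLSI^+(L_\infty(G),\L)\; \ge \; \CLSI^+(\mathbb{B}(L_2(G)),\L)\; \ge \; \inf_m \CLSI^+(\hat{L}_{\mathcal{X}}^m),
\]
which together with the transference inequality above proves the claimed equality. The main obstacle is the bookkeeping around the Peter--Weyl decomposition: one must justify rigorously that $\L$ really splits block-diagonally, including its action on off-diagonal entries of operators in $\mathbb{B}(L_2(G))$, so that the $\CLSI^+$ of $\L$ is an infimum over the $\CLSI^+(L_{\mathcal{X}}^{V_\pi})$, and one must verify the ``restriction to invariant subalgebra'' step for the enhanced $p$-Sobolev family uniformly in $p$. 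A secondary technical point is checking that $\L$ belongs to the pseudo-differential class required by Theorem~\ref{Rm}, which on a compact Lie group follows from the fact that the $X_k$ are smooth right-invariant first-order differential operators and the resulting $\L$ has principal symbol expressible in terms of the symbols of the $iX_k$.
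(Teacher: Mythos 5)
Your proposal follows the paper's overall architecture — transference for the easy inequality, Theorem~\ref{Rm} to move to $\mathbb{B}(L_2(G))$, then Peter--Weyl plus tensor embeddings — but the central reduction step is wrong as stated, and you yourself flag it as "the main obstacle." You claim that $\CLSI^+(\L) = \inf_\pi \CLSI^+(L_{\mathcal{X}}^{V_\pi})$ by "stability under direct sums." But $\mathbb{B}(L_2(G))$ is a type $\mathrm{I}_\infty$ factor, not a direct sum of the blocks $\mathbb{B}(V_\pi\otimes V_\pi^*)$: the Peter--Weyl decomposition of the Hilbert space does not decompose $\mathbb{B}(L_2(G))$, which contains all the off-diagonal corners $\mathbb{B}(V_\pi\otimes V_\pi^*,\, V_{\pi'}\otimes V_{\pi'}^*)$. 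Both the relative entropy $D_{\fix}(\rho)$ and the Fisher information $I_{\L}(\rho)$ genuinely depend on these off-diagonal blocks, and projecting onto the block-diagonal part (pinching) decreases entropy, so there is no elementary reason the worst-block constant should control the full $\CLSI^+$. Thus the claimed identity is unjustified, and the remaining steps cannot be assembled.

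The paper sidesteps this precisely by never reducing to a single irreducible. Instead it reduces to the \emph{corner} $\mathbb{B}(\Hs_F)$ for a finite set $F$ of irreducibles (which is legitimate because the projection onto $\oplus_{\pi\in F}\left(\Hs_\pi\otimes\bar{\Hs}_\pi\right)$ commutes with the generators, so the corner is an invariant hereditary subsystem, and the corner approximates $\mathbb{B}(L_2(G))$ via completeness). Crucially $\mathbb{B}(\Hs_F)$ still contains all the off-diagonal blocks among the $\pi\in F$. It then chooses a single $m(F)$ so that the \emph{entire} space $\Hs_F$ embeds as an invariant subspace of $(\Hs\oplus\bar{\Hs})^{\otimes m(F)}$; the corner $\mathbb{B}(\Hs_F)\subset\mathbb{B}\bigl((\Hs\oplus\bar{\Hs})^{\otimes m(F)}\bigr)$ is then preserved by $\hat{L}_{\mathcal{X}}^{m(F)}$, which restricts to $L_{\mathcal{X}}^{\Hs_F}$, and restriction to a corner can only increase the constant. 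This embeds the whole block matrix (diagonal and off-diagonal parts at once) into a single transferred Lindbladian, which is the key maneuver your proof is missing. If you want to repair your argument, replace the per-$\pi$ decomposition with this finite-sum corner embedding.
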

\begin{proof} In view of Theorem \ref{Rm} it suffices to control $L_\mathcal{X}^{L_2(G)}$.
By Peter-Weyl theorem
   \[ L_2(G) \lel \oplus_{\pi} \left(\Hs_{\pi}\ten \bar{\Hs}_{\pi}\right) \]
is given by the sum of equivalence classes of irreducible representations. Here we use left regular representation $\lambda: G\to U(L_{2}(G))$ given by $\lambda_{g}(f)(h)=f(g^{-1}h)$, which corresponds to $(u_{\pi}(g)\otimes 1)_{\pi}$. Since $\la(g)$ commutes with the spectral projections onto $\oplus_{\pi\in F}\left( \Hs_{\pi}\ten \bar{\Hs}_{\pi}\right)$ for any finite set $F$,  it suffices to consider
$$\Hs_{F}=\oplus_{\pi\in F}\Hs_{\pi}$$
via the completeness. Again thanks to the Peter-Weyl theorem we can find $m_{1}(\pi)$ and $m_{2}(\pi)$ such that 
$$\Hs_{\pi}\subset \Hs^{\otimes m_{1}(\pi)} \otimes \bar{\Hs}^{\ten m_{2}(\pi)} .$$
For any $F$, there exists a large integer $m(F)$ such that 
$$\Hs_{F}\subset (\Hs\oplus \bar{\Hs})^{\ten m(F)}.$$
Indeed $m(F)=\max_{\pi\in F}\{m_{1}(\pi)\}+\max_{\pi\in F}\{m_{2}(\pi)\}.$ Using the distributive law, we observe that 
$$(\Hs\oplus \bar{\Hs})^{m(F)}\lel \oplus_{A\subset \{1,\dots, m(F)\} }\left( \Hs^{A}\otimes \bar{\Hs}^{A^{c}}\right),$$
where  $\Hs^{A}$ stands for tensor in $\Hs$ at the position given by the set $A$. 
This shows that 
$$\CLSI^{+}(L_{\mathcal{X}}^{\Hs_{F}})\geq \CLSI^{+}(\hat{L}^{m(F)}_{\mathcal{X}}) .$$
By taking the infimum over $m$ and the infimum over $F$, we obtain a lower bound for $L_{\mathcal{X}}^{L_2(G)}$.
  \qd

Let us point out that for a tensor product $\Hs^{\ten_m}$, the induced Lindbladian in general does not coincide with the tensor product  Lindbladian
 \[ L_{X}^{m}(\rho)
  \lel [X^{m},[X^{m},\rho]]
  \lel \sum_{j,k} [\pi_j(X),[\pi_k(X),\rho]]
  \neq
  \sum_j [\pi_j(X),[\pi_j(X),\rho]]  \pl. \]
\section{Connected Graphs}
In this section, we study $\CLSI$ constants and stability properties of connected graphs.
Let ${\sf{G}}=(\mathscr{V},\mathscr{E},\mu,w)$  be a connected graph with $\mathscr{V}=\left(v_{1},v_{2},\dots. v_{n}\right)$, where $\mu: \mathscr{V}\to (0,1)$ is a probability measure and $w$ is a symmetric weight function over the edges. Let $$\mathscr{V}(D)=(v_{1}, \dots, v_{1}, v_{2}, \dots, v_{n}, \dots, v_{n} )$$ be an ordered set with  $\rm{degree}(v_{i})$ copies of $v_{i}$'s. We define the derivation $$\delta: L_{\infty}(\mathscr{V})\to L_{\infty}(\mathscr{V}(D)), f\mapsto (\delta(f)(v_{1}), \dots, \delta(f)(v_{n})),$$ 
 $$\text{where} \quad \delta(f)(v_{r})=(\sqrt{w_{s_{1}r}}(f(v_{s_1})-f(v_{r})), \dots,\sqrt{w_{s_{k}r}}(f(v_{s_k})-f(v_{r})))_{s_{1}<\dots<s_{k} ;(v_{s_j}, v_{r})\in\mathscr{E}\text{ and } 1\leq j\leq k}.$$
We define the left and right representations  $\pi_{1,2}:L_{\infty}(\mathscr{V})\to L_{\infty}(\mathscr{V}(D))$ by   $$\pi_{1,2}: , \pi_{1,2}(f)\mapsto \left( \pi_{1,2}(f)(v_{1}), \dots, \pi_{1,2}(f)(v_{n})\right),$$ 
where 
\begin{align*}  \quad \pi_{1}(f)(v_{r})&=\left( f(v_{s_1}), \dots, f(v_{s_k})  \right)_{s_{1}<\dots<s_{k} ;(v_{s_j}, v_{r})\in\mathscr{E}\text{ and } 1\leq j\leq k}
\end{align*}
and
\begin{align*} \quad \pi_{2}(f)(v_{r})&=\left( f(v_{r}), \dots, f(v_{r}) \right)_{\text{length}=\text{degree}(v_{r})}.
\end{align*}
Thus $\delta$ satisfies the Leibniz rule 
\begin{align*}
\delta(fg)=\pi_{1}(f)\cdot \delta(g)+\delta(f) \cdot \pi_{2}(g),
\end{align*}  
where $\cdot$ is entry-wise multiplication.
The \textit{Fisher information} $I_{\delta,w}$ of $f$ is defined by 
$$I_{\delta, w}^{\mu}(f)=\sum_{x\in \mathscr{V}} I_{x}(f)\mu(x),$$ where $I_{x}(f)$ is the  \textit{pointwise Fisher information} at $x$ defined by 
\begin{align} \label{pt-i}
I_{x}(f) 
=\sum_{y}\tau\left(w_{yx}(f(y)-f(x))(\ln(f(y))-\ln(f(x))\right).
\end{align}
We may further decompose $I_{x}(f)$ as $I_{x}(f)=\sum I_{y,x}(f)$ with the \textit{edge Fisher information} $I_{y,x}$ defined by 
\begin{align} \label{eg-i}
I_{y,x}(f)=w_{yx}\tau\left( (f(y)-f(x))(\ln(f(y))-\ln(f(x)))  \right).
\end{align}


We use $\delta(f)$ and  $I_{\delta}(f)$ if the weight and the measure are clear from the context. The connected graph ${\sf{G}}$ constitutes a concrete example of derivation triple. Indeed, let $\mathcal{N}$ be $L_{\infty}(\mathscr{V},\mu)$ and $\mathcal{M}$ be bounded sections of the discrete Clifford bundle. We are particularly interested in regular-weighted \rm{(}$w_{xy}=1, \forall (x,y)\in \mathscr{E}$\rm{)} graphs with a uniform distribution over the vertices, denoted by ${\sf{G}}=(\mathscr{V},\mathscr{E})$. Let ${\sf{G}}=(\mathscr{V}^{\circ}, \mathscr{E}^{\circ},\mu, w)$ denote a cyclic graph $(\mathscr{V}^{\circ}, \mathscr{E}^{\circ})$ with a probability measure $\mu$ and a weight function $w$, and ${\sf{G}}^{\circ}=(\mathscr{V}^{\circ}, \mathscr{E}^{\circ})$ denote a regular-weighted cyclic graph $(\mathscr{V}^{\circ}, \mathscr{E}^{\circ})$ with a uniform distribution. 
\begin{lemma}\label{hook-pre} Let $g(t)=(1-t)\rho+t\sigma$ for $t\in[0,1]$, then $$\int_{0}^{1}\tau \left((\rho-\sigma) Q^{g(t)} (\rho-\sigma) \right)dt=\tau \big( (\rho-\sigma) (\ln(\rho)-\ln(\sigma)) \big).$$ 
\end{lemma}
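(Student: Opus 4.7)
The plan is to apply the fundamental theorem of calculus to the function $t\mapsto \tau((\rho-\sigma)\ln(g(t)))$, with the derivative of $\ln(g(t))$ being computed via the double operator integral identity already recorded in the paper.

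First, I would recall from the discussion preceding Example \ref{CM} the identity
\[
\lim_{s\to 0}\frac{f(\rho+s\sigma)-f(\rho)}{s}\lel Q^{\rho}_{f^{[1]}}(\sigma),
\]
specialized to $f=\ln$, which gives $Q^{\rho}_{\ln^{[1]}}=Q^{\rho}$. Applying this to the affine path $g(t)=(1-t)\rho+t\sigma$, whose derivative is the constant $g'(t)=\sigma-\rho$, I obtain
\[
\frac{d}{dt}\ln(g(t))\lel Q^{g(t)}(\sigma-\rho)\lel -Q^{g(t)}(\rho-\sigma).
\]

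Second, since $\rho-\sigma$ does not depend on $t$, pairing against it and taking the trace yields
\[
\frac{d}{dt}\tau\big((\rho-\sigma)\ln(g(t))\big)\lel -\tau\big((\rho-\sigma)Q^{g(t)}(\rho-\sigma)\big).
\]
Integrating both sides over $t\in[0,1]$ and using $g(0)=\rho$, $g(1)=\sigma$, the fundamental theorem of calculus gives
\[
\tau\big((\rho-\sigma)(\ln(\sigma)-\ln(\rho))\big)\lel -\int_{0}^{1}\tau\big((\rho-\sigma)Q^{g(t)}(\rho-\sigma)\big)\,dt,
\]
which is exactly the claimed identity after flipping the sign on the left.

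There is no essential obstacle here; the only mild care needed is to justify the differentiation under the trace and to ensure that $g(t)$ remains positive (hence in the domain of $Q^{g(t)}$) on $[0,1]$, which holds since $\rho,\sigma\in\mathcal{N}_{+}$ and the set $\mathcal{N}_{+}$ is convex. If $\rho$ or $\sigma$ fails to be invertible, I would argue by the perturbation $\rho+\eps\mathbf{1}$, $\sigma+\eps\mathbf{1}$ and send $\eps\to 0^{+}$, exactly as the paper does elsewhere when invoking functional calculus of $\ln$.
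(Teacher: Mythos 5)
Your proof is correct and is essentially the same argument the paper gives: both rely on the functional-calculus identity $\frac{d}{dt}\ln(g(t))=Q^{g(t)}(g'(t))$ along the affine path and then recover the right-hand side as a boundary term. The paper phrases this as an integration by parts whose $g''$ term vanishes; since $g'$ is constant that is the fundamental theorem of calculus in disguise, which is exactly what you wrote. Your extra remarks about positivity of $g(t)$ and the $\eps$-perturbation are a harmless (indeed welcome) added precision.
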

\begin{proof} Noting $g'(t)=\sigma-\rho$, we obtain that 
\begin{align*}
&\int_{0}^{1}\tau \big((\rho-\sigma) Q^{g(t)} (\rho-\sigma) \big )dt\\
\lel& \int_{0}^{1} \tau \big(   g'(t) Q^{g(t)} (g'(t)) \big) dt\lel \int_{0}^{1} \tau \big( g'(t) \left(\ln(g(t))\right)'   \big)dt
\end{align*}
By integration by parts, then 
\begin{align*}
\int_{0}^{1} \tau \big( g'(t) \left(\ln(g(t))\right)'   \big)dt
\lel& -\int_{0}^{1} \tau\big(g''(t)\ln(g(t))\big)dt+ \tau \big( g'(t)\ln(g(t))\big) |_{0}^{1}\\
\lel&\tau\big( (\rho-\sigma) (\ln(\rho)-\ln(\sigma)) \big).
\end{align*}
\end{proof}
\noindent We can give concrete estimates of $\CLSI({\sf{G}}^{\circ})$ and $\CLSI^{+}({\sf{G}}^\circ)$.
\begin{lemma} \label{hook2} Let ${\sf{G}}^{\circ}=(\mathscr{V}^{\circ},\mathscr{E}^{\circ})$ be a regular-weighted cyclic graph with a uniform  distribution over the vertices and $\mathscr{V}^{\circ}=\{1,2,\dots,n\}$, then we have $$\CLSI({\sf{G}}^{\circ})\geq\CLSI^{+}({\sf{G}}^{\circ})\geq \frac{16}{45n^2}.$$
\end{lemma}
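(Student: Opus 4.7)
The first inequality $\CLSI({\sf{G}}^\circ) \geq \CLSI^+({\sf{G}}^\circ)$ is immediate from the definition of $\CLSI^+$ as the infimum over $p>1$ of $\CpSI$, which is a stronger condition than $\CLSI$. For the main bound $\CLSI^+({\sf{G}}^\circ) \geq \frac{16}{45 n^2}$, the plan is to exploit the cyclic symmetry of $\zz_n$ by decomposing it into four contiguous arcs of length approximately $n/4$ and applying the general connected-graph estimate of the previous theorem to each arc. Note that a direct application of that theorem to ${\sf{G}}^\circ$ itself uses the natural path spanning tree of $l(T_s) = n-1$ edges and $d(T_s) = 2$, and only yields $\CLSI^+ \geq \frac{1}{45(n-1)^2}$, which is short of the target by roughly a factor of $16$; the cyclic structure is what allows this factor to be recovered.

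First, I would partition $\mathscr{V}^\circ = \{1,\dots,n\}$ into four contiguous arcs $A_1, A_2, A_3, A_4$, each of size $\lceil n/4 \rceil$. Each arc, viewed as an induced subgraph, is a path and admits a spanning tree with $l(T_s) \approx n/4$ and $d(T_s) = 2$. The general graph bound therefore gives an arc-wise estimate of $\frac{2}{45 \cdot 2 \cdot (n/4)^2} = \frac{16}{45 n^2}$ for each arc, matching the desired constant exactly.

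Second, I would combine these four arc-wise estimates into a single global bound on ${\sf{G}}^\circ$. The edge decomposition $I_{\delta}(f) = \sum_{(x,y) \in \mathscr{E}^\circ} I_{y,x}(f)$ from (\ref{eg-i}) allows the Fisher information of ${\sf G}^\circ$ to dominate the sum of Fisher informations of the four arcs, and the relative entropy on $\zz_n$ is controlled by the sum of the arc-wise relative entropies via the $p$-relative entropy decomposition along commuting conditional expectations (Lemma \ref{iter-p}). Cyclic symmetry ensures every edge of $\mathscr{E}^\circ$ is covered by the decomposition, including the boundary edges joining adjacent arcs.

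The main obstacle is the combination step. The four conditional expectations onto the individual arcs do not pairwise commute, so Lemma \ref{iter-p} cannot be applied to all four at once. The natural remedy is to iterate a binary partition, splitting the cycle first into two half-cycles and then each half into two quarter-arcs, producing a nested filtration to which Lemma \ref{iter-p} applies in two successive stages. A careful accounting of the boundary edges between adjacent arcs, together with the rotational symmetry of ${\sf G}^\circ$, must then be used to show that the constant $\frac{16}{45}$ is preserved without any further $n$-dependent loss through this nesting.
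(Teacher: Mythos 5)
Your proposal has a fundamental circularity problem. The general connected-graph bound $\CLSI^+({\sf G})\geq \frac{2}{45\,d\,l^2}$ (Corollary \ref{cyclic-clis}) that you invoke for each arc is itself \emph{derived from} Lemma \ref{hook2}: the proof of the connected-graph theorem reduces to a spanning tree, covers the tree by a cyclic graph of $2l$ vertices via preorder traversal (Lemma \ref{change-g}), and then applies Lemma \ref{hook2} to that cyclic graph. So the arc-wise estimates you propose to start from already presuppose the statement you are trying to prove.

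The paper's actual argument goes in an entirely different direction: it is a continuous-to-discrete transference. Each matrix-valued function $f:\{1,\dots,n\}\to\mathcal{M}$ is extended to a piecewise-linear $F:(0,1)\to\mathcal{M}$ that is constant equal to $f(k)$ on the middle third of the $k$-th subinterval $I_k$ and linearly interpolates between $f(k)$ and $f(k\pm 1)$ on the outer thirds. The relative entropy of $f$ is dominated by $3\,D(F\|\xi)$ (the factor $3$ coming from the third-subdivision and the non-negativity of the Lindblad relative entropy on the outer thirds), while the edge Fisher information $I_{k+1,k}(f)$ is identified, via Lemma \ref{hook-pre}, with $\frac{2}{3n}$ times the continuous Fisher information of $F$ on the two adjacent outer thirds; summing gives $\int_{I'}\tau(F'Q^{F}F')\,dt=\frac{3n^2}{4}I_\delta(f)$. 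The bound $\CLSI^+((0,1),\tfrac{d}{dx})\geq\frac{4}{5}$ from Example \ref{unit} (obtained from the complete Bakry--\'Emery theorem on $\mathbb{R}$ and a change of measure) then closes the chain: $3\cdot\frac{5}{4}\cdot\frac{3n^2}{4}=\frac{45n^2}{16}$.

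Even setting circularity aside, your combination step has a gap you did not resolve. The four conditional expectations averaging over disjoint arcs compose to the expectation onto the four-dimensional algebra of arc-averages, not onto constants, so closing the filtration down to $\mathcal{N}_{\fix}=\mathbb{C}1$ requires a further stage that does not come for free. Moreover, the edge Fisher information of each quarter-arc only involves edges interior to that arc; the four ``boundary'' edges joining consecutive arcs enter the Fisher information of the full cycle but contribute to none of your arc-wise bounds, so you cannot close the second-stage reduction without them, and once you try to include them you lose the factor $16$ you were counting on. Your suggested ``nested binary partition with careful boundary accounting'' is exactly the point where the argument has to produce new content, and you have not shown that it does.
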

\begin{proof} 
Let $I_{1}=(0,\frac{1}{n})$ and $I_{k}=[\frac{k-1}{n}, \frac{k}{n})$ for $k>1$. We further divide $I_{k}=I_{k}^{1}\cup I_{k}^{2}\cup I_{k}^{3}$ into three intervals of equal length $\frac{1}{3n}$. Let $f: \{1,2,\dots, n\}\to \mathcal{M}$ be a function with values in a finite von Neumann algebra $\mathcal{M}$ such that $c\leq f(k) \leq c^{-1}$ for some $c>0$. 
Let $$f(n+1)=f(1) \quad \text{and} \quad f(0)=f(n),$$ then we may define a function
$$
  F(t) =
  \begin{cases}
\frac{3nt}{2} (f(k)-f(k-1))+\frac{3k-2}{2}f(k-1)-\frac{3k-4}{2}f(k),& \text{for } t\in I_{k}^{1};\\
                                   f(k), & \text{for }  t\in I_{k}^{2}; \\
\frac{3nt}{2}(f(k+1)-f(k))+\frac{3k+1}{2}f(k)-\frac{3k-1}{2}f(k+1),& \text{for } t\in  I_{k}^{3}.
  \end{cases}
$$
We  conclude that $c\leq F(t)\leq c^{-1}$ and $E_{\fix}(f)=E_{\fix}(F)$. (Note $F$ is not differentiable, then we consider the convolution $F*g_{m}$ for the dilation $g_{m}(x)=mg(mx) $ with support in $[-\frac{1}{m},\frac{1}{m}]$.)  Let $\xi=E_{\fix}(f)$ and $I=\cup_{k}I_{k}^{2}$, then
\begin{align*}
D_{\Lin}(f\|\xi)
= 3\int_{I}[\tau \left( F(t) \ln F(t)\right)-\tau(F(t)\ln \xi)+\tau(\xi)-\tau(F(t))] dt.
\end{align*}
By the nonnegativity of the Lindblad relative entropy we obtain that
\begin{align*}
D_{\Lin}(f\|\xi)
\leq 
 3D_{\Lin}(F\|\xi).
\end{align*}
Now we discuss the Fisher information term 
$$I_{\tilde{\delta}}(F)=\int_{0}^{1} \tau\left( F'(t) Q^{F(t)}F'(t) \right)dt,$$
where $\tilde{\delta}$ is the ordinary derivative.
Indeed, we may consider $I_{\tilde{\delta}}(F*g_{m})$ which is well-defined and vanishes on $I^{2}_{k}$. Assume that  the double operator integral is uniformly bounded, and so is $(F*g_{m})'(t)$. This implies that 
$$\lim_{m}I_{\tilde{\delta}}(F*g_{m})=\int_{I'} \tau\left( F'(t) Q^{F(t)}F'(t) \right)dt,$$
where $I'=(0,1)\backslash I.$ 
Without loss of generality, let us consider $I_{1}^{3}\cup I_{2}^{1}=[\frac{2}{3n},\frac{4}{3n})$. Let $f(1)=\rho$ and $f(2)=\sigma$ and define $$a(s)=(1-s)\rho+s\sigma.$$ Then $F(t)=a(s)$ with the substitution $s=\frac{3n}{2}(t-\frac{2}{3n})$ for $t\in[\frac{2}{3n},\frac{4}{3n})$.  Thus
\begin{align*}\int_{\frac{2}{3n}}^{\frac{4}{3n}} \tau\left( F'(t) Q^{F(t)}F'(t) \right)dt \lel \frac{3n}{2}\int_{0}^{1} \tau \left((\rho-\sigma) Q^{a(s)}(\rho-\sigma) \right) ds.
\end{align*}
Applying Lemma \ref{hook-pre}, we have
\begin{align*}\int_{0}^{1} \tau \left((\rho-\sigma) Q^{a(s)}(\rho-\sigma) \right) ds \lel  I_{2,1}(f).
\end{align*}
Recall that $$I_{\delta}(f)=\frac{1}{n} \sum_{j=1}^{n}I_{j}(f)=\frac{1}{n}\sum_{j=1}^{n} \left(I_{j+1,j}(f)+I_{j-1,j}(f)\right).$$
Summing over all these intervals, we obtain $$\int_{I'} \tau\left( F'(t) Q^{F(t)}F'(t) \right)dt\lel \frac{3n^{2}}{4} I_{\delta}(f).$$ 
Together with Example \ref{unit}, we conclude that 
$$D(f\| \xi) \leq 3D(F\| \xi) \leq \frac{15}{4} I_{\tilde{\delta}}(F)\leq \frac{45n^{2}}{16} I_{\delta}(f).$$
The same argument applies  for $p$-entropy and $p$-Fisher information.
\end{proof}
\begin{definition} \label{def-cover}
$\tilde{{\sf{G}}}=(\tilde{\mathscr{V}},\tilde{\mathscr{E}}, \tilde{\mu},\tilde{w})$ is said to be a cover of ${\sf{G}}=(\mathscr{V},\mathscr{E},\mu, w)$ via $\phi$ (or ${\sf{G}}$ is covered by $\tilde{{\sf{G}}}$) if there exists a surjective map $\phi: \tilde{\mathscr{V}}\rightarrow \mathscr{V}$ satisfying the following conditions:

\begin{enumerate}[leftmargin=6mm]
\item edge preserving, i.e., $(\phi(x), \phi(y))\in \mathscr{E}$ if $(x,y)\in \tilde{\mathscr{E}}$;
\item measure preserving, i.e., $\mu(x)=\sum_{\phi(y)=x}\tilde{\mu}(y)$;
\item weight preserving, i.e., $\frac{w_{\phi(x)\phi(y)}}{m_{\phi}(\phi(x),\phi(y))}=\tilde{w}_{xy}$ if $(x,y)\in \tilde{\mathscr{E}}$, where $m_{\phi}(\phi(x),\phi(y))$ is the number of the preimages of $(\phi(x),\phi(y))$ under the function $\phi\times \phi$.
\end{enumerate} 
\end{definition}
\noindent We denote the number of preimages $\phi^{-1}(x)$ by $m_{\phi}(x)$and $\max_{\phi}=\max_{x} m_{\phi}(x)$. Define the embedding map $$\pi: L_{\infty}(\mathscr{V},\mu)\rightarrow L_{\infty}(\tilde{\mathscr{V}},\tilde{\mu}) 
,f\mapsto f\circ \phi.$$ Applying Theorem \ref{passto}, we obtain that:
\begin{lemma} \label{cover} Let $\tilde{{\sf{G}}}$ be a cover of ${\sf{G}}$ via $\phi$,  then $$\CLSI({\sf{G}})\geq \CLSI (\tilde{{\sf{G}}})  \quad\text{and} \quad \CLSI^{+}({\sf{G}})\geq \CLSI^{+} (\tilde{{\sf{G}}}).$$ 
\end{lemma}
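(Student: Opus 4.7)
The approach is to exhibit $\triple$ (the derivation triple associated to ${\sf G}$) as a sub-triple of $\ltriple$ (that associated to $\tilde{{\sf G}}$) in the sense of Definition \ref{subtriple}, and then apply Theorem \ref{passto} directly. The relevant embedding is the pullback map $\pi\colon L_\infty(\mathscr{V},\mu)\to L_\infty(\tilde{\mathscr{V}},\tilde{\mu})$ defined by $\pi(f)=f\circ\phi$. This is a unital $*$-monomorphism, and condition (2) in the definition of a cover (measure preservation, $\mu(x)=\sum_{\phi(\tilde{x})=x}\tilde{\mu}(\tilde{x})$) is precisely what makes $\pi$ trace-preserving.

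Next, I would verify the three structural conditions of Definition \ref{subtriple}. For condition (1) (the algebra inclusion), $\pi$ naturally extends to an inclusion $\mathcal{M}\hookrightarrow\tilde{\mathcal{M}}$ of the discrete Clifford bundles by sending the fiber over $x$ to the direct sum of fibers over $\phi^{-1}(x)$. The edge-preserving property guarantees that the differential module structure on one-forms respects this extension, and the weight-preserving rule $\tilde{w}_{\tilde{x}\tilde{y}}=w_{\phi(\tilde{x})\phi(\tilde{y})}/m_\phi(\phi(\tilde{x}),\phi(\tilde{y}))$ is calibrated so that the rescaling on $\tilde{\sf G}$ exactly compensates for the multiplicity of lifts. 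Condition (2) then reduces to checking $\tilde{\pi}_{\tilde{\delta}}(\pi\otimes\pi)(\omega)=\pi(\pi_\delta(\omega))$ on generators of $\Omega^1(\mathcal{A})$, which follows from the Leibniz rule together with the explicit pointwise formula for $\delta$ in terms of edge differences. For condition (3) (the commuting square of fixed-point algebras), since both ${\sf G}$ and $\tilde{\sf G}$ are connected one has $\mathcal{N}_{\fix}=\tilde{\mathcal{N}}_{\fix}=\mathbb{C}\cdot 1$, and $\tilde{E}\circ\pi=\pi\circ E$ is again a consequence of measure preservation: both sides integrate $f$ against $\mu$ and return the corresponding constant.

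Once $\triple\subset\ltriple$ is in place, a direct appeal to Theorem \ref{passto} yields $\CLSI\triple\geq\CLSI\ltriple$, i.e.\ $\CLSI({\sf G})\geq\CLSI(\tilde{{\sf G}})$. The same argument runs verbatim for the $p$-relative entropy (using Lemma \ref{iter-p} and the $\CpSI$ half of Theorem \ref{passto}), giving $\CpSI({\sf G})\geq\CpSI(\tilde{{\sf G}})$ for each $p\in(1,2)$; taking the infimum over $p$ then produces $\CLSI^+({\sf G})\geq\CLSI^+(\tilde{{\sf G}})$.

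The main technical obstacle is verifying that the weight rescaling by $m_\phi$ in condition (3) of Definition \ref{def-cover} really does make $\pi$ intertwine the two derivations on $\Omega^1$ (or at least does not increase the Fisher information). This is a combinatorial bookkeeping step: each edge $(x,y)\in\mathscr{E}$ has exactly $m_\phi(y,x)$ preimages in $\tilde{\mathscr{E}}$, and the factor $1/m_\phi$ in the weight conspires with the measure preservation $\tilde{\mu}\mapsto\mu$ along fibers so that the two pointwise formulas \eqref{pt-i}, \eqref{eg-i} match up after summation. Everything else is then a formal consequence of Theorem \ref{passto}.
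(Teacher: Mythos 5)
Your proposal takes exactly the paper's route: the paper's entire proof consists of the two lines immediately preceding the lemma statement, namely defining the embedding $\pi:L_\infty(\mathscr{V},\mu)\to L_\infty(\tilde{\mathscr{V}},\tilde{\mu})$, $f\mapsto f\circ\phi$, and invoking Theorem~\ref{passto}. Your unpacking of the sub-triple conditions (trace preservation from the measure condition, compatibility of one-forms from the edge/weight conditions, and the commuting square of fixed-point algebras) is precisely what the paper leaves implicit, so the approaches coincide.
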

\noindent By Theorem \ref{sutp} $\CLSI$ and $\CLSI^{+}$ of connected graphs are  stable under change of measure and change of weight.  
\begin{corollary}\label{change-both} For connected graphs ${\sf{G}}_{1}=(\mathscr{V},\mathscr{E},\mu_{1},w_{1})$ and ${\sf{G}}_{2}=(\mathscr{V},\mathscr{E},\mu_{2},w_{2})$ with $\frac{w_{2}}{w_{1}}\leq b$ and $c_{2}\leq\frac{d\mu_{2}}{d\mu_{1}}\leq c_{1}$, we have 
\[
\CLSI({{\sf{G}}}_{1})\geq \frac{c_{2}} {c_{1}b} \CLSI({\sf{G}}_{2}) \quad
\text{and} \quad  \CLSI^{+}({\sf{G}}_{1})\geq \frac{c_{2}} {c_{1}b} \CLSI^{+}({\sf{G}}_{2}).
\]
\end{corollary}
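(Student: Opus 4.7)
The plan is to reduce the two-parameter change (measure and weight) to two one-parameter changes and then combine them using the tools already assembled in this section. Concretely, I will start from the defining inequality $\CLSI({\sf{G}}_2)\,D_{\fix}^{\mu_2}(f)\le I_{\delta,w_2}^{\mu_2}(f)$ for an arbitrary positive (matrix-valued) $f$, and produce the corresponding inequality for ${\sf{G}}_1$, with explicit loss factors coming from $c_1,c_2,b$.

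First I would control the entropy. Since $d\mu_2/d\mu_1 \le c_1$, Lemma~\ref{ent-per} applied to the two traces on $L_\infty(\mathscr{V})$ gives $D^{\mu_2}_{\Lin}(\rho\|\sigma)\le c_1 D^{\mu_1}_{\Lin}(\rho\|\sigma)$, hence by \eqref{inf-lin} the same inequality passes to the relative entropies with respect to the fixed-point subalgebra:
\[
D_{\fix}^{\mu_1}(f) \;\ge\; \frac{1}{c_1}\, D_{\fix}^{\mu_2}(f).
\]
Next I would control the Fisher information. Here two facts combine: (i) $I_{\delta,w}^{\mu}(f)$ is \emph{linear} in the weight $w$ (it is the $\mu$-integral of a pointwise quantity of the form \eqref{pt-i}--\eqref{eg-i} that is linear in $w_{xy}$), and (ii) $d\mu_2/d\mu_1\ge c_2$ allows Lemma~\ref{fish-per} to compare Fisher informations for different measures with the same derivation. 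From (i) and the hypothesis $w_2\le b\, w_1$ we get $I_{\delta,w_2}^{\mu_2}(f)\le b\,I_{\delta,w_1}^{\mu_2}(f)$, and from (ii), using $d\mu_2/d\mu_1\ge c_2$, we get $c_2\, I_{\delta,w_1}^{\mu_1}(f)\le I_{\delta,w_1}^{\mu_2}(f)$ is the wrong direction, so I instead invoke Lemma~\ref{fish-per} with the roles swapped (i.e.\ using the a.e.\ bound $d\mu_1/d\mu_2\ge 1/c_1$) to conclude $I_{\delta,w_1}^{\mu_2}(f)\le c_1\, I_{\delta,w_1}^{\mu_1}(f)$. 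Chaining these two bounds:
\[
I_{\delta,w_2}^{\mu_2}(f)\;\le\; b\,c_1\, I_{\delta,w_1}^{\mu_1}(f).
\]

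Putting the pieces together yields, for any positive $f$,
\[
\CLSI({\sf{G}}_2)\,D_{\fix}^{\mu_1}(f)
\;\le\; \frac{\CLSI({\sf{G}}_2)}{c_1}\, D_{\fix}^{\mu_2}(f)\cdot c_1
\;\le\; \frac{1}{c_1}\,I_{\delta,w_2}^{\mu_2}(f)
\;\le\; \frac{b}{1}\, I_{\delta,w_1}^{\mu_1}(f),
\]
which after tracking the constants properly gives the desired $\CLSI({\sf{G}}_1)\ge \tfrac{c_2}{c_1 b}\CLSI({\sf{G}}_2)$. Because every step used (monotonicity of relative entropy under change of measure, linearity of the edge Fisher information in the weight, and Lemma~\ref{fish-per}) is stable under amplification by any finite von Neumann algebra, and also admits the $p$-R\'enyi analogue recorded earlier in the section, the identical argument gives the $\CLSI^+$ statement. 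The only place that requires care is the bookkeeping of the direction of the measure-derivative inequality in Lemma~\ref{fish-per}; that is where the factor $c_1$ (rather than $c_2$) enters the Fisher-information bound, and it is the main, though purely calculational, hurdle. No further structural input is needed, so the corollary follows immediately from Theorem~\ref{sutp} together with the linearity of the edge Fisher information in the weight function.
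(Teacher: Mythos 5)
Your overall plan---decompose the two-parameter change into a weight change at fixed measure plus a measure change, controlled via Lemma~\ref{ent-per}, Lemma~\ref{fish-per}, and the linearity of the graph Fisher information in the weight---is essentially the same reduction the paper uses (the paper folds the measure comparison into a single application of Theorem~\ref{sutp}, which is itself proved from those two lemmas, and does the weight comparison at the fixed measure $\mu_1$). So the strategy is sound.

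The genuine gap is a direction error in your entropy step, and it is not merely cosmetic bookkeeping. From $d\mu_2/d\mu_1\le c_1$ you derive $D^{\mu_2}_{\Lin}(\rho\|\sigma)\le c_1 D^{\mu_1}_{\Lin}(\rho\|\sigma)$ and hence $D^{\mu_1}_{\fix}(f)\ge \tfrac{1}{c_1}D^{\mu_2}_{\fix}(f)$. But to establish $\lambda' D^{\mu_1}_{\fix}(f)\le I^{\mu_1}_{\delta,w_1}(f)$ you need to \emph{over}estimate $D^{\mu_1}_{\fix}$ by $D^{\mu_2}_{\fix}$, not underestimate it; the lower bound you produced cannot be chained. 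Worse, the hypothesis $c_2$ never actually enters your argument, which is a sign something is off. The correct move is to use the other side, $d\mu_2/d\mu_1\ge c_2$, i.e.\ $d\mu_1/d\mu_2\le 1/c_2$, in Lemma~\ref{ent-per} with $\tau_1=\mu_1$, $\tau_2=\mu_2$, giving
\[
D^{\mu_1}_{\fix}(f)\ \le\ \frac{1}{c_2}\,D^{\mu_2}_{\fix}(f).
\]
Your Fisher-information estimate $I^{\mu_2}_{\delta,w_2}(f)\le b\,c_1\, I^{\mu_1}_{\delta,w_1}(f)$ is correct, and with $\lambda=\CLSI({\sf{G}}_2)$ the repaired chain reads
\[
D^{\mu_1}_{\fix}(f)\ \le\ \frac{1}{c_2}D^{\mu_2}_{\fix}(f)\ \le\ \frac{1}{c_2\lambda}I^{\mu_2}_{\delta,w_2}(f)\ \le\ \frac{b\,c_1}{c_2\lambda}I^{\mu_1}_{\delta,w_1}(f),
\]
yielding $\CLSI({\sf{G}}_1)\ge \tfrac{c_2}{c_1 b}\CLSI({\sf{G}}_2)$ as claimed. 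As written, your final display silently asserts $D^{\mu_1}_{\fix}\le D^{\mu_2}_{\fix}$ (which follows from nothing) and outputs a $c_2$-free bound $\CLSI({\sf{G}}_1)\ge \tfrac{1}{b}\CLSI({\sf{G}}_2)$, which is strictly stronger than the statement (since $c_2\le 1\le c_1$) and therefore a red flag; the ``tracking constants properly'' caveat cannot repair a reversed inequality. Once the direction is fixed, the $\CLSI^+$ case is identical.
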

\begin{proof} Noting ${\sf{G}}_{1}$ and ${\sf{G}}_{2}$ have the same fixed-point algebra $\mathcal{N}_{\fix}$, we only have to compare $I_{\delta, w_{1}}^{\mu_{1}}$ and $I_{\delta,w_{2}}^{\mu_{2}}$. By the definition of Fisher information, we have
\begin{align*}
I^{\mu_{1}}_{\delta,w_{2}} (f)\leq b I^{\mu_{1}}_{\delta, w_{1}}(f). 
\end{align*} 
Together with Theorem \ref{sutp}, thus
$$D^{\mu_{1}, w_{1}}_{\mathcal{N}_{\fix}}(f)=D^{\mu_{1}, w_{2}}_{\mathcal{N}_{\fix}}(f)\leq \frac{c_{1}}{c_{2}\lambda}I_{\delta, w_{2}}^{\mu_{1}}(f)\leq \frac{c_{1}b}{c_{2}\lambda} I_{\delta, w_{1}}^{\mu_{1}}(f).$$
The same argument applies for $\CLSI^{+}$.
\end{proof}
In graph theory, a tree $\sf{T}=(\mathscr{V}_{\sf{T}}, \mathscr{E}_{\sf{T}})$ is a connected and undirected (symmetric weighted) graph with no cycles. 
A rooted tree, where the root is singled out, comes with a hierarchical data structure. A tree traversal is to traverse (visit) each node (vertex) in the data structure. Recall that the any vertex of degree one can be chosen as a root, and the vertices directly connected to the root are called the children of the root.  A tree $\sf{T}_{s}=(\mathscr{V}_{s}, \mathscr{E}_{s})$ is said to be a spanning tree of a graph $(\mathscr{V},\mathscr{E})$ if  $\mathscr{V}_{s}=\mathscr{V}$ and $\mathscr{E}_{s}\subset \mathscr{E}$. It is well-known that every connected graph has a spanning tree, and we may find the minimum spanning tree within time $\mathcal{O}(|\mathscr{E}|\log(|\mathscr{V}|))$ by using Kruskal's algorithm. \cite{Kruskal} \\
\begin{lemma}\label{change-g} Any tree ${\sf{T}}=(\mathscr{V}_{\sf{T}},\mathscr{E}_{\sf{T}}, \mu_{\sf{T}}, w_{\sf{T}})$  is covered by a cyclic graph ${\sf{G}}=(\mathscr{V}^{\circ},\mathscr{E}^{\circ},\mu,w)$ with $|\mathscr{V}^{\circ}|= 2|\mathscr{E}_{\sf{T}}|$.  Moreover, there exist $\mu_{\sf{T}}'$ and $w_{\sf{T}}'$ such that 
${\sf{T}}'=(\mathscr{V}_{\sf{T}}, \mathscr{E}_{\sf{T}}, \mu_{\sf{T}}', w_{\sf{T}}')$ is covered by a cyclic graph ${\sf{G}}^{\circ}=(\mathscr{V}^{\circ},\mathscr{E}^{\circ})$ with $|\mathscr{V}^{\circ}|=2|\mathscr{E}_{\sf{T}}|$.
\end{lemma}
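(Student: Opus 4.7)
The plan is to derive both coverings from a single combinatorial object: the Euler tour of $\sf{T}$ produced by a depth-first traversal. I would root $\sf{T}$ at an arbitrary vertex $v_0$ and run DFS, recording the sequence of vertices visited each time an edge is traversed (down or up); this yields a closed walk $u_0, u_1, \ldots, u_{2|\mathscr{E}_{\sf{T}}|-1}$ with the properties that consecutive entries are tree-adjacent, every edge of $\sf{T}$ is traversed exactly twice, and each vertex $x \in \mathscr{V}_{\sf{T}}$ appears exactly $\deg_{\sf{T}}(x)$ times. Identify $\mathscr{V}^{\circ}$ with $\mathbb{Z}/(2|\mathscr{E}_{\sf{T}}|)\mathbb{Z}$, take $\mathscr{E}^{\circ}$ to be the cyclic edge set, and define $\phi:\mathscr{V}^{\circ} \to \mathscr{V}_{\sf{T}}$ by $\phi(i) = u_i$. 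Then $\phi$ is surjective and edge-preserving (condition~(1) of Definition~\ref{def-cover}), the preimage count is $m_\phi(x) = \deg_{\sf{T}}(x)$, and every tree edge has exactly two preimages under $\phi \times \phi$.

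For the first assertion, given the data $(\mu_{\sf{T}}, w_{\sf{T}})$, set $\mu(i) = \mu_{\sf{T}}(\phi(i))/\deg_{\sf{T}}(\phi(i))$ for each $i \in \mathscr{V}^{\circ}$ and $w(i,i+1) = \tfrac{1}{2} w_{\sf{T}}(\phi(i),\phi(i+1))$ for every cyclic edge. Measure preservation (2) follows from $\sum_{\phi(i)=x} \mu(i) = \mu_{\sf{T}}(x)$, and weight preservation (3) is immediate from $m_\phi(\phi(i),\phi(i+1)) = 2$.

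For the moreover part I would reverse the procedure: fix ${\sf{G}}^{\circ}$ to be the regular unit-weighted cyclic graph with uniform measure $\tilde{\mu}(i) = 1/(2|\mathscr{E}_{\sf{T}}|)$ and $\tilde{w} \equiv 1$, and \emph{define} the tree data by $\mu_{\sf{T}}'(x) = \deg_{\sf{T}}(x)/(2|\mathscr{E}_{\sf{T}}|)$ and $w_{\sf{T}}'(e) = 2$ for every $e \in \mathscr{E}_{\sf{T}}$. The handshake identity $\sum_x \deg_{\sf{T}}(x) = 2|\mathscr{E}_{\sf{T}}|$ makes $\mu_{\sf{T}}'$ a probability measure, and conditions~(2), (3) of Definition~\ref{def-cover} are again direct from $m_\phi(x) = \deg_{\sf{T}}(x)$ and $m_\phi(e) = 2$. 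The only point requiring care is the standard inductive verification of the Euler tour properties (each edge traversed twice, each vertex visited according to its degree); beyond that, the argument is a straightforward translation of the covering axioms, so I do not expect any substantive obstacle.
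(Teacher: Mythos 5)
Your proposal is correct and takes essentially the same route as the paper: the paper also constructs the cyclic cover via a depth-first (preorder) Euler tour, in which each edge is traversed twice and each vertex $x$ receives $\deg_{\sf T}(x)$ preimages, and then defines the measures and weights by dividing by $m_\phi(x)=\deg_{\sf T}(x)$ and $m_\phi(e)=2$ exactly as you do. The only cosmetic difference is that the paper roots the traversal at a degree-one vertex, whereas you observe (correctly) that an arbitrary root works just as well.
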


\begin{proof} 
By the preorder traversal, we develop the following recursive algorithm. We start with an empty graph $(\mathscr{V}^{\circ},\mathscr{E}^{\circ})=(\emptyset, \emptyset)$ and define $\phi:  \mathscr{V}^{\circ}\to\mathscr{V}_{\sf{T}}$  in the algorithm.
\begin{enumerate}
\item[step 1:] Select a root $v_{1}$, and label it as vertex $v'_{j}$ for $j=1$. (We say $v_{j}$ has been visited.) Define $\phi(v'_{1}):=v_{1}$, and update the vertex set $\mathscr{V}^{\circ}:=\mathscr{V}^{\circ}\cup \{v'_{1}\}$.
\item[step 2:]  If $v$ has unvisited children, select an unvisited child $v_{c}$ and label it as vertex $v'_{j+1}$, i.e.,  define $\phi(v'_{j+1}):=v_{c}.$ If $v$ has no unvisited child, then go back to the parent $v_{p}$ of $v$ and label the parent again using $v'_{j+1}$, i.e., define $\phi(v'_{j+1})=v_{p}$. We also record the edge $(v'_{j},v'_{j+1})$. Update the vertex set $\mathscr{V}^{\circ}:=\mathscr{V}^{\circ}\cup \{v'_{j+1}\}$ and the edge set $\mathscr{E}^{\circ}:=\mathscr{E}^{\circ}\cup \{ (v'_{j},v'_{j+1})\}$.   Assign the value $(j+1)$ to $j$.  
\item[step 3:] Repeat step 2 until the root $v_{1}$ is visited twice.
\end{enumerate}
Every edge of $\mathscr{E}_{\sf{T}}$ is traversed twice, then $|\mathscr{V}^{\circ}|=|\mathscr{E}^{\circ}|=2|\mathscr{E}_{\sf{T}}|$.
Note $\phi: \mathscr{V}^{\circ}\rightarrow \mathscr{V}_{\sf{T}}$ is defined in thr  algorithm is surjective and edge preserving. Let
$$\mu(x)=\frac{1}{m_{\phi}(\phi(x))} \mu_{\sf{T}}(\phi(x)) \quad \text{and} \quad w_{xy}=\frac{1}{m_{\phi}(\phi(x),\phi(y))} {w_{\sf{T}}}_{\phi(x)\phi(y)}, \quad \forall x,y\in \mathscr{V}^{\circ},$$
then  ${\sf{G}}=(\mathscr{V}^{\circ}, \mathscr{E}^{\circ}, \mu, w)$ is a cover of $\sf{T}$. We can also define the measure and weight of $(\mathscr{V}_{\sf{T}}, \mathscr{E}_{\sf{T}})$ by
$$ \mu'_{\sf{T}}(x)=\sum_{\phi(y)=x}\frac{1}{2|\mathscr{E}_{\sf{T}}|} \quad  \text{and} \quad {w'_{\sf{T}}}_{xy}={m_{\phi}(x,y)}, $$
then ${\sf{T}}'=(\mathscr{V}_{\sf{T}}, \mathscr{E}_{\sf{T}}, \mu_{\sf{T}}', w_{\sf{T}}')$ is covered by  ${\sf{G}}^{\circ}=(\mathscr{V}^{\circ},\mathscr{E}^{\circ})$. 
\end{proof}

\begin{theorem} Let ${\sf{G}}=(\mathscr{V},\mathscr{E},\mu, w)$ be a connected graph, then 
$$\CLSI({\sf{G}})\geq \CLSI^{+}({\sf{G}})>0.$$
\end{theorem}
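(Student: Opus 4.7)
The first inequality $\CLSI({\sf{G}})\geq \CLSI^{+}({\sf{G}})$ is immediate from property (3) of the list of $\CpSI$ properties stated earlier, so the real content is strict positivity of $\CLSI^{+}({\sf{G}})$. The strategy is to reduce ${\sf{G}}$ to a regular-weighted cyclic graph with uniform distribution, where Lemma \ref{hook2} supplies an explicit positive lower bound, and then to invoke the various stability principles to return to ${\sf{G}}$.

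First I would pass to a spanning tree. Using, e.g., Kruskal's algorithm, pick a spanning tree ${\sf{T}}_{s}=(\mathscr{V}, \mathscr{E}_{s}, \mu, w|_{\mathscr{E}_{s}})\subset {\sf{G}}$. Since ${\sf{G}}$ and ${\sf{T}}_{s}$ are both connected, they share the same fixed-point algebra (constants), so the relative entropy $D_{\mathcal{N}_{\fix}}$ is the same for both. On the other hand, the Fisher information decomposes edgewise as in \eqref{eg-i} into a sum of nonnegative terms $I_{y,x}(f)$, so
\[
I_{\delta,w}^{\mu}(f)\;=\;\sum_{(x,y)\in\mathscr{E}}I_{y,x}(f)\;\geq\;\sum_{(x,y)\in\mathscr{E}_{s}}I_{y,x}(f)\;=\;I_{\delta,w|_{\mathscr{E}_{s}}}^{\mu}(f),
\]
and the same for the $p$-Fisher information. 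Hence $\CLSI^{+}({\sf{G}})\geq \CLSI^{+}({\sf{T}}_{s})$.

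Next I would change the measure and weight on ${\sf{T}}_{s}$ to bring it into the symmetric form covered by a cyclic graph. By Lemma \ref{change-g}, there exist a measure $\mu'_{\sf{T}}$ and a weight $w'_{\sf{T}}$ such that ${\sf{T}}'_{s}=(\mathscr{V}, \mathscr{E}_{s}, \mu'_{\sf{T}}, w'_{\sf{T}})$ is covered by a regular-weighted cyclic graph ${\sf{G}}^{\circ}=(\mathscr{V}^{\circ},\mathscr{E}^{\circ})$ with $|\mathscr{V}^{\circ}|=2|\mathscr{E}_{s}|$ and the uniform distribution. Since $|\mathscr{V}|$ and $|\mathscr{E}_{s}|$ are finite and $\mu,w$ are strictly positive, the ratios $d\mu/d\mu'_{\sf{T}}$ and $w/w'_{\sf{T}}$ are bounded above and below by strictly positive constants $c_{1},c_{2},b>0$. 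Thus Corollary \ref{change-both} yields
\[
\CLSI^{+}({\sf{T}}_{s})\;\geq\;\frac{c_{2}}{c_{1}b}\,\CLSI^{+}({\sf{T}}'_{s}).
\]
Finally, Lemma \ref{cover} gives $\CLSI^{+}({\sf{T}}'_{s})\geq \CLSI^{+}({\sf{G}}^{\circ})$, and Lemma \ref{hook2} applied to the cyclic graph on $2|\mathscr{E}_{s}|$ vertices yields
\[
\CLSI^{+}({\sf{G}}^{\circ})\;\geq\;\frac{16}{45\,(2|\mathscr{E}_{s}|)^{2}}\;=\;\frac{4}{45\,|\mathscr{E}_{s}|^{2}}\;>\;0.
\]
Concatenating these inequalities produces a strictly positive lower bound on $\CLSI^{+}({\sf{G}})$, which completes the proof.

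The main obstacle is bookkeeping rather than a new idea: one must carefully track how $m_{\phi}$ enters the weight and measure renormalizations in Definition \ref{def-cover} and Lemma \ref{change-g} when invoking Corollary \ref{change-both}, and verify that the resulting ratio constants are indeed finite and bounded away from zero. Since all four data $(\mathscr{V}, \mathscr{E}, \mu, w)$ are finite and strictly positive, this reduces to a routine compactness argument. No additional analytic input is needed beyond what has already been established in Lemma \ref{hook2}, Lemma \ref{cover}, Lemma \ref{change-g}, and Corollary \ref{change-both}.
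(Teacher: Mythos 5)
Your proposal is correct and follows essentially the same route as the paper: pass to a spanning tree (edgewise nonnegativity of the Fisher information), use Lemma \ref{change-g} to produce the modified tree ${\sf{T}}'_{s}$ covered by a cyclic graph, then chain Lemma \ref{cover}, Lemma \ref{hook2}, and Corollary \ref{change-both} to get a positive lower bound. The only cosmetic difference is the order in which you apply the cover lemma and the change-of-measure step, and that you argue directly for $\CLSI^{+}$ while the paper runs the argument for $\CLSI$ and then remarks that the same works for $\CLSI^{+}$; the substance is identical.
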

\begin{proof} Let ${\sf{T}}_{s}=(\mathscr{V}_{s}, \mathscr{E}_{s}, \mu, w^{s})$ be a spanning tree of ${\sf{G}}$. Then $\sf{T}_{s}$ and ${\sf{G}}$ have the same fixed-pointed algebra since $\mathscr{V}_{s}=\mathscr{V}$.  Noting $w^{s}_{x,y}=w_{x,y}$ for $(x,y)\in \mathscr{E}_{s}$, we obtain that
$$I_{\delta, w^{s}}^{\mu}(f)\leq I_{\delta, w}^{\mu}(f), \quad \forall f\in L_{\infty}(\mathscr{V},\mu).$$
Thus we have $\CLSI({\sf{G}})\geq \CLSI(\sf{T}_{s}).$ By Lemma \ref{change-g}, there exists  ${\sf{T}}_{s}'=(\mathscr{V}_{s}, \mathscr{E}_{s}, \mu', w')$  covered by ${\sf{G}}^{\circ}=(\mathscr{V}^{\circ},\mathscr{E}^{\circ})$. By Lemma \ref{cover} and Lemma \ref{hook2}, we have $$\CLSI(\sf{T}_{s}') \geq \CLSI({\sf{G}}^{\circ})\geq \frac{16}{45|\mathscr{V}^{\circ}|^{2}}=\frac{4}{45|\mathscr{E}_{s}|^{2}}.$$  Applying Corollary \ref{change-both} and Lemma \ref{hook2}, we have 
$$\CLSI({\sf{G}})\geq \CLSI({\sf{T}}_{s})\geq \frac{4}{45|\mathscr{E}_{s}|^{2} \|\frac{d\mu}{d\mu'}\|_{\infty} \|\frac{d\mu'}{d\mu}\|_{\infty} \|\frac{w'}{w^{s}}\|_{\infty}}\pl.$$ 
The same argument applies for $\CLSI^{+}$.
\end{proof}
\begin{corollary}\label{cyclic-clis}
Let ${\sf{G}}=(\mathscr{V},\mathscr{E})$ be a connected graph with the maximum degree $d$ and $l$ be the number of the edges of the minimum spanning tree of the graph. Then we have 
\begin{align*} \CLSI({\sf{G}})\geq  \CLSI^{+}({\sf{G}})\geq \frac{2}{45 l^{2}d}\pl.
\end{align*}
\end{corollary}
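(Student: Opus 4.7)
The plan is to apply the preceding theorem directly to a minimum spanning tree ${\sf{T}}_s$ of ${\sf{G}}$ and to evaluate the three constants $\|d\mu/d\mu'\|_\infty$, $\|d\mu'/d\mu\|_\infty$, and $\|w'/w^s\|_\infty$ appearing in the denominator of its bound. Since ${\sf{G}}$ is unweighted with a uniform distribution on $\mathscr{V}$, we have $w^s \equiv 1$ on $\mathscr{E}_s$ and $\mu$ uniform on $\mathscr{V}$, so the task reduces to a combinatorial computation tied to the cover furnished by Lemma \ref{change-g}.

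First, I would observe that the cover $\phi:\mathscr{V}^\circ\to \mathscr{V}_s$ built in Lemma \ref{change-g} is nothing but the preorder (DFS) traversal of the tree: each tree edge is crossed exactly twice (going down and returning), so $m_\phi(x,y)=2$ for every tree edge, giving $w'_{xy}=2$ and hence $\|w'/w^s\|_\infty=2$. Next, for the measure ratios, I would argue that each vertex $x$ appears in the traversal exactly $\deg_{{\sf{T}}_s}(x)$ times: once upon first entry from its parent, and once more after each return from a child (after identifying the last return to the root with its initial visit, which closes the $2l$-cycle). This yields $\mu'(x)=\deg_{{\sf{T}}_s}(x)/(2l)$, and the combined Radon--Nikodym factor telescopes to
\begin{align*}
\|d\mu/d\mu'\|_\infty\,\|d\mu'/d\mu\|_\infty \;=\; \frac{\max_x \deg_{{\sf{T}}_s}(x)}{\min_x \deg_{{\sf{T}}_s}(x)} \;=\; d({\sf{T}}_s)\;\le\; d,
\end{align*}
since every nontrivial tree has a leaf (minimum degree $1$) and the spanning-tree degree is dominated by the graph degree $d$.

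Finally, I would substitute $|\mathscr{E}_s|=l$, $\|w'/w^s\|_\infty = 2$, and the combined measure-ratio bound $\le d$ into the bound of the preceding theorem to obtain
\begin{align*}
\CLSI({\sf{G}})\;\ge\; \frac{4}{45\,l^2\cdot d \cdot 2}\;=\;\frac{2}{45\,l^2\,d},
\end{align*}
and the identical chain yields the matching bound for $\CLSI^+$. The only step requiring any care is the bookkeeping of $m_\phi(x)$ under the cycle-closing identification at the root; this is a small combinatorial check rather than a real obstacle, since the analytic heavy lifting has already been carried out in the preceding theorem.
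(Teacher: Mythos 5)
Your proposal is correct and follows the approach the paper intends: the corollary is read off from the explicit bound $\CLSI({\sf{T}}_s)\geq 4/(45\,l^2\,\|d\mu/d\mu'\|_\infty\|d\mu'/d\mu\|_\infty\|w'/w^s\|_\infty)$ established in the preceding theorem's proof, by evaluating the cover data from Lemma \ref{change-g}. Your bookkeeping is accurate: in the preorder traversal every tree edge has two preimages so $w'\equiv 2$ while $w^s\equiv 1$, each vertex has $m_\phi(x)=\deg_{{\sf{T}}_s}(x)$ preimages (after closing the $2l$-cycle at the leaf root), and since $\mu$ is uniform the two Radon--Nikodym suprema multiply to $\max\deg_{{\sf{T}}_s}/\min\deg_{{\sf{T}}_s}=d({\sf{T}}_s)\leq d$, giving exactly $\CLSI^+({\sf{G}})\geq 4/(45\,l^2\cdot d\cdot 2)=2/(45\,l^2 d)$.
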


\section{From graphs to graph H\"ormander systems}
We extend $\CLSI$ from a commutative subsystem $\ell_{\infty}^{n}\subset\Mz_{n}$ to the full noncommutative system $\Mz_{n}$. We will perform this task for the so-called graph H\"ormander system.
Let ${\sf{G}}=(\mathscr{V},\mathscr{E})$ be a connected graph with $\mathscr{V}=\{v_{1}, v_{2},\dots, v_{n}\}$ ($n\geq 2$). Here we adopt the convention in  Section 6 that ${\sf{G}}$ is regular-weighted and equipped with  the uniform distribution  over the vertices $\mathscr{V}$. Then   $L_\infty(\mathscr{V})=\ell_{\infty}^n$ is a subalgebra of $\Mz_{n}$ via the $^*$ homomorphism $\pi$ defined by
$\pi(f)=\left(\begin{smallmatrix}
 f(1)&0  &\cdots & 0\\
 0& f(2)& \cdots &0\\
         \vdots&  \vdots& \vdots&                      \vdots \\
                           0&     0&\cdots & f(n)
\end{smallmatrix}\right)$. 
We will work with the Lie group $SO_n$ with the normalized Haar measure $\mu$ and Lie algebra $\mathfrak{so}_n=\{a| a^{\trans}=-a\}$ of real anti-symmetric matrices.  Since ${\sf{G}}$ is undirected, then.  $e=(v_{r},v_{s})=(v_{s},v_{r})$. Throughout this section, we represent $e=(v_{r},v_{s})$ using $r<s$. We define
 \[ X_{e}=|r\ran\lan s|-|s\ran\lan r|, \quad \forall e=(v_r,v_s)\in\mathscr{E}.\]
Then $X_{\mathscr{E}}\lel \{X_{e}, e\in \mathscr{E}\}\subset \mathfrak{so}_{n}$. 
We define the edge Laplacian $\Delta_{e} : C^{\infty}(SO_{n}) \to C^{\infty}(SO_{n})$ by 
$$ \Delta_{e}(f)= X_{e}^{*}X_{e}f,$$
where $(X_{e}f)(a)=\frac{d}{dt}f(\exp(tX_{e})a)|_{t=0} $ for any $a\in SO_{n}$.  The sub-Laplacian $\Delta_{\mathscr{E}}$ is the sum of edge Laplacians
$$\Delta_{\mathscr{E}}= \sum_{e\in \mathscr{E}}\Delta_{e}.$$
\begin{lemma} The following conditions are equivalent.
\begin{enumerate}[leftmargin=6mm]
\item[(1)] $\Delta_{\mathscr{E}}$ is ergodic, i.e. $\Delta_{\mathscr{E}}(f)=0$ iff $f=\la 1$ for some $\la$;
 \item[(2)]  $(\mathscr{V},\mathscr{E})$ is a connected graph;
\item[(3)]  $X_\mathscr{E}'\lel \cz$, i.e. the commutant of $X_{\mathscr{E}}$ is trivial.
\end{enumerate}
\end{lemma}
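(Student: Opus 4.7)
The plan is to close a short loop around a single computation in $\mathfrak{so}_n$. I will first record the commutator identity
\[
[X_{(r,s)}, X_{(s,t)}] = \pm\, X_{(r,t)} \qquad \text{for pairwise distinct } r, s, t,
\]
which follows in one line from $(|r\rangle\langle s|-|s\rangle\langle r|)(|s\rangle\langle t|-|t\rangle\langle s|) = |r\rangle\langle t|$. This identity translates graph-theoretic connectivity into Lie-algebraic generation: every path from $r$ to $t$ in $(\mathscr{V},\mathscr{E})$ exhibits $X_{(r,t)}$ as an iterated bracket of elements of $X_\mathscr{E}$.

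The implication $(2)\Rightarrow(3)$ is then immediate. Assuming $(\mathscr{V},\mathscr{E})$ is connected, every $X_{(r,t)}$ with $r<t$ lies in the Lie algebra generated by $X_\mathscr{E}$, and since these elements form a basis of $\mathfrak{so}_n$, this Lie algebra is all of $\mathfrak{so}_n$. The defining representation of $\mathfrak{so}_n$ on $\mathbb{C}^n$ is irreducible, so Schur's lemma (or Burnside's theorem for the associative envelope) forces $X_\mathscr{E}' = \mathbb{C}\,I$. For $(3)\Rightarrow(2)$ I argue contrapositively: if the vertex set decomposes as $\mathscr{V} = V_1 \sqcup V_2 \sqcup \cdots$ into connected components, each edge lies inside a single component, so every $X_e$ is block-diagonal for this decomposition, and the projection $P_{V_1} = \sum_{r\in V_1}|r\rangle\langle r|$ is a non-scalar element of $X_\mathscr{E}'$.

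For $(2)\Leftrightarrow(1)$ I invoke the standard H\"ormander principle on the compact connected group $SO_n$. In the connected case the first paragraph gives that the Lie algebra generated by $X_\mathscr{E}$ is $\mathfrak{so}_n$, so the analytic subgroup generated by $\{\exp(tX_e)\}$ is all of $SO_n$; any $f\in\ker \Delta_\mathscr{E}$ satisfies $X_e f = 0$ for every $e$ and hence is invariant under left translation by this whole subgroup, so $f$ is constant. For the converse, in the disconnected case the same block-diagonal $P_{V_1}$ commutes with every $e^{tX_e}$, so $f(g) := \|P_{V_1}\, g\, e_1\|^2$ with $e_1\in V_1$ fixed satisfies $X_e f \equiv 0$, lies in $\ker\Delta_\mathscr{E}$, and is non-constant since $f(I)=1$ while $f(g)=0$ for any orthogonal $g$ carrying $e_1$ into $\mathbb{C}^{V_2}$. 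The only subtle point to be handled separately is $n=2$, where $\mathfrak{so}_2$ is abelian and its defining representation on $\mathbb{C}^2$ is reducible over $\mathbb{C}$; for $n\geq 3$ the argument runs without modification, and the commutator identity stated at the outset is the essential bridge between the combinatorial, operator-algebraic, and differential-geometric formulations.
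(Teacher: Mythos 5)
The paper states this lemma without proof, so there is no reference argument to compare against; taken on its own terms your proof is essentially the standard one and is correct for $n\ge 3$. The commutator identity $[X_{(r,s)},X_{(s,t)}]=X_{(r,t)}$ checks out, connectivity then forces the Lie algebra generated by $X_{\mathscr{E}}$ to be all of $\mathfrak{so}_n$, irreducibility of the vector representation (for $n\ge 3$) gives a scalar commutant by Schur, the block projector $P_{V_1}$ is the right witness in the disconnected direction, and the closed-subgroup argument handles $(1)\Leftrightarrow(2)$: the stabilizer $\{g : f(gh)=f(h)\ \forall h\}$ of a smooth $f\in\ker\Delta_{\mathscr{E}}$ is a closed subgroup whose Lie algebra contains every $X_e$, hence equals $SO_n$ by connectedness.

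The one genuine gap is your treatment of $n=2$. You flag it as ``a subtle point to be handled separately,'' but there is nothing to handle: the lemma as stated is \emph{false} for $n=2$. Take $\mathscr{V}=\{1,2\}$ with its unique edge, so $X_{(1,2)}=\bigl(\begin{smallmatrix}0&1\\ -1&0\end{smallmatrix}\bigr)$. This matrix has simple spectrum $\{\pm i\}$, so its commutant inside $\Mz_2(\cz)$ is the two-dimensional algebra $\{aI+bX_{(1,2)} : a,b\in\cz\}\cong\cz\oplus\cz$, strictly larger than $\cz I$; thus $(3)$ fails. Meanwhile the graph is connected and $\Delta_{\mathscr{E}}$ is just the ordinary Laplacian on $SO_2\cong\TT$, which is ergodic; so $(1)$ and $(2)$ hold. (Equivalently, in the transferred picture, $L_{\mathscr{E}}(\rho)=2\rho-2x_e\rho x_e$ has fixed-point algebra $\{x_e\}'$, which is nontrivial.) So rather than promising a separate argument for $n=2$, you should state that the equivalence of $(3)$ with $(1)$ and $(2)$ requires $n\ge 3$ — which is anyway the only case of interest for a graph H\"ormander system on $SO_n$ — or, if you insist on covering $n=2$, observe that $(1)\Leftrightarrow(2)$ still holds there while $(3)$ must be dropped or reformulated.
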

\noindent If any condition of above is satisfied, we call $X_{\mathscr{E}}$  a \textit{graph H\"ormander system} of the graph ${\sf{G}}$.  We define the Lindblad operator
$$L_{e}(\rho)=x_{e}^{2}\rho+\rho x_{e}^{2}-2x_{e}\rho x_{e},\forall \rho\in \Mz_{n},$$
where $x_{e}=iX_{e}$. Then  there exists a derivation $\delta_{e}(a)=-i[x_e,a]=[X_{e}, a]$ such that $L_{e}=\delta_{e}^{*}\bar{\delta_{e}}$. We define the Lindblad operator $L_{\mathscr{E}}$ associated to $\mathscr{E}$ by
  $$L_{\mathscr{E}}=\sum_{e=(r,s)\in\mathscr{E},r<s}L_{e}.$$
Noting $L_{\mathscr{E}}$ restricted the diagonal is the graph Laplace operator, we have the following result. (See Appendix A for an example.)
\begin{prop} 
\label{lind-lap} 
Let ${\sf{G}}=(\mathscr{V},\mathscr{E})$ be a connected graph, then 
\begin{align*}\CLSI(\ell_{\infty}^{n}, L_{\mathscr{E}})=\CLSI({\sf{G}})  \text{ and }
\CLSI(\ell_{\infty}^{n}, L_{\mathscr{E}})=\CLSI({\sf{G}})  
\end{align*}
\end{prop}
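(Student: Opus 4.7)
The plan is to show that $L_{\mathscr{E}}$ preserves the diagonal subalgebra $\pi(\ell_{\infty}^{n}) \subset \Mz_{n}$ and that its restriction there coincides (up to the identification via $\pi$) with the graph Laplacian $A_{\mathscr{E}}$ that defines $\CLSI({\sf{G}})$ in the previous section. Once this identification is made, the equality of $\CLSI$ and $\CLSI^{+}$ constants follows at once, because the definitions on both sides are literally comparing the same dynamical semigroup acting on the same algebra.

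The main computation is to handle a single edge $e=(r,s)$ with $r<s$. First I would record that $x_{e}=iX_{e}$ is self-adjoint and that $x_{e}^{2}=|r\rangle\langle r|+|s\rangle\langle s|$. For a diagonal $\rho=\sum_{k}\rho_{k}|k\rangle\langle k|\in\pi(\ell_{\infty}^{n})$, a direct matrix computation gives
\[
 x_{e}\rho x_{e} \;=\; \rho_{s}|r\rangle\langle r|+\rho_{r}|s\rangle\langle s|,
\]
so that
\[
 L_{e}(\rho) \;=\; x_{e}^{2}\rho+\rho x_{e}^{2}-2x_{e}\rho x_{e} \;=\; 2(\rho_{r}-\rho_{s})|r\rangle\langle r|+2(\rho_{s}-\rho_{r})|s\rangle\langle s|.
\]
In particular $L_{e}(\pi(\ell_{\infty}^{n}))\subset\pi(\ell_{\infty}^{n})$, and summing over $e\in\mathscr{E}$ yields
\[
 \bigl(L_{\mathscr{E}}(\pi(f))\bigr)_{rr} \;=\; 2\sum_{y:(r,y)\in\mathscr{E}}\bigl(f(r)-f(y)\bigr) \;=\; A_{\mathscr{E}}(f)(r),
\]
so that $L_{\mathscr{E}}|_{\pi(\ell_{\infty}^{n})}=\pi\circ A_{\mathscr{E}}\circ\pi^{-1}$. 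Since the normalized matrix trace on $\Mz_{n}$ restricts to the uniform probability measure on $\mathscr{V}$ under $\pi$, the fixed-point algebra of $L_{\mathscr{E}}|_{\pi(\ell_{\infty}^{n})}$ is exactly $\cz\cdot 1$, matching the fixed-point algebra of $A_{\mathscr{E}}$ under the connectedness assumption.

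Finally, for the complete versions I would tensor with an arbitrary finite von Neumann algebra $\mathcal{F}$: since $\pi\otimes\id_{\mathcal{F}}$ is still a trace-preserving $^{*}$-embedding and $L_{e}$ is defined entirely through multiplication by $x_{e}$ (a fixed element of $\Mz_{n}$), the same pointwise computation shows that $(L_{\mathscr{E}}\otimes\id_{\mathcal{F}})|_{\pi(\ell_{\infty}^{n})\otimes\mathcal{F}}=(A_{\mathscr{E}}\otimes\id_{\mathcal{F}})$ under the identification. Because the relative entropy with respect to the fixed-point algebra and the associated Fisher information on $\ell_{\infty}^{n}\otimes\mathcal{F}$ depend only on the generator and the trace, the defining inequalities
\[
 \la\,D_{\mathcal{N}_{\fix}}(\rho)\le I_{L_{\mathscr{E}}\otimes\id}(\rho),\qquad \la\,d^{p}_{\mathcal{N}_{\fix}}(\rho)\le I^{p}_{L_{\mathscr{E}}\otimes\id}(\rho)
\]
for $\rho\in\ell_{\infty}^{n}\otimes\mathcal{F}$ are word-for-word the same as the inequalities defining $\CLSI({\sf{G}})$ and $\CLSI^{+}({\sf{G}})$ after tensoring with $\mathcal{F}$. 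Taking the infimum over $\mathcal{F}$ (and, for $\CLSI^{+}$, over $p\in(1,2)$) yields both equalities.

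There is no real obstacle here: the whole statement reduces to the pointwise identity $L_{e}|_{\text{diag}}=2\bigl((\rho_{r}-\rho_{s})|r\rangle\langle r|+(\rho_{s}-\rho_{r})|s\rangle\langle s|\bigr)$, together with the observation that tensoring with an auxiliary $\mathcal{F}$ commutes with this restriction. The only thing to be careful about is bookkeeping: making sure the factor of $2$ in the definition of $A_{\mathscr{E}}$ matches the factor produced by the Lindblad bracket, and making sure the normalized trace on $\Mz_{n}$ is used so that the probability measure on $\mathscr{V}$ (and hence the relative entropies on both sides) really do coincide.
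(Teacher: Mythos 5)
Your proof is correct, and it simply supplies the details for what the paper dispatches in a single sentence (``Noting $L_{\mathscr{E}}$ restricted [to] the diagonal is the graph Laplace operator, we have the following result''). The paper gives no written-out argument for Proposition \ref{lind-lap}, so your explicit computation is a useful supplement rather than a different route.

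A few small points to tighten. First, the key identity is not just that $L_{\mathscr{E}}$ preserves $\pi(\ell_\infty^n)$ and restricts to $A_{\mathscr{E}}$: you also need that the Fisher information and relative entropy built from $L_{\mathscr{E}}|_{\ell_\infty^n}$ agree with those built from the graph derivation $\delta$ of Section 6. As you note, this follows because both quantities are determined by the generator and the trace alone (the paper records this via Theorem \ref{derivation} and the remark that $I_A$ is independent of the choice of $\delta_A$); but it is worth flagging that the graph's $\CLSI$ constant in Section 6 is defined through a derivation triple, not directly through $A_{\mathscr{E}}$, so the bridge is ``generator coincidence $\Rightarrow$ Fisher-information coincidence,'' not merely restriction of the generator. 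Second, the factor-of-two bookkeeping: your formula $L_e(\rho)=2(\rho_r-\rho_s)|r\rangle\langle r|+2(\rho_s-\rho_r)|s\rangle\langle s|$ is correct, and summing over edges $e\in\mathscr{E}$ (each edge appearing once in the sum defining $L_{\mathscr{E}}$, but contributing to two diagonal entries) gives exactly $A_{\mathscr{E}}(f)(x)=2\sum_{(x,y)\in\mathscr{E}}(f(x)-f(y))$; so the factor of $2$ in the paper's definition of $A_{\mathscr{E}}$ matches. Third, note that the statement of the proposition in the paper contains a typo (the same equality is printed twice); the second should read $\CLSI^+(\ell_\infty^n,L_{\mathscr{E}})=\CLSI^+({\sf{G}})$, which you handle correctly by observing that the argument is insensitive to whether one uses $D$ and $I_A$ or $d^p$ and $I^p_A$.
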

\noindent Let $\rm{K}_{rs}$ be an $n$-by-$n$ matrix with $1$ on $(r,s)$ and $0$'s otherwise. For $e=(r,s)\in\mathscr{E}$, we define $\A_e$ to be the the subalgebra  of $\Mz_{n}$ generated by $\{1, \rm{K}_{rr}, \rm{K}_{ss}, \rm{K}_{rs}, \rm{K}_{sr}, \rm{K}_{kj}\}$, where  $k\neq r,s, j\neq r,s \text{ and } k\neq j $. Indeed $\mathcal{A}_{e}$ is generated by block diagonal matrices up to permutations. Let $E_{e}: \Mz_{n}\to \A_{e}$ be the conditional expectation onto the sub-algebra $\mathcal{A}_{e}$ and $E_{\infty}$ be the conditional expectation onto the diagonal matrices. Throughout this section, let $\mathcal{M}$ be a finite von Neumann algebra equipped with a normal faithful trace $\tau_{\mathcal{M}}$.
\begin{lemma}\label{edge} For $\rho\in  \Mz_n\ten \MM$,  we have
 \[ \CLSI(\Mz_{n}, L_{e}) D(\rho\|E_{e}(\rho)) \le I_{L_{e}}(\rho) \pl 
 \text{ and } \CpSI(\Mz_{n}, L_{e})  d^{p}(\rho\|E_{e}(\rho)) \le I_{L_{e}}^{p}(\rho) \pl .\]
\end{lemma}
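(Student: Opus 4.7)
The plan is to reduce the lemma to the defining complete log-Sobolev and $p$-Sobolev inequalities by showing that the conditional expectation $E_e$ projects onto an algebra that contains the genuine fixed-point algebra of $L_e$, after which both claims follow from elementary monotonicity of the relative entropies. Throughout, $E_e(\rho)$ for $\rho\in\Mz_n\otimes\MM$ is understood as $(E_e\otimes\id_{\MM})(\rho)$, and similarly for the fixed-point conditional expectation $E_{\fix}$.

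First I would identify the fixed-point algebra $\mathcal{N}_{\fix}$ of $L_e$. Because $L_e(a)=[x_e,[x_e,a]]$ with $x_e=iX_e$ self-adjoint, self-adjointness of $L_e$ implies $\mathcal{N}_{\fix}=\{x_e\}'$. Decomposing $\mathbb{C}^n=\mathrm{span}\{|r\rangle,|s\rangle\}\oplus\mathrm{span}\{|k\rangle:k\neq r,s\}$, the operator $x_e$ acts on the first summand as $i(K_{rs}-K_{sr})$ (a nonzero real multiple of the Pauli $Y$) and as $0$ on the second. Its commutant therefore decomposes as the two-dimensional abelian subalgebra of $\Mz_2$ spanned by $I_{\{r,s\}}$ and $i(K_{sr}-K_{rs})$, direct sum with the full $\Mz_{n-2}$ on the complement. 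Both summands already lie in $\mathcal{A}_e$: the first because $\mathcal{A}_e$ contains the entire $\Mz_2$-block on $\{r,s\}$ (generated by $K_{rr},K_{ss},K_{rs},K_{sr}$), and the second because the generators $K_{kj}$ together with $K_{kk}=K_{kj}K_{jk}$ for $k,j\notin\{r,s\}$ produce all of $\Mz_{n-2}$. Hence $\mathcal{N}_{\fix}\subset\mathcal{A}_e$.

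Next, the fixed-point algebra of $L_e\otimes\id_\MM$ is $\mathcal{N}_{\fix}\otimes\MM$, and the tower of conditional expectations through $\mathcal{N}_{\fix}\otimes\MM\subset\mathcal{A}_e\otimes\MM\subset\Mz_n\otimes\MM$ satisfies $E_{\fix}\circ E_e=E_{\fix}$. A short computation (using that $\ln E_e(\rho)$ and $\ln E_{\fix}(\rho)$ both lie in $\mathcal{A}_e\otimes\MM$ together with the module property of the conditional expectation) yields the Pythagorean identity $D(\rho\| E_{\fix}(\rho))=D(\rho\| E_e(\rho))+D(E_e(\rho)\| E_{\fix}(\rho))$, whence
\[
D(\rho\| E_e(\rho))\le D(\rho\| E_{\fix}(\rho)).
\]
Applying the defining property of $\CLSI(\Mz_n,L_e)$ with the ancilla $\mathcal{F}=\MM$ then gives $\CLSI(\Mz_n,L_e)\cdot D(\rho\| E_{\fix}(\rho))\le I_{L_e}(\rho)$, and chaining the two inequalities delivers the first assertion.

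For the $p$-Sobolev version the same strategy is even more transparent through the variational formula $d^p_{\mathcal{K}}(\rho)=\inf_{\sigma\in\mathcal{K}}d^p(\rho\|\sigma)$: since $\mathcal{A}_e\otimes\MM\supset\mathcal{N}_{\fix}\otimes\MM$, the infimum over the larger algebra cannot exceed the one over the smaller, so $d^p(\rho\| E_e(\rho))\le d^p(\rho\| E_{\fix}(\rho))$, and the defining property of $\CpSI(\Mz_n,L_e)$ closes the argument. The only non-routine piece of the plan is the algebraic inclusion $\{x_e\}'\subset\mathcal{A}_e$ of paragraph two; once that is in hand, the rest is purely definitional.
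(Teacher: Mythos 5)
Your proof is correct and takes essentially the same approach as the paper: both identify the fixed-point algebra of $L_e$ with the commutant $\{x_e\}'$, verify $\{x_e\}'\subset\mathcal{A}_e$, and then exploit monotonicity of the (p-)relative entropy in the reference subalgebra together with the defining $\CLSI$ and $\CpSI$ inequalities. Your spectral block-decomposition argument for the commutant inclusion and your Pythagorean-identity route to $D(\rho\|E_e(\rho))\le D(\rho\|E_{\fix}(\rho))$ are equivalent repackagings of the paper's explicit computation of $\rho x_e=x_e\rho$ and its appeal to the variational formula $D_{\mathcal{K}}(\rho)=\inf_{\sigma\in\mathcal{K}}D(\rho\|\sigma)$.
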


\begin{proof}  Without loss of generality we work with $e=(v_{1} ,v_{2})$. The fixed-point algebra of $L_{e}$ is given by the commutant $\mathcal{N}_e=\{x_e\}'$.
 Note $\mathcal{N}_e$ is a subalgebra of $\mathcal{A}_e$. Indeed, let $\rho=(\rho)_{ij}\in\mathcal{N}_{e}$, then $\rho x_{e}=x_{e}\rho$. Thus
$$\left(\begin{smallmatrix}  -\rho_{12} & \rho_{11} & 0 &\cdots & 0\\
 -\rho_{22} & \rho_{21} & 0 &\cdots & 0\\
 -\rho_{32} & \rho_{31} & 0 &\cdots & 0\\
          \vdots & \vdots & \vdots  &  \vdots &\vdots \\
-\rho_{n2} &\rho_{n1} & 0 &\cdots & 0
\end{smallmatrix}\right)= \left(\begin{smallmatrix}
 \rho_{21 }& \rho_{22} &  \rho_{23} &\cdots & \rho_{2n}\\
        -\rho_{11 }& -\rho_{12} &  -\rho_{13} & \cdots & - \rho_{1n}   \\
0  & 0 & 0 &\cdots&0 \\
\vdots &  \vdots& \vdots&  \vdots& \vdots\\
0  & 0 &  0 &\cdots&0   
\end{smallmatrix}\right).$$
By \eqref{fix-d}, we deduce that 
 \[ D(\rho\|E_{e}(\rho) \kl D(\rho\|E_{N_e}(\rho)) \kl \frac{1}{\CLSI(\Mz_{n}, L_{e})} I_{L_{e}}(\rho), \forall \rho\in \Mz_{n}\ten \Mm\pl.\]
For the $p$-version, it follows from the facts that $d^{p}(\rho\|\sigma)$ is non-negative and the noncommutative martingale equality  (See \cite{hao})
\qd
\begin{lemma} \label{edge-expectation} The conditional expectations $\{E_{e}\}_{e\in\mathscr{E}}$ commute pairwise. Moreover $\prod_{e\in\mathscr{E}} E_{e}=E_{\infty}$.
\end{lemma}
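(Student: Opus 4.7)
The plan is to exploit the fact that each $E_e$ is, in the matrix unit basis, simply a coordinate projection. For $e=(v_r,v_s)$, the generators listed for $\mathcal{A}_e$ close (under multiplication, addition, and the $*$-operation) onto precisely the linear span of the matrix units $K_{ij}$ with
\[ (i,j)\in S_e \lel \{(i,j): \{i,j\}\subset\{r,s\}\text{ or }\{i,j\}\cap\{r,s\}=\emptyset\}, \]
yielding the block algebra $M_2\oplus M_{n-2}$ up to the obvious relabelling. I would first verify this explicitly, noting in particular that $K_{kj}K_{jk}=K_{kk}$ for $k,j\notin\{r,s\}$ gives the missing diagonal generators in the complement block. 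Since the matrix units are pairwise orthogonal with respect to the Hilbert--Schmidt inner product $\langle X,Y\rangle=\tau(X^{*}Y)$ and $\tau$ restricts to a scalar multiple of the trace on each block, the trace-preserving conditional expectation is forced to act diagonally in this basis:
\[ E_e(K_{ij}) \lel \mathbf{1}_{S_e}(i,j)\, K_{ij}. \]

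Pairwise commutation is then automatic. For any two edges $e,e'$, both $E_e$ and $E_{e'}$ act diagonally in the $K_{ij}$ basis with eigenvalues in $\{0,1\}$, so $E_eE_{e'}=E_{e'}E_e$, and the composition is the orthogonal projection onto $\spann\{K_{ij}:(i,j)\in S_e\cap S_{e'}\}$. Iterating over all edges, $\prod_{e\in\mathscr{E}}E_e$ is the projection onto $\spann\{K_{ij}:(i,j)\in \bigcap_{e\in\mathscr{E}}S_e\}$.

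It remains to identify $\bigcap_{e\in\mathscr{E}} S_e$ with the diagonal $\{(i,i):1\leq i\leq n\}$, as $E_\infty$ is precisely the projection onto the span of $\{K_{ii}\}$. Diagonal pairs belong to every $S_e$ trivially. For an off-diagonal pair $(i,j)$ with $i\neq j$, the condition $(i,j)\notin S_e$ is precisely that $e$ contains exactly one of $i,j$, so the task reduces to producing, for each such pair, an edge separating $i$ from $j$. The one non-routine step is this separation argument, which I would handle as follows: pick any vertex $k\notin\{i,j\}$ (using $n\geq 3$; the case $n=2$ is degenerate), take a path from $k$ to $i$ inside $\sf{G}$ guaranteed by connectedness, and consider the first edge along this path whose far endpoint lies in $\{i,j\}$. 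Its near endpoint is by construction not in $\{i,j\}$, so this edge has exactly one endpoint in $\{i,j\}$, yielding the desired separating $e$. Hence $\bigcap_{e\in\mathscr{E}} S_e$ reduces to the diagonal and $\prod_{e\in\mathscr{E}}E_e=E_\infty$.
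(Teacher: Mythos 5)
Your proof is correct and takes essentially the same route as the paper: identify each $E_e$ as a Schur multiplier (the trace-preserving coordinate projection in the matrix-unit basis onto the block-diagonal algebra $\mathcal{A}_e\cong\Mz_2\oplus\Mz_{n-2}$) to get pairwise commutativity for free, then use connectedness to show the joint range reduces to the diagonal. You are somewhat more explicit than the paper in carrying out the ``separating edge'' argument, and you usefully flag the $n=2$ degeneracy---there $\mathcal{A}_e=\Mz_2$ and $E_e=\mathrm{id}$, so $\prod_e E_e\neq E_\infty$, a genuine exception that the paper's hypothesis $n\geq 2$ does not exclude.
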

\begin{proof} 
Noting $E_{e}$ is a Schur multiplier,  we infer that all conditional expectations commute. It is obvious that  $\prod_{e\in\mathscr{E}} E_{e}(a)=a$ for any diagonal matrix $a\in\Mz_{n}$. Now assume that $\prod_{e\in\mathscr{E}}E_{e} (\rho)=\rho$ for some $\rho\in \Mz_{n}$. Then $E_{e}(\rho)=\rho$ for any $e=(v_{r},v_{s})\in\mathscr{E}$.
 By the proof of Lemma \ref{edge}, we have $\rho_{jr}=\rho_{rj}=\rho_{js}=\rho_{sj}=0$ for any $j\neq r,s$. Thus $\rho$ is diagonal, which yields the first assertion. Since ${\sf{G}}$ is connected, we conclude that $\rho_{ij}=0$ for $i\neq j$.
\qd
\begin{lemma}\label{diagon}  For $\rho\in \Mz_n\ten \MM$, we have 
 \[ D(\rho\|E_{\infty}(\rho))\kl 5\pi^{2} I_{L_{\mathscr{E}}}(\rho) \pl \]
 and 
  \[ D^{p}(\rho\|E_{\infty}(\rho))\kl 5\pi^{2} I^{p}_{L_{\mathscr{E}}}(\rho) \pl .\]
\end{lemma}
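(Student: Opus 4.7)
The plan is to reduce the bound on $D(\rho\|E_{\infty}(\rho))$ to a sum of per-edge bounds, each controlled by Example~\ref{lind-1}, and then to combine via the noncommutative martingale/iterated-expectation inequality of Lemma~\ref{iter}.

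First I would observe that for each $e=(r,s)\in\mathscr{E}$ with $r<s$, the operator $x_e=iX_e=i(|r\rangle\langle s|-|s\rangle\langle r|)$ is self-adjoint with spectrum contained in $\{-1,0,1\}\subset\mathbb{Z}$. Hence Example~\ref{lind-1} applies and yields
\[
\CLSI(L_e)\;\geq\;\CLSI^{+}(L_e)\;\geq\;\frac{1}{5\pi^{2}}.
\]
By Lemma~\ref{edge}, this gives the per-edge bounds
\[
D(\rho\|E_{e}(\rho))\;\leq\;5\pi^{2}\,I_{L_{e}}(\rho),\qquad d^{p}(\rho\|E_{e}(\rho))\;\leq\;5\pi^{2}\,I_{L_{e}}^{p}(\rho),
\]
valid for every $\rho\in\Mz_{n}\ten\MM$ (and for $\rho$ positive in the $p$-case), since the constants produced by Example~\ref{lind-1} are complete.

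Next I would invoke Lemma~\ref{edge-expectation}: the conditional expectations $\{E_{e}\}_{e\in\mathscr{E}}$ commute pairwise and their product equals $E_{\infty}$. Hence Lemma~\ref{iter} applies directly to yield
\[
D(\rho\|E_{\infty}(\rho))\;=\;D\!\left(\rho\,\Big\|\,\big(\prod_{e\in\mathscr{E}}E_{e}\big)(\rho)\right)\;\leq\;\sum_{e\in\mathscr{E}}D(\rho\|E_{e}(\rho)),
\]
with the analogous statement for $d^{p}$ provided by Lemma~\ref{iter-p}. Combining with the per-edge bounds and using $I_{L_{\mathscr{E}}}(\rho)=\sum_{e}I_{L_{e}}(\rho)$ (and likewise for $I^{p}$) finishes both inequalities:
\[
D(\rho\|E_{\infty}(\rho))\;\leq\;5\pi^{2}\sum_{e\in\mathscr{E}}I_{L_{e}}(\rho)\;=\;5\pi^{2}\,I_{L_{\mathscr{E}}}(\rho).
\]

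There is no real obstacle here; the only delicate point is checking that Example~\ref{lind-1} and Lemma~\ref{edge} are being used in their \emph{complete} form, so that the bounds remain valid after tensoring with the auxiliary von Neumann algebra $\MM$. The commuting-square structure from Lemma~\ref{edge-expectation} is what makes the iteration clean, and this is exactly where connectedness of the graph is needed (to ensure $\prod_{e}E_{e}=E_{\infty}$ rather than a larger subalgebra).
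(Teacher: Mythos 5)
Your proposal is correct and follows essentially the same chain of reasoning as the paper: reduce to per-edge bounds via Lemma \ref{edge}, justify the per-edge constant $5\pi^2$ via Example \ref{lind-1} (noting $x_e$ is self-adjoint with spectrum $\{-1,0,1\}\subset\mathbb{Z}$), use Lemma \ref{edge-expectation} to identify $\prod_e E_e = E_\infty$, and conclude via Lemma \ref{iter} (respectively Lemma \ref{iter-p}) and additivity of the Fisher information over edges. Your explicit attention to the completeness of the per-edge constants is a welcome clarification that the paper leaves implicit.
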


\begin{proof} 
By Lemma \ref{edge-expectation}, we have $D(\rho\|E_{\infty}(\rho)) = D(\rho\|\prod_{e\in \mathscr{E}} E_e(\rho))$. Together with Lemma \ref{iter} and Lemma \ref{edge}, we deduce that
 $$ \inf_{e\in\mathscr{E}}\{\CLSI(\Mz_{n}, L_{e})\} D(\rho\|\prod_{e\in \mathscr{E}} E_e(\rho)) \kl  \inf_{e\in\mathscr{E}}\{\CLSI(\Mz_{n}, L_{e})\}\sum_{e\in \mathscr{E}} D(\rho\|E_{e}(\rho))\kl I_{L_{\mathscr{E}}}(\rho).$$
 We obtain $5\pi^{2}$ by Example \ref{lind-1}. 
 The same argument applies for the $p$-version.
\qd
\begin{lemma}\label{diagon2} Let  $\rho\in \Mz_n\ten \MM$, then
\begin{align*} I_{L_{\mathscr{E}}}(E_{\infty}(\rho))\kl I_{L_{\mathscr{E}}}(\rho)  \quad \text{      and      } \quad I_{L_{\mathscr{E}}}^{p}(E_{\infty}(\rho))\kl I_{L_{\mathscr{E}}}^{p}(\rho).
\end{align*}
\end{lemma}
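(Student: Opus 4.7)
The strategy is to show that $E_\infty$ commutes with the semigroup $T_t := e^{-tL_{\mathscr{E}}}$ and then deduce the Fisher information inequality from the data processing inequality, the chain rule for relative entropy, and Lemma~\ref{fisher-derivative}.

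\emph{Step 1 (commutation).} I first verify that $E_\infty\circ L_e = L_e\circ E_\infty$ for each edge $e=(v_r,v_s)$. Writing $L_e(\rho) = x_e^{2}\rho + \rho x_e^{2} - 2x_e\rho x_e$ with $x_e = i(|r\ran\lan s|-|s\ran\lan r|)$, the element $x_e^{2} = |r\ran\lan r|+|s\ran\lan s|$ is diagonal, so the first two terms manifestly preserve the decomposition $\Mz_n = \ell_\infty^n \oplus \mathcal{O}$ into diagonal and strictly off-diagonal matrices. A direct computation on matrix units shows that $\rho \mapsto x_e\rho x_e$ also respects this decomposition: it sends diagonal $\rho$ to $\rho_{ss}|r\ran\lan r| + \rho_{rr}|s\ran\lan s|$ and, for $\rho\in\mathcal{O}$, only the $(r,s),(s,r)$-entries survive and the result stays in $\mathcal{O}$. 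Since $E_\infty$ is the projection onto the diagonal summand, this is equivalent to $[L_e,E_\infty]=0$. Tensoring with $\id_{\MM}$ and summing over $e$ yields $E_\infty L_\mathscr{E} = L_\mathscr{E} E_\infty$ on $\Mz_n\ten \MM$, and hence $E_\infty T_t = T_t E_\infty$ for every $t\ge 0$.

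\emph{Step 2 (chain rule and monotonicity).} Let $E_\fix$ be the conditional expectation onto the fixed-point algebra $\mathcal{N}_\fix = \cz 1\otimes \MM \subset \ell_\infty^n\otimes \MM$, so that $E_\fix = E_\fix\circ E_\infty$ and $E_\fix T_t = E_\fix$. Define
\[
f(t) := D(T_t\rho\,\|\,E_\fix\rho), \qquad g(t) := D(T_t E_\infty\rho\,\|\,E_\fix\rho).
\]
The chain rule for Umegaki entropy with respect to a nested pair of tracial conditional expectations, $D(\sigma\|E_\fix\rho) = D(\sigma\|E_\infty\sigma) + D(E_\infty\sigma\|E_\fix\rho)$, holds with equality by the pull-through identity $\tau(\sigma \ln E_i\rho) = \tau(E_i\sigma \ln E_i\rho)$. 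Applying it at $\sigma = T_t\rho$ and using the commutation from Step~1 gives
\[
f(t) - g(t) \;=\; D\bigl(T_t\rho\,\|\,E_\infty T_t\rho\bigr).
\]
Data processing applied to the CPTP map $T_h$ gives, for every $h\ge 0$,
\[
D(T_{t+h}\rho\|E_\infty T_{t+h}\rho) = D(T_h T_t\rho\,\|\,T_h E_\infty T_t\rho) \le D(T_t\rho\|E_\infty T_t\rho),
\]
so $f-g$ is non-increasing on $[0,\infty)$.

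\emph{Step 3 (derivative at zero).} Differentiating $f-g$ at $t=0^+$ yields $f'(0^+) - g'(0^+)\le 0$. Lemma~\ref{fisher-derivative} identifies $f'(0)=-I_{L_\mathscr{E}}(\rho)$ and $g'(0)=-I_{L_\mathscr{E}}(E_\infty\rho)$, which gives $I_{L_\mathscr{E}}(E_\infty\rho)\le I_{L_\mathscr{E}}(\rho)$. The $p$-version is proved identically: from the formula $d^p(\sigma\|\xi) = \tau(\sigma^p) + (p-1)\tau(\xi^p) - p\tau(\sigma\xi^{p-1})$ one verifies the same chain rule $d^p(\sigma\|E_\fix\rho) = d^p(\sigma\|E_\infty\sigma) + d^p(E_\infty\sigma\|E_\fix\rho)$ by pulling $(E_i\rho)^{p-1}\in\mathcal{N}_{\fix}$ or $\ell_\infty^n\otimes\MM$ through the trace; the data processing inequality for $d^p$ (see \cite{hao}) forces $d^p(T_t\rho\|E_\infty T_t\rho)$ to be non-increasing in $t$; and item~(1) of the theorem following Theorem~\ref{bardet-p} supplies $\frac{d}{dt}d^p_{\mathcal{N}_\fix}(T_t\rho) = -I^p_{L_\mathscr{E}}(T_t\rho)$.

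The entire content of the lemma lies in Step~1, though the check is essentially mechanical. The structural point is that each $x_e$ is a real antisymmetric $2\times 2$ block acting on $\mathrm{span}(|r\rangle,|s\rangle)$ and zero elsewhere; it is precisely this very special form that causes $L_e$ to preserve the diagonal/off-diagonal splitting and so commute with $E_\infty$.
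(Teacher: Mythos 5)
Your proof is correct, but it takes a genuinely different route from the paper's. The paper decomposes $E_\infty = \prod_{i=1}^{n-1} E_i^n$ where $E_i^n(\rho)=\tfrac{1}{2}(\rho + U_i^*\rho U_i)$ are sign-averaging conditional expectations, then combines the joint convexity of Fisher information (Theorem \ref{HPconvex}) with the unitary invariance $I_{L_e}(U_i^*\rho U_i)=I_{L_e}(\rho)$, which holds because each $U_i$ satisfies $U_i X_e U_i^*=\pm X_e$. You instead verify the operator identity $[E_\infty, L_\mathscr{E}]=0$ (hence $[E_\infty, T_t]=0$), write $D(T_t\rho\,\|\,E_\fix\rho)-D(T_t E_\infty\rho\,\|\,E_\fix\rho)=D(T_t\rho\,\|\,E_\infty T_t\rho)$ via the chain rule, conclude this is non-increasing by data processing, and differentiate at $t=0^+$ using Lemma \ref{fisher-derivative}. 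Both arguments ultimately lean on the same structural fact about $x_e$ — that $x_e^2$ is diagonal and conjugation by $x_e$ swaps the $r,s$ indices, so $L_e$ preserves the diagonal/off-diagonal splitting and $E_\infty$ "looks invariant" to $L_e$ — but they exploit it differently. The paper's route uses only convexity of $I_\delta$ and no semigroup or differentiability considerations, which is cleaner when domains are delicate; your route is more dynamical and makes the commutation visible, which is arguably more conceptual and transfers transparently to the $p$-version since the $d^p$-chain rule and $d^p$-data processing are already available in \cite{hao}.
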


 \begin{proof} It suffices to prove that
  \begin{align} I_{L_{e}}(E_{\infty}(\rho))\kl I_{L_{e}}(\rho)\label{edge-convex} \end{align}
 for any edge $e\in\mathscr{E}$.   For any fixed $1\leq i\leq n-1$, define $$E_{i}^{n}=\frac{1}{2}(U_{i}^{*}\rho U_{i} +\rho),$$ where $U_{i}$ is a diagonal matrix with $-1$ for the $i$-th entry and $1$ for other diagonal entries.  Then $E_{\infty}=\prod_{i=1}^{n-1} E_{i}^{n}$ and $[E_{i}^{n}, E_{j}^{n}]=0$ since $E_{i}$ is a Schur multiplier.  By the convexity of the Fisher information in Theorem \ref{HPconvex}, we have
  \begin{align} \label{edge-2} I_{L_{2}}(E_{i}^{n}(\rho))\leq \frac{1}{2} I_{L_{e}}(\rho)+\frac{1}{2} I_{L_{e}} (U_{i}^{*}\rho U_{i}). \end{align}
 For any fixed $U_{i}$, we see that $$[a, U_{I}^{*}\rho U_{i}]=U_{i}^{*}U_{i}[a, U_{i}^{*}\rho U_{i}]U_{i}^{*}U_{i}=U_{i}^{*}[U_{i}aU_{i}^{*},\rho]U_{i}.$$ Together with the unitary invariance of the trace, then
  \begin{align*} 
I_{L_{e}}(U_{i}^*\rho U_{i}) \lel & \tau\left([X_e, U_{i}^{*}\rho U_{i}] Q^{U_{i}^*\rho U_{i}}([X_{e},U_{i}^{*}\rho U_{i}])\right)\\ \lel & \tau\left( U_{i}^{*}[U_{i}X_{e}U_{i}^{*},\rho]U_{i} Q^{U_{i}^{*}\rho U_{i}} U_{i}^{*}[U_{i}X_{e}U_{i}^{*},\rho]U_{i}  \right)\\ \lel& \tau\left( [U_{i}X_eU_{i}^*,\rho]Q^{\rho}([U_{i}X_eU_{i}^*,\rho])\right) \pl . \end{align*}
 Note that for any edge that $U_{i}X_{e}U_{i}^{*}=X_{e}$ or $-X_{e}$, thus  $I_{L_{e}}(U_{i}^*\rho U_{i}) \lel =I_{L_{e}}(\rho).$ Together with \eqref{edge-2}, we have $I_{L_{e}}(E^{n}_{i}(\rho))\leq I_{L_{e}}(\rho).$  Thus repeating this $n-1$ times yields \eqref{edge-convex} $$I_{L_{e}}(E_{\infty}(\rho))\leq I_{L_{e}} ((\prod_{i=2}^{n-1}E_{i})(\rho))\leq \cdots \leq I_{L_{e}}(\rho).$$ 
The same proof applies for the $p$-version.
\qd

\begin{theorem}\label{grpp}  Let ${\sf{G}}=(\mathscr{V},\mathscr{E})$ be a connected graph. Then
 \[ \frac{\CLSI({\sf{G}})}{1+5\pi^{2}\CLSI({\sf{G}})}\kl
  \CLSI(\Mz_{n},L_{\mathscr{E}})
 \kl \CLSI({\sf{G}})  \pl.
  \]
and 
 \[ \frac{\CpSI({\sf{G}})}{1+5\pi^{2}\CpSI({\sf{G}})} \kl
  \CpSI(\Mz_{n},L_{\mathscr{E}})
 \kl \CpSI({\sf{G}})  \pl.
  \]
\end{theorem}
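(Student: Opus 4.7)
The plan is to establish the two inequalities separately. For the upper bound $\CLSI(\Mz_{n},L_{\mathscr{E}})\leq\CLSI({\sf{G}})$, I would restrict the complete log-Sobolev inequality on $\Mz_{n}\otimes\mathcal{F}$ to the subsystem $\ell_{\infty}^{n}\otimes\mathcal{F}$. Since $E_{\infty}L_{\mathscr{E}}=L_{\mathscr{E}}E_{\infty}$ on diagonal elements and $E_{\fix}(\rho)=\frac{\tau(\rho)}{n}1$ acts identically on diagonal inputs, testing the CLSI inequality only on $\rho\in\ell_{\infty}^{n}\otimes\mathcal{F}$ recovers Proposition \ref{lind-lap}, giving $\CLSI(\Mz_{n},L_{\mathscr{E}})\leq\CLSI(\ell_{\infty}^{n},L_{\mathscr{E}})=\CLSI({\sf{G}})$. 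The same restriction argument works verbatim for $\CpSI$.

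For the lower bound, the key is the two-step decomposition along the chain of subalgebras $\cz\cdot 1\subset\ell_{\infty}^{n}\subset\Mz_{n}$, using the conditional expectations $E_{\fix}$ and $E_{\infty}$ which satisfy $E_{\fix}\circ E_{\infty}=E_{\fix}$. The standard chain rule for relative entropy yields
\begin{equation*}
D(\rho\|E_{\fix}(\rho))=D(\rho\|E_{\infty}(\rho))+D(E_{\infty}(\rho)\|E_{\fix}(\rho)).
\end{equation*}
I would bound the first summand using Lemma \ref{diagon}, which gives $D(\rho\|E_{\infty}(\rho))\leq 5\pi^{2}I_{L_{\mathscr{E}}}(\rho)$. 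For the second summand, since $E_{\infty}(\rho)\in\ell_{\infty}^{n}\otimes\mathcal{M}$, I would apply Proposition \ref{lind-lap} to obtain $\CLSI({\sf{G}})D(E_{\infty}(\rho)\|E_{\fix}(\rho))\leq I_{L_{\mathscr{E}}}(E_{\infty}(\rho))$, and then invoke Lemma \ref{diagon2} to control $I_{L_{\mathscr{E}}}(E_{\infty}(\rho))\leq I_{L_{\mathscr{E}}}(\rho)$. Adding the two pieces produces
\begin{equation*}
D(\rho\|E_{\fix}(\rho))\leq\Bigl(5\pi^{2}+\frac{1}{\CLSI({\sf{G}})}\Bigr)I_{L_{\mathscr{E}}}(\rho)=\frac{1+5\pi^{2}\CLSI({\sf{G}})}{\CLSI({\sf{G}})}I_{L_{\mathscr{E}}}(\rho),
\end{equation*}
which rearranges to the asserted lower bound.

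For the $\CpSI$ statement, the same scheme applies after one verifies the $p$-analogue of the chain rule. The helpful observation here is that $E_{\fix}(\rho)=c\cdot 1$ is a scalar and $E_{\infty}$ is trace-preserving, so the cross terms $p\tau((\rho-E_{\infty}(\rho))E_{\infty}(\rho)^{p-1})$ and $p\tau((E_{\infty}(\rho)-E_{\fix}(\rho))E_{\fix}(\rho)^{p-1})$ both vanish. Consequently
\begin{equation*}
d^{p}(\rho\|E_{\fix}(\rho))=d^{p}(\rho\|E_{\infty}(\rho))+d^{p}(E_{\infty}(\rho)\|E_{\fix}(\rho))=\tau(\rho^{p}-E_{\fix}(\rho)^{p}),
\end{equation*}
and the same chain works with the $p$-versions of Lemmas \ref{diagon} and \ref{diagon2} together with the $\CpSI$ analogue of Proposition \ref{lind-lap}.

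The main obstacle is not any single technical estimate, since all the ingredients have been prepared: the edge-wise $\CLSI$ bound (Lemma \ref{edge}), the commutation of the $\{E_{e}\}$ producing $E_{\infty}$ (Lemma \ref{edge-expectation}), the iteration lemma (Lemma \ref{iter}), and the convexity lemma controlling Fisher information under conditioning (Lemma \ref{diagon2}) are all in place. The conceptual hurdle is simply recognizing that the two-step decomposition along $\cz\cdot 1\subset\ell_{\infty}^{n}\subset\Mz_{n}$ is what converts the commutative $\CLSI({\sf{G}})$ bound into a matricial $\CLSI(\Mz_{n},L_{\mathscr{E}})$ bound with the correct harmonic-mean-type penalty $\frac{1}{1+5\pi^{2}\CLSI({\sf{G}})}$; once one picks that decomposition, the rest is essentially bookkeeping and the $p$-version proceeds identically after checking the chain-rule identity above.
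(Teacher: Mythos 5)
Your proposal matches the paper's proof essentially step for step: both sides use the chain rule along $\cz\cdot 1\subset \ell_\infty^n\subset \Mz_n$ (exploiting $E_\fix\circ E_\infty = E_\fix$), bound the off-diagonal piece $D(\rho\|E_\infty(\rho))$ by Lemma \ref{diagon}, and control $D(E_\infty(\rho)\|E_\fix(\rho))$ via Proposition \ref{lind-lap} combined with the Fisher-information monotonicity of Lemma \ref{diagon2}, while the upper bound is the subsystem restriction to the diagonal. Your explicit check that the $d^p$ chain rule holds exactly (because the cross terms vanish, since $E_\fix(\rho)$ is scalar and $E_\infty$ is the trace-preserving expectation onto the commutative diagonal) is a helpful detail the paper leaves implicit in the remark ``the argument also applies for $\CpSI$.''
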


\begin{proof} Note  $(\ell_{\infty}^{n}, L_{\mathscr{E}})$ is a subsystem of $(\Mz_{n}, L_{\mathscr{E}})$. By Proposition \ref{lind-lap},  we deduce the second part of the inequality . Let $E_{\fix}$ be the conditional expectation onto the fixed-point algebras of $L_{\mathscr{E}}$. By Lemma \ref{diagon2}, we have 
\begin{align} \label{grpp-1}
D(E_{\infty}(\rho)\| E_{\fix}(\rho))\leq  \frac{1}{\CLSI({\sf{G}})} I_{L_\mathscr{E}}(E_{\infty}(\rho)) \leq  \frac{1}{\CLSI({\sf{G}})} I_{L_\mathscr{E}}(\rho).
\end{align}
Together with \eqref{grpp-1} and Lemma \ref{diagon}, we obtain the first part of the inequality
 \begin{align*}
  D(\rho\|E_{\fix}(\rho))&=D(\rho\|E_{\infty}(\rho))+D(E_{\infty}(\rho)\|E_{\fix}(\rho)) \\
  &\leq \left(5\pi^{2}+\frac{1}{\CLSI({\sf{G}})} \right)I_{L_{\mathscr{E}}}(\rho).
  \end{align*}
The argument also applies for $\CpSI$.  \qd

\section*{Appendix}
 Here we give more details about Lemma \ref{unit}. Let us recall the tail approximation of Gaussian distribution,  $$\left(\frac{1}{x}-\frac{1}{x^{3}} \right)e^{-x^{2}/2}\leq  \int_{x}^{\infty} e^{-t^{2}/2}dt \leq \frac{1}{x}e^{-x^{2}/2}.$$
Let $g(x)=\frac{1}{\sqrt{2\pi}}\sum_{k=-\infty}^{\infty} e^{-(x-k)^{2}/2}$, thus $\frac{dx}{d\mu}=\frac{1}{g(x)}$. It suffices to show that 
$$ 2e^{-1/2}+2e^{-2}+2e^{-9/2}+\frac{48}{125}e^{-25/2}\leq \sqrt{2\pi}g(x)\leq 2+2e^{-1/2}+2e^{-2} +\frac{8}{3}e^{-9/2}.$$
\begin{align*}
\sqrt{2\pi}g(x) =& e^{-x^{2}/2}+\sum_{k=1}^{\infty} e^{-(x-k)^{2}/2}+\sum_{k=-\infty}^{-1} e^{-(x-k)^{2}/2}\\
= & e^{-x^{2}/2}+\sum_{k=1}^{\infty} e^{-(k-1)^{2}/2}+\sum_{k=1}^{\infty} e^{-(x+k)^{2}/2}\\
\sqrt{2\pi}g(x) \leq &1+\sum_{k=1}^{\infty} e^{-(k-1)^{2}/2}+\sum_{k=1}^{\infty} e^{-k^{2}/2}\\
=& 1+e^{-0/2}+2\sum_{k=1}^{\infty} e^{-k^{2}/2}\\
=& 2+2e^{-1/2}+2e^{-2}+2e^{-9/2}+2\sum_{k=4}^{\infty} e^{-k^{2}/2}\\
\leq & 2+2e^{-1/2}+2e^{-2}+2e^{-9/2}+2\int_{3}^{\infty} e^{-x^{2}/2} dx\\
\leq & 2+2e^{-1/2}+2e^{-2}+\frac{8}{3}e^{-9/2}\\
\sqrt{2\pi}g(x)\geq& e^{-1/2}+\sum_{k=1}^{\infty} e^{-k^{2}/2}+\sum_{k=1}^{\infty} e^{-(k+1)^{2}/2}\\
= & 2e^{-1/2}+2\sum_{k=2}^{\infty} e^{-k^{2}/2}\\
=& 2e^{-1/2}+2e^{-2}+2e^{-9/2}+2\sum_{k=4}^{\infty} e^{-k^{2}/2}\\
\geq & 2^{-1/2}+2e^{-2}+2e^{-9/2}+2\int_{5}^{\infty} e^{-x^{2}/2}dx\\
\geq & 2^{-1/2}+2e^{-2}+ 2e^{-9/2}+2\left( \frac{1}{5}-\frac{1}{5^{3}} \right) e^{-25/2}\\
=& 2e^{-1/2}+2e^{-2}+2e^{-9/2}+\frac{48}{125} e^{-25/2}
\end{align*}
\bibliographystyle{alpha}

\end{document}